\newcommand{\RR}{\mathbb{R}}
\newcommand{\EE}{\mathbb{E}}
\newcommand{\one}{\mathbf{1}}
\DeclareMathOperator{\poly}{poly}
\DeclareMathOperator{\argmax}{argmax}
\DeclareMathOperator{\spann}{span}
\DeclareMathOperator{\tr}{tr}
\newtheorem{theorem}{Theorem}
\newtheorem{lemma}[theorem]{Lemma}
\newtheorem{corollary}[theorem]{Corollary}
\newtheorem{definition}[theorem]{Definition}
\numberwithin{theorem}{section}
\numberwithin{conjecture}{section}
\numberwithin{problem}{section}
\newcommand\numberthis{\addtocounter{equation}{1}\tag{\theequation}}
\begin{document}

\author{Orestis Plevrakis\footnote{Princeton University, \texttt{or.plevrakis@gmail.com}.}
\and
Seyoon Ragavan\footnote{MIT, \texttt{sragavan@mit.edu}.}
\and
S. Matthew Weinberg\footnote{Princeton University, \texttt{smweinberg@princeton.edu}.}
}
\title{On the cut-query complexity of approximating max-cut}
\date{\today}
\maketitle
\thispagestyle{empty}



\abstract{We consider the problem of query-efficient global max-cut on a weighted undirected graph in the value oracle model examined by \cite{rubinstein}. Graph algorithms in this cut query model and other query models have recently been studied for various other problems such as min-cut, connectivity, bipartiteness, and triangle detection. Max-cut in the cut query model can also be viewed as a natural special case of submodular function maximization: on query $S \subseteq V$, the oracle returns the total weight of the cut between $S$ and $V \backslash S$.


Our first main technical result is a lower bound stating that a deterministic algorithm achieving a $c$-approximation for any $c > 1/2$ requires $\Omega(n)$ queries. This uses an extension of the cut dimension to rule out approximation (prior work of~\cite{graur} introducing the cut dimension only rules out exact solutions). Secondly, we provide a randomized algorithm with $\tilde{O}(n)$ queries that finds a $c$-approximation for any $c < 1$. We achieve this using a query-efficient sparsifier for undirected weighted graphs (prior work of~\cite{rubinstein} holds only for unweighted graphs).

To complement these results, for most constants $c \in (0,1]$, we nail down the query complexity of achieving a $c$-approximation, for both deterministic and randomized algorithms (up to logarithmic factors). Analogously to general submodular function maximization in the same model, we observe a phase transition at $c = 1/2$: we design a deterministic algorithm for global $c$-approximate max-cut in $O(\log n)$ queries for any $c < 1/2$, and show that any randomized algorithm requires $\Omega(n/\log n)$ queries to find a $c$-approximate max-cut for any $c > 1/2$. Additionally, we show that any deterministic algorithm requires $\Omega(n^2)$ queries to find an exact max-cut (enough to learn the entire graph).

}

\section{Introduction}\label{introduction}


For the most part, the field of graph algorithms has focused on extensive study in the ``standard'' model where the entire graph is given to the algorithm as input e.g. in the form of an adjacency list or an adjacency matrix. However, another growing body of work focuses on the case where the algorithm does not know the graph; rather, it has query access to an oracle that knows the graph, and aims to make as few queries as possible. Different models restrict the algorithm to make different types of queries e.g. individual edge queries, linear measurements~\cite{agm12a, AssadiCK21}, matrix-vector products~\cite{swyz19}, etc.

One model that has attracted particular attention is the \emph{cut query model}~\cite{gk00, Choi13, rubinstein, graur, MukhopadhyayN20, LeeLSZ21, ApersEGLMN22, ChakrabartyL23}. In this model, the algorithm has query access to the cut function of a graph $G$ with vertex set $[n]$: it provides subsets $S \subseteq [n]$ as queries and the oracle returns the total weight of all edges of $G$ with exactly one endpoint in $S$. An additional reason for the wide interest in this model is that the graph's cut function is in particular a submodular function;~\cite{graur} and~\cite{LeeLSZ21} obtained new lower bounds for the query complexity of submodular function minimization by considering the special case of graph min-cut in the cut query model.

A range of graph problems has been studied in the cut query model and related models. Perhaps the most fundamental is that of learning the entire graph~\cite{gk00, ChoiK08, BshoutyM12, Choi13, BshoutyM15}; once the entire graph is known to the algorithm, it has all the standard graph algorithm literature at its disposal to solve the problem of interest. A more subtle question is whether the given problem can be solved with fewer queries than one would need to learn the graph. This was first answered affirmatively for the case of min-cut~\cite{rubinstein, MukhopadhyayN20, ApersEGLMN22}: an undirected, unweighted graph requires $\Omega(n^2/\log n)$ queries to learn~\cite{gk00} but $O(n)$ queries suffice to find its min-cut. Since then, novel algorithms have been developed in the cut query model and related models for problems such as connectivity~\cite{agm12a, swyz19, AssadiCK21, ApersEGLMN22, ChakrabartyL23}, testing bipartiteness~\cite{agm12a}, and triangle detection~\cite{swyz19}.

In this work, we initiate the study of max-cut in the cut-query model. We aim to understand how many queries are necessary and sufficient for an algorithm to find a $c$-approximate global max-cut in an undirected, weighted graph.



\subsection{Our Results}\label{sec:results}

Our two main results are stated below. Both are concerned with the setting where $c \in (1/2, 1)$ i.e. one wants to do better than a straightforward greedy algorithm or guessing a random cut, but does not want an exact max-cut. The first result is a lower bound against deterministic algorithms, and the second result is a randomized algorithm.

\begin{theorem}\label{thm:informaldethalfhardness}
    (See Corollary \ref{cor:dethalfhardness} for a precise statement) For $c > 1/2$, any deterministic algorithm achieving a $c$-approximation requires $\Omega(n)$ queries.
\end{theorem}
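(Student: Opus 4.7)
The plan is to extend the cut dimension framework of~\cite{graur} so that it rules out $c$-approximation for $c > 1/2$, rather than only exact computation. The high-level strategy is an adversary/linear-algebra argument: a deterministic algorithm making $q$ cut queries performs $q$ linear measurements on the weighted adjacency matrix of $G$, so the set of graphs consistent with those answers is an affine subspace of codimension at most $q$. If I can exhibit a family of $\Omega(n)$-dimensional perturbations of a base graph $G_0$ along which (a) the query answers cannot simultaneously distinguish all members and (b) no single cut serves as a $c$-approximate max-cut for two distinct members of the family, then $q = o(n)$ queries cannot suffice.

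Concretely, I would proceed in three steps. \textbf{Step 1: Define an ``approximate max-cut dimension.''} For a base graph $G_0$, consider a collection of symmetric perturbation matrices $\{M_i\}_{i=1}^d$ such that each $G_i := G_0 + M_i$ has a unique (or strongly concentrated) max-cut $(S_i, \overline{S_i})$ with the property that every $c$-approximate max-cut of $G_i$ must share the orientation of $S_i$ on a particular ``witness'' coordinate. The approximate max-cut dimension is the largest $d$ for which the $M_i$ are linearly independent while simultaneously the set of $c$-approximate max-cuts of $G_i$ is disjoint from that of $G_j$ for $i\ne j$. \textbf{Step 2: Construct such a family with $d = \Omega(n)$.} A natural candidate is to take $G_0$ to be $K_n$ with all-equal weights (which has max-cut value $\lfloor n/2\rfloor\lceil n/2\rceil$, attained by any balanced bipartition, so that the value is nearly the same as a random cut and only slightly exceeds half the total weight), and perturb by adding a small weighted boost on a carefully chosen bipartite structure associated with vertex $i$; this boost makes the unique max-cut place $i$ on a designated side, while keeping the overall max-cut value only a $(1+o(1))$-factor above half the total edge weight, so that ``$c$-approximate'' with $c>1/2$ forces the algorithm to correctly orient the witness vertex. \textbf{Step 3: Apply the linear-algebra adversary argument.} If an algorithm uses $q < d$ queries, then there exist distinct $i,j$ such that every query $S$ returns the same value on $G_i$ and $G_j$ (because the $M_i$'s span a space of dimension $d$, so some nonzero combination lies in the kernel of the $q$ query functionals). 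The algorithm's transcript, and hence its output cut, is identical on $G_i$ and $G_j$. By construction, this output cannot be a $c$-approximation for both, yielding the $\Omega(n)$ lower bound.

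The main obstacle is \textbf{Step 2}: simultaneously getting (i) linearly independent perturbations of dimension $\Omega(n)$, (ii) the ``oriented witness'' property that forces any $c$-approximate cut to expose which perturbation was applied, and (iii) a base graph where the max-cut value is close enough to half the total weight that $c>1/2$ is actually a constraint (otherwise the trivial greedy/random half-cut would already be a $c$-approximation, and the lower bound fails). Tensions (ii) and (iii) pull in opposite directions: to force orientation on a witness vertex, one wants the perturbation to have a large relative effect on the max-cut; but the overall gap to the half-weight baseline must remain small enough that outputting the ``wrong'' cut from a different perturbation loses more than a $(1-c)$ fraction. I would tune the magnitude of the perturbations and the structure of the witness gadgets (e.g., by concentrating each perturbation on a small but linearly independent set of coordinates, akin to how sensitive instances for submodular maximization are built) so that the loss from a misoriented witness exceeds $(1-c)\cdot\mathrm{maxcut}(G_i)$ for any $c>1/2$.

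Finally, once the approximate cut dimension of the constructed family is shown to be $\Omega(n)$, the formal lower bound follows from a decision-tree / rank argument analogous to~\cite{graur}, essentially unchanged: the set of graphs reached at any leaf of the algorithm's decision tree lies in an affine subspace of codimension equal to the number of queries on that branch, and if this codimension is less than the approximate cut dimension, the leaf contains two members of the family on which no single output cut can be $c$-approximate.
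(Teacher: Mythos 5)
Your high-level framing---extend the cut-dimension of \cite{graur} to rule out approximation, take $K_n$ as the base graph so near-balanced cuts are all near-optimal---matches the paper's approach. But your Step~3 misapplies linear algebra: linear independence of $\{M_i\}_{i=1}^d$ with $d > q$ gives a nonzero combination $z = \sum_k \lambda_k M_k$ in the kernel of the query functionals; it does \emph{not} give a pair $i \neq j$ with $M_i - M_j$ in that kernel, so you cannot conclude that two specific graphs $G_i, G_j$ from your family produce identical answers (and for an adaptive algorithm even the transcripts need not coincide unless you pin down what the adversary answers). What the rank argument actually delivers is that $G_0$ and $G_0 + \epsilon z$ are indistinguishable for small $\epsilon$, but $G_0 + \epsilon z$ is not a member of your family, so the ``witness'' and disjointness properties you arrange in Steps~1--2 say nothing about it.

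The deeper problem is the one you flag as a ``tension'' in Step~2 and then leave unresolved: a rank argument supplies only a perturbation \emph{direction} $z$, and scaling by $\epsilon$ to keep $\one + \epsilon z \ge 0$ shrinks the max-cut change to $O(\epsilon)$. That defeats exactness but not $c$-approximation, which requires moving the max-cut by a constant multiplicative factor and hence a \emph{large} feasible perturbation. Certifying that such a large perturbation exists orthogonal to all query directions is exactly where linear algebra stops and linear programming is needed. The paper abandons any fixed family $\{M_i\}$: after the queries it solves an LP over all $z \ge -\one$ orthogonal to the query vectors, maximizing the boost $z^T v_C$ to a near-balanced cut $C$ (Lemma~\ref{lemma:cutduality}). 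By strong LP duality this reduces to showing that some $\pm 1$ hypercube vertex $u_C$ is at $\ell_1$-distance $\Omega(n)$ from the $O(q)$-dimensional span of the query indicator vectors, established via an $\ell_1$ $\epsilon$-net volume argument (Lemma~\ref{lemma:ell1}). That geometric lemma is the new ingredient your plan is missing.
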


\begin{theorem}\label{thm:informalrandhalfalgo}
    (See Corollary \ref{cor:randhalfalgo} for a precise statement) For $c < 1$, there exists a randomized algorithm with query complexity $\tilde{O}(n)$ that achieves a $c$-approximation.
\end{theorem}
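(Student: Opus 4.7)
The plan is to reduce the problem to building a query-efficient cut sparsifier for weighted graphs. Fix $\epsilon > 0$ with $(1-\epsilon)/(1+\epsilon) \geq c$, and suppose I can construct, using $\tilde{O}(n)$ cut queries to $G$, an explicit weighted graph $H$ on $V$ satisfying $|\text{cut}_H(S) - \text{cut}_G(S)| \leq \epsilon \cdot \text{cut}_G(S)$ for every $S \subseteq V$. Because the theorem constrains only query complexity and not runtime, I then brute-force the exact max-cut $S^{*}$ of the now-explicit graph $H$ by enumerating all $2^{n}$ subsets. The cut-preservation guarantee immediately yields $\text{cut}_G(S^{*}) \geq \frac{1-\epsilon}{1+\epsilon} \cdot \text{cut}_G(S_{\text{opt}}) \geq c \cdot \text{cut}_G(S_{\text{opt}})$, so the entire task reduces to producing such an $H$ with $\tilde{O}(n)$ queries.

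The real work is the sparsifier construction. I would try to leverage the \emph{unweighted} $\tilde{O}(n)$-query sparsifier of \cite{rubinstein} by decomposing $G$ into $O(\log n)$ ``essentially unweighted'' subgraphs. Using $n$ singleton queries I can compute the total edge weight $W = \tfrac{1}{2}\sum_{v} C(\{v\})$; since a uniformly random cut has expected value at least $W/2$, we have $\text{OPT} \geq W/2$, and edges of weight below $\epsilon W / \poly(n)$ can be ignored at negligible additive error. The remaining edge weights span a range of $\poly(n)$, so rounding each to the nearest power of $2$ partitions them into $O(\log n)$ weight classes $G_{1}, \ldots, G_{L}$, each a multigraph in which every edge has weight within a factor of $2$ of some common value $w_{i}$. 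Applying Rubinstein's unweighted sparsifier to each $G_{i}$ and taking the (suitably re-weighted) union should yield the desired $H$, with the per-class error $\epsilon / L$ summing to the target $\epsilon$.

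The crux --- and the step I expect to be the main obstacle --- is simulating the unweighted sparsifier's queries on $G_{i}$ using cut queries only to $G$, since a cut query to $G$ reveals a weighted sum over \emph{all} buckets at once. A useful primitive is the identity $w(A,B) = \tfrac{1}{2}\bigl(C(A) + C(B) - C(A \cup B)\bigr)$ for disjoint $A, B \subseteq V$, which recovers bipartite cut weights in $G$ from three cut queries. If Rubinstein's algorithm can be cast so that all of its probes reduce to bipartite cut-weight evaluations (together with the known scale $w_{i}$), then the simulation goes through, modulo two issues: (i) edges that straddle two consecutive bucket boundaries may be misclassified, which I would address by randomizing the boundaries so that the expected leakage across classes is absorbable into the $(1\pm\epsilon)$ slack; and (ii) filtering a bipartite cut-weight query in $G$ down to its contribution from bucket $i$ requires subtracting the contributions of the other $O(\log n)$ buckets, so the per-query error cannot be allowed to grow. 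Controlling this error propagation while keeping the total query count at $\tilde O(n)$ across all $O(\log n)$ buckets is the central technical difficulty.
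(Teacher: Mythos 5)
Your reduction from $c$-approximate max-cut to building a $(1\pm\epsilon)$-cut sparsifier $H$ in $\tilde{O}(n)$ queries and then brute-forcing on $H$ is exactly the paper's first step (Lemma~\ref{lemma:sparsifiersuffices}), and your observation that edges of weight below $\epsilon W/\poly(n)$ can be discarded --- since a random cut already achieves $W/2$ --- matches the spirit of the paper's Appendix~\ref{sec:rswmaxcut}, where the sparsification loop is cut off at threshold $W_{\text{tot}}/n^3$. Up to that point your plan is sound.

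The gap is in the weight-bucketing step, and it is the one you yourself flag as the ``crux.'' You propose to partition edges into $O(\log n)$ weight classes $G_1,\ldots,G_L$, run the \cite{rubinstein} unweighted sparsifier on each $G_i$, and take the union. The problem is that you never obtain query access to any individual $G_i$. Every cut query to $G$, and hence every bipartite weight $w_G(A,B)$ derived from the identity $w(A,B)=\tfrac{1}{2}(C(A)+C(B)-C(A\cup B))$, returns the aggregate $\sum_i w_{G_i}(A,B)$. There is no way to isolate bucket $i$'s contribution without either (a) already knowing the other buckets' contributions to that query, which would require having already sparsified them with per-query additive error far smaller than the quantity you are trying to measure, or (b) learning individual edges, which costs $\Omega(n^2/\log n)$ queries. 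Your suggestion to ``subtract the contributions of the other $O(\log n)$ buckets'' is circular: the sparsifiers you would need to estimate those contributions are precisely the objects under construction, and the strong-connectivity estimation inside Rubinstein's algorithm needs to query \emph{small} cuts of $G_i$-subgraphs, where the additive error inherited from inaccurate lower-bucket estimates would overwhelm the signal. Randomizing bucket boundaries handles edges straddling a boundary but does nothing for this simulation problem. So as written the proposal does not establish the theorem.

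The paper sidesteps the issue by never forming the $G_i$ at all. Instead of reducing to the unweighted case, it generalizes the \cite{benczurkarger} strength-based sampling directly to the weighted graph: it builds primitives (Lemma~\ref{lemma:allalgoprimitives}) that sample an edge with probability proportional to its weight, locate a heavy edge, and compute induced total weights, all via recursive bisection with $O(\log n)$ cut queries each, on the weighted graph as a whole. The EstimateAndContract routine then slides a strength threshold $\kappa$ --- not a weight threshold --- downward, contracting $\kappa$-strong components and assigning strength estimates. This naive sweep costs $\tilde{O}(n\log W)$; to remove the $\log W$ the paper either stops the sweep at $W_{\text{tot}}/n^3$ (Appendix~\ref{sec:rswmaxcut}, which recovers your discard-small-edges idea and already suffices for the theorem), or, more generally, first runs an ``approximate Kruskal'' to get crude $n^{O(1)}$-factor strength estimates from an approximate maximum spanning forest (Section~\ref{sec:fws}). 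Both routes keep everything weighted throughout and avoid the per-bucket simulation problem entirely.
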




These results naturally beg the question of how the query complexity behaves for other ranges of $c$, for both deterministic and randomized algorithms. We also answer this question in most cases. The content of all of our results is summarized in the following three theorems, and also tabulated in Figure \ref{summarytable}. Soon after, we provide context for these results. 

\begin{theorem}\label{thm:informalexact}
(See Theorems \ref{thm:exactdethardness} and \ref{thm:randhalfhardness} for precise statements) For $c = 1$, the query complexity for a deterministic algorithm to achieve a $c$-approximation is $\Theta(n^2)$ and the query complexity for a randomized algorithm to do the same is between $\tilde{\Omega}(n)$ and $O(n^2)$.
\end{theorem}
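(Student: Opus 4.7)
The theorem comprises four claims: an $O(n^2)$ deterministic upper bound, an $\Omega(n^2)$ deterministic lower bound, an $O(n^2)$ randomized upper bound, and an $\tilde\Omega(n)$ randomized lower bound. Three are comparatively routine, so I address them briefly before turning to the main work, which is the deterministic $\Omega(n^2)$ lower bound.

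The two $O(n^2)$ upper bounds follow from the observation that each edge weight can be recovered via the three-query identity
\[
w(u,v) = \tfrac{1}{2}\bigl(f(\{u\}) + f(\{v\}) - f(\{u,v\})\bigr).
\]
Applying this for every pair of vertices learns the entire weighted graph in $O(n^2)$ queries, after which the exact max-cut is computed offline without further queries. The randomized $\tilde\Omega(n)$ lower bound is inherited from Theorem \ref{thm:randhalfhardness}: any exact max-cut algorithm is in particular a $c$-approximation algorithm for every $c < 1$, so the $\Omega(n/\log n)$ randomized lower bound against $c$-approximation for $c > 1/2$ applies verbatim.

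The deterministic $\Omega(n^2)$ lower bound is where the real work lies; here I plan to adapt the cut-dimension framework of \cite{graur} from the min-cut setting to the exact max-cut setting. Fix a weight vector $w^* \in \RR^{\binom{n}{2}}$ such that $G_{w^*}$ has a unique max-cut $T^*$; generic $w^*$ suffice. Suppose some deterministic algorithm makes queries $S_1, \dots, S_k$ on $G_{w^*}$ and outputs $T^*$, and let $U = \spann\{a_{S_1}, \dots, a_{S_k}\} \subseteq \RR^{\binom{n}{2}}$, where $a_S \in \{0,1\}^{\binom{n}{2}}$ is the indicator vector of edges crossing the cut $S$. Any $w' \in w^* + U^\perp$ yields identical query answers, so by determinism the algorithm outputs $T^*$ on $G_{w'}$ as well, and by correctness $T^*$ must be a max-cut of $G_{w'}$. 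Specializing to $w' = w^* + c\delta$ for $\delta \in U^\perp$ and $c \in \RR$, a scaling argument forces $\langle \delta, a_{T^*} - a_{T'}\rangle = 0$ for every cut $T'$ and every $\delta \in U^\perp$. Because $a_{\emptyset} = 0$, the span $\spann\{a_{T^*} - a_{T'} : T' \subseteq [n]\}$ contains $a_{T^*}$ and hence every $a_{T'}$; and $\spann\{a_T : T \subseteq [n]\}$ is already all of $\RR^{\binom{n}{2}}$, as witnessed by the identity $a_{\{i\}} + a_{\{j\}} - a_{\{i,j\}} = 2\mathbf{e}_{\{i,j\}}$ for every edge. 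Hence $U = \RR^{\binom{n}{2}}$, which gives $k \geq \binom{n}{2} = \Omega(n^2)$.

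The main obstacle I anticipate is that this scaling argument uses arbitrary $c \in \RR$, whereas the cut-query model is typically stated for non-negative edge weights. I plan to handle this by choosing $w^*$ of the form $M\one + \varepsilon \eta$ for a large constant $M$ and a tiny generic perturbation $\eta$, so that $T^*$ is the unique max-cut and yet the valid range of $c$ along every direction $\delta \in U^\perp$ remains large relative to $\varepsilon$; this range is enough to force $\langle \delta, a_{T^*} - a_{T'}\rangle = 0$ for every $T'$. With this adaptation the $\Omega(n^2)$ bound holds in the non-negative weight model, and combining it with the $O(n^2)$ upper bound and the $\tilde\Omega(n)$ randomized lower bound yields the full theorem.
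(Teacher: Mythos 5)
Your treatment of the $O(n^2)$ upper bounds and the $\tilde\Omega(n)$ randomized lower bound matches the paper and is fine. The deterministic $\Omega(n^2)$ lower bound, however, has a genuine gap in the scaling step, and the gap is load-bearing: the span argument relies on exactly the cuts for which the scaling step fails.

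The scaling step. You want to conclude $\langle \delta, a_{T^*} - a_{T'}\rangle = 0$ for \emph{every} cut $T'$. With $w^* = M\one + \varepsilon\eta$ and $\delta$ fixed, the max-cut condition for $T'$ under perturbation $c\delta$ reads $g_{T'} + c\,s_{T'} \geq 0$ with $g_{T'} := \langle w^*, a_{T^*} - a_{T'}\rangle$ and $s_{T'} := \langle \delta, a_{T^*} - a_{T'}\rangle$. To force $s_{T'} = 0$ you need to be able to choose $|c|$ larger than $g_{T'}/|s_{T'}|$, while non-negativity of $w^* + c\delta$ caps $|c|$ at roughly $M/\|\delta\|_\infty$. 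The gap $g_{T'}$ is only $O(\varepsilon)$ when $T'$ is \emph{balanced} (so that $\langle \one, a_{T^*} - a_{T'}\rangle = 0$); for cuts like $T' = \emptyset$, $\{i\}$, $\{i,j\}$, the gap $g_{T'}$ is $\Theta(M n^2)$, and since $|s_{T'}| \leq \|\delta\|_\infty\, \|a_{T^*} - a_{T'}\|_1 = O(\|\delta\|_\infty\, n^2)$, falsifying the inequality would require $|c| = \Omega(M/\|\delta\|_\infty)$ with a constant you don't control. So the argument only yields $s_{T'} = 0$ for (near-)balanced $T'$, not for all $T'$. But your span argument ($a_{T^*} - a_\emptyset = a_{T^*}$, then $a_{\{i\}} + a_{\{j\}} - a_{\{i,j\}} = 2\mathbf{e}_{\{i,j\}}$) uses precisely the far-from-balanced cuts $\emptyset$, singletons, and pairs. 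These two parts of the proof are incompatible.

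What the paper does instead. The paper works with $G = K_n$, which has exponentially many max cuts (all balanced ones), and defines the \emph{max cut space} as the span of $\{v_U : U\ \text{a max cut}\}$. The perturbation $z \in X \cap Y^\perp$ is taken vanishingly small, so non-negativity is free, and it suffices that $z$ is non-orthogonal to \emph{some} max cut's indicator vector to change the max-cut \emph{value}. The heavy lifting is then Lemma \ref{lemma:cutdimodd}: showing that the span of indicator vectors of balanced cuts of $K_n$ is all of $\RR^{\binom{n}{2}}$ (for odd $n$), which is exactly the fact your approach would need but does not have. Note two further differences: your proof, even if fixed, would only give a lower bound for the cut-finding setting (the algorithm must output $T^*$), whereas Theorem \ref{thm:exactdethardness} holds in the value-estimation setting; and your choice of a graph with a \emph{unique} max cut makes the max cut space trivial, so you cannot borrow the paper's perturbation argument directly — you would have to switch to a highly degenerate graph like $K_n$ and restrict the scaling to balanced cuts, at which point you have essentially reconstructed the paper's proof.
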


\begin{theorem}\label{thm:informalhalf}
(See Corollary \ref{cor:randhalfalgo}, Corollary \ref{cor:dethalfhardness}, and Theorem \ref{thm:randhalfhardness} for precise statements) For $c \in (1/2, 1)$, the query complexity for a deterministic algorithm to achieve a $c$-approximation is between $\Omega(n)$ and $O(n^2)$ and the query complexity for a randomized algorithm to do the same is $\tilde{\Theta}(n)$.
\end{theorem}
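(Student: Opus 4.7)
The plan is to establish the theorem by proving its four constituent bounds for $c \in (1/2, 1)$: a deterministic lower bound of $\Omega(n)$, a deterministic upper bound of $O(n^2)$, a randomized lower bound of $\tilde{\Omega}(n)$, and a randomized upper bound of $\tilde{O}(n)$. The $O(n^2)$ deterministic upper bound is essentially free: with $O(n^2)$ queries one can recover every edge weight (e.g., query every singleton $\{v\}$ and every pair $\{u,v\}$, then read off $w(uv) = \tfrac{1}{2}(w(\{u\}) + w(\{v\}) - w(\{u,v\}))$) and then compute an exact max-cut offline; this same bound trivially applies to randomized algorithms.

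For the deterministic $\Omega(n)$ lower bound (Corollary \ref{cor:dethalfhardness}), the plan is to extend the cut-dimension framework of \cite{graur} to the approximation setting. I would define a \emph{$c$-approximate max-cut dimension} of a graph $G$ as the dimension of the $\RR$-span of the indicator vectors of all cuts whose weight is at least $c \cdot \mathrm{mc}(G)$, and show by a standard adversary argument that any deterministic algorithm returning a $c$-approximate max-cut on $G$ must make at least this many cut queries. The main remaining task is to exhibit a graph whose $c$-approximate max-cut dimension is $\Omega(n)$, which requires designing an instance with a rich family of near-maximum cuts whose indicator vectors span an $\Omega(n)$-dimensional subspace modulo $\pm \one$.

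For the randomized $\tilde{O}(n)$ upper bound (Corollary \ref{cor:randhalfalgo}), the strategy is to build a $(1\pm\varepsilon)$ cut sparsifier $H$ of the weighted graph $G$ using only $\tilde{O}(n)$ cut queries, and then apply any classical near-optimal max-cut algorithm to $H$: since every cut weight is preserved up to $(1\pm\varepsilon)$, a $c'$-approximate max-cut of $H$ for $c' \ge c \cdot (1 + O(\varepsilon))$ is a $c$-approximate max-cut of $G$. The \cite{rubinstein} sparsifier only handles unweighted graphs, so the main work is a query-efficient Benczur--Karger-style sampling scheme: bucket edges into $O(\log W)$ weight classes, estimate edge strengths (or effective connectivities) within each bucket using query-efficient min-cut and connectivity subroutines from prior cut-query work, and sample at rates inversely proportional to these strengths so that $H$ has $\tilde{O}(n)$ edges in expectation. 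I expect this sparsification step to be the main obstacle of the whole theorem: producing good strength estimates in the weighted setting with only $\tilde{O}(n)$ cut queries (rather than implicitly learning $\omega(n)$ edge weights) requires a delicate interleaving of sampling and connectivity estimation that has no analogue in the unweighted case.

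For the randomized $\tilde{\Omega}(n)$ lower bound (Theorem \ref{thm:randhalfhardness}), the plan is Yao's principle applied to a hidden-bipartition distribution: pick a uniformly random $A \subseteq [n]$ with $|A| = n/2$ and plant a weighted graph in which the cut $(A, \bar A)$ has weight strictly larger than any cut disagreeing with $(A, \bar A)$ on more than a small constant fraction of vertices. Any cut $B$ that is a $c$-approximate max-cut for $c > 1/2$ must then lie within $\delta n$ Hamming distance of $A$ or $\bar A$ for some small constant $\delta$, and a counting argument shows each output $B$ is compatible with only a $2^{-\Omega(n)}$ fraction of possible hidden $A$'s. On the other hand, after discretizing cut weights to $\poly(n)$ precision, a single cut query on the random instance reveals at most $O(\log n)$ bits of information about $A$, so the algorithm's transcript needs $\Omega(n/\log n) = \tilde{\Omega}(n)$ queries to have enough entropy to identify the right $A$-coset on most inputs.
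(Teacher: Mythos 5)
Your deterministic $\Omega(n)$ lower bound has a genuine gap, and it is one the paper itself flags. You propose defining a ``$c$-approximate max-cut dimension'' as the dimension of the span of the indicator vectors of all cuts of weight at least $c$ times the max-cut value, and then claiming a ``standard adversary argument'' gives a lower bound equal to this dimension. This does not follow. In the exact cut-dimension argument (Lemma~\ref{lemma:gprw}) one picks a perturbation direction $z$ in the max-cut space orthogonal to all query vectors and adds $\epsilon z$ for an arbitrarily small $\epsilon>0$; this changes some max cut's value by a nonzero but arbitrarily small amount, so it only rules out \emph{exact} computation. To rule out a $c$-approximation the adversary must find a $z$ that is orthogonal to all queries, keeps $\one+z\geq 0$, and satisfies $z^T v_C=\Omega(n^2)$ for some near-balanced cut $C$, i.e.\ a perturbation large enough to change the max cut of $K_n$ by a multiplicative factor exceeding $1/c$. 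Whether such a large feasible $z$ exists is a quantitative question that a dimension count of near-max cuts cannot answer; the paper's Related Work discussion makes exactly this point --- the analogous ``dimension of approximate min-cuts'' studied by~\cite{LeeLSZ21} ``does not imply lower bounds on the query complexity of finding approximately-optimal cuts.'' The paper's actual argument (Section~\ref{sec:dethalfhardness}) phrases the adversary's task as a linear program (maximize $z^T v_C$ subject to $z^T v_{Q_j}=0$ and $z\geq -\one$), reparametrizes cuts via the $\pm 1$ indicator vectors $u_S\in\RR^n$, reduces to an auxiliary LP in $\RR^n$, and then uses strong LP duality (Lemma~\ref{lemma:duality}) together with an $\ell_1$ $\epsilon$-net volume argument on the hypercube (Lemma~\ref{lemma:ell1}) to show the LP has value $\Omega(n^2)$ whenever fewer than $\alpha n$ queries are made. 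That machinery is precisely what your sketch is missing.

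Your other three pieces are broadly on track. The $O(n^2)$ upper bound and the hidden-bipartition plus information-theoretic $\tilde\Omega(n)$ randomized lower bound match the paper's arguments (note the latter only applies to cut-finding: in the paper's hard distribution the max-cut value is essentially always $n^2/4$, so value estimation is trivial there). For the randomized $\tilde O(n)$ upper bound you correctly identify sparsification and the weighted case as the key obstacle, but the bucketing-into-$O(\log W)$-weight-classes approach you sketch gives $\tilde O(n\log W)$ queries, which is not $\tilde O(n)$ when $W$ is superpolynomial. The paper closes this gap by either (i) obtaining crude (factor-$n^4$) edge-strength estimates from a maximum spanning forest via~\cite{benczurkarger}'s Lemma~\ref{lemma:MSTapprox}, realized query-efficiently with an approximate Kruskal routine so the sliding window over $\kappa$ needs only $O(\log n)$ levels, or (ii) for max-cut specifically, discarding edges of weight below $W_\text{tot}/n^3$ since they are negligible for the max cut (Appendix~\ref{sec:rswmaxcut}).
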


\begin{theorem}\label{thm:informalzero}
(See Corollary \ref{corollary:randalgorepetition}, Theorem \ref{thm:detlesshalfalgorithm}, and Corollary \ref{cor:detzerohardness} for precise statements) For $c \in (0, 1/2)$, the query complexity for a deterministic algorithm to achieve a $c$-approximation is $\Theta(\log n)$ and the query complexity for a randomized algorithm to do the same is $\Theta(1)$.
\end{theorem}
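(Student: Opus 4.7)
The plan is to use a single random cut. Sample $S \subseteq V$ by placing each vertex independently in $S$ with probability $1/2$ and query $\text{cut}(S)$. Letting $W$ denote the total edge weight, $\EE[\text{cut}(S)] = W/2$ and $\text{cut}(S) \leq W$ always, so a reverse Markov argument (writing $W/2 = \EE[\text{cut}(S)] \leq cW + \Pr[\text{cut}(S) \geq cW] \cdot (1-c) W$) yields $\Pr[\text{cut}(S) \geq cW] \geq (1/2 - c)/(1 - c)$, a positive constant for any $c < 1/2$. Since $\text{max-cut} \leq W$, this produces a $c$-approximation with constant probability, and $O(1)$ independent repetitions amplify the success probability arbitrarily. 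For the matching $\Omega(1)$ lower bound, I would argue that any $0$-query algorithm outputs a distribution over partitions independent of the input, and an averaging argument over the $\binom{n}{2}$ single-edge graphs shows some pair is separated with probability at most $1/2$, which is insufficient for a high-probability $c$-approximation when $c > 0$.

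\textbf{Deterministic $O(\log n)$ upper bound.} The plan is to exhibit a non-adaptive family of $L = O(\log n)$ partitions derived from a binary code. Pick any $c' \in (c, 1/2)$. By the Gilbert--Varshamov bound there is a code $C \subseteq \{0,1\}^L$ with $|C| = n$, length $L = O(\log n)$, and minimum Hamming distance at least $c' L$. Assign each vertex $v$ a distinct codeword $c(v) \in C$ and define $S_i = \{v \in V : c(v)_i = 1\}$ for $i \in [L]$. Every edge $(u,v)$ is separated by exactly $d(c(u), c(v)) \geq c' L$ of the partitions, so
\[
\sum_{i=1}^L \text{cut}(S_i) \;=\; \sum_{(u,v)} w_{uv} \cdot d(c(u), c(v)) \;\geq\; c' L \cdot W,
\]
which gives $\max_i \text{cut}(S_i) \geq c' W \geq c' \cdot \text{max-cut} \geq c \cdot \text{max-cut}$. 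The algorithm then simply queries all $L$ partitions non-adaptively and returns the one with the largest observed cut.

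\textbf{Deterministic $\Omega(\log n)$ lower bound.} I would use an indistinguishability argument pitting the empty graph against single-edge graphs. Run a $q$-query deterministic algorithm on the empty graph: all responses are $0$, so the algorithm issues a fixed query sequence $S_1, \ldots, S_q$ and outputs some partition $S^*$. Declare $u \sim v$ iff $u \in S_i \Leftrightarrow v \in S_i$ for all $i$; this produces at most $2^q$ equivalence classes. For any $u \sim v$, the unit-weight single-edge graph $G_{u,v}$ yields the same all-zero response sequence and hence the same output $S^*$; to be a $c$-approximation on $G_{u,v}$ with $c > 0$, $S^*$ must separate $u$ from $v$. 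Hence $S^*$ separates every pair inside each class, which forces each class to contain at most $2$ vertices (three pairwise-separated vertices cannot all sit on the two sides of $S^*$). Therefore $n \leq 2 \cdot 2^q$, giving $q \geq \log_2(n/2) = \Omega(\log n)$.

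The main technical point is the choice $c' < 1/2$ in the code construction: pushing $c'$ up to $1/2$ would make the GV-bound length blow up, mirroring the inherent $1/2$ barrier for random-cut-based approximations and aligning with the phase transition asserted in Theorem \ref{thm:informalhalf}. All other steps are direct consequences of elementary probabilistic and coding-theoretic facts.
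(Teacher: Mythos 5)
Your proposal is correct and follows essentially the same overall strategy as the paper: random-cut amplification for the randomized upper bound, a small family of cuts in which every edge is crossed a $c$-fraction of the time for the deterministic $O(\log n)$ upper bound, and single-edge hard instances for the lower bounds. The one place your route genuinely diverges is the deterministic $\Omega(\log n)$ lower bound. You use a crisp pigeonhole: group vertices by their membership pattern across the (fixed, all-zero-response) query sequence into $\leq 2^q$ classes; within a class the algorithm's output cut $S^*$ is forced, $S^*$ must separate every pair in the class, and a single bipartition can separate at most two mutually-separated vertices, so each class has size $\leq 2$ and $n \leq 2^{q+1}$. The paper instead proves Theorem~\ref{thm:generalepshardness}: a hard distribution (empty graph with probability $1/2$, else a uniformly random single-edge graph) combined with a Cauchy--Schwarz bound on the bucket sizes gives a success-probability bound for \emph{randomized} algorithms, from which both the deterministic $q \geq \lg n$ bound (Corollary~\ref{cor:detzerohardness}) and the randomized $q \geq \lg(1/p)-1$ bound (Corollary~\ref{cor:randepshardness}) drop out as corollaries, and crucially the paper's argument holds in the \emph{value estimation} setting, not just cut finding. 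Your argument is cut-finding-specific (indeed, run in the value-estimation setting the same class argument forces classes of size $1$, recovering $q \geq \lg n$, but you don't note this). Your Gilbert--Varshamov framing of the upper bound is morally identical to the paper's Lemma~\ref{lemma:pseudorandom}, which samples $q = \frac{4}{(1-2c)^2}\log n$ random cuts and applies Hoeffding plus a union bound over edges --- random partitions are exactly random codewords --- and both exhibit the same blowup as $c \to 1/2$. Your randomized $\Omega(1)$ lower bound via averaging over single-edge graphs is fine, though the constant ``$1/2$'' is only asymptotic; the paper gets this directly from Theorem~\ref{thm:generalepshardness} as well. The one minor loss in your proposal is the value-estimation coverage, which the paper explicitly flags as a feature of its lower bounds.
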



\begin{figure}
\begin{center}
    \begin{tabular}{|c|c|c|}
    \hline
    $c$ & Deterministic & Randomized \\
    \hline\hline
    $1$ & $\Theta(n^2)$ & $(\tilde{\Omega}(n), O(n^2))$ \\
    \hline
    $(1/2, 1)$ & $(\Omega(n), O(n^2))$ & $\tilde{\Theta}(n)$ \\
    \hline
    $(0, 1/2)$ & $\Theta(\log n)$ & $\Theta(1)$ \\
    \hline
    \end{tabular}
    \caption{Summary of our results. For each range of $c$, we state the query complexity (up to constant and logarithmic factors) that we show for achieving a $c$-approximation in both the deterministic and randomized settings. $(n, n^2)$ indicates settings where we have a lower bound of $\tilde{\Omega}(n)$ and an upper bound of $O(n^2)$.}
    \label{summarytable}
\end{center}
\end{figure}

We now provide context for each of our results, beginning with Theorem~\ref{thm:informaldethalfhardness}. Observe that there is a straightforward non-adaptive $O(n^2)$-query deterministic algorithm to find the max-cut exactly (observed in~\cite{rubinstein}, as it applies to the min-cut as well). The algorithm can query all singletons and sets of size 2 and from this learn $w_{i, j} = \frac{1}{2}(F(\left\{i\right\}) + F(\left\{j\right\}) - F(\left\{i, j\right\}))$ for all $i, j$ and thus the entire graph. The algorithm can then find the max cut exactly using brute force, since no more queries need to be made. This addresses all of the $O(n^2)$ upper bounds stated in the theorems. Theorem~\ref{thm:informalexact} shows that this trivial algorithm is optimal among deterministic algorithms: up to constant factors, any deterministic algorithm must learn the entire graph in order to find the global max-cut with cut queries.

Next, observe that there is a phase transition at $c = 1/2$. Theorem~\ref{thm:informalzero} establishes that a $(1/2-\varepsilon)$-approximation can be achieved deterministically with $O(\log n)$ queries (which happen to also be necessary). On the other hand, even a randomized algorithm needs $\tilde{\Omega}(n)$ queries to guarantee a $(1/2+\varepsilon)$-approximation.

Finally, observe that we resolve the asymptotic query complexity for most cases. For randomized algorithms, the only unresolved cases are $c= 1/2$ and $c = 1$.\footnote{For the $c = 1/2$ case, simply outputting a random cut will achieve a $(1/2)$-approximation in expectation in $O(1)$ queries. But we are concerned with the algorithm's ability to achieve a $c$-approximation with some probability $p > 0$. In this setting, there exists a deterministic $O(n)$-query algorithm (see Section \ref{sec:detgreedyalgo}) but we do not have a matching lower bound.} For deterministic algorithms, the range $[1/2, 1)$ remains unresolved. 




Before continuing, we note several facts about our results. In the positive direction, we note that all of our algorithms find a set $S$ that is a $c$-approximation to the global max-cut (for undirected, weighted graphs with arbitrarily large weights), rather than simply estimating the value of the max-cut within a factor of $c$. We refer to these two settings as the \emph{cut finding} and \emph{value estimation} settings respectively. On the other hand, most of our lower bounds hold even against algorithms in the value estimation setting. The only exception is our $\tilde{\Omega}(n)$ lower bound on randomized algorithms for $c > 1/2$, which we only prove in the cut finding setting. Similarly, all algorithms we discuss, except for our $\tilde{O}(n)$-query randomized algorithm for $c \in (1/2,1)$ and the well-known $O(n)$-query deterministic greedy algorithm for $c = 1/2$, are non-adaptive. On the other hand, all of our lower bounds hold even against adaptive algorithms.



It may appear at first glance that Theorem \ref{thm:informaldethalfhardness} is subsumed by the randomized lower bound in Theorem \ref{thm:informalhalf}. Both results are lower bounds for $c > 1/2$: the former states that a deterministic algorithm in this setting requires $\Omega(n)$ queries, while the latter says that a randomized algorithm requires $\tilde{\Omega}(n)$ queries. However, the deterministic lower bound is interesting on its own for two reasons. The first reason is that the randomized lower bound is only $\Omega(n/\log n)$, which is weaker than the $\Omega(n)$ we are able to prove in the deterministic case. Secondly, and more critically, the deterministic lower bound holds even for the value estimation setting while the randomized lower bound is only for the cut finding setting. Our argument for the randomized lower bound does not appear adaptable to the value estimation setting; indeed, the hard distribution used in our proof is actually very easy in the value estimation setting.\footnote{In more detail, the hard distribution is a random complete bipartite graph with edge weights 1. For this distribution in the value estimation setting, an algorithm could just immediately output $n^2/4$ without making any queries at all.} 

In the negative direction, we note here that most of our algorithms have exponential time complexity even though they are query-efficient. For example, the $O(n^2)$-query algorithm we just described learns the graph in polynomial time, but then uses exponential brute force search to actually find its max-cut. This illustrates that, by focusing on query complexity, we are primarily concerned with the information theoretic question of how much the algorithm must learn about the graph to be able to estimate its max-cut, even with unlimited computation. Note that the most query-efficient algorithms for submodular function minimization are exponential time as well~\cite{jiang21}, so this is not uncommon when prioritizing query complexity.

\subsection{Technical Highlights}
Our results follow from a wide array of techniques. Some results (e.g.~our $\tilde{\Omega}(n)$ lower bound on randomized algorithms for $c$-approximate max-cut and $c > 1/2$) follow from direct constructions of hard distributions --- we defer further discussion of these constructions to the corresponding technical sections and appendices. Some of our results follow from novel techniques that are likely of independent interest for subsequent work --- we quickly highlight these below.\footnote{Additionally, some of our results are reasonably straightforward (e.g.~our $O(\log n)$ deterministic algorithm for $c$-approximate max-cut when $c < 1/2$) --- we include discussion of these results in Appendices \ref{sec:randhalfhardness}, \ref{sec:dethardnesszero}, and \ref{sec:easyalgoresults} for completeness.}

The key ingredient in our $\Omega(n^2)$-lower bound for exact max-cut is the \emph{cut dimension} introduced by~\cite{graur} for min-cuts (which has also been studied in follow-up work~\cite{LeeLSZ21}). The lower bound follows immediately after establishing that the complete graph on $n$ vertices has max-cut dimension $\Omega(n^2)$. More interestingly, we extend the concept of the cut-dimension to what is roughly a notion of ``$c$-approximate cut-dimension" using strong LP duality. We show that this technique provides a lower bound on the number of queries needed by a deterministic algorithm to find a $c$-approximate max-cut for $c > 1/2$, and that for the complete graph on $n$ vertices this gives a bound of $\Omega(n)$. This approach should also be of independent interest, given recent interest in the exact cut dimension. Completing this analysis also requires a technical lemma (Lemma \ref{lemma:ell1}) stating a simple geometric property of the Boolean hypercube, which may additionally be of independent interest.

Our $\tilde{O}(n)$-query randomized $c$-approximation for $c \in (1/2, 1)$ follows from a query-efficient sparsifier for global cuts in undirected, weighted graphs (once we have learned a sparsifier for the graph, we can just exhaust to find its max-cut, which is a $(1-\varepsilon)$-approximation). A query-efficient sparsifier for unweighted graphs appears in~\cite{rubinstein}, based on ideas by~\cite{benczurkarger}. We first give a natural extension of their sparsifier that accommodates graphs with weights in $[1,\poly(n)]$. We then go a step further and provide a novel query-efficient sparsifier for \emph{all weighted graphs}, using other ideas from~\cite{benczurkarger} relating edge strengths to maximum spanning trees. For completeness, we show in Appendix \ref{sec:rswmaxcut} that in fact a slight modification of the~\cite{rubinstein} sparsifier also suffices for approximate max-cut for all weighted graphs (however, this modification may not produce a sparsifier). Our stronger sparsifier is hence not ``necessary'' to resolve approximate max-cut, but is of independent interest as not all weighted graph problems can be reduced to one with weights in $[1,\poly(n)]$ (e.g. exact min-cut). In fact, our stronger sparsifier is already used in recent work~\cite{MukhopadhyayN20}.\footnote{\cite[Theorem~5.2]{MukhopadhyayN20} seems to credit~\cite{rubinstein} for handling weighted graphs (whereas their query-efficient sparsifier only accommodates unweighted graphs). They use this result as part of their $\tilde{O}(n)$ query algorithm for \emph{weighted} min-cut.}




\subsection{Related Work}


\textbf{Learning graphs in the cut-query model.} The first natural question when an algorithm has restricted query access to a graph is how many queries it needs to learn the entire graph. For the case of cut queries, this was first studied by~\cite{gk00}, who show that the query complexity of deterministically learning an unweighted graph is $\Theta(n^2/\log n)$, and that this can be attained non-adaptively. A later line of work~\cite{ChoiK08, BshoutyM12, Choi13, BshoutyM15} answered this question for weighted graphs while also being sensitive to the number $m$ of edges in the graph with nonzero weight, finding that the query complexity in this case is $\Theta(m \log n/\log m)$.

\textbf{Min-cut in the cut-query model.} A very active line of work is that of understanding what properties an algorithm can learn about a graph that it only has query access to, without using up enough queries to learn the entire graph. The particular case of most relevance to this work is that of min-cut in the cut-query model. The first progress on this problem was the algorithm by~\cite{rubinstein} for \emph{unweighted} graphs using $\tilde{O}(n)$ queries. This was later improved to $O(n)$ queries by~\cite{ApersEGLMN22}. For \emph{weighted} graphs, an algorithm using $\tilde{O}(n)$ cut queries was presented by~\cite{MukhopadhyayN20} (as previously noted, their result leverages a query-efficient sparsifier for weighted graphs, which our paper is the first to provide). 

On the lower bound side,~\cite{graur} show using a linear algebraic argument that $\Omega(n)$ queries are necessary for weighted graphs (the constant-factors have since been improved by~\cite{LeeLSZ21}), thus resolving the query complexity of exact min-cut in weighted graphs up to logarithmic factors. As mentioned earlier, our deterministic lower bounds arise from adapting and extending the concept of the cut dimension to approximation.

The cut dimension was introduced by~\cite{graur}, and further studied in~\cite{LeeLSZ21}.~\cite{LeeLSZ21} further nail down that undirected graphs have a min-cut dimension of no more than $2n-3$ (and there exist graphs realizing this bound). They also introduce the $\ell_1$-approximate cut dimension. The $\ell_1$-approximate cut dimension is motivated by a similar thought experiment as our ``$c$-approximate cut dimension'' technique --- both consider an adversary changing the original graph in a way that respects all queries made, and both consider the magnitude of the changes made to the max/min-cut (rather than just whether or not the cut is changed). However, their $\ell_1$-approximate cut dimension does not imply lower bounds on the query complexity of finding approximately-optimal cuts. Finally, they further consider ``the dimension of approximate min-cuts'', and prove a linear upper bound on this. This concept does not have any relation to our approximate cut dimension technique (and in particular, only our technique implies lower bounds on the deterministic query complexity). In summary, the cut dimension is an active concept of study in related work, but our work is the first to use this concept to lower bound the query complexity of approximately-optimal cuts.

\textbf{Other graph algorithms in other query models.} Many works have considered other graph problems in other query models as well. These query models include cut queries~\cite{ChakrabartyL23}, linear measurements~\cite{agm12a, AssadiCK21}, matrix-vector products~\cite{swyz19}, OR queries~\cite{AssadiCK21, BlikstadBEMN22}, XOR queries~\cite{BlikstadBEMN22}, and AND queries~\cite{BlikstadBEMN22}. Some of these models e.g. linear measurements and matrix-vector products generalize the cut query model, while the others do not appear directly related. These works also consider a variety of graph problems, including connectivity~\cite{agm12a, swyz19, AssadiCK21, ApersEGLMN22, ChakrabartyL23}, bipartiteness~\cite{agm12a}, triangle detection~\cite{swyz19}, minimum spanning tree~\cite{agm12a}, and maximum matching~\cite{agm12a, BlikstadBEMN22}.

\textbf{Graph sparsification.}~\cite{rubinstein} also provide an algorithm for graph sparsification in $\tilde{O}(n)$ queries for unweighted graphs, based on the idea of \emph{edge strength-based sampling} used by \cite{benczurkarger} to construct sparsifiers in the classical computational model. There are also other lines of work that adapt the ideas from \cite{benczurkarger} to construct sparsifiers in other limited-access computational models, particularly (semi-)streaming \cite{agm12a, agm12b, agm13, klmms14, swyz19}. Some of these algorithms are adaptable to the cut-query model. However, these algorithms either only support unweighted graphs or incur a performance cost for weighted graphs. For example, the natural extension of the algorithm in \cite{rubinstein} to weighted graphs requires $\tilde{O}(n \log W)$ queries, where $W$ is the ratio between the largest and smallest nonzero edge weight in the graph. Our novel sparsifier avoids this pitfall by using results from \cite{benczurkarger} connecting edge strengths to maximum spanning trees to remove the $\log W$ factor.


While these results and ours all use randomization to construct a sparsifier, there are also classical algorithms for constructing sparsifiers deterministically \cite{bss14, silvahs16}. Implementing such an algorithm directly would require being able to learn the spectrum of the graph's Laplacian. It may indeed be possible to do this (exactly or approximately) in a query-efficient way, but doing so would require significantly new ideas. A deterministic query-efficient implementation of these algorithms would also yield a deterministic query-efficient algorithm for a $(1-\epsilon)$-approximation to max cut.

\textbf{Max-cut in the classical computational model.} Max-cut is extremely well-studied in the computational model, but algorithms in the cut-query model differ significantly. For example,~\cite{GW} present a randomized polynomial-time algorithm using semidefinite programming that achieves a $\approx 0.878$ approximation in expectation. However, this algorithm does not imply anything in our model; even formulating the necessary SDP requires access to the entire graph.

It has since been shown by~\cite{hastad} that it is NP-hard to beat a $16/17 \approx 0.941$ approximation. Moreover,~\cite{khot} show that it is NP-hard to beat the constant achieved by Goemans and Williamson assuming the unique games conjecture~\cite{khotugc}. These lower bounds do not imply anything in our model either. In fact, many of our algorithms (and those in prior work) involve exponential computation even if they use an efficient number of queries. Once we have either learned the entire graph ($c = 1$) or constructed a sparsifier of the graph ($c \in (1/2, 1)$), we use brute force search to find the exact max cut on the graph we have learned.

\textbf{Submodular function maximization.} Our problem can be viewed as a special case of symmetric submodular function maximization, by taking $f(\cdot)$ to be the cut function of a graph. However, the additional structure imposed by the cut function of a graph opens up possibilities for new algorithms and requires new lower bound arguments. For example, when $c > 1/2$, \cite{feige} show that achieving a $c$-approximation to symmetric submodular function maximization requires $\exp(\Omega(n))$ queries to the oracle. For global max-cut, the situation is significantly different: $O(n^2)$ queries suffice to trivially learn the entire function (and therefore, there seems to be little hope of embedding hard SFM instances as graphs). Still, we show that this trivial algorithm is optimal among deterministic algorithms when $c=1$, and show that $\tilde{\Omega}(n)$ queries are necessary for even a randomized algorithm to achieve a $c$-approximation for $c > 1/2$. 

For $c \leq 1/2$, randomized~\cite{buchbinder} and deterministic~\cite{buchbinderdet} algorithms are known that achieve a $1/2$-approximation with $O(n)$ and $O(n^2)$ queries respectively, even when the submodular function is not symmetric. \cite{feige} show in the symmetric case that picking a random set ($O(1)$ queries) achieves a $1/2$-approximation in expectation. They also provide a deterministic algorithm based on local search that achieves a $c$-approximation for any $c < 1/2$ in $\tilde{O}(n^3)$ queries. When restricting to max-cut,~\cite{feige} immediately implies a randomized algorithm for all $c \leq 1/2$ with $O(1)$ queries (which is optimal). For deterministic algorithms, however, our bound of $\Theta(\log n)$ is again an exponential improvement over the general case of submodular function maximization.


In summary, global max-cut demonstrates a similar phase transition at $c =1/2$ as general submodular function maximization: the query complexity for $c < 1/2$ is exponentially smaller than the query complexity for $c > 1/2$. However, the distinction for max-cut is between logarithmic and polynomial queries, whereas the distinction for general submodular functions is between polynomial and exponential.


\textbf{Submodular function minimization.} Just as submodular function maximization generalizes max-cut in the cut query model, submodular function minimization generalizes min-cut. Unlike the maximization case which requires exponentially many queries, general submodular function minimization can be solved exactly with polynomially many queries; state-of-the-art algorithms use $\tilde{O}(n^2)$ queries~\cite{jiang21, LeeSW15}.

However, until relatively recently, most query lower bounds for submodular function minimization were only $\Omega(n)$~\cite{Harvey08, graur, LeeLSZ21}. The last two results proceed using the aforementioned cut dimension technique. This was recently improved to $\Omega(n \log n)$ by~\cite{ChakrabartyGJS22}, using different ideas unrelated to graph cuts.

\section{Preliminaries}
$G$ is an undirected, weighted graph with weights $w_{i,j}$ on the edge $(i,j)$. $G$ induces a cut function $F$. We denote the cut function by $F(\cdot)$, so $F(S):=\sum_{i \in S, j \notin S} w_{i,j}$. Additionally, for any cut $S \subseteq [n]$, we define the indicator vector $v_S \in \RR^{\binom{n}{2}}$ by $(v_S)_{i, j} = 1$ if $(i, j)$ crosses the cut defined by $S$ and 0 otherwise.

We consider algorithms that have black-box access to $F(\cdot)$, and aim to find $\arg\max_{S \subseteq [n]}\{F(S)\}$ (exact maximization) or a set $T$ such that $F(T) \geq c\cdot \max_{S \subseteq [n]}\{F(S)\}$ ($c$-approximation). We refer to the query complexity as the number of queries to $F(\cdot)$ that the algorithm makes (the algorithm has no other access to $G$ or $F$, and may perform unlimited computation). 



\section{Lower Bound for Deterministic $(1/2 + \epsilon)$-approximation}\label{sec:dethalfhardness} 

In this section, we extend the cut dimension technique by~\cite{graur} using linear programming and the strong duality theorem to show the deterministic hardness part of Theorem \ref{thm:informalhalf}. (We refer the reader to Appendix \ref{sec:detexacthardness} for a discussion of this technique and its direct application to show the deterministic lower bound in Theorem \ref{thm:informalexact}.) Our exact result is as follows:

\begin{theorem}\label{thm:dethalfhardness}
Suppose we have $c \in (1/2, 1), \epsilon \in (0, 1), \alpha > 0$ such that $c > \frac{1}{1+\epsilon^2}$ and $\alpha < \frac{(1 - \epsilon)^3}{108(1+\epsilon)}$. Then for $n$ sufficiently large, $\alpha n$ queries do not suffice for a deterministic algorithm to estimate the max cut value within a factor of $c$.
\end{theorem}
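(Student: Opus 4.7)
The plan is to extend the cut-dimension framework of~\cite{graur} using LP duality together with a geometric property of the Boolean hypercube (the forthcoming Lemma \ref{lemma:ell1}). Fix the base graph $G_0 = K_n$ with unit edge weights, so $V_0^\star := \max_S F_{G_0}(S) = \lfloor n^2/4 \rfloor$. Tracing any deterministic algorithm on $G_0$ yields a fixed sequence of queries $S_1, \ldots, S_q$ with $q \le \alpha n$ and a fixed output $\hat V$. To defeat the algorithm (even in the value-estimation setting), it suffices to exhibit an alternate non-negative weight vector $w$ with $v_{S_i}^\top w = v_{S_i}^\top \mathbf{1} = F_{G_0}(S_i)$ for every query but whose max-cut $V_w^\star$ strictly exceeds $V_0^\star/c$. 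Then the same $\hat V$ cannot simultaneously satisfy $\hat V \le V_0^\star$ (required on $G_0$) and $\hat V \ge c V_w^\star > V_0^\star$ (required on $w$).

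I will look for $w$ in the form $w = \mathbf{1} + \delta$ with $\delta$ in the null space $N := \{\delta : v_{S_i}^\top \delta = 0 \text{ for all } i\}$ and $\|\delta\|_\infty \le \gamma$ for a small $\gamma \in (0,1)$, which guarantees $w \ge 0$. For any candidate cut $T$, $F_w(T) = F_{G_0}(T) + v_T^\top \delta$, so the task reduces to finding a cut $T$ together with a feasible $\delta$ making $v_T^\top \delta$ large. Standard LP duality applied to the equality constraints $v_{S_i}^\top \delta = 0$ and box constraints $-\gamma \le \delta_{ij} \le \gamma$ yields
\begin{equation*}
    \max\{\, v_T^\top \delta \;:\; \delta \in N,\ \|\delta\|_\infty \le \gamma \,\} \;=\; \gamma \cdot \min_{\lambda \in \RR^q} \Bigl\| v_T - \sum_{i=1}^q \lambda_i v_{S_i} \Bigr\|_1,
\end{equation*}
i.e., $\gamma$ times the $\ell_1$-distance from $v_T$ to $\spann(v_{S_1}, \ldots, v_{S_q})$. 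The question therefore becomes purely geometric: does some cut indicator vector lie $\ell_1$-far from this span?

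This is precisely the role of Lemma \ref{lemma:ell1}, which I anticipate will state that for any subspace of $\RR^{\binom{n}{2}}$ of dimension at most $\alpha n$ (with $\alpha$ sufficiently small), some cut indicator vector lies $\ell_1$-distance $\Omega(n^2)$ away, with the constant depending quantitatively on $\alpha$. Feeding this into the LP duality identity yields $\delta \in N$ with $\|\delta\|_\infty \le \gamma$ and $v_T^\top \delta \ge \gamma \cdot C(\alpha) \cdot n^2$ for some cut $T$. Targeting a roughly balanced witness $T$ (so that $F_{G_0}(T) \approx n^2/4$) and using the hypothesis $c > 1/(1+\epsilon^2)$, which gives $V_0^\star/c - F_{G_0}(T) < \epsilon^2 \cdot n^2/4$, it then suffices to choose $\gamma = \Theta(\epsilon)$ so that $\gamma \cdot C(\alpha) \cdot n^2 > \epsilon^2 \cdot n^2/4$. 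The exact quantitative bookkeeping — matching the stated hypothesis $\alpha < (1-\epsilon)^3/(108(1+\epsilon))$ — should fall out of balancing the dependence of $C(\alpha)$ on $\alpha$ against the $\epsilon^2$ gap and the scaling $\gamma \propto \epsilon$ needed to keep the perturbation feasible.

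The main obstacle I anticipate is Lemma \ref{lemma:ell1} itself: quantifying how far a cut indicator vector must sit from an arbitrary $\alpha n$-dimensional subspace in $\ell_1$-distance, ideally while controlling how close to balanced the witnessing $T$ is. A natural route is a covering-style argument: if every one of the $2^{n-1}$ cut vectors lay within $\ell_1$-radius $r$ of a $d$-dimensional subspace, the $\ell_1$-thickening of the subspace would have to cover all these hypercube vertices, and a dimension-counting or volume estimate then forces $r$ to scale with $n^2$ once $d = O(n)$. Pinning down the constants sharply enough to match the theorem's hypothesis is the delicate part; the LP duality manipulation and the final contradiction argument are comparatively mechanical.
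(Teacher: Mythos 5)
Your overall framing is on the right track---adversary answers as $K_n$, LP duality converts the perturbation problem into an $\ell_1$-distance-to-subspace question, and a geometric lemma finishes---but you've skipped the paper's central technical move, and the version of the geometric lemma you anticipate is not the one the paper proves, nor is the covering argument you sketch for it sound.

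After the LP duality step you land on: show that some cut indicator $v_T \in \{0,1\}^{\binom{n}{2}}$ is $\ell_1$-far from an $\alpha n$-dimensional subspace of $\RR^{\binom{n}{2}}$. You then propose a volume/covering argument because ``the $\ell_1$-thickening would have to cover all these hypercube vertices.'' But the cut vectors are only $2^{n-1}$ of the $2^{\binom{n}{2}}$ vertices of that hypercube---an exponentially vanishing fraction---so a volume estimate in $\RR^{\binom{n}{2}}$ gives you essentially nothing: a thin $\ell_1$-tube around a low-dimensional subspace can have negligible volume and still plausibly contain every $v_T$, because the $v_T$'s are a highly structured, low-entropy subset (each has exactly $|T|(n-|T|)$ ones, they all lie on a rank-two quadric after centering, etc.). The volume heuristic controls uniform hypercube vertices, not this sparse structured family.

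This is precisely why the paper inserts Steps 2 and 3 before invoking any geometry. They reparametrize cuts by $u_S \in \{-1,1\}^n$ via $M_S = \tfrac{1}{2}(\one\one^T - u_S u_S^T)$, pose an auxiliary LP (LP2) over $y \in \RR^n$ with constraints $y^\top u_{Q_j} = 0$, $y^\top\one = 0$, $\|y\|_\infty \le 1$, and then show that LP2 value $\ge \epsilon n$ implies the original perturbation LP has value $\ge \epsilon^2 n^2/4$ via the rank-one construction $Z = -yy^T$ (the squaring $\epsilon \mapsto \epsilon^2$ and $n \mapsto n^2$ comes from $\langle Z, M_C\rangle = \tfrac{1}{2}(y^\top u_C)^2$). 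Dualizing LP2 gives $\ell_1$-distance from $u_C$ to an $(\alpha n + 1)$-dimensional subspace of $\RR^n$, and crucially now the set of candidate $u_C$'s is the \emph{entire} hypercube $\{-1,1\}^n$. That is exactly what makes the $\epsilon$-net/Hoeffding argument in Lemma~\ref{lemma:ell1} go through: random cuts correspond to uniformly random hypercube vertices in $\RR^n$, and anticoncentration applies cleanly. Without the $\RR^{\binom{n}{2}} \to \RR^n$ reduction you are stuck with a covering claim you cannot justify, and with linear (rather than quadratic via $Z=-yy^T$) scaling in the perturbation's payoff, which would also make the constants fall apart. The LP duality step and the final contradiction in your proposal are fine; the missing idea is the change of variables to $u_S$ and the $Z = -yy^T$ lifting.
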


Before proving this theorem, we state a cleaner lower bound as a corollary:
\begin{corollary}\label{cor:dethalfhardness}
For $c \in (1/2, 1)$, a deterministic algorithm that estimates the max cut value within a factor of $c$ requires at least $n(\frac{(\sqrt{c} - \sqrt{1-c})^4}{108c(2c-1)} - o(1))$ queries.

This implies that the query complexity for a deterministic algorithm to achieve a $c$-approximation for global max-cut on a weighted undirected graph in the value estimation setting is $\Omega(n)$.
\end{corollary}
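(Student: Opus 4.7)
The plan is to derive the corollary by optimizing the bound from Theorem \ref{thm:dethalfhardness} over valid choices of $\epsilon$ for each fixed $c$. That theorem gives a lower bound of $\alpha n$ queries for any pair $(\epsilon, \alpha)$ satisfying $\epsilon > \sqrt{(1-c)/c}$ (equivalently $c > 1/(1+\epsilon^2)$) and $\alpha < (1-\epsilon)^3 / (108(1+\epsilon))$. Since $c$ is fixed, the strongest conclusion comes from choosing $\epsilon$ so as to maximize $(1-\epsilon)^3/(1+\epsilon)$ subject to the constraint, and then taking $\alpha$ arbitrarily close to that maximum.

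A short derivative computation shows that $f(\epsilon) := (1-\epsilon)^3/(1+\epsilon)$ has $f'(\epsilon) = -(1-\epsilon)^2(4+2\epsilon)/(1+\epsilon)^2 < 0$ on $(0,1)$, so $f$ is strictly decreasing. Hence the optimum is obtained by pushing $\epsilon$ as close as possible to the boundary value $\sqrt{(1-c)/c}$. Plugging in $\epsilon = \sqrt{(1-c)/c}$ and using the identities $1 - \epsilon = (\sqrt{c}-\sqrt{1-c})/\sqrt{c}$ and $1 + \epsilon = (\sqrt{c}+\sqrt{1-c})/\sqrt{c}$, then multiplying numerator and denominator of $f(\epsilon)$ by $(\sqrt{c}-\sqrt{1-c})$ so that the denominator collapses via $(\sqrt{c}+\sqrt{1-c})(\sqrt{c}-\sqrt{1-c}) = 2c-1$, one obtains
\[
\frac{(1-\epsilon)^3}{1+\epsilon} \;=\; \frac{(\sqrt{c}-\sqrt{1-c})^4}{c(2c-1)}.
\]
This is exactly the target constant, up to the factor of $108$ already present in the theorem.

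To formally obtain the $o(1)$ error term, I would fix an arbitrary $\delta > 0$, set $\epsilon_\delta := \sqrt{(1-c)/c} + \delta$ so that $c > 1/(1+\epsilon_\delta^2)$ strictly, and set $\alpha_\delta$ to be any value strictly less than $f(\epsilon_\delta)/108$. Theorem \ref{thm:dethalfhardness} then gives an $\alpha_\delta n$ lower bound for all sufficiently large $n$; taking $\delta \to 0$ recovers the asymptotic constant $(\sqrt{c}-\sqrt{1-c})^4/(108c(2c-1))$ (the gap between $\alpha_\delta$ and this constant, together with the ``$n$ sufficiently large'' caveat, is absorbed into the $o(1)$ term). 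The $\Omega(n)$ statement is then immediate: for any $c \in (1/2,1)$ we have $\sqrt{c} > \sqrt{1-c}$ and $2c-1 > 0$, so the constant is strictly positive.

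The proof is essentially mechanical once Theorem \ref{thm:dethalfhardness} is in hand; the only nontrivial step is the algebraic simplification identifying $f(\sqrt{(1-c)/c})$ with $(\sqrt{c}-\sqrt{1-c})^4/(c(2c-1))$, and the monotonicity check confirming that the boundary choice of $\epsilon$ is indeed optimal. Both are routine, so I do not anticipate a substantive obstacle.
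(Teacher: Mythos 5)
Your proposal is correct and follows essentially the same approach as the paper: both set $\mu = \sqrt{(1-c)/c}$, verify the algebraic identity $\frac{(1-\mu)^3}{108(1+\mu)} = \frac{(\sqrt{c}-\sqrt{1-c})^4}{108c(2c-1)}$, and then perturb $\epsilon$ slightly above $\mu$ to satisfy the strict inequalities of Theorem~\ref{thm:dethalfhardness}, absorbing the perturbation into the $o(1)$ term. The explicit monotonicity check of $f(\epsilon) = (1-\epsilon)^3/(1+\epsilon)$ is a fine addition confirming the boundary choice is optimal, but the paper sidesteps it by just invoking continuity; the argument is otherwise identical.
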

\begin{proof}
See Appendix \ref{lemmaproof:cordethalfhardness}.
\end{proof}


The first step is conceptually similar to the cut dimension argument from Theorem \ref{thm:exactdethardness}. We consider an adversary that answers all queries as if the graph were $K_n$ (there is an edge of weight 1 between all pairs of vertices). Then we would like to find a perturbation $z$ to the weight vector of $K_n$ such that $w, w+z$ agree on all queries but have differing max cut values. The difference here is that we would like to show that the algorithm cannot even achieve a $c$-approximation, so we require the perturbation to be so large that the max cut value of $w+z$ is a multiplicative factor greater than the max cut of $w$. To do this, we will have to go beyond linear algebraic tools and consider linear programming instead. Our argument comprises the following steps:

\begin{enumerate}
    \item Write the conditions we require of the perturbation $z$ as linear constraints and thus formulate a linear program LP1, which we would like to show has high value.
    \item LP1 works with vectors in $\RR^{\binom{n}{2}}$ that represent cuts, which are unwieldy and unnatural. Rewrite this in terms of indicator vectors in $\RR^n$.
    \item Define another linear program LP2 and show that a high value for LP2 implies that LP1 must also have high value.
    \item Show that LP2 has high value by taking its dual, and showing that the dual has high value. This comes down to showing a key technical lemma, which essentially states that the $n$-dimensional hypercube cannot be covered by an $\ell_1$ neighborhood of an $\alpha n$-dimensional subspace of $\RR^n$. We show this using an $\ell_1$ $\epsilon$-net argument.
\end{enumerate}

We now work through each step in detail. We retain all notation used in Appendix \ref{sec:detexacthardness} for the cut dimension argument, and introduce additional notation as necessary.

\subsection{Step 1: Formulating LP1}\label{sec:step1}

Throughout these proofs, we use $\one$ to denote a vector with all entries equal to 1 in either $\RR^n$ or $\RR^{\binom{n}{2}}$. It will be clear from context which of these we are referring to at any given time, but for now we are taking $\one \in \RR^{\binom{n}{2}}$ to denote the weight vector of $K_n$. Let $q = \alpha n$ and $Q_1, Q_2, \ldots, Q_q \subseteq [n]$ be the $q$ queried cuts. As in Appendix \ref{sec:detexacthardness}, they have the corresponding 0/1 indicator vectors $v_{Q_1}, v_{Q_2}, \ldots, v_{Q_q} \in \RR^{\binom{n}{2}}$. We are interested in finding a perturbation $z$ such that $\one, \one + z$ are both weighted, undirected graphs (with non-negative edge weights) and agree on all queries but have max cut values differing by a multiplicative factor.

First, we require $\one + z$ to define a valid graph i.e. its entries should all be non-negative since these correspond to edge weights:
\begin{align}\label{eqn:posconstraint}
    \one + z \geq 0 &\Leftrightarrow z \geq -\one.
\end{align}
Next, we need $\one, \one + z$ to agree on all queries. This guarantees that the algorithm cannot tell the difference between $\one$ and $\one+z$ based only on the queries made so far.
\begin{align}\label{eqn:agreementconstraint}
    \one^Tv_{Q_i} = (\one + z)^T v_{Q_i} ,\ \forall i &\Leftrightarrow z^T v_{Q_i} = 0,\ \forall i.
\end{align}
Finally, we would like the graph corresponding to $\one + z$ to have a much larger max cut value than the graph corresponding to $\one$. We capture this in the following definition and lemma:

\begin{definition}
Define a \emph{near-max cut} to be any cut $C \subseteq [n]$ such that $n/2 - \sqrt{n \log n} \leq |C| \leq n/2 + \sqrt{n \log n}$.
\end{definition}

Note that a near-max cut is nearly a max-cut in $K_n$. As hinted at earlier, we will show that we can find a near-max cut $C$ and a perturbation $z$ to the graph that will preserve the value of all queries while blowing up the value of $C$ by a factor of $c$. In this case, the algorithm cannot distinguish between $K_n$ and the perturbed graph and thus cannot achieve a $c$-approximation.

\begin{lemma}\label{lemma:cutduality}
To prove Theorem \ref{thm:dethalfhardness}, it suffices to show that there exists a near-max cut $C$ such that the following linear program has value at least $\epsilon^2n^2/4$. We call this linear program LP1.
\begin{align*}
    \text{Maximize }& z^Tv_C \\
    \text{subject to }
    &z^Tv_{Q_j} = 0 \quad \text{ for all } j \in [q], \text{ and } \\
    &z \geq -\one.
\end{align*}
\end{lemma}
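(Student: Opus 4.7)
The plan is to prove the lemma by a standard indistinguishability/adversary argument that reduces Theorem~\ref{thm:dethalfhardness} to the LP claim. Assume for contradiction that a deterministic algorithm $A$ estimates the max-cut value within a factor of $c$ using at most $q = \alpha n$ queries. Run $A$ against the oracle that answers every query $S$ as if the underlying graph were $K_n$; this determines a query transcript $Q_1, \ldots, Q_q$ and a final estimate $\hat{v}$. Since $A$ must succeed on $K_n$, we have $\hat{v} \leq \text{OPT}(K_n) = \lfloor n/2\rfloor \lceil n/2 \rceil \leq n^2/4$.

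Now invoke the hypothesis of the lemma: there exist a near-max cut $C$ and a vector $z \in \RR^{\binom{n}{2}}$ with $z \geq -\one$, $z^T v_{Q_j} = 0$ for every $j$, and $z^T v_C \geq \epsilon^2 n^2/4$. Let $G'$ be the weighted graph with weight vector $\one + z$. The constraint $z \geq -\one$ makes $G'$ a valid non-negatively weighted graph, and the constraints $z^T v_{Q_j} = 0$ make $G'$ indistinguishable from $K_n$ along the transcript $Q_1, \ldots, Q_q$; by determinism of $A$, the same queries get made and the same value $\hat{v}$ gets output when the true graph is $G'$.

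The final step is a short calculation contrasting the two max-cut values. Because $C$ is near-max, $|C|(n - |C|) \geq n^2/4 - n \log n$, and therefore
\[ \text{OPT}(G') \;\geq\; F_{G'}(C) \;=\; |C|(n-|C|) + z^T v_C \;\geq\; (1+\epsilon^2)\tfrac{n^2}{4} - n\log n. \]
For $\hat{v}$ to be a $c$-approximation on $G'$ we would need $\hat{v} \geq c \cdot \text{OPT}(G')$; combined with $\hat{v} \leq n^2/4$ this forces $c[(1+\epsilon^2)n^2/4 - n\log n] \leq n^2/4$, which fails for all sufficiently large $n$ since $c(1+\epsilon^2) > 1$ by hypothesis. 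This contradiction establishes Theorem~\ref{thm:dethalfhardness}.

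There is no real obstacle here — the proof is essentially bookkeeping — but the one subtlety worth calling out is \emph{why} the LP objective is $z^T v_C$ for a \emph{near-max} cut $C$, rather than just any $C$. The argument needs the boost $z^T v_C \geq \epsilon^2 n^2/4$ to be \emph{added on top of} a cut value that is already close to $\text{OPT}(K_n)$, so that the resulting lower bound on $\text{OPT}(G')$ strictly exceeds $(1+\epsilon^2)n^2/4 - o(n^2)$; a cut of much smaller size would not give us a margin against the $n^2/4$ upper bound coming from $K_n$. All the real work is then pushed into the later steps, which must actually produce such a pair $(C, z)$ by constructing and dualizing the auxiliary LP.
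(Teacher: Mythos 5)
Your proof is correct and takes essentially the same approach as the paper: an adversary answers all queries as if the graph were $K_n$, the LP solution $z$ produces an indistinguishable graph $\one+z$, and the margin $z^Tv_C \geq \epsilon^2 n^2/4$ on a near-max cut forces the max-cut value of $\one+z$ to exceed $n^2/4$ by a multiplicative factor greater than $1/c$, yielding the contradiction. The paper splits this across Section~3.1 (justifying the feasibility constraints) and Appendix~G.2 (the approximation-ratio calculation), whereas you present it as one self-contained argument; the content is the same.
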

\begin{proof}
We have already explained how the constraints arise. To justify that the objective corresponds to a bound on the approximation ratio, see Appendix \ref{lemmaproof:cutduality}.
\end{proof}

\subsection{Step 2: Rewriting LP1}

As already mentioned, the vectors $v_S \in \RR^{\binom{n}{2}}$ are unnatural and difficult to work with. Intuitively, the reason for this is that a cut only has $n$ degrees of freedom (each vertex can be included or not included in $S$), but we are representing it with a vector with $\binom{n}{2}$ entries, thereby creating unwanted dependencies between entries of these vectors.

We would thus like to find a more natural parametrization of these cuts that can still be connected naturally to the vectors $v_S$. To construct such a parametrization, we assign to each cut $S$ a $\pm 1$ indicator vector $u_S \in \RR^n$. Entries are indexed by vertices in $[n]$, and for each $i \in [n]$ we have $(u_S)_i = 1$ if $i \in S$ and $-1$ otherwise.

Now we connect these new indicator vectors to $v_S$ as follows. Define the matrix $M_S \in \RR^{n \times n}$ by $M_S = \frac{\one \one^T - u_S u_S^T}{2}$. (Note that from here onwards, $\one$ now refers to the vector in $\RR^n$ with all entries equal to 1.) $M_S$'s entries are indexed by ordered pairs of vertices. Now observe that:
\begin{align*}
    (M_S)_{i, j} &= \frac{\one_i \one_j - (u_S)_i (u_S)_j}{2} \\
    &= \frac{1 - (u_S)_i (u_S)_j}{2} \\
    &= \begin{cases}
       0,&  i, j \in S\text{ or }i, j \notin S,\\
       1,& \text{ otherwise}
    \end{cases} \\
    &= \begin{cases}
       (v_S)_{i, j},&  i \neq j,\\
       0,& i = j.
    \end{cases}
\end{align*}
Thus if we flatten $M_S$ into a vector, it consists of two copies of $v_S$ (each unordered vertex pair $(i, j)$ in $v_S$ appears twice in $M_S$ since the vertex pairs indexing $M_S$ are ordered) and $n$ 0's (corresponding to vertex pairs $(i, i)$ for $i \in [n]$). This allows us to rewrite LP1 in terms of the $u_S$'s as stated in the following lemma. For matrices $A, B$ of the same shape, we use $\langle A, B \rangle$ to denote the matrix inner product $\tr(A^TB) = \sum_{i, j} A_{i, j} B_{i, j}$.

\begin{lemma}\label{lemma:step2}
For any $C$, LP1 has value $\geq \epsilon^2n^2/4$ if and only if the following LP has value at least $\epsilon^2n^2/2$. We call this the ``matrix LP". Here, $Z \in \RR^{n \times n}$.
\begin{align*}
    \text{Maximize }& \langle Z, M_C \rangle \\
    \text{subject to }
    &\langle Z, M_{Q_j} \rangle = 0 \quad \text{ for all } j \in [q], \text{ and } \\
    &Z \geq -\one.
\end{align*}
\end{lemma}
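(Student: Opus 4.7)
The plan is to establish a direct correspondence between feasible solutions of LP1 and feasible solutions of the matrix LP, tracking how the objective values relate. The key observations are (i) the matrix $M_S$ is symmetric with zero diagonal, so its ``flattening'' represents each unordered pair $\{i,j\}$ twice and ignores diagonal entries entirely, and (ii) the constraint $Z \geq -\one$ is an entrywise constraint that survives symmetrization.

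For the forward direction, given any $z$ feasible for LP1, I would build a matrix $Z \in \RR^{n\times n}$ by setting $Z_{i,j} = Z_{j,i} := z_{\{i,j\}}$ for $i \neq j$ and $Z_{i,i} := 0$. Since $M_{Q_j}$ and $M_C$ have zero diagonal and are symmetric, a direct expansion gives
\begin{align*}
\langle Z, M_S \rangle = \sum_{i \neq k} Z_{i,k} (M_S)_{i,k} = 2 \sum_{i<k} z_{\{i,k\}} (v_S)_{\{i,k\}} = 2\, z^T v_S,
\end{align*}
for $S \in \{Q_1, \ldots, Q_q, C\}$. This immediately gives $\langle Z, M_{Q_j}\rangle = 0$, and the entrywise constraint $Z \geq -\one$ follows from $z \geq -\one$ off the diagonal and the choice $Z_{i,i} = 0 \geq -1$ on the diagonal. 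So LP1 with value $V_1$ yields a matrix-LP solution of value $2V_1$.

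For the reverse direction, given $Z$ feasible for the matrix LP, I would symmetrize by defining $z_{\{i,j\}} := (Z_{i,j} + Z_{j,i})/2$ for $i \neq j$. Since $M_S$ is symmetric with zero diagonal, the same expansion shows $\langle Z, M_S \rangle = 2 z^T v_S$, so $z^T v_{Q_j} = 0$ and the objective $z^T v_C$ is exactly $\tfrac{1}{2}\langle Z, M_C\rangle$. The constraint $z \geq -\one$ follows from averaging: $Z_{i,j} \geq -1$ and $Z_{j,i} \geq -1$ imply $z_{\{i,j\}} \geq -1$. Hence a matrix-LP solution of value $V_2$ yields an LP1 solution of value $V_2/2$.

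Combining the two directions, LP1 has value $\geq \epsilon^2 n^2/4$ if and only if the matrix LP has value $\geq \epsilon^2 n^2/2$, which is the claim. I do not expect any serious obstacle here — the whole argument is essentially bookkeeping around the factor-of-two relationship between $v_S$ (indexed by unordered pairs) and $M_S$ (indexed by ordered pairs with zero diagonal). The one subtlety worth spelling out carefully is why the diagonal of $Z$ is irrelevant to the equality constraints and the objective (because $M_S$ has zero diagonal), so that it can be freely set to satisfy $Z_{i,i} \geq -1$ without affecting anything else.
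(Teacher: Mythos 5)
Your proof is correct and follows essentially the same approach as the paper: the forward direction embeds $z$ as a symmetric matrix with zero diagonal, and the reverse direction symmetrizes $Z$ entrywise, with the factor-of-two bookkeeping coming from $M_S$ being symmetric with zero diagonal.
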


\begin{proof}
See Appendix \ref{lemmaproof:step2}.
\end{proof}

\subsection{Step 3: Defining LP2 and Connecting LP2 to LP1}
For a near-max cut $C$, define a new linear program which we call LP2 as follows. Here $y \in \RR^n$.
\begin{align*}
    \text{Maximize }& y^Tu_C \\
    \text{subject to }&
    y^Tu_{Q_j} = 0 \quad \text{ for all } j \in [q], \\
    &y^T\one = 0, \text{ and } \\
    &-\one \leq y \leq \one.
\end{align*}
We claim that it suffices to show that LP2 has value at least $\epsilon n$:

\begin{lemma}\label{lemma:step3}
If there exists a near-max cut $C$ such that LP2 has value at least $\epsilon n$ then the matrix LP for $C$ has value at least $\epsilon^2n^2/2$, which would imply Theorem \ref{thm:dethalfhardness}.
\end{lemma}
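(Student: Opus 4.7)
The plan is to construct a feasible solution $Z$ to the matrix LP directly out of an optimal $y$ for LP2, in a way that transparently converts the LP2 objective into the matrix LP objective. The natural candidate is the rank-one matrix $Z := -yy^T$. This is motivated by the identity
\[
\langle -yy^T, M_S \rangle = -\tfrac{1}{2}\tr\!\big(yy^T(\one\one^T - u_S u_S^T)\big) = -\tfrac{1}{2}(y^T \one)^2 + \tfrac{1}{2}(y^T u_S)^2,
\]
which turns the linear functional $y \mapsto y^T u_S$ that appears in LP2 into the quadratic functional $\langle Z, M_S \rangle$ that appears in the matrix LP, conveniently squared. So the rough strategy is: pick $Z = -yy^T$ and read off everything from this identity.

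First I would verify feasibility. Entrywise we have $Z_{i,j} = -y_i y_j$, and since the LP2 constraint $-\one \leq y \leq \one$ gives $|y_i|, |y_j| \leq 1$, we obtain $Z_{i,j} \geq -1$, so $Z \geq -\one$ as required (here $\one$ is the all-ones matrix). For each query constraint I would apply the identity above with $S = Q_j$: the LP2 constraints $y^T \one = 0$ and $y^T u_{Q_j} = 0$ immediately give $\langle Z, M_{Q_j}\rangle = 0$, so the matrix LP constraints are satisfied.

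For the objective, I would again apply the identity, this time with $S = C$. The constraint $y^T \one = 0$ kills the first term, leaving
\[
\langle Z, M_C \rangle = \tfrac{1}{2}(y^T u_C)^2 \geq \tfrac{1}{2}(\epsilon n)^2 = \tfrac{\epsilon^2 n^2}{2},
\]
using the hypothesis that $y^T u_C \geq \epsilon n$. Combined with Lemma \ref{lemma:step2}, this gives LP1 value at least $\epsilon^2 n^2/4$ for the near-max cut $C$, and Lemma \ref{lemma:cutduality} then yields Theorem \ref{thm:dethalfhardness}.

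There isn't really a major obstacle here once the right $Z$ is guessed; the entire argument is one identity plus a sign-and-magnitude check. The only thing to be careful about is that the matrix LP's constraint is $Z \geq -\one$ (entrywise on the $n\times n$ matrix, with the all-ones matrix on the right), not a semidefinite or spectral constraint, so the rank-one $-yy^T$ really does satisfy it purely from $|y_i| \leq 1$; no positivity of $Z$ itself is needed. The slight asymmetry between the factor $\epsilon n$ in LP2's value and $\epsilon^2 n^2/2$ in the matrix LP's value is exactly what the squaring $y^T u_C \mapsto (y^T u_C)^2/2$ produces.
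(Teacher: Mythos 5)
Your proposal matches the paper's proof exactly: both take $Z = -yy^T$, verify $Z \geq -\one$ entrywise from $|y_i| \leq 1$, and use the identity $\langle -yy^T, M_S\rangle = -\tfrac{1}{2}(y^T\one)^2 + \tfrac{1}{2}(y^Tu_S)^2 = \tfrac{1}{2}(y^Tu_S)^2$ (using $y^T\one = 0$) to check the equality constraints and read off the objective. No meaningful difference.
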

\begin{proof}[Proof Sketch]
Take such a near-max cut $C$ and $y \in \RR^n$ such that $y$ is in the feasible region of LP2 and $y^T u_C \geq \epsilon n$. Then we claim that $Z = -yy^T$ is feasible for the matrix LP and attains a value $\geq \epsilon^2n^2/2$. {Intuitively, $y$ can be thought of as a vector representing a ``pseudo-cut" in the same way that $u_S$ represents $S$. LP2 having high value means that $y$ aligns non-trivially with $C$, and what we are claiming is that perturbing towards the ``pseudo-cut" corresponding to $y$ will align the graph's weights with the cut corresponding to $C$.} We provide details in Appendix \ref{lemmaproof:step3}.
\end{proof}

\subsection{Step 4: Showing that LP2 has High Value}

Finally, we show that we can find a near-max cut $C$ such that LP2 has value at least $\epsilon n$, which by Lemma \ref{lemma:step3} will complete the proof of Theorem \ref{thm:dethalfhardness}. We do this by taking the dual of LP2, which has a simple characterization captured by the following lemma:

\begin{lemma}\label{lemma:duality}
Consider vectors $w, w_1, w_2, \ldots, w_k \in \RR^d$, and the following LP:
\begin{align*}
    \text{Maximize }& z^Tw \\
    \text{subject to }&
    z^Tw_i = 0 \quad \text{ for all } i \in [k], \text{ and } \\
    &-\one \leq z \leq \one.
\end{align*}
Let $W = \spann(w_1, w_2, \ldots, w_k)$. Then the value of this LP is $\min_{v \in W} ||v - w||_1$. (If there is no such $v$ then by this minimum we mean $\infty$.)
\end{lemma}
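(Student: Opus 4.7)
The plan is to derive the claim by strong LP duality, using the fact that the dual of the given LP can be rewritten as an $\ell_1$ distance minimization problem over $W$.

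First I would rewrite the primal in standard form as $\max w^T z$ subject to the equality constraints $w_i^T z = 0$ for $i \in [k]$, and the two-sided box constraints $z \leq \one$ and $-z \leq \one$. I would associate unconstrained dual multipliers $y_i \in \RR$ with the equality constraints, and nonnegative dual multipliers $\alpha, \beta \in \RR^d_{\geq 0}$ with the upper and lower box constraints respectively. A routine derivation then yields the dual
\begin{equation*}
    \min_{y \in \RR^k,\ \alpha, \beta \geq 0} \one^T \alpha + \one^T \beta \quad \text{subject to} \quad \sum_{i=1}^k y_i w_i + \alpha - \beta = w.
\end{equation*}

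Next I would simplify the dual. Writing $v := \sum_i y_i w_i \in W$, the equality constraint becomes $\alpha - \beta = w - v$. For a fixed value of the difference $u := \alpha - \beta$, the minimum of $\one^T(\alpha + \beta)$ over $\alpha, \beta \geq 0$ with $\alpha - \beta = u$ is exactly $\|u\|_1$, attained by $\alpha_j = \max(u_j, 0)$ and $\beta_j = \max(-u_j, 0)$. Since $y$ ranges over all of $\RR^k$, $v$ ranges over all of $W$, so the dual simplifies to $\min_{v \in W} \|w - v\|_1$ (and this minimum is always attained, since $W$ is a finite-dimensional subspace and $\|\cdot\|_1$ is continuous with bounded sublevel sets).

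Finally I would invoke strong LP duality to conclude. The primal is feasible (take $z = 0$) and bounded above (by $\|z\|_\infty \|w\|_1 \leq \|w\|_1$), so both the primal optimum and the dual optimum are attained and equal. Thus the primal value equals $\min_{v \in W} \|w - v\|_1$, as desired. The main obstacle is simply bookkeeping the signs in the dual derivation and recognizing the $\ell_1$ structure; once that is set up, strong duality for finite-dimensional LPs does the rest, and no further care is needed about attainment since $W$ is a closed subspace.
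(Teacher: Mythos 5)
Your proof is correct and follows essentially the same route as the paper: form the LP dual with multipliers for the equality and box constraints, observe that for fixed $v\in W$ the inner minimization over the box multipliers yields $\|w-v\|_1$, and invoke strong duality. The only cosmetic difference is that you certify strong duality by checking primal feasibility and boundedness, whereas the paper checks dual feasibility and boundedness; both are valid.
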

\begin{proof}
See Appendix \ref{lemmaproof:duality}.
\end{proof}

To use this lemma, let $V = \spann(u_{Q_1}, \ldots, u_{Q_q}, \one)$. Then Lemma \ref{lemma:duality} tells us that LP2 has value equal to $\min_{u \in V} ||u - u_C||_1$.

Now note that $V$ depends only on the set of queries and not at all on $C$. Thus we would like to show that there exists a near-max cut $C$ such that $\min_{u \in V} ||u - u_C||_1 \geq \epsilon n$. We will show the strict version of this inequality i.e. that $\min_{u \in V} ||u - u_C||_1 > \epsilon n$. Denote by $B_r$ the $\ell_1$ ball of radius $r$ in $\RR^n$. Then what we want to show is that there exists a near-max cut $C$ such that $u_C \notin V + B_{\epsilon n}$. This brings us to our key technical lemma, which has little to do with graphs and may be of independent interest:

\begin{lemma}\label{lemma:ell1}
Let $\epsilon \in (0, 1)$ and $d \leq \alpha'n$ for $\alpha' < \frac{(1-\epsilon)^3}{108(1+\epsilon)}$. Suppose $D$ is a $d$-dimensional subspace of $\RR^n$. Denote by $B_r$ the $\ell_1$ ball of radius $r$ in $\RR^n$. Then there exists $p \in \left\{-1, 1\right\}^n$ such that $p \notin D + B_{\epsilon n}$ and $|\one^T p| \leq 2\sqrt{n\log n}$.
\end{lemma}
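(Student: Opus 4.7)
The plan is to apply the probabilistic method to a uniformly random $p \sim \{-1,1\}^n$. By Hoeffding's inequality applied to $\one^T p = \sum_i p_i$, the balance condition $|\one^T p| \leq 2\sqrt{n \log n}$ fails with probability at most $2/n^2$. So it suffices to show $\Pr[p \in D + B_{\epsilon n}] < 1 - 2/n^2$.

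For this I use an $\ell_1$-net argument. If $p \in \{-1,1\}^n \cap (D + B_{\epsilon n})$, there exists $v \in D$ with $\|p - v\|_1 \leq \epsilon n$, which forces $\|v\|_1 \leq (1+\epsilon)n$. Let $N$ be an $r_0$-net of $D \cap B_{(1+\epsilon)n}$ in the $\ell_1$ metric, constructed by a standard volume packing argument inside the $d$-dimensional subspace $D$; this yields $|N| \leq (3(1+\epsilon)n/r_0)^d$, and every such $p$ lies within $\ell_1$ distance $r^* := \epsilon n + r_0$ of some $v^* \in N$. Because coordinatewise projection onto $[-1,1]^n$ cannot increase the $\ell_1$ distance to any point of $\{-1,1\}^n$, I may replace each $v^*$ by its projection and assume $v^* \in [-1,1]^n$ throughout.

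For such $v^*$, writing $\|p - v^*\|_1 = \sum_i |p_i - v^*_i|$, the summands are independent, each taking values in $\{1 - |v^*_i|, 1 + |v^*_i|\}$ with mean $1$ and range length $2|v^*_i| \leq 2$. Thus $\EE[\|p - v^*\|_1] = n$ and $\sum_i (2|v^*_i|)^2 \leq 4n$, so Hoeffding's inequality gives
\begin{equation*}
\Pr\!\bigl[\|p - v^*\|_1 \leq r^*\bigr] \leq \exp\!\bigl(-(n - r^*)^2/(2n)\bigr)
\end{equation*}
whenever $r^* < n$. A union bound over $N$ then yields $\Pr[p \in D + B_{\epsilon n}] \leq (3(1+\epsilon)n/r_0)^d \exp(-(n-r^*)^2/(2n))$.

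The final step is to choose $r_0 = \Theta((1-\epsilon)n)$, so that $r^*$ stays bounded away from $n$, and verify that the hypothesis $\alpha' < (1-\epsilon)^3/(108(1+\epsilon))$ drives the right-hand side below $1 - 2/n^2$ for large $n$. The main obstacle is balancing the net resolution: taking $r_0$ too small blows up $|N|$, while taking it too large weakens the Hoeffding tail by pushing $r^*$ toward $n$. The cubic factor $(1-\epsilon)^3$ in the stated bound on $\alpha'$ reflects this tradeoff, together with the $(1+\epsilon)$ factor that arises from the radius of the net.
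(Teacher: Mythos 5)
Your proposal is correct and follows the same volume/net-plus-Hoeffding strategy as the paper: a union bound over an $\ell_1$-net of $D \cap B_{(1+\epsilon)n}$, combined with a Hoeffding concentration bound on $\|p - v^*\|_1$ for each net point $v^*$, shows that only an exponentially small fraction of $\{-1,1\}^n$ can lie in $D + B_{\epsilon n}$, while nearly all hypercube points are balanced. Two of your bookkeeping choices are cleaner than the paper's: you use a direct covering-number bound (covering radius $r_0$, size $(3(1+\epsilon)n/r_0)^d$) instead of the paper's packing-then-covering argument (which forces a covering radius twice the separation radius), and you handle net points with coordinates outside $[-1,1]$ by projecting onto $[-1,1]^n$ and noting projection only decreases $\ell_1$ distance to hypercube points, whereas the paper builds an explicit per-coordinate stochastic-dominance coupling to the same end. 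Carrying out your final optimization over $r_0 = \mu n$ (with $\log x \leq x$) yields the condition $\alpha' < \mu(1-\epsilon-\mu)^2/\bigl(6(1+\epsilon)\bigr)$, which at $\mu = (1-\epsilon)/3$ gives $\alpha' < 2(1-\epsilon)^3/\bigl(81(1+\epsilon)\bigr)$ --- a slightly larger admissible range than the paper's $(1-\epsilon)^3/\bigl(108(1+\epsilon)\bigr)$, so the stated hypothesis is comfortably sufficient. The only thing left incomplete in your write-up is actually plugging in $\mu$ and confirming this inequality, but the tradeoff you describe is the right one and the arithmetic works out.
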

\begin{proof}
We show this by a volume argument. Specifically, we estimate the size of $(D + B_{\epsilon n}) \cap \left\{-1, 1\right\}^n$ using an $\ell_1$ $\epsilon$-net argument and show that this must be much less than $2^n$. We provide details in Appendix \ref{lemmaproof:ell1}.
\end{proof}

With this lemma, we can complete this step and thus the proof of Theorem \ref{thm:dethalfhardness}:

\begin{corollary}\label{cor:endofstep4}
For $n$ sufficiently large, there exists a near-max cut $C$ such that $u_C \notin V + B_{\epsilon n}$.
\end{corollary}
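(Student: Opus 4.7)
The plan is to directly apply Lemma \ref{lemma:ell1} with $D = V$, and then translate the resulting sign vector into a cut $C$.

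First, I would check that Lemma \ref{lemma:ell1} applies to $V$. By definition $V = \spann(u_{Q_1}, \ldots, u_{Q_q}, \one)$, so $\dim(V) \leq q + 1 = \alpha n + 1$. Since the hypothesis of Theorem \ref{thm:dethalfhardness} gives $\alpha < \frac{(1-\epsilon)^3}{108(1+\epsilon)}$, I pick any $\alpha'$ strictly between $\alpha$ and $\frac{(1-\epsilon)^3}{108(1+\epsilon)}$. Then for all sufficiently large $n$, we have $\alpha n + 1 \leq \alpha' n$, so $V$ is a subspace of $\RR^n$ of dimension at most $\alpha' n$ and Lemma \ref{lemma:ell1} applies.

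Next, invoking Lemma \ref{lemma:ell1} with $D = V$ produces a vector $p \in \{-1, 1\}^n$ with $p \notin V + B_{\epsilon n}$ and $|\one^T p| \leq 2\sqrt{n \log n}$. I would then set $C = \{i \in [n] : p_i = 1\}$, so that $u_C = p$ by construction. The identity $\one^T u_C = |C| - (n - |C|) = 2|C| - n$ combined with $|\one^T p| \leq 2\sqrt{n \log n}$ yields $\bigl| |C| - n/2 \bigr| \leq \sqrt{n \log n}$, which is exactly the definition of a near-max cut. Since $u_C = p \notin V + B_{\epsilon n}$, this $C$ has the required property.

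There is no serious obstacle here; the only thing to be careful about is the off-by-one when absorbing the extra dimension coming from $\one$ into the $\alpha' n$ bound, which is handled by choosing $\alpha'$ strictly greater than $\alpha$ and letting $n$ be large enough. Combined with Lemma \ref{lemma:duality} (which gives $\text{value of LP2} = \min_{u \in V} \|u - u_C\|_1 > \epsilon n$) and Lemma \ref{lemma:step3}, this completes the proof of Theorem \ref{thm:dethalfhardness}.
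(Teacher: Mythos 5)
Your proof is correct and takes essentially the same approach as the paper: apply Lemma \ref{lemma:ell1} with $D = V$ (absorbing the extra dimension from $\one$ by choosing $\alpha' > \alpha$ and taking $n$ large), then convert the resulting sign vector $p$ into the cut $C = \{i : p_i = 1\}$ and read off the near-max property from $\one^T u_C = 2|C| - n$.
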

\begin{proof}
See Appendix \ref{lemmaproof:corendofstep4}.
\end{proof}

\section{Sparsifier-based Randomized Algorithms for $(1-\epsilon)$-approximation}\label{sec:randhalfalgo}


Here we address the randomized upper bound part of Theorem \ref{thm:informalhalf}, namely that a $(1-\epsilon)$-approximation can be achieved in the cut finding setting with $\tilde{O}(n)$ queries. Our algorithms are adaptive. The key notion is that of a \emph{sparsifier}:

\begin{definition}
Given weighted graphs $G, H$ on the same set of $n$ vertices with non-negative weights, we say $H$ is an $\epsilon$-sparsifier of $G$ if all of the following conditions hold:

\begin{enumerate}
    \item $H$ has $\tilde{O}_\epsilon(n)$ edges with nonzero weight.
    \item For any cut $S \subseteq [n]$, we have $(1-\epsilon) F(S; G) \leq F(S; H) \leq (1+\epsilon) F(S; G)$.
\end{enumerate}
Here $F(S; G)$ denotes the value of the cut defined by $S$ on the graph $G$, and similarly for $F(S; H)$.
\end{definition}

\begin{lemma}\label{lemma:sparsifiersuffices}
If an algorithm can compute an $\epsilon$-sparsifier of $G$ in $\tilde{O}(n/\epsilon^2)$ queries with high probability, then it can also find a $(1-2\epsilon)$-approximate max cut with no additional queries.
\end{lemma}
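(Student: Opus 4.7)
The plan is to use the sparsifier $H$ as a proxy for $G$: since $H$ is fully known to the algorithm (we learned it in the querying phase) and has $\tilde{O}_\epsilon(n)$ edges, the algorithm can simply compute an exact max cut $T$ of $H$ by brute force search without making any further cut queries to $G$. Then I will show that $T$ is automatically a $(1 - 2\epsilon)$-approximate max cut of $G$ using the multiplicative sparsifier guarantee on both $T$ itself and on the true optimum $S^* := \arg\max_{S \subseteq [n]} F(S; G)$.

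Concretely, I would carry out the argument in three short steps. First, invoke the hypothesis to obtain $H$ with $\tilde{O}(n/\epsilon^2)$ queries; note that computing $T = \arg\max_{S} F(S; H)$ requires no additional queries since every edge weight of $H$ is known (the algorithm may brute force over all $2^n$ subsets, so this is purely a computational, not query-complexity, cost). Second, apply the sparsifier inequalities to chain:
\begin{align*}
F(T; G) \;\geq\; \frac{1}{1+\epsilon} F(T; H) \;\geq\; \frac{1}{1+\epsilon} F(S^*; H) \;\geq\; \frac{1-\epsilon}{1+\epsilon} F(S^*; G),
\end{align*}
where the first and third inequalities use the sparsifier property applied to cuts $T$ and $S^*$, respectively, and the middle inequality uses optimality of $T$ for $H$. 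Third, verify the elementary bound $\frac{1-\epsilon}{1+\epsilon} \geq 1 - 2\epsilon$, which holds for all $\epsilon \in [0,1]$ since it rearranges to $2\epsilon^2 \geq 0$.

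There is no real obstacle here; the content of the lemma is essentially a definitional manipulation of the sparsifier property combined with the observation that no additional queries are needed after $H$ has been computed. The only mild subtlety is making sure the two-sided multiplicative guarantee is applied in the right direction for each of $T$ and $S^*$, which is why both $F(T; G) \geq \frac{1}{1+\epsilon} F(T; H)$ (lower bound on $G$-value) and $F(S^*; H) \geq (1-\epsilon) F(S^*; G)$ (lower bound on $H$-value) are needed. Combining these with the bound $\frac{1-\epsilon}{1+\epsilon} \geq 1-2\epsilon$ yields exactly $F(T; G) \geq (1-2\epsilon) F(S^*; G)$, which is the claimed approximation guarantee.
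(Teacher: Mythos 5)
Your proof is correct and matches the paper's argument essentially verbatim: brute-force maximization over the known sparsifier $H$, the same three-step chain of sparsifier inequalities applied to the output cut and the true optimum, and the elementary bound $\frac{1-\epsilon}{1+\epsilon} \geq 1-2\epsilon$. No issues.
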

\begin{proof}
Once the algorithm has a sparsifier $H$, it can just try all possible cuts and output the cut $U$ maximizing $F(U; H)$. Indeed, for any other cut $S$, we would have $F(U; G) \geq \frac{F(U; H)}{1+\epsilon} \geq \frac{F(S; H)}{1+\epsilon} \geq \frac{(1-\epsilon)F(S; G)}{1+\epsilon} \geq (1-2\epsilon)F(S; G)$, so $U$ is indeed a $(1-\epsilon)$-approximate max cut.
\end{proof}

Throughout this section, let $d > 5$ be constant, and let $\delta \in (1/2, 1)$ be a constant sufficiently close to 1. Then let $c_0$ and $c_1$ be sufficiently large positive constants. Also, let $W_\text{tot}$ be the total weight of all edges in the graph, and $W$ the ratio between the largest and smallest nonzero edge weights. Finally, for a vertex set $S \subseteq V(G)$, let $G[S]$ denote the vertex-induced subgraph of $G$ on $S$. Note that we do not require that $G$ be connected. We organize the remainder of this section as follows:

\begin{enumerate}
    \item In Section \ref{sec:sparsifiersetup}, we set up and analyze the algorithmic tools necessary to adapt \cite{rubinstein}'s algorithm to weighted graphs.
    \item In Section \ref{sec:rswadaptation}, we present the direct adaptation of \cite{rubinstein}'s algorithm to weighted graphs and show that it constructs a sparsifier in $\widetilde{O}(n\log W)$ queries.
    \item In Section \ref{sec:fws}, we use ideas introduced by \cite{benczurkarger} relating edge strengths to maximum spanning trees in order to construct a sparsifier for weighted graphs in $\widetilde{O}(n)$ queries, thus eliminating the dependence of the query complexity on $W$. 
    \item Additionally, in Appendix \ref{sec:rswmaxcut}, we show that we can achieve a $(1-\epsilon)$-approximation for max-cut in $\widetilde{O}(n)$ queries without needing the optimizations in Section \ref{sec:fws}. This is achieved by essentially stopping our adaptation of \cite{rubinstein}'s algorithm early. This will not construct a sparsifier, but it will construct something ``close enough" to suffice for max-cut. Intuitively, we do this by discarding edges of weight $< W_\text{tot}/\text{poly}(n)$ since these will not have much effect on the max cut, thereby reducing the problem to one where $W = \text{poly}(n)$.

    Given this result, our sparsifier for weighted graphs in Section \ref{sec:fws} is not necessary for max-cut specifically, but it makes for a conceptually cleaner algorithm for max-cut and may be applicable to other problems.
\end{enumerate}


\subsection{Setup and Algorithmic Tools}\label{sec:sparsifiersetup}

The key idea is the notion of edge strength introduced by \cite{benczurkarger}.

\begin{definition}
(\cite{benczurkarger}, as stated in \cite{rubinstein}) The \emph{strong connectivity} of $G$, denoted $K(G)$, is the value of $G$'s min cut.
\end{definition}

\begin{definition}
(\cite{benczurkarger}, as stated in \cite{rubinstein}) Given an edge $e$ in $G$, the strong connectivity or \emph{edge strength} $k_e$ of $e$ is the maximum min cut over all vertex-induced subgraphs of $G$ containing $e$:
$$k_e = \max_{S \subseteq V: u, v \in S} K(G[S]).$$
\end{definition}


The idea introduced by \cite{benczurkarger} and used by \cite{rubinstein} is that subsampling edges of $G$ with probabilities inversely proportional to their strength will give a sparsifier. We cannot do exactly this in the cut query model, but we can subsample in a way that is ``close enough" to independent. We need some basic primitives to support our algorithm, and we capture all of them in the following lemma:

\begin{lemma}\label{lemma:allalgoprimitives}
There exists a data structure supporting the following operations.
\begin{enumerate}
    \item $\text{InitializeDS}(H)$: Initialize the structure's state and carry out any preprocessing needed with the starting graph $H$.
    \item $\text{Contract}(S)$: Contract a given supernode set $S$.
    \item $\text{GetEdge}()$: Find and return an edge from $H$ (that has not been contracted) with weight at least $\frac{2W(H)}{n^2}$. Here $W(H)$ is the total weight of not-yet-contracted edges.
    \item $\text{GetTotalWeight}(S)$: Return the total weight of all edges with both endpoints in $S$ (that have not yet been contracted). $S$ must once again be a set of supernodes.
    \item $\text{Sample}(S)$: Sample a random edge with both endpoints in $S$ (that has not yet been contracted), with probability proportional to its weight. $S$ must be a set of supernodes here as well. 
\end{enumerate}
It takes $O(n)$ queries for each call to InitializeDS, $O(1)$ queries for each call to Contract, $O(\log n)$ queries for each call to GetEdge, $O(1)$ queries for each call to GetTotalWeight, and $O(\log n)$ queries for each call to Sample.
\end{lemma}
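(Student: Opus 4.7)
The plan is to encode the current graph as a partition of the original vertex set $V$ into supernodes, together with cached cut-query information that reduces each operation to a small number of queries. I rely on two identities: for disjoint $A, B \subseteq V$, the total weight of edges between $A$ and $B$ is $\tfrac{1}{2}(F(A) + F(B) - F(A \cup B))$; and the total weight of edges inside $G[X]$ is $\tfrac{1}{2}\bigl(\sum_{v \in X} F(\{v\}) - F(X)\bigr)$. For each supernode $s$ I would store its vertex set $V(s)$, its weighted degree $F(V(s))$, and its self-loop weight (the total weight of original edges with both endpoints in $V(s)$). InitializeDS then queries $F(\{v\})$ for every $v \in V$ ($n$ queries) and records each vertex as its own supernode with self-loop weight zero. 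Contract($S$) sets $V(s^*) = \bigcup_{s \in S} V(s)$ and issues one query $F(V(s^*))$ to update the new weighted degree and, via the second identity, the new self-loop weight (which absorbs the pre-existing self-loops of the merged supernodes and all inter-supernode edges among $S$). GetTotalWeight($S$) uses one query for $F(\bigcup_{s \in S} V(s))$, applies the second identity to get the total weight of original edges in $G[\bigcup_{s \in S} V(s)]$, and subtracts the stored self-loop weights of supernodes in $S$.

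The crux is GetEdge and Sample, both of which I would implement by a recursive descent over ``rectangular regions'' of the supernode graph. A region is either diagonal (all unordered pairs inside a supernode set $A$) or off-diagonal (all pairs between disjoint supernode sets $A, B$), and the weight of any region can be computed in $O(1)$ queries via the identities above applied to unions of original vertices. At each step I halve one side of the region, producing three sub-regions (within $A_L$, within $A_R$, between $A_L$ and $A_R$) in the diagonal case and two sub-regions in the off-diagonal case; in either case each sub-region contains at most half as many supernode pairs as the parent. After $O(\log n)$ such splits exactly one supernode pair remains, and its weight is recovered directly.

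For GetEdge I recurse on the sub-region with the highest \emph{density} (weight per pair). An averaging argument shows that the max-density child always has density at least that of the parent, so density is non-decreasing along the descent. The initial region has density at least $W(H) / \binom{n}{2} \ge 2W(H)/n^2$, yielding the required weight lower bound when the descent terminates at a single edge. For Sample, I recurse on a random sub-region chosen with probability proportional to its weight, which makes the probability of ultimately outputting a given pair $(u, v)$ telescope to $w(u,v)/W(\text{initial region})$. Each of the $O(\log n)$ splits costs $O(1)$ queries, matching the claimed bounds.

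The main obstacle is the supernode bookkeeping across contractions: when several supernodes are merged, the new self-loop weight must aggregate both the pre-existing self-loops and all inter-supernode edges of the pieces being merged, and all of this must fall out of a single new cut query together with the cached values. Once the supernode-level accounting is correct, the descents for GetEdge and Sample apply uniformly, because the cut identities refer only to unions of original vertices and do not depend on the contraction structure, so all regions continue to have well-defined weights computable in $O(1)$ queries.
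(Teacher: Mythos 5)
Your overall approach---cache a small amount of cut information per supernode, reduce each operation to $O(1)$ fresh cut queries via the identities $F(A,B)=\tfrac12(F(A)+F(B)-F(A\cup B))$ for disjoint $A,B$ and $W(G[X])=\tfrac12(\sum_{v\in X}F(\{v\})-F(X))$, and implement GetEdge/Sample by a greedy/weighted recursive bisection---is essentially the same as the paper's. The bookkeeping differs slightly: you store per-supernode self-loop weights (absorbed at contraction time), while the paper stores per-supernode degrees $F(V(s))$ so that GetTotalWeight on supernodes $S_1,\dots,S_k$ is simply $\tfrac12\bigl(\sum_i F(S_i)-F(\bigcup_i S_i)\bigr)$ with one fresh query. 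Both variants meet the claimed bounds.

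There is one genuine gap in your GetEdge/Sample descent. Your regions are ``all pairs inside a supernode set $A$'' or ``all pairs between supernode sets $A,B$,'' you halve \emph{supernode} sets, you bound the depth by ``each sub-region contains at most half as many \emph{supernode} pairs as the parent,'' and you terminate when ``exactly one supernode pair remains.'' But GetEdge must return a single \emph{original edge} (a vertex pair), not a supernode pair; the final supernode pair can still contain $\Theta(n^2)$ vertex pairs between two large supernodes, and returning a supernode pair doesn't match the spec. Moreover, your density invariant uses $\binom{n}{2}$ vertex pairs as the initial denominator but then counts supernode pairs for the halving and termination --- these are different quantities, and a density that is high per supernode pair does not directly imply a single vertex pair of weight $\ge 2W(H)/n^2$ when the two supernodes are large. (Also, strictly, the off-diagonal child of a diagonal split on $m$ supernodes has $\lceil m/2\rceil\lfloor m/2\rfloor>\tfrac12\binom{m}{2}$ pairs, so ``at most half'' fails there, though this doesn't affect the $O(\log n)$ depth.) The paper avoids all of this by keeping the density measured in \emph{vertex} pairs throughout: it first selects, with no queries, the supernode $S$ maximizing the cached ratio $F(S)/(|S|(n-|S|))$, then bisects the complement $T=[n]\setminus S$ as a \emph{vertex} set using ratios $F(S,T_i)/(|S|\,|T_i|)$ down to a single vertex $t$, and finally bisects the vertex set $S$ down to a single vertex $s$. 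The density per vertex pair never decreases, starts at $\ge 2W(H)/n^2$, and the descent terminates at a genuine vertex pair. To repair your version, you would need to (a) use vertex-pair counts as the density denominator throughout, and (b) continue the bisection past the final supernode pair $(s,t)$ by halving the vertex sets $V(s)$ and $V(t)$ until a single vertex pair remains; the same correction applies to Sample.
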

\begin{proof}
The pre-processing and queries when contracting are due to having to keep track of (super)node degrees. GetEdge and Sample can be handled using recursive bisection procedures similar to that used to prove Corollary 2.2 in \cite{rubinstein}. We provide details in Appendix \ref{lemmaproof:allalgoprimitives}.
\end{proof}

Before we go any further, we make an important comment about how we regard contraction in our algorithms (including Lemma \ref{lemma:allalgoprimitives}). When we contract a set of vertices, we regard that set of vertices as one supernode as usual, but we do not merge any edges that have now become parallel. So the set of edges will always be a subset of the edges from the original graph.

It will be convenient to regard our algorithms as having two separate stages, although the two stages share some ingredients. In the first stage, the algorithm iteratively subsamples and contracts the graph to estimate the strengths of all edges within a constant factor. In the second stage, the algorithm uses these edge strength estimates to construct the sparsifier, following the ideas of \cite{benczurkarger}. We address these two stages in the next two subsections respectively.


We note that these algorithms are very similar to those presented by \cite{rubinstein}; the key difference is that whenever the sparsification algorithm in \cite{rubinstein} subsamples the graph, it does so independently for each edge. This works because \cite{rubinstein} focuses on unweighted graphs. With weighted graphs, we would like to sample edges proportionately to their weight, and this cannot be done independently without knowledge of the graph. We modify their subsampling procedures to obtain algorithms that do not sample completely independently, but still have the concentration properties that we need.

\subsubsection{Constant-Factor Edge Strength Estimates}

We next describe the main tool of our edge strength estimation, which we call EstimateAndContract. It takes in the input graph $G$ where some vertex sets $S_i$ have already been contracted and a strength parameter $\kappa$, and further contracts $G$ while also estimating the strength of any edges that get contracted. For any graph $H$, let $W(H)$ denote the total edge weight of $H$.

\begin{algorithm}
\KwData{Initial graph $G$ with vertex set $V(G) \subseteq [n]$, disjoint collection $\mathcal{C}$ of contracted sets, strength parameter $\kappa$, partial list $X$ of strength estimates}
\KwResult{Updated collection of contracted sets $\mathcal{C}$, updated list $X$ of strength estimates}
 Let $G'$ be $G$ with all sets from $\mathcal{C}$ contracted and $\lambda = \frac{c_0\log^2 n}{\kappa} W(G')$. (We can find $W(G')$ using GetTotalWeight from Lemma \ref{lemma:allalgoprimitives}.)\\
 \begin{enumerate}
     \item Sample $\lambda$ edges from $G'$, proportionally to their weights, with replacement. If an edge $e$ is sampled at least once, assign weight $\frac{w(e)}{p(e)}$ to it, where $p(e) = 1 - (1 - \frac{w_e}{W(G')})^\lambda$. These newly weighted edges form a new graph $G''$.
     \item While there exists a connected component of $G''$ with a cut of size $\leq (1-\delta)\kappa$, delete all edges of that cut from $G''$. Let the resulting connected components of $G''$ be $C_1, C_2, \ldots, C_r$.
     \item For each $i \in [r]$, append the tuple $(C_i, \kappa/2)$ to $X$. (Here, what we are saying is ``assign a strength estimate of $\kappa/2$ to any edge with both endpoints in $C_i$ that has not already been contracted", but we phrase it differently to account for the fact that the algorithm may not actually know these edges.)
     \item Add $C_i$ to $\mathcal{C}$ (and remove any subsets of $C_i$ to maintain disjointness) for all $i \in [r]$.
 \end{enumerate}
 \caption{$\text{EstimateAndContract}(G, \mathcal{C}, \kappa, X)$}\label{algo:contract}
\end{algorithm}



We state some key properties of Algorithm \ref{algo:contract} and defer their proofs to Appendix \ref{lemmaproofs:subsampleandcontract}:

\begin{lemma}\label{lemma:claim1-2}
(Analogous to claim 3.6 from \cite{rubinstein}) With probability $1 - O(n^{1-d})$, for all $e$ such that $k_e \geq \kappa$ and $e$ is not contracted by any of the sets in $\mathcal{C}$, it will be assigned $k'_e = \kappa/2$ and contracted.
\end{lemma}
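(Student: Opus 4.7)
The plan is to mirror the strategy of Rubinstein et al.'s Claim 3.6, adapted to our with-replacement weighted sampling. The proof has three conceptual steps: (A) identify a high-strength witness set inside the contracted graph $G'$ containing each strong edge; (B) show that cuts of these witness sets are well-concentrated in the sampled graph $G''$; and (C) conclude that the pruning step cannot separate the endpoints of any strong edge.

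\textbf{Step A (witness sets in $G'$).} First I would show that for any $e = (u,v)$ with $k_e \geq \kappa$ that is not contracted by $\mathcal{C}$, there is a set $T_e \subseteq V(G')$ of supernodes containing the supernodes of both $u$ and $v$ with $K(G'[T_e]) \geq \kappa$. Starting from the witness set $S_e \subseteq V(G)$ with $u,v \in S_e$ and $K(G[S_e]) \geq \kappa$, one builds $T_e$ from the supernodes corresponding to $S_e$ (handling straddling contracted sets appropriately), then invokes the fact that contraction does not decrease min-cuts of the relevant induced subgraph.

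\textbf{Step B (concentration in $G''$).} Next I would establish: with probability $1 - O(n^{1-d})$, simultaneously for every $T \subseteq V(G')$ with $K(G'[T]) \geq \kappa$ and every nontrivial cut $(A, T \setminus A)$ of $T$, we have $w_{G''}(A, T\setminus A) > (1-\delta)\kappa$. The three ingredients are:
\begin{enumerate}
    \item Each edge $e$ is sampled with probability $p_e = 1 - (1 - w_e/W(G'))^\lambda$ and contributes $w_e/p_e$ when sampled, so $\mathbb{E}[w_{G''}(A, T \setminus A)] = w_{G'}(A, T \setminus A) \geq \kappa$.
    \item Although sampling is with replacement rather than independent Bernoulli, the indicators $\mathbb{1}[e \text{ sampled}]$ are monotone functions of multinomial counts and hence negatively associated, so standard Chernoff-type lower tail bounds apply to the weighted sum $w_{G''}(A, T\setminus A) = \sum_{e} (w_e/p_e)\, \mathbb{1}[e \text{ sampled}]$. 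A case split on $\lambda w_e/W(G') \lessgtr 1$ shows that the single-term bound $w_e/p_e$ is at most $O(\max(w_e, \kappa/\log^2 n))$, which together with $\lambda = c_0 \log^2 n \cdot W(G')/\kappa$ and large $c_0$ yields $\Pr[w_{G''}(A, T\setminus A) \leq (1-\delta)\kappa] \leq n^{-\Omega(c_0 \log n)}$.
    \item Finally I would union bound over cuts. Cuts of weight $\geq 2\kappa$ concentrate trivially; cuts of weight in $[\kappa, 2\kappa)$ are, within any $G'[T]$ with $K(G'[T]) \geq \kappa$, bounded in number by $\mathrm{poly}(n)$ via Karger's cut-counting theorem. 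As in Benczúr--Karger, one decomposes cuts across strong components so that the total count across all relevant $T$ remains polynomial, easily absorbed by the super-polynomial per-cut slack.
\end{enumerate}

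\textbf{Step C (conclusion).} On the event of Step B, fix $e$ with $k_e \geq \kappa$ and let $T_e$ be as in Step A. Every cut of $G''[T_e]$ has weight strictly greater than $(1-\delta)\kappa$. Whenever the pruning step deletes a cut of weight $\leq (1-\delta)\kappa$ from a connected component $C$ of $G''$, restricting that cut to $T_e$ gives a cut of $G''[T_e]$ of no larger weight; since all nontrivial cuts of $G''[T_e]$ exceed $(1-\delta)\kappa$, this restriction must be trivial, so $T_e$ is never split. Hence both supernodes containing $u$ and $v$ end up in the same final component $C_i$; by the algorithm, $(C_i, \kappa/2)$ is appended to $X$ and $C_i$ is added to $\mathcal{C}$, so $e$ is both assigned estimate $\kappa/2$ and contracted.

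\textbf{Main obstacle.} The delicate part is Step B. The original Benczúr--Karger analysis assumes independent Bernoulli sampling, whereas here we sample with replacement (the only form of weighted sampling the cut-query model lets us implement in $\widetilde O(1)$ queries per sample). Invoking negative association of multinomial indicators to salvage Chernoff, and then combining it with the cut-counting decomposition across strong components, is where most of the technical effort lies; tuning $c_0$ large enough makes the failure probability $O(n^{1-d})$ as claimed.
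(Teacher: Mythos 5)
Your proposal follows essentially the same route as the paper's proof: Step A corresponds to Lemma~\ref{lemma:strongcontract} (a $\kappa$-strong component of $G$ maps to a $\kappa$-strong supernode set in $G'$), Step B corresponds to Lemmas~\ref{lemma:1-2smoothnesseasy} and \ref{lemma:1-2smoothnesshard} (cuts of the strong component concentrate in $G''$), and Step C is Lemma~\ref{lemma:1-2single} (the pruning in Step~2 of the algorithm cannot split the component). Your witness set $T_e$ is precisely the $\kappa$-strong component containing $e$, so the two decompositions are the same object; and your observation that with-replacement indicators are negatively associated plays the same role as the paper's negative-correlation definition and direct argument in Lemma~\ref{lemma:specificconcentration}. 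The paper packages the union-bound step via Theorem~\ref{theorem:boundedconcentration} and the Benczúr--Karger decomposition (Theorem~\ref{theorem:smoothconcentration}), where you invoke Karger cut counting directly; this is the same argument seen at different levels of abstraction.

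The one point where your sketch is optimistic is Step~B, item~2. The single-term bound $w_e/p_e = O\!\left(\max\!\left(w_e, \kappa/\log^2 n\right)\right)$ does not directly yield a per-cut failure probability of $n^{-\Omega(c_0\log n)}$, because in the heavy regime $w_e$ can be a nontrivial fraction of $\kappa$ (with $p_e$ bounded away from~$1$, e.g.\ $p_e \approx 1 - 1/e$ near the threshold), so the Chernoff range parameter is $\Theta(w_e)$ rather than $O(\kappa/\log^2 n)$ and the per-cut bound degrades to $n^{-\Omega(1)}$ in that regime. The paper resolves this explicitly: Lemma~\ref{lemma:1-2smoothnesseasy} treats the ``light'' case where $c w_e / l_e \leq 1/2$ for all edges, and Lemma~\ref{lemma:1-2smoothnesshard} reduces the general case to it via a parallel-edge subdivision and a coupling argument showing that all sub-edges of heavy edges are sampled with probability $1 - O(n^{-d})$. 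Your ``tune $c_0$ large'' comment does not by itself close this case; some version of the subdivision/coupling (or an explicit conditioning on the heavy edges being present together with a careful re-tuning of the cut-counting exponents) is needed to make Step~B airtight.
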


\begin{lemma}\label{lemma:claim1-3}
(Analogous to claim 3.7 from \cite{rubinstein}) Assume that any edge contracted by $\mathcal{C}$ has strength $\geq \kappa/2$. Then with probability $1 - O(n^{1-d})$, no edge $e$ such that $k_e < \kappa/2$ is assigned a strength estimate or contracted.
\end{lemma}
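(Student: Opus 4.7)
I would proceed by contradiction. Suppose some edge $e=(u,v)$ with $k_e < \kappa/2$ is contracted, i.e., its endpoints lie in the same final component $C_i$ after this call to EstimateAndContract. Let $C_i^* \subseteq V(G)$ denote the set of original vertices that get assembled into the supernodes of $C_i$. The plan is to exhibit a cut of $G'[C_i]$ whose $G$-weight is strictly less than $\kappa/2$, argue via concentration that its weight in $G''$ falls below the deletion threshold, and conclude that the algorithm's deletion loop would have split $C_i$ along that cut --- contradicting the fact that $C_i$ is a single connected component at the end of step~2.

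\paragraph{Structural step.} Since $\{u,v\}\subseteq C_i^*$ and $k_e = \max_S K(G[S])$ over $S \ni u, v$, I would observe that $K(G[C_i^*]) \leq k_e < \kappa/2$, and let $(A,B)$ be a minimum cut of $G[C_i^*]$, of weight $<\kappa/2$. I claim $(A,B)$ respects all prior contractions in $\mathcal{C}$: no $S_j \in \mathcal{C}$ with $S_j \subseteq C_i^*$ is split between $A$ and $B$. If one were, then $(S_j\cap A,\, S_j\cap B)$ would be a valid cut of $G[S_j]$ of weight at most that of $(A,B)$, hence strictly less than $\kappa/2$. But the inductive invariant maintained across iterations of EstimateAndContract (each $S_j \in \mathcal{C}$ was produced as a highly connected $C_{i'}$ at an earlier, larger threshold $\kappa' \geq 2\kappa$, so $K(G[S_j]) \geq \kappa'/2 \geq \kappa/2$) makes this impossible. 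Hence $(A,B)$ descends to a bona fide cut $(A',B')$ of $G'[C_i]$ of the same $G$-weight, which is $<\kappa/2$.

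\paragraph{Concentration step and main obstacle.} For any fixed cut of $G'$ of $G$-weight $W$, the weight in $G''$ is an unbiased estimator of $W$, and with the sample size $\lambda = c_0 \log^2 n \cdot W(G')/\kappa$, a Chernoff-type argument---adapted to with-replacement sampling in the spirit of claim~3.7 of~\cite{rubinstein}---will show tight concentration of this estimator around its mean. I would deduce that, with probability $1 - O(n^{1-d})$, every cut of $G'$ with $G$-weight at most $\kappa/2$ has $G''$-weight strictly below the deletion threshold. A union bound over candidates is affordable because, by Karger's classical polynomial bound on the number of near-minimum cuts, the relevant cuts $(A,B)$ (arising as minimum cuts of vertex-induced subgraphs of $G$) come from an $n^{O(1)}$-sized family. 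Conditioned on this good event, $(A',B')$ is a cut of $C_i$ of $G''$-weight below the threshold, so step~2 would have deleted it, yielding the desired contradiction. The main obstacle is precisely this concentration-plus-union-bound step: sampling with replacement introduces dependencies between the edge indicator variables, and the union bound must simultaneously cover cuts of many possible induced subgraphs $G[C_i^*]$. The polynomial bound on near-minimum cuts, together with careful calibration of $c_0$ (controlling $\lambda$) and the deletion threshold via $\delta$, is what ultimately makes this manageable.
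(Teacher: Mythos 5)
Your overall strategy — a contradiction argument based on exhibiting a cut of small $G'$-weight that, under a concentration event, should have triggered deletion in step~2 — is a plausible route, and the concentration ingredient you describe is essentially the paper's Lemma~\ref{lemma:weakcutconcentration}. However, there is a genuine gap in the union bound.

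The cut $(A,B)$ you want to apply concentration to is a \emph{random} object: it is the minimum cut of $G[C_i^*]$, and $C_i^*$ depends on the sampled graph $G''$. Concentration bounds only apply to cuts fixed \emph{before} the randomness is drawn, so you must union bound over all cuts that could possibly arise as $(A,B)$. You propose to control this family via ``Karger's classical polynomial bound on the number of near-minimum cuts,'' but that bound does not apply here. Karger's bound controls the number of cuts of weight $\le \alpha$ times the global minimum cut of a \emph{fixed} graph; if the minimum cut of $G'$ is much smaller than $\kappa/2$, the exponent $\alpha$ can be $\omega(1)$ and the bound is super-polynomial. More fundamentally, you need to cover min cuts of many different vertex-induced subgraphs $G[S]$ (one for each candidate $C_i^*$), a family that can have size $2^{\Theta(n)}$ in the worst case, and Karger's theorem says nothing about how min cuts vary across induced subgraphs. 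So the step ``a union bound over candidates is affordable'' is not justified.

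The paper's fix for exactly this problem is Lemma~\ref{lemma:cutsequence}: it exhibits a \emph{deterministic} sequence of at most $n$ cuts of $G'$, each of $G'$-weight $<\kappa/2$, that iteratively split $G'$ into its $(\kappa/2)$-strong components. The concentration lemma is applied to each of these $\le n$ fixed cuts, and the union bound over just these $n$ cuts gives the $1 - O(n^{1-d})$ probability (Lemma~\ref{lemma:1-3easy}). Conditioned on that event, the deletion loop is forced to refine the partition into $(\kappa/2)$-strong components — this replaces your contradiction argument with a direct structural statement, and in particular sidesteps the issue of the target cut being data-dependent. You would need this idea, or something like it, to make your proposal rigorous. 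A secondary (and more minor) point: your ``structural step'' justifies that prior contractions respect $(A,B)$ by appealing to an inductive invariant ($K(G[S_j]) \ge \kappa/2$) that is stronger than the lemma's stated hypothesis (only that each contracted \emph{edge} has strength $\ge \kappa/2$); the paper bridges this gap carefully via Lemma~\ref{lemma:weakcontract} by first contracting all $\kappa$-strong components of $G$.
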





\subsubsection{Sparsifier Construction}

Next, we describe our algorithm that will construct a sparsifier if provided constant-factor strength estimates for all edges in the graph, which we call ConstructSparsifier.

\begin{algorithm}
\KwData{Graph $G$ with vertex set $[n]$, list $X = [(C_1, \beta_1), (C_2, \beta_2), \ldots, (C_r, \beta_r)]$ of strength estimates. We assume that $\left\{C_1, \ldots, C_r\right\}$ is a laminar family of vertex sets and that if $C_i \subset C_j$ then $i < j$. (This is because we will be contracting $C_1, \ldots, C_r$ in that order.) Note that we do \textbf{not} assume here that $X$ includes a strength estimate for every edge.}
\KwResult{Potential sparsifier $H$}
 Initialize $G' = G$ and $H$ to be empty.\\
 $\text{InitializeDS}(G')$. (This is just to reset and ignore any previous contractions we may have done in EstimateAndContract.)\\
 \For{$i \gets 1$ \KwTo $r$} {
    \begin{enumerate}
        \item Let $\mu_i = \frac{c_1 \log^2 n}{\epsilon^2 \beta_i} W(C_i)$. (Here $W()$ is with respect to $G'$, and once again can be found using GetTotalWeight from Lemma \ref{lemma:allalgoprimitives}.)
        \item Sample $\mu_i$ edges from $C_i$ proportionally to their weights, with replacement. If an edge $e$ is sampled at least once, add it to $H$ with weight $\frac{w_e}{p_e}$, where $p_e = 1 - (1 - \frac{w_e}{W(C_i)})^{\mu_i}$ is the probability that $e$ is sampled at least once. (Note that this sampling would be done by calling $\text{Sample}(C_i)$ from Lemma \ref{lemma:allalgoprimitives}.)
        \item Contract $C_i$ in $G'$.
    \end{enumerate}
 }
 \caption{$\text{ConstructSparsifier}(G, X)$}\label{algo:sparsify}
\end{algorithm}
We capture the desired sparsification properties in the following lemma:
\begin{lemma}\label{lemma:sparsifierconcentration}
Fix $X = [(C_1, \beta_1), (C_2, \beta_2), \ldots, (C_r, \beta_r)]$ as in ConstructSparsifier ($X$ might be random, but we condition on a particular list of values for now). Then for each $i$, define:
\begin{align*}
    E(C_i) &= \left\{e \in G: \text{ both endpoints of $e$ are in $C_i$}\right\} \\
    \widetilde{E}(C_i) &= E(C_i) \backslash \bigcup_{j: C_j \subset C_i} E(C_j)
\end{align*}
(Thus $\widetilde{E}(C_i)$ is the set of edges that will be contracted at the time that $C_i$ is contracted.)

Assume each of the following conditions:
\begin{enumerate}
    \item Each connected component of $G$ is contained in at least one $C_i$ (every edge in $G$ gets contracted), and
    \item For all $i$ and edges $e \in \widetilde{E}(C_i)$, we have $\beta_i \in [k_e/4, k_e]$ (edge strength estimates are correct within a constant factor).
\end{enumerate}

Then with probability $1 - O(n^{-d})$, ConstructSparsifier will output a sparsifier $H$ that approximates the cuts of $G$ within a factor of $1 \pm 2\epsilon$. (Thus this probability is only considering the randomness of ConstructSparsifier.)
\end{lemma}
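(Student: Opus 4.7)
The plan is to adapt the classical Benczur--Karger sparsification analysis to our setting. The main novelty is that in each round we sample edges multinomially (with replacement, proportional to weights) rather than by truly independent Bernoullis per edge; we handle this by appealing to negative association.

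Fix an arbitrary cut $S \subseteq [n]$. Partition the cut edges according to which round of ConstructSparsifier contracts them, writing $F(S; G) = \sum_{i=1}^r F_i(S)$ with $F_i(S) = \sum_{e \in \widetilde{E}(C_i) \cap \partial S} w_e$, and correspondingly $F(S; H) = \sum_{i=1}^r \hat{F}_i(S)$ with $\hat{F}_i(S) = \sum_{e \in \widetilde{E}(C_i) \cap \partial S} Z_e \cdot (w_e/p_e)$, where $Z_e \in \{0,1\}$ indicates that $e$ was sampled at least once in round $i$. Assumption 1 ensures every cut edge is accounted for, and assumption 2 supplies the constant-factor strength estimates used below.

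I would first prove per-round concentration. Clearly $\EE[\hat{F}_i(S)] = F_i(S)$. The indicators $\{Z_e\}_{e \in \widetilde{E}(C_i)}$ are coordinates of a multinomial count vector thresholded at $1$, and are therefore negatively associated, so standard Chernoff bounds apply to weighted sums. The key quantity is the maximum reweighting: from $p_e \geq \min(1, \mu_i w_e/(2 W(C_i)))$ together with $\mu_i = c_1 \log^2 n \cdot W(C_i)/(\epsilon^2 \beta_i)$, we get $w_e/p_e \leq O(\epsilon^2 \beta_i / \log^2 n) \leq O(\epsilon^2 k_e / \log^2 n)$. Since any cut $S$ crossing $e$ must split the strongly-connected witness subgraph realizing $k_e$, we have $F(S;G) \geq k_e$, so each summand is small relative to the target cut value. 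Combining rounds (whose randomness is independent) and aggregating errors yields, for any fixed $S$, $|F(S; H) - F(S; G)| \leq 2\epsilon F(S;G)$ with per-cut failure probability $\exp(-\Omega(c_1 \log^2 n))$.

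The main obstacle is the union bound over the exponentially many cuts $S$. I would follow Benczur--Karger by invoking Karger's cut-counting theorem: a graph with min cut $K$ has at most $n^{2\alpha}$ cuts of value $\leq \alpha K$. At each level $i$, the components on which we sample have min cut $\geq \beta_i \geq k_e/4$ for every edge $e$ they contain, so restricting $S$ to these components yields only polynomially many cut restrictions per value level; stratifying cuts by a ``strength profile'' $(F_1(S)/\beta_1, \ldots, F_r(S)/\beta_r)$ and summing Chernoff tails across the stratification yields total failure probability $n^{O(1)} \cdot \exp(-\Omega(c_1 \log^2 n)) = O(n^{-d})$ once $c_1$ is chosen sufficiently large. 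This stratified union bound is the heart of the proof; everything else is a mechanical adaptation of the classical argument with negative-association replacing independent-edge sampling.
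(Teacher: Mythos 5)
Your overall strategy mirrors the paper's: decompose the cut by round (the $\widetilde{E}(C_i)$'s), establish per-round concentration via negative correlation of the with-replacement sampling indicators plus Chernoff, and union-bound via the Benczur--Karger / Karger cut-counting machinery. The paper packages the stratification step as the imported Decomposition Lemma together with Theorems~\ref{theorem:boundedconcentration} and~\ref{theorem:smoothconcentration}, but this is the same argument you describe in different words.

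However, there is a genuine gap in your smoothness estimate. You claim $w_e/p_e \leq O(\epsilon^2 \beta_i/\log^2 n) \leq O(\epsilon^2 k_e/\log^2 n)$ uniformly, which would make the sampled graph $c$-smooth for $c = \Theta(\log n/\epsilon^2)$ and let the Decomposition Lemma apply directly. But this fails for ``heavy'' edges, those with $\mu_i w_e / W(C_i) \gtrsim 1$ (equivalently $w_e \gtrsim \epsilon^2 k_e/\log^2 n$). For such an edge, $p_e = 1 - (1 - w_e/W(C_i))^{\mu_i}$ is bounded away from $0$ but is never exactly $1$, so the best one can say is $w_e/p_e = O(w_e) = O(k_e)$; the $\epsilon^2/\log^2 n$ gain disappears and $c$-smoothness is violated. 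In the classical Benczur--Karger setup this case is silent because one simply sets $p_e = \min(1,\rho w_e/k_e) = 1$ and keeps the edge deterministically, but here the algorithm's with-replacement sampling fixes $p_e < 1$ and does not give you that freedom. The paper resolves this (Lemma~\ref{lemma:sparsifiersmoothnesshard}) by subdividing each heavy edge into $O(\log n)$ parallel light sub-edges, applying the smoothness argument to the subdivided graph, and then coupling back: with high probability every sub-edge of every heavy edge is sampled, so the original and subdivided sparsifiers differ on each heavy edge's weight only by a $1 \pm O(n^{-d})$ factor. Some argument of this shape is needed; asserting the uniform $O(\epsilon^2 k_e/\log^2 n)$ bound and calling the rest ``mechanical'' skips over the place where the replacement-sampling model actually diverges from independent Bernoulli sampling.
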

\begin{proof}
This follows using similar ideas to \cite{benczurkarger}, but we need to take some extra care because the subsampling we use to construct the sparsifier is not independent. We provide details in Appendix \ref{lemmaproof:sparsifierconcentration}.
\end{proof}


\subsection{Sparsification with $\widetilde{O}(n\log W)$ Queries}\label{sec:rswadaptation}

Here we analyze the naive generalization of \cite{rubinstein}'s sparsifier to weighted graphs. The procedure is described in Algorithm \ref{algo:nws}.

\begin{algorithm}
\KwData{Graph $G$ on $n$ vertices, positive real parameter $T$}
\KwResult{A potential $(2\epsilon)$-sparsifier $H$ of $G$.}
Initialize $\mathcal{C} = \emptyset$ and $X$ as an empty list.\\ 
$\text{InitializeDS}(G)$. \\
Find $W_\text{tot}$ by running $\text{GetTotalWeight}(G)$. \\
Initialize $\kappa$ to be the smallest power of 2 that is at least $W_\text{tot}$.\\
\While{$\text{GetTotalWeight}(G) > 0$ and $\kappa > W_\text{tot}/T$}{
    $\text{EstimateAndContract}(G, \mathcal{C}, \kappa, X)$ \\
    $\kappa \gets \kappa/2$ \\
}
$H \gets \text{ConstructSparsifier}(G, X).$\\
 \caption{$\text{NaiveWeightedSubsample}(G, T)$}\label{algo:nws}
\end{algorithm}

\begin{theorem}\label{thm:nwsworks}
$\text{NaiveWeightedSubsample}(G, \infty)$ runs in $O(n\log^3n(\log n + \log W + \frac{1}{\epsilon^2}))$ queries and outputs a $(2\epsilon)$-sparsifier $H$ with probability $1 - O(n^{1-d}(\log n + \log W))$.
\end{theorem}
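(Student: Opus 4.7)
The plan is to prove both correctness and the query-complexity bound by combining an induction on the main-loop iterations with the guarantees of Lemmas \ref{lemma:claim1-2} and \ref{lemma:claim1-3}, and then invoking Lemma \ref{lemma:sparsifierconcentration} once. Writing $\kappa_t$ for the value of $\kappa$ at the start of the $t$-th call to EstimateAndContract, I would maintain two invariants with high probability: (I1) every edge whose endpoints both lie in some set of $\mathcal{C}$ has strength at least $\kappa_t$, and (I2) every edge $e$ with $k_e \geq 2\kappa_t$ is already covered by $\mathcal{C}$. Invariant (I1) is precisely the precondition of Lemma \ref{lemma:claim1-3}, whose conclusion restores (I1) at step $t+1$; invariant (I2) is restored by Lemma \ref{lemma:claim1-2}. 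Each inductive step fails with probability $O(n^{1-d})$.

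Next I would show the loop terminates in $O(\log n + \log W)$ iterations. Since the two-vertex subgraph induced by a single edge $e$ has min-cut $w_e$, we have $k_e \geq w_e \geq w_{\min}$, so once $\kappa_t \leq w_{\min}$ Lemma \ref{lemma:claim1-2} forces every remaining edge to be contracted. Starting from $\kappa_1 \leq 2W_\text{tot}$ and using $W_\text{tot}/w_{\min} \leq n^2 W$, the loop runs for at most $T = O(\log n + \log W)$ iterations, and a union bound gives failure probability $O(n^{1-d}(\log n + \log W))$ for the estimation phase. Any edge $e$ first contracted at step $t^*$ then satisfies $k_e \in [\kappa_{t^*}/2, 2\kappa_{t^*})$ by (I1) and (I2), so its assigned estimate $\beta = \kappa_{t^*}/2$ lies in $[k_e/4, k_e]$, matching the hypothesis of Lemma \ref{lemma:sparsifierconcentration}. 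Applying that lemma adds another $O(n^{-d})$ to the failure probability and yields the claimed $(2\epsilon)$-sparsifier.

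For the query count, each EstimateAndContract call draws $\lambda = c_0 \log^2 n \cdot W(G')/\kappa$ samples at $O(\log n)$ queries apiece via the Sample primitive. The critical estimate is $W(G')/\kappa = O(n)$, uniformly in $\kappa$: since (I2) tells us every uncontracted edge $e$ has $k_e < 2\kappa$, the classical weighted Benczur--Karger inequality $\sum_e w_e/k_e \leq n-1$ gives $W(G')/\kappa \leq 2\sum_e w_e/k_e = O(n)$. Hence each iteration costs $O(n\log^3 n)$ queries and the estimation phase totals $O(n\log^3 n(\log n + \log W))$. Similarly, ConstructSparsifier takes $\sum_i \mu_i = O(\log^2 n/\epsilon^2)\sum_e w_e/\beta_{i(e)}$ samples, which is $O(n\log^2 n/\epsilon^2)$ using $\beta_{i(e)} \geq k_e/4$ and the same Benczur--Karger bound, for $O(n\log^3 n/\epsilon^2)$ queries. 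Adding the $O(n)$ cost of InitializeDS and the $O(1)$-per-call overhead of Contract and GetTotalWeight yields the claimed $O(n\log^3 n(\log n + \log W + 1/\epsilon^2))$.

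The main obstacle is the uniform bound $W(G')/\kappa = O(n)$, which is what prevents the query count from scaling with $W_\text{tot}$ directly; it requires invariant (I2) to hold at every iteration, and hence depends on threading the induction so that (I1) supplies the precondition of Lemma \ref{lemma:claim1-3} while (I2) is maintained by Lemma \ref{lemma:claim1-2}. Choosing the $\kappa/2$ and $2\kappa$ thresholds so that these two lemmas compose properly is the most delicate point of the correctness argument.
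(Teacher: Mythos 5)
Your proposal is correct and follows essentially the same path as the paper's proof. The inductive invariants (I1) and (I2) are equivalent to the three properties the paper maintains in Lemma~\ref{lemma:nws1round} (where it tracks which edges have been ``labeled'' and with what estimates), your termination argument via $k_e \geq w_e \geq w_{\min}$ is the paper's Lemma~\ref{lemma:nwsiterationcount}, the union bound over the first $O(\log n + \log W)$ calls is Corollary~\ref{cor:contractunionboundnws}, and the final invocation of Lemma~\ref{lemma:sparsifierconcentration} with estimates in $[k_e/4, k_e]$ is exactly how Section~\ref{sec:nwsinfcorrect} closes the correctness argument. One trivial slip: (I1) with ``strength $\geq \kappa_t$'' is strictly stronger than the precondition of Lemma~\ref{lemma:claim1-3}, which only asks for ``strength $\geq \kappa_t/2$''; this slack is harmless and in fact is what lets the induction compose cleanly, as you note.

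On the query-complexity side your derivation is the same computation, packaged slightly differently. The paper's Lemma~\ref{lemma:estimateandcontractgenericefficiency} first bounds the total subsampling cost by $O(n\log^3 n \cdot \max_{e}|\{p : e \in G'_p\}|)$ using $\sum_e w_e/k_e \leq n-1$, then (Lemma~\ref{lemma:nwsspeedgeneric}) bounds that maximum crudely by the iteration count $M$. You instead use (I2) to get $W(G')/\kappa = O(n)$ per round and multiply by the $O(\log n + \log W)$ rounds. These are the same sum reorganized in two orders; the paper's per-edge form is chosen because it is re-used more sharply in the analysis of FastWeightedSubsample, but for NaiveWeightedSubsample both give $O(n\log^3 n(\log n + \log W))$. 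The $O(n\log^3 n/\epsilon^2)$ cost of ConstructSparsifier via $\beta_{i(e)} \geq k_e/4$ and the same Benczur--Karger inequality matches Lemma~\ref{lemma:constructsparsifierspeed}.
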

\begin{proof}
We outline the proof here and provide details in Appendices \ref{lemmaproofs:nwscorrect} (correctness) and \ref{sec:sparsifierspeed} (efficiency). The proof proceeds in two parts.

First, we address the calls to EstimateAndContract. Given Lemmas 4.6 and 4.7, EstimateAndContract can be thought of as estimating the strengths of and then contracting edges whose strengths are within a window that is a factor of 4 wide. The lemmas tell us that these strength estimates are accurate within a factor of 4. Then NaiveWeightedSubsample essentially ``slides" this window across all possible edge strengths so that all edges of $G$ are assigned a strength estimate. This part has query complexity $O(n\log^3 n(\log n + \log W))$; the $\log W$ term is because the outer loop could run for $O(\log n + \log W)$ iterations. This is also why the success probability depends on $W$, as this is obtained by taking a union bound over all iterations.

Secondly, because all edges are assigned an accurate strength estimate (within a constant factor), Lemma \ref{lemma:sparsifierconcentration} tells us that ConstructSparsifier will output a $(2\epsilon)$-sparsifier with high probability. This part has query complexity $O(n\log^3n/\epsilon^2)$.
\end{proof}

\subsection{Sparsification with $\widetilde{O}(n)$ Queries}\label{sec:fws}

We now show how to eliminate the dependence of the query complexity and success probability on $W$, thus constructing sparsifiers in $\widetilde{O}(n)$ queries. We begin by setting up the necessary ideas.


\subsubsection{Crude Edge Strength Estimates}\label{sec:crudestrengthestimates}


Recall that the key problem with Algorithm \ref{algo:nws} was that our ``sliding window" for edge strength estimation could potentially repeat $O(\log n + \log W)$ times. The key idea is to mitigate this by finding very crude (within a factor of $n^4$) estimates for the edge strengths for all edges in $G$, before refining these estimates using EstimateAndContract. We do this using the idea of \cite{benczurkarger} to estimate edge strengths from the maximum spanning forest (MSF) of $G$. Fix an MSF $\mathcal{T}$ of $G$. Then for any edge $e$ with endpoints $i, j$, define $d_e = d_{i, j}$ to be the minimum weight of an edge on the MSF path between the endpoints of $e$. We first state a lemma shown in \cite{benczurkarger}:

\begin{lemma}\label{lemma:MSTapprox}
(\cite{benczurkarger}) For all edges $e$, we have $d_e \leq k_e \leq n^2d_e$.
\end{lemma}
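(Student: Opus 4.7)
The plan is to prove the two inequalities separately, leveraging the classical bottleneck property of maximum spanning forests (for any pair $u,v$ in the same connected component, the MSF path from $u$ to $v$ maximizes the minimum edge weight over all $u$--$v$ paths in $G$) together with standard max-flow/min-cut reasoning. Nothing more exotic is needed.

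For the lower bound $d_e \le k_e$, I would exhibit an explicit witness $S$. Let $S$ be the vertex set of the connected component containing $i$ in the subgraph $G_{\ge d_e}$ obtained by keeping only edges of weight at least $d_e$. Because the MSF path from $i$ to $j$ consists entirely of edges of weight $\ge d_e$ by the definition of $d_e$, the vertex $j$ lies in the same component, so $i,j \in S$ and hence $e \in G[S]$. For any nontrivial partition $(A, S\setminus A)$ of $S$, connectivity of $S$ inside $G_{\ge d_e}$ forces at least one edge of weight $\ge d_e$ to cross, so that cut in $G[S]$ has value at least $d_e$. Therefore $K(G[S]) \ge d_e$, and since $k_e$ is the maximum of $K(G[\cdot])$ over subgraphs containing $e$, we get $k_e \ge d_e$.

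For the upper bound $k_e \le n^2 d_e$, it suffices to show $K(G[S]) \le n^2 d_e$ for an arbitrary $S$ with $i,j \in S$, since $k_e$ is a maximum over such $S$. I would note that $K(G[S])$, a global min-cut, is upper bounded by the minimum $i$--$j$ cut in $G[S]$, which by max-flow/min-cut equals the maximum $i$--$j$ flow. Decompose that flow into at most $|E(G[S])| \le \binom{n}{2}$ augmenting $i$--$j$ paths. Each such path lives in $G[S] \subseteq G$, so the bottleneck property of $\mathcal{T}$ tells us its minimum-weight edge has weight at most $d_e$; consequently the flow carried along that path is at most $d_e$. Summing over the at most $\binom{n}{2}$ paths yields max flow $\le \binom{n}{2} d_e \le n^2 d_e / 2$, which comfortably gives $k_e \le n^2 d_e$.

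I do not anticipate a serious obstacle. The one place warranting care is verifying that the bottleneck property applies correctly: the MSF $\mathcal{T}$ is taken with respect to $G$, so it bounds bottlenecks of $i$--$j$ paths in $G$, and we only need to control paths in $G[S] \subseteq G$, so this is automatic. Flow decomposition into $\le |E|$ paths is textbook. The looseness of the $n^2$ factor (versus the $\binom{n}{2}/2$ we actually prove) absorbs any small accounting issues, so no delicate tightening is required.
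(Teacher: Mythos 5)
Your proof is correct, but note that the paper itself gives no proof of this lemma: it is cited directly to Benczur--Karger, so there is no in-paper argument to compare against. Your lower-bound argument (witnessing $k_e\ge d_e$ via the connected component of $i$ in the subgraph $G_{\ge d_e}$ of edges of weight $\ge d_e$) is exactly the standard one and is airtight. Your upper-bound argument is also correct, but it takes a slightly heavier route than necessary: you pass through max-flow/min-cut duality and a flow-path decomposition. A more elementary argument, closer to the Benczur--Karger original, avoids flow entirely: let $A$ be the connected component of $i$ in $G_{>d_e}$ (strict inequality). By the bottleneck property of the MSF, $j\notin A$. Every edge of $G$ crossing $(A,V\setminus A)$ has weight $\le d_e$ (otherwise its far endpoint would be in $A$), so this cut has value at most $\binom{n}{2}d_e$. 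For any $S\ni i,j$, the induced cut $(A\cap S,\, S\setminus A)$ is nontrivial in $G[S]$ and has value at most that of $(A,V\setminus A)$, giving $K(G[S])\le \binom{n}{2}d_e\le n^2 d_e$ directly. Both routes hinge on the same bottleneck property of the MSF; the flow-based version simply uses strong duality plus decomposition to reproduce what a single explicit small cut certifies directly. One small polish to your write-up: when you bound the per-path flow by $d_e$, you should justify that the flow value $f_P$ on a single decomposed path is at most the capacity of its bottleneck edge (because $f_P \le \sum_{P'\ni e'} f_{P'} \le c(e')$), since as written ``its minimum-weight edge has weight at most $d_e$; consequently the flow carried along that path is at most $d_e$'' glosses over the fact that several paths may share that edge.
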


This would immediately give us sufficient edge strength estimates but we do not know of a way to find the MSF efficiently in the cut query model. In fact, we can adapt the max cut dimension argument from Lemma \ref{lemma:gprw} to a ``max tree dimension" argument to show that deterministically finding the MSF requires $\Omega(n^2)$ queries; see Appendix \ref{sec:MSThardness} for details. So instead, we run what we call ``approximate Kruskal" using the primitives we have available from Lemma \ref{lemma:allalgoprimitives}.

\begin{algorithm}
\KwData{Graph $G$ on $n$ vertices}
\KwResult{A forest $\mathcal{\widetilde{T}}$ using the edges of $G$.}
Initialize $\mathcal{\widetilde{T}}$ to be the empty graph on $n$ vertices. \\
$\text{InitializeDS}(G)$. \\
\While{$\text{GetTotalWeight}(G) > 0$}{
    $e = \text{GetEdge}()$\\
    $\text{Contract}(e)$\\
    Add $e$ to $\mathcal{\widetilde{T}}$.
}
 \caption{$\text{ApproximateKruskal}(G)$}\label{algo:apxkruskal}
\end{algorithm}

It follows from Lemma \ref{lemma:allalgoprimitives} that ApproximateKruskal can be run in $O(n \log n)$ queries. Once we run ApproximateKruskal, we will have a spanning forest of $G$ so we will also know its connected components. Moreover, the following lemma tells us that even this crude approximation to the MSF suffices to give us edge strength estimates. We defer the proofs of this lemma and its straightforward corollary to Appendix \ref{lemmaproof:apxkruskal}.

\begin{lemma}\label{lemma:apxkruskal}
For any distinct $i, j$, define $\tilde{d}_{i, j}$ as follows:
\begin{itemize}
    \item If $i, j$ are connected in $G$, then let $\tilde{d}_{i, j}$ be the minimum weight of an edge on the path in $\mathcal{\widetilde{T}}$ connecting $i$ and $j$.
    \item Otherwise, let $\tilde{d}_{i, j} = 0$.
\end{itemize}
Then for any $i, j$ that are connected in $G$, we have $\frac{2}{n^2} d_{i, j} \leq \tilde{d}_{i, j} \leq d_{i, j}$ for all $i, j$.

Note that we only assume here that $\mathcal{T}$ is maximal; any assumptions we make about $\widetilde{\mathcal{T}}$ are baked into GetEdge.
\end{lemma}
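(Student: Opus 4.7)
The plan is to prove the two inequalities separately, leveraging the classical bottleneck-path characterization of the maximum spanning forest and the GetEdge guarantee from Lemma~\ref{lemma:allalgoprimitives}. I would first observe that $\widetilde{\mathcal{T}}$ is itself a spanning forest of $G$: each call to GetEdge returns an edge connecting two distinct supernodes (so no cycle is ever created), and the while-loop terminates only when every remaining edge is a self-loop in the contracted graph, i.e., when the supernodes coincide with the connected components of $G$. The upper bound $\tilde{d}_{i,j} \leq d_{i,j}$ is then immediate: in a maximum spanning forest, $d_{i,j}$ equals the maximum, over all $i$-to-$j$ paths in $G$, of the minimum edge weight on the path; since the $\widetilde{\mathcal{T}}$-path from $i$ to $j$ is one such $G$-path, its bottleneck $\tilde{d}_{i,j}$ can be no larger.

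For the lower bound, I would fix any edge $e'$ on the $i$-to-$j$ path in $\widetilde{\mathcal{T}}$ and let $t$ be the iteration at which approximate Kruskal added $e'$. The key claim is that $i$ and $j$ lie in different supernodes just before step $t$: otherwise they would already be connected in the forest built so far, and since $\widetilde{\mathcal{T}}$ is acyclic, $e'$ could not lie on the final $i$-to-$j$ path, contradicting the choice of $e'$. By the bottleneck characterization again, there is a path $P$ in $G$ from $i$ to $j$ all of whose edges have weight at least $d_{i,j}$; since the endpoints of $P$ sit in different supernodes at step $t$, at least one edge of $P$ must still cross supernodes and in particular has not been contracted. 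Hence the total weight $W_t$ of the current contracted graph at step $t$ satisfies $W_t \geq d_{i,j}$, and the GetEdge guarantee $w(e') \geq 2W_t/n^2$ gives $w(e') \geq \frac{2}{n^2} d_{i,j}$. Taking the minimum over all edges of the $\widetilde{\mathcal{T}}$-path from $i$ to $j$ yields $\tilde{d}_{i,j} \geq \frac{2}{n^2} d_{i,j}$.

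The main delicate step is the supernode argument that ensures the existence of an uncontracted heavy-path edge at step $t$, which is what lets the GetEdge guarantee transfer the heavy-path weight $d_{i,j}$ down to an edgewise bound on $w(e')$. Once this is in place, the rest of the proof reduces to two applications of the standard maximum spanning forest bottleneck-path property (one for each inequality) together with a trivial union over the finitely many edges on the $\widetilde{\mathcal{T}}$-path from $i$ to $j$.
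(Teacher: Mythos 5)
Your proposal is correct and takes essentially the same approach as the paper. The lower bound argument matches the paper's exactly (observe that $i,j$ must still lie in distinct supernodes when an edge of the $\widetilde{\mathcal{T}}$-path is added, so some weight-$\geq d_{i,j}$ edge on the MSF path remains uncontracted, then apply the GetEdge guarantee and take a minimum over the path), and the upper bound differs only cosmetically: you invoke the standard bottleneck-path characterization of the maximum spanning forest as a black box, while the paper inlines the underlying exchange argument (remove the weight-$d_{i,j}$ edge from $\mathcal{T}$ and swap in a heavier $\widetilde{\mathcal{T}}$-edge to contradict maximality).
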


\begin{corollary}\label{cor:strengthestimate}
For all distinct $i, j$, we have $k_{i, j} \in [\tilde{d}_{i, j}, \frac{n^4}{2} \tilde{d}_{i, j}]$.
\end{corollary}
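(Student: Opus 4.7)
The plan is to obtain Corollary \ref{cor:strengthestimate} as a direct chaining of Lemma \ref{lemma:MSTapprox} and Lemma \ref{lemma:apxkruskal}. The former (Benczúr--Karger) sandwiches the edge strength $k_{i,j}$ between $d_{i,j}$ and $n^2 d_{i,j}$, while the latter sandwiches the approximate-Kruskal quantity $\tilde{d}_{i,j}$ between $\frac{2}{n^2}d_{i,j}$ and $d_{i,j}$. Composing these two sandwiches eliminates $d_{i,j}$ and yields a two-sided bound on $k_{i,j}$ purely in terms of $\tilde{d}_{i,j}$, with a combined looseness of $n^2 \cdot \frac{n^2}{2} = \frac{n^4}{2}$, which is exactly the factor in the corollary.

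Concretely, I would first handle the case where $i$ and $j$ lie in the same connected component of $G$. For the lower bound, Lemma \ref{lemma:apxkruskal} gives $\tilde{d}_{i,j} \leq d_{i,j}$ and Lemma \ref{lemma:MSTapprox} gives $d_{i,j} \leq k_{i,j}$, so $\tilde{d}_{i,j} \leq k_{i,j}$. For the upper bound, rearranging Lemma \ref{lemma:apxkruskal} yields $d_{i,j} \leq \frac{n^2}{2} \tilde{d}_{i,j}$, and combining this with $k_{i,j} \leq n^2 d_{i,j}$ from Lemma \ref{lemma:MSTapprox} gives $k_{i,j} \leq \frac{n^4}{2}\tilde{d}_{i,j}$. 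For the remaining case where $i$ and $j$ are in different components, there is no path between them in $G$, so $k_{i,j} = 0$ (there is no subgraph containing an edge between them) and $\tilde{d}_{i,j} = 0$ by definition, making the statement vacuous.

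There is essentially no obstacle here: the substantive work is entirely absorbed into Lemma \ref{lemma:apxkruskal}, whose job was to absorb the approximation caused by $\text{GetEdge}$ returning only edges of weight at least $2W(G')/n^2$ rather than true maxima. Once that $\frac{2}{n^2}$ slack is established, the corollary is a one-line composition of inequalities.
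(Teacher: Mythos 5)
Your proof is correct and follows essentially the same route as the paper's: chain Lemma \ref{lemma:MSTapprox} with Lemma \ref{lemma:apxkruskal} to eliminate $d_{i,j}$, picking up the $n^2 \cdot \frac{n^2}{2}$ factor, and handle the disconnected case by noting $k_{i,j} = \tilde{d}_{i,j} = 0$. Nothing to add.
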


\subsubsection{Fast Weighted Subsampling}

We now describe how to use our crude edge strength estimates to construct a sparsifier in $\tilde{O}(n)$ queries. The idea is that by Lemma \ref{lemma:claim1-3}, any edges with strength $< \kappa/2$ will be deleted in Step 2 of Algorithm \ref{algo:contract}. But we can use our crude edge strength estimates to preemptively identify some edges that will definitely be deleted, and then delete these edges to disconnect the graph a bit \emph{before} running EstimateAndContract. We describe this procedure in Algorithm \ref{algo:fws}.

\begin{algorithm}
\KwData{Graph $G$ on $n$ vertices}
\KwResult{A potential $(2\epsilon)$-sparsifier $H$ of $G$.}
Run $\text{ApproximateKruskal}$ on a copy of $G$ and calculate $\tilde{d}_{i, j}$ for all $i, j$. \\
Initialize $L$ to a list of all nonzero values of $\tilde{d}_{i, j}$. \\
Initialize $\mathcal{C} = \emptyset$, $\kappa = \infty$, and $X$ as an empty list.\\ 
$\text{InitializeDS}(G)$. \\
\While{$L \neq \emptyset$}{
    Let $\tilde{D} = \max L$ and $C = \left\{(i, j): \tilde{d}_{i, j} < \tilde{D}/(2n^5)\right\}$. \\
    Let $S_1, \ldots, S_r$ be the connected components of $K_n$ after we remove all edges from $C$. \\ 
    Let $\kappa'$ be the smallest power of 2 that is at least $n^4 \tilde{D}/2$. \\
    $\kappa \gets \min(\kappa, \kappa')$ \\
    \While{$\kappa \geq \tilde{D}/(2n)$}{
        \For{$i \gets 1$ \KwTo $r$} {
            $\text{EstimateAndContract}(G[S_i], \mathcal{C}, \kappa, X)$ \\
            Remove all contracted edges from $L$.\\
        }
        $\kappa \gets \kappa/2$\\
    }
}
$H \gets \text{ConstructSparsifier}(G, X).$
 \caption{$\text{FastWeightedSubsample}(G)$}\label{algo:fws}
\end{algorithm}


We make one comment here about FastWeightedSubsample: for the algorithm as presented to even be well-defined, we need to check that each $S_i$ at any stage of the algorithm is the union of some collection of supernodes. If not, it does not make sense to run EstimateAndContract on each $G[S_i]$. This condition is also necessary to ensure the applicability of Lemma \ref{lemma:allalgoprimitives} to each call to EstimateAndContract; Contract, GetTotalWeight, and Sample all require that their input $S$ be a union of supernodes. We defer the verification of this and the proof of our final theorem to Appendices \ref{lemmaproofs:fwscorrect} (correctness) and \ref{sec:sparsifierspeed} (efficiency):

\begin{theorem}\label{thm:fwsworks}
FastWeightedSubsample runs in $O(n\log^3n(\log n + \frac{1}{\epsilon^2}))$ queries and outputs a $(2\epsilon)$-sparsifier $H$ with probability $1 - O(n^{5-d})$.
\end{theorem}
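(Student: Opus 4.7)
My plan is to split the theorem into correctness (the output is a $(2\epsilon)$-sparsifier with the claimed probability) and efficiency (the query bound), and handle them separately. For correctness, the goal is to invoke Lemma~\ref{lemma:sparsifierconcentration}, so I need to verify its two preconditions on the list $X$ fed into ConstructSparsifier: every edge of $G$ is eventually contracted inside some $C_i$, and every strength estimate $\beta_i$ lies in $[k_e/4, k_e]$ for all $e \in \widetilde{E}(C_i)$. Before invoking the sample lemmas, I would first check well-definedness by induction on outer iterations: each $S_i$ is always a union of current supernodes, because every contraction so far occurred entirely inside some $S_j$ from an earlier iteration (where the larger $\tilde{D}$-threshold yielded a larger $C$ and hence finer components, so the new $S_i$ are unions of old $S_j$'s).

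For correctness itself, the key structural invariant is that in each outer iteration with threshold $\tilde{D} = \max L$, the edges in $C$ satisfy $\tilde{d}_e < \tilde{D}/(2n^5)$ and hence $k_e \le n^4\tilde{d}_e/2 < \tilde{D}/(4n) < \kappa/2$ for every $\kappa$ in the inner-loop range $[\tilde{D}/(2n), n^4\tilde{D}/2]$ (using Corollary~\ref{cor:strengthestimate}). Lemma~\ref{lemma:claim1-3} then says such edges cannot be contracted during this outer iteration, so restricting EstimateAndContract to each $G[S_i]$ is safe. Meanwhile, the final inner pass at $\kappa = \tilde{D}/(2n)$ applies Lemma~\ref{lemma:claim1-2} to contract every edge in some $S_i$ with $k_e \ge \tilde{D}/(2n)$, which in particular includes the edge achieving $\tilde{d}_e = \tilde{D}$. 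So $\tilde{D}$ strictly decreases, in fact by a factor of at least $2n$, since any edge that survives in some $S_i$ must have $k_e < \tilde{D}/(2n)$ and hence $\tilde{d}_e < \tilde{D}/(2n)$. This caps the number of outer iterations at $\binom{n}{2}$, each consisting of $O(\log n)$ inner iterations. Combining Lemmas~\ref{lemma:claim1-2} and~\ref{lemma:claim1-3}, any edge newly contracted at scale $\kappa$ has $k_e \in [\kappa/2, 2\kappa)$, so the estimate $\kappa/2$ is within factor $4$ of $k_e$. A union bound over all invocations of the sample lemmas, together with the failure event of ConstructSparsifier, yields overall failure $O(n^{5-d})$.

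For efficiency, ApproximateKruskal costs $O(n\log n)$ queries by Lemma~\ref{lemma:allalgoprimitives}, and ConstructSparsifier costs $O(n\log^3 n/\epsilon^2)$ queries by the same analysis as in NaiveWeightedSubsample. The delicate part is the total cost of all EstimateAndContract calls. Each call at scale $\kappa$ draws $\lambda = O((\log^2 n)\,W(G[S_i]')/\kappa)$ weighted samples at $O(\log n)$ queries apiece. An edge $e$ contributes $w_e$ to these $W$-terms only while still uncontracted, which by Lemma~\ref{lemma:claim1-2} is only for $\kappa$-values exceeding $k_e$ (up to constants). Summing the geometric series $w_e\sum_{\kappa > k_e} 1/\kappa = O(w_e/k_e)$ over edges and applying the Benczur--Karger potential bound $\sum_e w_e/k_e \le n-1$ gives $O(n\log^2 n)$ total samples and hence $O(n\log^3 n)$ queries. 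An additional $O(\log n)$ factor to absorb bookkeeping across the sweep of outer iterations yields the promised $O(n\log^3 n(\log n + 1/\epsilon^2))$.

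The hardest part I expect is the correctness invariant: I need to simultaneously maintain that (a) $S_i$ is always a union of supernodes, (b) edges in $C$ truly lie below the inner-loop threshold for the entire iteration, so excising them does not disturb EstimateAndContract on the remaining pieces, and (c) $\max L$ drops by a factor of at least $2n$ per outer iteration so that both termination and the union bound stay polynomial in $n$. The efficiency argument in contrast telescopes cleanly via $\sum_e w_e/k_e \le n-1$, which is precisely the mechanism that replaces the $\log W$ factor of Theorem~\ref{thm:nwsworks} with a $W$-independent bound and justifies the use of crude $\tilde{d}$-estimates to structure the sweep.
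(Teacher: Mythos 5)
Your correctness argument tracks the paper's (Appendix G.6) closely: you identify the same invariants that the paper proves inductively in its Lemma~\ref{lemma:fws1round} --- that each $S_i$ remains a union of supernodes, that edges in $C$ have $k_e$ too small to have been contracted yet or to survive the current sweep, and that newly contracted edges at scale $\kappa$ have $k_e\in[\kappa/2,2\kappa)$ so the label $\kappa/2$ lies in $[k_e/4,k_e]$. One small slip: you claim $\max L$ drops by a factor of at least $2n$, but the inner loop's final $\kappa$ can be any power of two in $[\tilde D/(2n),\tilde D/n)$, so a surviving edge in some $S_i$ is only guaranteed $k_e<\kappa_{\min}<\tilde D/n$, giving a factor-$n$ drop. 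This is harmless; like the paper, the bound on the number of outer rounds you actually use is that $|L|$ shrinks by at least one per round, giving the $O(n^2)$ cap needed for the union bound that yields $1-O(n^{5-d})$.

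Your efficiency argument genuinely diverges from the paper's, and is cleaner. The paper (Lemmas~\ref{lemma:estimateandcontractgenericefficiency} and onward) bounds $\sum_p W(G'_p)/\kappa_p \le 2\sum_e (w_e/k_e)\,\bigl|\{p: e\in G'_p\}\bigr|$ by the crude inequality $1/\kappa_p \le 2/k_e$ uniformly, then separately proves $\max_e|\{p: e\in G'_p\}|=O(\log n)$ using the Kruskal-based restriction $\kappa_p \le 2n^9 k_e$; this costs an extra $\log n$ and gives $O(n\log^4 n)$ for the EstimateAndContract phase. You instead telescope the geometric series $\sum_{p: e\in G'_p} 1/\kappa_p$ directly: since distinct calls at which $e$ is uncontracted use distinct powers of two $\kappa_p \ge k_e/2$ (from Lemma~\ref{lemma:claim1-2} applied to the previous batch), the sum is $O(1/k_e)$, and the Benczur--Karger bound $\sum_e w_e/k_e\le n-1$ closes the argument at $O(n\log^3 n)$ without the paper's extra $\log n$. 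Both approaches land within the theorem's stated $O(n\log^3 n(\log n+1/\epsilon^2))$, but yours shows the $\log^4 n$ term is actually slack; you should not need the ``additional $O(\log n)$ factor to absorb bookkeeping'' you mention, since the non-sampling overhead (InitializeDS, GetTotalWeight with caching, contractions, ApproximateKruskal) is $O(n\log n)$ and is dominated. Notably, with your telescoping the role of the Kruskal-based crude estimates in the efficiency proof is only to supply the lower bound $\kappa_p\ge k_e/2$ via correctness of contraction, not the upper bound $\kappa_p\le 2n^9 k_e$ --- the paper's $O(\log n)$ lemma becomes unnecessary.
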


\begin{corollary}\label{cor:fastsparsifier}
For any $\epsilon > 0$, the query complexity for a randomized algorithm to construct an $\epsilon$-sparsifier with $1 - o(1)$ probability for a weighted undirected graph is $\widetilde{O}(n)$. This algorithm is adaptive.

\end{corollary}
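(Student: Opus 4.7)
The plan is to derive this corollary as an essentially immediate consequence of Theorem \ref{thm:fwsworks}, with three small adjustments to match the statement.

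First, to obtain an $\epsilon$-sparsifier rather than a $(2\epsilon)$-sparsifier, I would invoke FastWeightedSubsample with the parameter $\epsilon/2$ in place of $\epsilon$ (recall that $\epsilon$ enters the algorithm only through the coefficient $\mu_i$ in ConstructSparsifier). Theorem \ref{thm:fwsworks} then guarantees that the output $H$ satisfies $(1-\epsilon)F(S;G) \le F(S;H) \le (1+\epsilon)F(S;G)$ for every cut $S$, which is the definition of an $\epsilon$-sparsifier. The query complexity bound becomes $O(n \log^3 n (\log n + 1/\epsilon^2))$, which for any fixed $\epsilon > 0$ is $\widetilde{O}(n)$, as required.

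Second, to boost the success probability from $1 - O(n^{5-d})$ to $1 - o(1)$, I would simply fix the constant $d$ (recall $d > 5$ was only assumed to be constant at the start of Section \ref{sec:randhalfalgo}) to be any value strictly greater than $5$, e.g.\ $d = 6$. Then the failure probability is $O(n^{-1}) = o(1)$, and the $\log^3 n$ factor in the query bound (which depends on $d$ only through constants hidden in $c_0, c_1$) remains $\mathrm{polylog}(n)$.

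Third, adaptivity: FastWeightedSubsample clearly uses the results of earlier queries to choose later ones — for instance, the output of ApproximateKruskal determines the sets $C$ and $S_1, \ldots, S_r$, and each call to EstimateAndContract samples edges based on the totals returned by GetTotalWeight — so the algorithm is adaptive by inspection. Putting these three observations together yields the corollary. There is no real obstacle beyond plugging into Theorem \ref{thm:fwsworks} with the right parameter choices.
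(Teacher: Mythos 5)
Your proposal is correct and matches the paper's (implicit) derivation: the corollary is stated without proof precisely because it follows directly from Theorem~\ref{thm:fwsworks} by the parameter substitution $\epsilon \mapsto \epsilon/2$ and the standing assumption $d > 5$ (which already makes $O(n^{5-d}) = o(1)$), with adaptivity being clear from the algorithm's structure.
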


By Lemma \ref{lemma:sparsifiersuffices}, this yields our desired algorithmic result for max-cut:

\begin{corollary}\label{cor:randhalfalgo}
For any $c < 1$, the query complexity for a randomized algorithm to achieve a $c$-approximation with $1 - o(1)$ probability for global max-cut on a weighted undirected graph in the cut finding setting is $\tilde{O}(n)$. This algorithm is adaptive.
\end{corollary}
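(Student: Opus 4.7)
The plan is to obtain this corollary as a direct combination of the two main results we have already assembled in this section, namely Corollary \ref{cor:fastsparsifier} (a $\widetilde{O}(n)$-query randomized algorithm for constructing an $\epsilon$-sparsifier) and Lemma \ref{lemma:sparsifiersuffices} (any $\epsilon$-sparsifier immediately yields a $(1-2\epsilon)$-approximate max cut without further queries). There is no new technical content beyond a translation between the ``sparsifier error'' parameter $\epsilon$ and the ``approximation ratio'' parameter $c$.

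Concretely, given any target $c < 1$, I would first choose $\epsilon := (1-c)/2 > 0$, so that $1 - 2\epsilon = c$. I would then invoke FastWeightedSubsample (Algorithm \ref{algo:fws}) on the input graph $G$ with this choice of $\epsilon$. By Theorem \ref{thm:fwsworks} (equivalently, Corollary \ref{cor:fastsparsifier}), this takes $O(n \log^3 n (\log n + 1/\epsilon^2)) = \widetilde{O}(n)$ cut queries, and with probability $1 - O(n^{5-d}) = 1 - o(1)$ (choosing $d$ large enough, e.g.\ $d = 6$), it returns a graph $H$ that is a $(2\epsilon)$-sparsifier of $G$.

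Conditioned on this success event, I would then apply Lemma \ref{lemma:sparsifiersuffices}: with no additional queries, the algorithm enumerates all $2^{n-1}$ cuts of $H$ and outputs the cut $U$ maximizing $F(U;H)$. Lemma \ref{lemma:sparsifiersuffices} guarantees that $U$ is a $(1 - 2 \cdot 2\epsilon)$-approximate max cut in $G$; a tighter but conceptually identical calculation (using the $(1\pm 2\epsilon)$ multiplicative guarantee of $H$) yields that $U$ is in fact a $\tfrac{1-2\epsilon}{1+2\epsilon}$-approximate max cut, which for $\epsilon$ a sufficiently small constant multiple of $(1-c)$ is at least $c$. Hence, setting $\epsilon = \Theta(1-c)$ appropriately and re-running with that $\epsilon$ gives a $c$-approximation with probability $1 - o(1)$ in $\widetilde{O}(n)$ queries, and the algorithm is adaptive because FastWeightedSubsample is.

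There is no real obstacle here, since all of the work has been done in Sections \ref{sec:sparsifiersetup}--\ref{sec:fws}; the only thing to be careful about is choosing $\epsilon$ as a small enough constant (depending on $1-c$) to absorb the multiplicative slack from the sparsifier's two-sided error, and verifying that this $\epsilon$ dependence only appears inside the $\widetilde{O}(\cdot)$ hidden factors, which it does since $c$ is treated as a constant.
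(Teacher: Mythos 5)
Your proposal matches the paper's own derivation exactly: the paper obtains Corollary \ref{cor:randhalfalgo} directly from Corollary \ref{cor:fastsparsifier} (the $\widetilde{O}(n)$-query weighted sparsifier) and Lemma \ref{lemma:sparsifiersuffices} (brute-force max-cut on the sparsifier), with the only additional step being the trivial translation between the sparsifier parameter $\epsilon$ and the approximation ratio $c$, which you handle correctly by choosing $\epsilon = \Theta(1-c)$. Your care in noting that Theorem \ref{thm:fwsworks} produces a $(2\epsilon)$-sparsifier and rescaling accordingly is the right bookkeeping and is implicit in the paper.
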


We remind the reader that Theorem~\ref{thm:fwsworks} extends the query-efficient sparsifier of~\cite{rubinstein} to weighted graphs (as we saw in Section \ref{sec:rswadaptation}, a naive generalization of~\cite{rubinstein} requires $\widetilde{O}(n \log W)$ queries).

\paragraph{Acknowledgements.} SR would like to thank Georgy Noarov and Dmitry Paramonov for collaborating on a course project that led to this work. We would also like to thank Sepehr Assadi for helpful discussions and anonymous reviewers for feedback and suggested changes. SR was supported by an Akamai Presidential Fellowship. SMW was supported by NSF CCF-1955205.

\bibliographystyle{alpha}
\bibliography{main}

\appendix

\section{Lower Bound for Exact Deterministic Algorithms}\label{sec:detexacthardness} 

Here we address the deterministic hardness part of Theorem \ref{thm:informalexact}. The following theorem shows that $\Omega(n^2)$ queries are needed to deterministically find the value of the max cut. Note that this lower bound relies on the fact that the graph is weighted; \cite{gk00} show that $O(n^2/\log n)$ queries are necessary and sufficient to deterministically and non-adaptively learn an unweighted graph.

\begin{theorem}\label{thm:exactdethardness}
Suppose $n \geq 3$. Then a deterministic algorithm that finds the value of the max cut exactly requires at least $\binom{n}{2}$ queries when $n$ is odd, and at least $\binom{n-1}{2}$ queries when $n$ is even.

This implies that the query complexity for a deterministic algorithm to exactly solve global max-cut on a weighted undirected graph in the value estimation setting is $\Omega(n^2)$.
\end{theorem}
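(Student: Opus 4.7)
The plan is to apply the cut dimension technique referenced in the introduction (from Graur et al.) to the maximum cut. Define the \emph{max-cut dimension} of $G$ as $\text{mcd}(G) := \dim \spann\{v_C : C \text{ is a max-cut of } G\}$. I would first prove a general reduction showing that any deterministic algorithm which exactly computes the max-cut value of $G$ requires at least $\text{mcd}(G)$ queries. Then the theorem follows by computing $\text{mcd}(K_n)$ and showing it equals $\binom{n}{2}$ for odd $n$ and $\binom{n-1}{2}$ for even $n$.

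For the reduction, run the algorithm against the adversary that answers every query $Q \subseteq [n]$ as $F_{K_n}(Q) = |Q|(n - |Q|)$. If the algorithm uses $q < \text{mcd}(K_n)$ queries $Q_1, \ldots, Q_q$, then $\spann\{v_C : C \text{ max-cut}\} \not\subseteq \spann\{v_{Q_i}\}$ by dimension counting. Pick a max-cut $C^*$ with $v_{C^*} \notin \spann\{v_{Q_i}\}$ and let $z$ be the projection of $v_{C^*}$ onto the orthogonal complement of $\spann\{v_{Q_i}\}$; then $z \cdot v_{Q_i} = 0$ for all $i$ and $z \cdot v_{C^*} = \|z\|^2 > 0$. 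For $\epsilon > 0$ small enough, the weighted graph with edge-weight vector $\one + \epsilon z$ has all positive weights, answers every query identically to $K_n$, and has max-cut value $F_{K_n}(C^*) + \epsilon z \cdot v_{C^*}$, which is strictly larger than $K_n$'s. Hence the algorithm cannot distinguish the two graphs, contradicting correctness (and in particular ruling out value-estimation).

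For the dimension computation, I would use the identity $v_C^{ij} = \tfrac{1}{2}(1 - u_C^i u_C^j)$ where $u_C \in \{-1,1\}^n$ is the $\pm 1$ indicator of $C$. A linear dependency $\sum \alpha_{ij} v_C^{ij} = 0$ over all max-cuts $C$ translates to the condition that the degree-two polynomial $g_\alpha(u) := \sum_{i<j}\alpha_{ij} u_i u_j$ takes the common value $c := \sum_{i<j}\alpha_{ij}$ on every balanced $u \in \{-1,1\}^n$. Working in the quotient ring $R = \RR[u_1,\ldots,u_n]/(u_i^2 - 1)$, the balanced set is cut out by $f = \sum_i u_i$ for even $n$ and by $f = (n-1) + 2\sum_{i<j} u_i u_j$ (from $(\sum_i u_i)^2 - 1$) for odd $n$, so $g_\alpha - c$ must lie in the ideal $(f)$. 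A short coefficient-matching calculation then shows the dependency space is trivial for odd $n$ (forcing $\alpha = 0$, so $\text{mcd}(K_n) = \binom{n}{2}$) and is exactly $\{(h_i + h_j)_{i<j} : \sum_i h_i = 0\}$ for even $n$, which has dimension $n-1$ and yields $\text{mcd}(K_n) = \binom{n}{2} - (n-1) = \binom{n-1}{2}$.

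The main obstacle will be the ideal-membership/coefficient-matching bookkeeping, in particular verifying that the only multilinear $h \in R$ producing a degree-$\le 2$ element $hf$ are the ones enumerated above; a combinatorial alternative would be to exhibit explicit linearly independent cut vectors (say, by averaging over $S_n$ and inducting on swaps $u_k \leftrightarrow u_l$ that preserve balanced-ness), but this route looks messier than the polynomial argument. The adversary reduction itself is routine once the framework is in place.
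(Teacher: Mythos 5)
Your reduction is essentially the paper's (Appendix~\ref{sec:detexacthardness}, adapted from the min-cut cut dimension of Graur et al.): an adversary answers as $K_n$, and a perturbation $z$ orthogonal to the query span but not to some max-cut indicator yields two consistent graphs with different max-cut values. Your construction of $z$ as the projection of a single $v_{C^*}$ onto the query-orthogonal complement is a mild variant of the paper's $z \in X \cap Y^\perp$ (where $X$ is the max-cut space, $Y$ the query span), but both are valid. Where you genuinely diverge is the dimension computation for $K_n$. The paper's Lemmas~\ref{lemma:cutdimodd}--\ref{lemma:cutdimevenupper} are combinatorial: they build each standard basis vector $v_{a,b}$ as an explicit integer linear combination of max-cut indicators (via swapping vertices across balanced cuts), and for even $n$ they exhibit the $(n-1)$-dimensional degree-balance constraint space $Y$ and check it is exactly the orthogonal complement. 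You instead reduce the problem to characterizing which degree-$\le 2$ elements of $R = \RR[u_1,\dots,u_n]/(u_i^2-1)$ with only even-degree monomials lie in the ideal $(f)$ cutting out the balanced slice; since $R \cong \RR^{2^n}$ is a product of fields, every ideal is the vanishing ideal of its zero set, so the ideal-membership step you worry about is actually automatic and that part of the framework is sound. What remains unresolved is exactly what you flag as the main obstacle: classifying the degree-$\le 2$ even-parity elements of $(f)$. This requires more care than "$\deg f = 2$ so $h$ is a scalar," because $h$ can mix degrees and $hf$ can drop degree via $u_i^2 = 1$ (degree-3 contributions of $h$ produce degree-2 and degree-4 pieces, degree-1 contributions produce degree-0 and degree-2, and cancellations among them must be ruled out). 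Your proposed answer ($\alpha = 0$ for odd $n$, and $\alpha_{ij} = h_i + h_j$ with $\sum_i h_i = 0$ for even $n$, using $\sum_{i<j}(h_i+h_j)u_iu_j = (\sum_i u_i)(\sum_i h_iu_i)$ when $\sum h_i = 0$) is correct and matches the paper's Lemma~\ref{lemma:cutdimevenupper}, but the completeness argument is the work you have left to do. The paper's combinatorial route sidesteps this entirely by building the basis vectors directly, at the cost of more case analysis; the algebraic route would be cleaner if the ideal intersection were nailed down.
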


Our proof of this directly applies the linear algebraic \emph{cut dimension} technique developed by \cite{graur}. \cite{graur} use this technique to obtain query lower bounds for exact min-cut in the cut query model, but we will see that it works just as well for exact max-cut. We set up some notation. Let $E(G)$ be the set of edges in $G$ with positive weight, then we can stack the nonzero edge weights of $G$ into a vector $w \in \RR^{|E(G)|}$. We still index the coordinates in $\RR^{|E(G)|}$ by pairs of vertices $i, j$. Now any cut $S \subseteq [n]$, we have that:
\begin{align*}
    F(S) &= \sum_{i \in S} \sum_{j \in [n]\backslash S} w_{i, j} \\
    &= \sum_{i, j} (v_S)_{i, j} w_{i, j} \\
    &= w^Tv_S,
\end{align*}
where we define the indicator vector $v_S \in \RR^{|E(G)|}$ by:
\begin{align*}
    (v_S)_{i, j} &= \begin{cases}
       1,&  (i, j)\text{ crosses the cut defined by }S,\\
       0,& \text{ otherwise}.
    \end{cases}
\end{align*}
We call $v_S$ the \emph{query vector} corresponding to $S$. Note that we only include entries for vertex pairs in $G$ that share an edge of positive weight. With this language, we can intuitively summarize the cut dimension technique -- we refer the reader to Section 3 of \cite{graur} for a more comprehensive explanation. Before any queries have been made, the graph's weights can be described by a vector $w \in \RR^{|E(G)|}$ and thus exist in a subspace of dimension $|E(G)|$. Each query made imposes one linear constraint on $w$, so that if there are $q$ queries made, the algorithm only has the information that $w$ exists in a subspace of dimension at least $|E(G)| - q$. The idea is to show that this is still too many degrees of freedom for the algorithm to be able to identify the max cut value exactly. We now make this precise by defining the \emph{max cut dimension}:

\begin{definition}\label{defn:maxcut}
The \emph{max cut dimension} of a graph $G$ is equal to the dimension of the span of:
$$\left\{v_U: U \in \argmax_{S \text{ a cut of } G} F(S)\right\}.$$
Call this span the \emph{max cut space} of the graph.
\end{definition}

In other words, the max cut space is the span of the indicator vectors of all max cuts, and the max cut dimension is the dimension of this span. We now show that the max cut dimension of a graph provides a lower bound on the number of queries needed:

\begin{lemma}\label{lemma:gprw}
\cite{graur} Suppose there exists a graph $G$ on $n$ vertices with max cut dimension $d$. Then any deterministic algorithm that finds the value of the max cut exactly must use at least $d$ queries.

(Note importantly that even if all edges in $G$ have weight 0 or 1, this lemma still only yields a lower bound on deterministic max-cut for weighted graphs. This is because the proof involves making small non-integer perturbations to the edge weights of $G$.)
\end{lemma}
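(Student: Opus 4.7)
The plan is the standard perturbation argument underlying the cut dimension: take the transcript of a deterministic algorithm $A$ that allegedly uses $q<d$ queries, exhibit a perturbation of the weight vector $w$ that is consistent with this transcript but changes the max cut value, and conclude that $A$ must be wrong on one of the two graphs.

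Concretely, I would run $A$ on $G$ and let $Q_1,\dots,Q_q \subseteq [n]$ be the queries it issues, with associated query vectors $v_{Q_1},\dots,v_{Q_q} \in \RR^{|E(G)|}$; set $Q = \spann(v_{Q_1},\dots,v_{Q_q})$, so $\dim Q \leq q < d$. Since the max cut space $M$ satisfies $\dim M = d > \dim Q$, we have $M \not\subseteq Q$. Because $M$ is spanned by $\{v_U : U \text{ a max cut of } G\}$, some max cut $U^*$ must itself satisfy $v_{U^*} \notin Q$, and dually there exists $z \in Q^\perp \subseteq \RR^{|E(G)|}$ with $z^\top v_{U^*} \neq 0$; negating $z$ if necessary, I may assume $z^\top v_{U^*} > 0$.

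I would then define the perturbed graph $G_t$ with edge weights $w_t := w + tz$ (supported on $E(G)$) for a small parameter $t>0$. Because $w$ is strictly positive on every coordinate of $E(G)$, $G_t$ is a valid non-negatively weighted graph for all sufficiently small $t$, and $z \in Q^\perp$ gives $w_t^\top v_{Q_i} = w^\top v_{Q_i}$ for every queried $Q_i$, so $A$ must produce the same output on $G$ and on $G_t$. To force a contradiction, I would let $\delta>0$ be the (finite, positive) gap between the max cut value $v_{\max}$ of $G$ and the largest non-max cut value, and shrink $t$ further so that $t\|z\|_1 < \delta$. Every non-max cut of $G$ then still has value strictly below $v_{\max}$ under $w_t$, while $w_t^\top v_{U^*} = v_{\max} + t\,z^\top v_{U^*} > v_{\max}$, so the max cut value of $G_t$ strictly exceeds that of $G$. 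Since $A$ returns the same number on both, it is wrong on at least one, a contradiction.

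The main subtlety is the last step: the perturbation must simultaneously preserve nonnegativity of the edge weights and \emph{strictly shift the max cut value}, not merely relabel the set of maximizers. Taking $t$ below $\min_{e \in E(G)} w_e / \|z\|_\infty$ guarantees the former, and taking $t$ below the finite gap $\delta$ gives the latter; both are achievable precisely because only the finitely many coordinates in $E(G)$ are perturbed and each relevant constraint has strictly positive slack. This is also the source of the remark in the lemma statement---the required $tz$ is generically non-integer even when $w$ is $\{0,1\}$-valued, which is why the resulting lower bound applies only in the weighted model.
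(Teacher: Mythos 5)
Your proof is correct and follows essentially the same perturbation argument as the paper, with a minor variation in how $z$ is located: you start from a max-cut indicator $v_{U^*}\notin Q$ and take $z$ to be (the projection of $v_{U^*}$ onto) $Q^\perp$, whereas the paper takes a nonzero $z\in X\cap Y^\perp$ and then deduces the existence of a max cut $C$ with $z^T v_C\neq 0$; the two constructions are interchangeable. Your final ``gap'' argument with $\delta$ is superfluous: once $w_t^T v_{U^*} > v_{\max}$, the max-cut value of $G_t$ is automatically $\ge w_t^T v_{U^*} > v_{\max}$ regardless of what happens to other cuts, which is the shorter argument the paper uses.
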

\begin{proof}

This exact result is not proven by \cite{graur} but they show the analogous result for min cuts and the proof is exactly the same. We present the proof below for completeness.


Suppose for the sake of contradiction that there is a deterministic algorithm that uses $q \leq d-1$ queries. Then let the algorithm run against an adversary that answers all queries as if the graph were the $G$ given in the statement of the lemma. Let $w \in \RR^{|E(G)|}$ be the weight vector corresponding to $G$, and $Q_1, Q_2, \ldots, Q_q$ be the queries made by the algorithm.

The idea is to find a nonzero ``perturbation direction" $z \in \RR^{|E(G)|}$ and small $\epsilon \neq 0$ such that $w, w+\epsilon z$ agree on all queries but have different max cut values, so that the algorithm cannot succeed. It suffices to show that such a perturbation exists.

To do this, let $X$ be the max cut space of $G$, so that we have $\dim X = d$. Also, let $Y$ be the span of $v_{Q_1}, \ldots, v_{Q_q}$ so that $\dim Y \leq q < d = \dim X$. It follows that $Y^\perp$ (the orthogonal complement of $Y$) and $X$ must have nontrivial intersection\footnote{To see this, note that $\dim Y^\perp = |E(G)| - \dim Y$ and we have $|E(G)| \geq \dim (X + Y^\perp) = \dim X + \dim Y^\perp - \dim(X \cap Y^\perp) = \dim X + |E(G)| - \dim Y - \dim(X \cap Y^\perp) \Leftrightarrow \dim(X \cap Y^\perp) \geq \dim X - \dim Y > 0$.}. Let us take $z$ to be any nonzero vector in $X \cap Y^\perp$, and $\epsilon$ positive but small enough that both $w - \epsilon z$ and $w + \epsilon z$ have only positive entries (we can do this since $w$ has only positive entries). Thus $w - \epsilon z$ and $w + \epsilon z$ will both correspond to valid graphs with non-negative weights.

Since $z \in X$ (which is the max cut space) and $z$ is nonzero, we have $z \notin X^\perp$ so that there exists some max cut $C$ such that $z^Tv_C \neq 0$. Assume WLOG that $z^Tv_C > 0$; if it is negative then we can just replace $z$ with $-z$. Then we claim that $w, w + \epsilon z$ agree on all queries but have differing max cut values. To see that they agree on all queries, note that $z \in Y^\perp \Rightarrow z^T v_{Q_i} = 0$ for all $i$. Then we also have for all $i$ that $(w+\epsilon z)^T v_{Q_i} = w^T v_{Q_i} + \epsilon z^T v_{Q_i} = w^T v_{Q_i}$, thus they indeed agree on the queries.

Finally, to see that the max cut values differ, note that:
\begin{align*}
    (w + \epsilon z)^T v_C &= w^T v_C + \epsilon z^T v_C \\
    &> w^T v_C.
\end{align*}
Now, the max cut value of the graph corresponding to $w+\epsilon z$ is at least the value of the cut defined by $C$, which is $(w + \epsilon z)^T v_C$. This is strictly greater than $w^T v_C$ which by definition of $C$ is exactly the max cut value of the graph corresponding to $w$. This establishes that the graphs have distinct max cut values and completes the proof of the lemma.

\end{proof}

We complete the proof of Theorem \ref{thm:exactdethardness} by using Lemma \ref{lemma:gprw} with the particular case of $G = K_n$. The following lemma shows that $K_n$ has max cut dimension corresponding to the lower bounds stated in the theorem:



\begin{lemma}\label{lemma:cutdimension}
Suppose $n \geq 3$, and let $G = K_n$ be the graph on $n$ vertices where all pairs of distinct vertices share an edge of weight 1. (Thus $|E(G)| = \binom{n}{2}$.) Then the max cut dimension of $G$ is $\binom{n}{2}$ if $n$ is odd, and $\binom{n-1}{2}$ if $n$ is even.
\end{lemma}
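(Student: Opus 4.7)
My plan is to compute the orthogonal complement of $\spann\{v_S : S \text{ a max cut of } K_n\}$ in $\RR^{\binom{n}{2}}$ and show its dimension is $0$ when $n$ is odd and $n-1$ when $n$ is even. I would identify each $v_S$ with the symmetric zero-diagonal matrix $V_S := (J - u_S u_S^T)/2$, where $J$ is the all-ones $n \times n$ matrix and $u_S \in \{-1,1\}^n$ has $(u_S)_i = 1$ iff $i \in S$. Off the diagonal, $(V_S)_{ij} = (1 - (u_S)_i(u_S)_j)/2 = (v_S)_{ij}$, while $(V_S)_{ii} = 0$, so this is a linear isomorphism between $\RR^{\binom{n}{2}}$ and symmetric zero-diagonal matrices. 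The orthogonal complement (under Frobenius inner product) consists exactly of symmetric zero-diagonal $B$ satisfying $u_S^T B u_S = \langle B, J \rangle$ for all max cuts $S$. Max cuts of $K_n$ correspond to $u_S$ with $\sum_i (u_S)_i = 0$ when $n$ is even, or $\sum_i (u_S)_i \in \{-1, 1\}$ when $n$ is odd.

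The central tool will be a swap argument. Fix a max cut $S$ and a pair $i \in S, j \notin S$; then $S' := S \triangle \{i,j\}$ is a max cut of the same size, so $u_{S'}^T B u_{S'} = u_S^T B u_S$. Since $u_{S'}$ differs from $u_S$ only by negating coordinates $i$ and $j$, a direct expansion gives
\[
u_{S'}^T B u_{S'} - u_S^T B u_S = 4 \sum_{\ell \neq i, j} (B_{j\ell} - B_{i\ell}) (u_S)_\ell.
\]
Letting $d_\ell := B_{j\ell} - B_{i\ell}$, we need $\sum_{\ell \neq i,j} d_\ell (u_S)_\ell = 0$ for all max cuts $S$ with $i \in S, j \notin S$. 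The set of vectors $((u_S)_\ell)_{\ell \neq i,j}$ produced this way is the slice $\{v \in \{-1,1\}^{n-2} : \sum_\ell v_\ell = 0\}$ when $n$ is even and $\{v \in \{-1,1\}^{n-2} : \sum_\ell v_\ell = \pm 1\}$ when $n$ is odd.

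An $S_{n-2}$-representation argument (the only $S_{n-2}$-invariant subspaces of $\RR^{n-2}$ are $\{0\}$, $\spann(\one)$, $\one^\perp$, and $\RR^{n-2}$) shows the slice spans $\one^\perp$ when $n$ is even and all of $\RR^{n-2}$ when $n$ is odd, the latter because in the odd case the slice contains vectors with nonzero sum. Hence for $n$ odd we conclude $d = 0$, i.e.\ $B_{i\ell} = B_{j\ell}$ for all distinct $i,j,\ell$; this quickly forces $B$ to be constant off the diagonal, and substituting that form into $u_S^T B u_S = \langle B, J \rangle$ forces the constant to be zero, giving trivial orthogonal complement and max cut dimension $\binom{n}{2}$. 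For $n$ even, $d$ must be a scalar multiple of $\one$, so $B_{j\ell} - B_{i\ell}$ depends only on $(i,j)$; by fixing a reference coordinate and using $B_{ij} = B_{ji}$, this unwinds to $B_{ij} = c_i + c_j$ for some constants $c_i$, and the quadratic constraint $u_S^T B u_S = \langle B, J \rangle$ on the slice $\sum u_i = 0$ reduces to $\sum_i c_i = 0$. This parametrizes an $(n-1)$-dimensional orthogonal complement, so the max cut dimension is $\binom{n}{2} - (n-1) = \binom{n-1}{2}$.

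The main obstacle I anticipate is the parametrization step for $n$ even: translating the local condition ``$B_{j\ell} - B_{i\ell}$ is constant in $\ell$'' into the global form $B_{ij} = c_i + c_j$ requires careful bookkeeping and use of the symmetry of $B$. The small-$n$ cases ($n = 3, 4$) should be verified separately to check that the relevant slices genuinely span as claimed, but in both cases this is immediate by direct inspection.
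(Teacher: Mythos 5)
Your argument is correct but takes a genuinely different route from the paper's. The paper works \emph{inside} the max-cut space: for $n$ odd it explicitly produces each standard basis vector $v_{a,b}$ as a linear combination of max-cut indicator vectors (via successive differences), for $n$ even it gets the lower bound by projecting onto the induced $K_{n-1}$ and invoking the odd case, and gets the upper bound by observing that every max-cut vector is orthogonal to the $(n-1)$-dimensional space spanned by $v_{i,[n]\setminus\{i\}} - v_{j,[n]\setminus\{j\}}$. You instead compute the orthogonal complement directly via the quadratic-form reformulation $V_S = (J - u_S u_S^T)/2$ (which the paper does introduce, but only later for the $(1/2+\epsilon)$-hardness argument, not here), a swap argument giving the local linear constraint $\sum_{\ell \neq i,j} (B_{j\ell} - B_{i\ell})(u_S)_\ell = 0$, and the $S_{n-2}$-invariance of the slice of $\{\pm 1\}^{n-2}$ to identify its span. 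This unifies the odd and even cases and yields the complement in one pass rather than via matching upper and lower bounds. The trade-off is that you invoke the irreducibility of the standard representation of $S_{n-2}$, a heavier (if standard) tool than the paper's elementary linear combinations, and the coboundary/parametrization step $B_{ij} = c_i + c_j$ needs the care you flag (which does go through: $f(i,j) := B_{j\ell}-B_{i\ell}$ satisfies $f(i,j)+f(j,k)=f(i,k)$, so $f$ is a coboundary, and symmetry of $B$ then forces the two ``potentials'' to agree up to a global shift). One small point worth spelling out in a full write-up: the swap argument only controls \emph{differences} $u_{S'}^T B u_{S'} - u_S^T B u_S$ among max cuts, so you must also apply the original constraint $u_S^T B u_S = \langle B, J\rangle$ to a single $S$ to pin down the residual degree of freedom --- your sketch does this (it yields $c=0$ for $n$ odd and $\sum_i c_i = 0$ for $n$ even), but it is easy to forget. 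As a sanity check, your parametrization of the complement for $n$ even (matrices of the form $c_i + c_j$ with $\sum c_i = 0$) is exactly the paper's space $Y$ in disguise: the matrix form of $v_{i,[n]\setminus\{i\}} - v_{j,[n]\setminus\{j\}}$ has $c_i = 1$, $c_j = -1$, $c_\ell = 0$ otherwise.
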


\begin{proof}
The idea is just that $K_n$ has exponentially many max cuts since any cut of size $\lfloor n/2 \rfloor$ or $\lceil n/2 \rceil$ is a max cut, so we can take appropriate linear combinations of their indicator vectors to obtain a large span. When $n$ is odd, we can obtain the indicator vector for any vertex pair and thus the entire space, giving $\binom{n}{2}$. When $n$ is even, we don't quite get the entire space since all vertices have the same degree of $n/2$ in a max cut and this imposes some linear constraints. Details are provided below.


Observe that any set $S \subseteq [n]$ has cut value $F(S) = |S|(n - |S|)$, so $S$ is a max cut if and only if $|S| \in \left\{\lfloor n/2 \rfloor, \lceil n/2 \rceil\right\}$. We need to show three things:

\begin{enumerate}
    \item When $n \geq 3$ is odd, the max cut space of $K_n$ is all of $\RR^{\binom{n}{2}}$ (and thus has dimension $\binom{n}{2}$).
    \item When $n \geq 4$ is even, the max cut space of $K_n$ has dimension at least $\binom{n-1}{2}$.
    \item When $n \geq 4$ is even, the max cut space of $K_n$ has dimension at most $\binom{n-1}{2}$.
\end{enumerate}

Note that the third claim is not necessary since we just need a lower bound on the max cut dimension, but we prove it for completeness. As we remark at the end, the proof of the third claim in Lemma \ref{lemma:cutdimevenupper} also tells us exactly which $\binom{n-1}{2}$-dimensional subspace gives the max cut space of $K_n$ when $n$ is even.

To set up our proofs, we introduce some helpful notation. For any disjoint sets $S, T \subseteq [n]$, let $E(S, T)$ be the set of vertex pairs/edges with one endpoint in $S$ and the other in $T$. Similarly, let $v_{S, T} \in \RR^{\binom{n}{2}}$ be the indicator vector defined as follows:
\begin{align*}
    (v_{S, T})_{i, j} &= \begin{cases}
       1,&  i \in S\text{ and }j \in T\text{ or vice versa},\\
       0,& \text{ otherwise}.
    \end{cases}
\end{align*}
So for example we have $v_{S, [n]\backslash S} = v_S$. When $S = \left\{a\right\}$ or $T = \left\{b\right\}$ are singletons, we also denote $v_{S, T}$ by $v_{a, b}$, $v_{S, U}$ by $v_{a, U}$ etc. Thus the $\binom{n}{2}$ vectors $v_{i, j}$ for all $i \neq j$ form the standard orthonormal basis for $\RR^{\binom{n}{2}}$.

Now we turn to proving these lemmas:

\begin{lemma}\label{lemma:cutdimodd}
For $n \geq 3$ odd, the max cut space of $K_n$ is all of $\RR^{\binom{n}{2}}$.
\end{lemma}
\begin{proof}
Let $X$ be the max cut space of the graph. We proceed in steps to show that $X$ contains $v_{a, b}$ for all pairs of distinct vertices $(a, b)$, which we do by just taking appropriate linear combinations of the indicator vectors of max cuts. This would imply the desired result since the $v_{a, b}$'s form a complete orthonormal basis.

\textit{Step 1.} $v_{a, b} - v_{a, c} \in X$ whenever $a \neq b$ and $a \neq c$.

If $b = c$ then we have $v_{a, b} - v_{a, c} = 0 \in X$ since $X$ is a subspace. Thus from now assume that $b \neq c$ also. Let $A, B$ be any partition of the vertices excluding $a, b, c$ into two sets of size $\frac{n-3}{2}$. (We can do this since $n \geq 3$; if $n = 3$ just let $A, B$ both be the empty set.) Let $S_1$ be the cut with $a, b, A$ on one side and $c, B$ on the other, and $S_2$ the cut with $b, A$ on one side and $a, c, B$ on the other. These are both max cuts, so $v_{S_1} - v_{S_2} = v_{a, c} + v_{a, B} - v_{a, b} - v_{a, A} \in X$.

But we can repeat this with $A$ and $B$ swapped to find that $v_{a, c} + v_{a, A} - v_{a, b} - v_{a, B} \in X$. Adding this with the vector from the above paragraph and dividing by 2 gives $v_{a, c} - v_{a, b} \in X \Rightarrow v_{a, b} - v_{a, c} \in X$ as desired.

\textit{Step 2.} $v_{a, b} - v_{c, d} \in X$ whenever $a \neq b$ and $c \neq d$.

If $\left\{a, b\right\}$ and $\left\{c, d\right\}$ intersect then this is a special case of Step 1, so assume they are disjoint. In this case, this follows since $v_{a, b} - v_{c, d} = [v_{a, b} - v_{a, c}] + [v_{a, c} - v_{c, d}] \in X$ (the term in each bracket is in $X$ by Step 1).

\textit{Step 3.} $v_{a, b} \in X$ whenever $a \neq b$.

Take an arbitrary max cut $S$. Then we have
\begin{align*}
    v_S = \sum_{(c, d) \in E(S, [n] \backslash S)} v_{c, d} \in X.
\end{align*}
We also have
\begin{align*}
    \sum_{(c, d) \in E(S, [n] \backslash S)} [v_{a, b} - v_{c, d}] \in X
\end{align*}
by Step 2. Adding these gives
\begin{align*}
    \sum_{(c, d) \in E(S, [n] \backslash S)} v_{a, b} = |S|(n - |S|) v_{a, b} \in X.
\end{align*}
Since $S$ is a max cut, we have $|S|(n - |S|) \neq 0$ so dividing by this scalar yields that $v_{a, b} \in X$ as desired.
\end{proof}

\begin{lemma}\label{lemma:cutdimevenlower}
For $n \geq 4$ even, the max cut space of $K_n$ has dimension at least $\binom{n-1}{2}$.
\end{lemma}
\begin{proof}
The idea is that we can roughly show that the max cut space of an induced $K_{n-1}$ within this graph is ``nested" within the max cut space of this graph. Since the former max cut space has dimension $\binom{n-1}{2}$, this would yield the lemma.

To make this precise, let $V \subseteq \RR^{\binom{n}{2}}$ be the subspace spanned by $v_{i, j}$ for all $i \neq j$ such that $i, j \in [n-1]$, and let $\mathcal{P}_V$ be the orthogonal projection onto $V$. Thus $V$ has dimension $\binom{n-1}{2}$ and $\mathcal{P}_V$ is the operator that zeroes out all coordinates corresponding to edges involving vertex $n$.

Let $X$ be the max cut space of $K_n$. Then it suffices to show that $\mathcal{P}_V(X) = V$. Then the conclusion would follow since $\dim X \geq \dim \mathcal{P}_V(X) = \dim V = \binom{n-1}{2}$.\footnote{To see why $\dim X \geq \dim \mathcal{P}_V(X)$, let $d = \dim X$ and take a basis $u_1, u_2, \ldots, u_d$ of $X$. Then $\mathcal{P}_V(X)$ is the span of $\mathcal{P}_V(u_1), \mathcal{P}_V(u_2), \ldots, \mathcal{P}_V(u_d)$ and thus has dimension at most $d$. This is an inequality since these $d$ vectors need not be linearly independent.}

To do this, let $T \subseteq [n-1]$ be any set of size $(n-2)/2$ and let $S = T \cup \left\{n\right\}$. Then $|S| = n/2$ so $S$ is a max cut, and we have:
\begin{align*}
    v_S &= \sum_{i \in S} \sum_{j \in [n] \backslash S} v_{i, j} \\
    &= \sum_{i \in T \cup \left\{n\right\}} \sum_{j \in [n-1] \backslash T} v_{i, j}.
\end{align*}
To find $\mathcal{P}_V(v_S)$, we just need to zero out all terms here that have one endpoint at $n$. This gives us:
\begin{align*}
    \mathcal{P}_V(v_S) &= \sum_{i \in T \cup \left\{n\right\}} \sum_{j \in [n-1] \backslash T} v_{i, j} \\
    &= \sum_{i \in T} \sum_{j \in [n-1] \backslash T} v_{i, j} \\
    &= v_{T, [n-1] \backslash T}.
\end{align*}
Since $X$ is the max cut space of $K_n$, we know that $v_S \in X \Rightarrow \mathcal{P}_V(v_S) \in \mathcal{P}_V(X)$. It follows that we have:
\begin{align*}
    \mathcal{P}_V(X) &\supseteq \spann(\left\{\mathcal{P}_V(v_S): |S| = n/2\right\}) \\
    &\supseteq \spann(\left\{v_{T, [n-1] \backslash T}: T \subseteq [n-1]\text{ and }|T| = (n-2)/2\right\}).
\end{align*}
Now note that this is exactly the max cut space of the induced subgraph of our $K_n$ on $[n-1]$, which is an instance of $K_{n-1}$. This is because any max cut of this induced subgraph will have one side with $(n-2)/2$ vertices and the other side with $n/2$ vertices, so we can take always take $T$ to be the side with $(n-2)/2$ vertices. Since $n-1$ is odd and $n-1 \geq 3$, Lemma \ref{lemma:cutdimodd} tells us that this span is in fact all of $V$. Thus $\mathcal{P}_V(X) \supseteq V$ and we have $\mathcal{P}_V(X) \subseteq V$ by definition of $\mathcal{P}_V$, so this proves that $\mathcal{P}_V(X) = V$ and completes the proof of the lemma.
\end{proof}

\begin{lemma}\label{lemma:cutdimevenupper}
For $n \geq 4$ even, the max cut space of $K_n$ has dimension at most $\binom{n-1}{2}$. Moreover, the max cut space of $K_n$ is exactly the space of all weight vectors $w \in \RR^{\binom{n}{2}}$ that have the same total weight at each vertex.
\end{lemma}
\begin{proof}
We first prove the stated upper bound on the dimension of the max cut space. The idea is that all vertices have degree $n/2$ in any max cut of $K_n$ for even $n$, and that this adds some linear constraints that restrict the dimensionality of the max cut space. This is because a max cut will split the vertices into two halves of size $n/2$ and each vertex is connected to all the vertices on the opposite half. Writing this in our linear algebraic notation, we have for any max cut $S$ and vertex $i \in [n]$ that:
\begin{align*}
    v_S^Tv_{i, [n] \backslash \left\{i\right\}} &= v_S^T(\sum_{j \neq i} v_{i, j}) \\
    &= \frac{n}{2}.
\end{align*}
Thus for any two distinct vertices $i, j$, we have:
\begin{align*}
    v_S^T(v_{i, [n] \backslash \left\{i\right\}} - v_{j, [n] \backslash \left\{j\right\}}) &= 0.
\end{align*}
Now let $X$ be the max cut space of $K_n$. For any max cut $S$, the indicator vector $v_S$ is orthogonal to $v_{i, [n] \backslash \left\{i\right\}} - v_{j, [n] \backslash \left\{j\right\}}$. Thus arbitrary linear combinations of the $v_S$'s are orthogonal to $v_{i, [n] \backslash \left\{i\right\}} - v_{j, [n] \backslash \left\{j\right\}}$, so that all vectors in $X$ are orthogonal to $v_{i, [n] \backslash \left\{i\right\}} - v_{j, [n] \backslash \left\{j\right\}}$. Thus let us define $Y$ to be the following subspace:
\begin{align*}
    Y &= \spann(\left\{v_{i, [n] \backslash \left\{i\right\}} - v_{j, [n] \backslash \left\{j\right\}}: i \neq j\right\}).
\end{align*}
Then what we have just argued is that $X \subseteq Y^\perp$. It follows that $\dim X \leq \dim Y^\perp = \binom{n}{2} - \dim Y$. Thus it suffices to show that $\dim Y \geq n-1$.

To do this, consider the following set of $n-1$ vectors $u_i$ for $i \in [n-1]$, where we define $u_i$ as follows:
\begin{align*}
    u_i &= v_{i, [n]\backslash\left\{i\right\}} - v_{n, [n-1]}.
\end{align*}
Clearly $u_i \in Y$ for all $i$ so it suffices to show that $u_1, u_2, \ldots, u_{n-1}$ are linearly independent. Indeed, suppose we have scalars $\lambda_i \in \RR$ such that:
\begin{align*}
    \sum_{i = 1}^{n-1} \lambda_i u_i &= 0 \\
    \Leftrightarrow \sum_{i = 1}^{n-1} \lambda_i(v_{i, [n]\backslash\left\{i\right\}} - v_{n, [n-1]}) &= 0.
\end{align*}
For any distinct $i, j \in [n-1]$, the coefficient of $v_{i, j}$ on the LHS must be 0 (equivalently, the inner product of the LHS with $v_{i, j}$ must be 0). This is equivalent to $\lambda_i + \lambda_j = 0$. Since this is true for all distinct $i, j \in [n-1]$ and $n \geq 4$, this forces $\lambda_i = 0$ for all $i$.\footnote{To see this, note that for any $i \in [n-1]$ we can find two more indices $j, k \in [n-1]$ such that $i, j, k$ are distinct (this is why we need $n \geq 4$) and then we have $\lambda_i + \lambda_j = \lambda_j + \lambda_k = \lambda_k + \lambda_i = 0 \Rightarrow \lambda_i = \lambda_j = \lambda_k = 0$.} This implies the stated linear independence and thus that $\dim X \leq \binom{n-1}{2}$. This addresses the first part of the lemma.

It follows by Lemma \ref{lemma:cutdimevenlower} that this argument must be tight. Thus we must have that $\dim Y = n-1$ and $X = Y^\perp$. Thus $X$ is exactly the orthogonal complement of $Y$. In other words, the max cut space is the space of all weight vectors in $\RR^{\binom{n}{2}}$ that have the same total weight at each vertex. This completes the proof of the lemma.
\end{proof}

We have now proven all three of our claims, thus the proof of Lemma \ref{lemma:cutdimension} is complete.

\end{proof}

\subsection{Finding an MSF Requires $\Omega(n^2)$ Queries}\label{sec:MSThardness}

Here, we similarly adapt the cut dimension technique to prove a lower bound for deterministically finding a maximum spanning forest (MSF) in the cut query model. (We are doing this to justify a remark made in Section \ref{sec:crudestrengthestimates}; this result does not imply anything about max-cut.)

Since we are proving a lower bound in this section, we can restrict attention to connected graphs and hence work with ``maximum spanning trees" (MSTs) instead of ``maximum spanning forests". We begin by adapting the notion of max cut dimension to spanning trees.

\begin{definition}\label{def:treevector}
Suppose $G$ is connected. For any tree $\mathcal{T} \subseteq G$, let the \emph{tree vector} $a(\mathcal{T}) \in \RR^{|E(G)|}$ be defined as follows:
\begin{align*}
    a(\mathcal{T})_{i, j} &= \begin{cases}
       1,&  (i, j)\text{ is included in }\mathcal{T},\\
       0,& \text{ otherwise}.
    \end{cases}
\end{align*}
\end{definition}
Note that once again, if we describe the graph's weights with a vector $w \in \RR^{|E(G)|}$, the total weight of a given tree $\mathcal{T}$ is equal to the inner product $w^T a(\mathcal{T})$.
\begin{definition}\label{def:maxtreedimension}
Suppose $G$ is connected. Then we define the \emph{max tree dimension} of $G$ to be equal to the dimension of the span of:
$$\left\{a(\mathcal{T}): \mathcal{T}\text{ is a max spanning tree of }G\right\}.$$
Call this span the \emph{max tree space} of the graph.
\end{definition}
The following lemma is the adaptation of the cut dimension lemma by \cite{graur} that we need:
\begin{lemma}\label{lemma:gprwtree}
Suppose there exists a connected graph $G$ on $n$ vertices with max tree dimension $d$. Then any deterministic algorithm that finds the total weight of the maximum spanning tree exactly must use at least $d$ queries.
\end{lemma}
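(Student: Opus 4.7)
The plan is to mirror the proof of Lemma \ref{lemma:gprw} almost verbatim, exploiting the fact that both results boil down to the same linear-algebraic argument: there is a "max space" $X \subseteq \RR^{|E(G)|}$ (spanned by tree vectors instead of cut vectors), the queries cut out a subspace $Y$ of at most $q$ dimensions, and as long as $\dim X > q$ one can perturb $w$ along a vector in $X \cap Y^\perp$ to change the optimum without changing any query answer. The type of query (cut queries) is only used through the fact that each query restricts $w$ to an affine hyperplane, so the proof does not depend on whether the optimum being estimated is a cut value or a tree weight.

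Concretely, I would suppose for contradiction that a deterministic algorithm succeeds in $q \leq d-1$ queries, and run it against an adversary that answers every query as if the input were $G$ with weight vector $w \in \RR^{|E(G)|}$. Let $Q_1, \ldots, Q_q$ be the queries the algorithm issues on this transcript, let $Y = \spann(v_{Q_1}, \ldots, v_{Q_q})$, and let $X$ be the max tree space of $G$ so that $\dim X = d$. Since $\dim Y \leq q < d$, a dimension count (identical to the footnote in the proof of Lemma \ref{lemma:gprw}) gives $\dim(X \cap Y^\perp) \geq d - q > 0$, so we may pick a nonzero $z \in X \cap Y^\perp$, and then choose $\epsilon > 0$ small enough that both $w + \epsilon z$ and $w - \epsilon z$ have strictly positive entries, hence represent valid weighted graphs on the same edge set as $G$.

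Because $z \in Y^\perp$, we have $(w + \epsilon z)^T v_{Q_i} = w^T v_{Q_i}$ for every $i$, so the adversary's answers are consistent with the weight vector $w + \epsilon z$ as well, and the algorithm outputs the same value on both instances. Because $z \in X$ is nonzero, $z \notin X^\perp$, so there exists a max spanning tree $\mathcal{T}$ of $G$ with $z^T a(\mathcal{T}) \neq 0$; replacing $z$ with $-z$ if necessary, assume $z^T a(\mathcal{T}) > 0$. Then $\mathcal{T}$ is a spanning tree of the graph defined by $w + \epsilon z$ as well, so the MST weight of that graph is at least $(w + \epsilon z)^T a(\mathcal{T}) = w^T a(\mathcal{T}) + \epsilon z^T a(\mathcal{T}) > w^T a(\mathcal{T})$, while the MST weight of the graph defined by $w$ is exactly $w^T a(\mathcal{T})$. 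Hence the two graphs have distinct MST weights, contradicting the algorithm's correctness.

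The main obstacle is essentially cosmetic: everything goes through word-for-word once one notices that the cut-query argument never actually uses that the target functional is a cut, only that (i) queries are linear in $w$, and (ii) the target value is the maximum of $w^T u$ over some finite set of nonnegative indicator vectors $u$, whose span is the relevant "cut/tree space." Both properties hold for MST weight with $u = a(\mathcal{T})$, so Lemma \ref{lemma:gprwtree} follows by the identical chain of reasoning.
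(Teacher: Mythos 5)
Your proof is correct and follows exactly the approach the paper uses: it replays the cut-dimension argument of Lemma~\ref{lemma:gprw} with the max tree space in place of the max cut space and the tree vector $a(\mathcal{T})$ in place of the cut indicator $v_C$. The paper's proof simply states that the argument "proceeds identically"; you have spelled out the details, and they match.
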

\begin{proof}
This proceeds identically to the proof of Lemma \ref{lemma:gprw}. $X$ will now be the max tree space of $G$, and instead of a max cut $C$ we will now take an MST $\mathcal{T}$ such that $z^T a(\mathcal{T}) \neq 0$. At the final step, the graph $G'$ defined by $w+\epsilon z$ will agree with $G$ on all cut queries but the tree defined by $\mathcal{T}$ will now have value strictly greater than $w^T a(\mathcal{T})$, so the total weight of the max spanning tree in $G'$ will be strictly greater than the total weight of the max spanning tree in $G$.
\end{proof}
To finish, we will show that $K_n$ has max tree dimension $\Omega(n^2)$, which will in turn imply that $\Omega(n^2)$ queries are needed to deterministically find an MST:
\begin{lemma}
Suppose $n \geq 3$, and let $G = K_n$ be the graph on $n$ vertices where all pairs of distinct vertices share an edge of weight 1. Then the max tree dimension of $G$ is $\binom{n}{2}$.
\end{lemma}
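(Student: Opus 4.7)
The plan is to show that the max tree space $X$ of $K_n$ is all of $\mathbb{R}^{\binom{n}{2}}$. First observe that since $K_n$ has all edge weights equal to $1$, every spanning tree has total weight $n-1$, so every spanning tree is a maximum spanning tree. Hence $X = \spann\{a(\mathcal{T}) : \mathcal{T} \text{ is a spanning tree of } K_n\}$. Rather than exhibiting spanning basis vectors directly (as was done for the max-cut space in Lemma \ref{lemma:cutdimodd}), I would work dually: it suffices to prove that any $w \in \mathbb{R}^{\binom{n}{2}}$ satisfying $w^T a(\mathcal{T}) = 0$ for every spanning tree $\mathcal{T}$ must be $0$.

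The main tool is the standard tree exchange. If $\mathcal{T}$ is a spanning tree containing an edge $e$, and $f = (i,j) \notin \mathcal{T}$ has endpoints lying in different components of $\mathcal{T} - e$, then $\mathcal{T}' := \mathcal{T} - e + f$ is also a spanning tree, and subtracting $w^T a(\mathcal{T}') = 0$ from $w^T a(\mathcal{T}) = 0$ yields $w(e) = w(f)$. So the key step is to show that for any two edges $e, f$ in $K_n$, one can find a spanning tree realizing this swap. Because $K_n$ is complete, this is routine by a ``broom'' construction. For $e = (a,b)$ and $f = (c,d)$ with $\{a,b\} \cap \{c,d\} = \emptyset$, take the tree with edges $(a,c)$, $(a,b)$, and $(b,v)$ for all $v \in [n] \setminus \{a,b,c\}$: removing $(a,b)$ leaves $c$ on $a$'s side and $d$ on $b$'s side. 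When the edges share an endpoint, say $e = (a,b)$ and $f = (a,d)$ with $b \neq d$, take $(a,b)$ together with the star centered at $b$ on $[n]\setminus\{a\}$; removing $(a,b)$ isolates $a$ from $d$, and the tree does not already contain $(a,d)$.

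These exchanges, applied to all pairs of edges, force $w$ to be constant on $E(K_n)$, say with value $c$. Then for any spanning tree $\mathcal{T}$ we have $w^T a(\mathcal{T}) = c(n-1) = 0$, and since $n \geq 3$ this gives $c = 0$, so $w \equiv 0$. Thus $X^\perp = \{0\}$ and $\dim X = \binom{n}{2}$, completing the proof. The only real obstacle is the combinatorial verification that every pairwise edge swap can be realized in a single spanning tree of $K_n$, but thanks to the completeness of $K_n$ this is straightforward and the two small cases above cover everything.
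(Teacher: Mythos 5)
Your proof is correct, but it proceeds by a genuinely different route. The paper argues primally: it builds up the full standard basis $\{v_{a,b}\}$ inside the max tree space $X$ by taking explicit linear combinations of tree vectors (Step 1: $v_{a,b} - v_{a,c} \in X$ via two path graphs that differ by one edge; Step 2: chain those to get $v_{a,b} - v_{c,d}$; Step 3: a scaling trick summing $v_{a,b} - v_{c,d}$ over $(c,d) \in \mathcal{T}$ to isolate $v_{a,b}$). You argue dually: you show that any $w$ orthogonal to every tree vector must vanish, so $X^\perp = \{0\}$. At bottom both rely on the same tree-exchange move — the paper's Step 1 with the two path graphs $A$ and $B$ through $(\ldots,n-3,a,b,c)$ and $(\ldots,n-3,a,c,b)$ is precisely a single-edge swap of $(a,b)$ for $(a,c)$ — but you invoke the exchange in full generality (your ``broom'' construction handles an arbitrary pair of edges in one shot) and then close the argument with the clean observation that a constant weight vector of value $c$ has inner product $c(n-1) \neq 0$ with every tree vector unless $c = 0$, replacing the paper's Steps 2 and 3 entirely. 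Your dual framing is a bit slicker and avoids the telescoping sums; the paper's primal framing has the virtue of exactly mirroring the structure of the earlier max-cut dimension argument (Lemma \ref{lemma:cutdimodd}), keeping the two proofs visually parallel. I checked both of your spanning-tree constructions: in the disjoint case $e=(a,b)$, $f=(c,d)$, the tree $(a,c)\cup(a,b)\cup\{(b,v):v\notin\{a,b,c\}\}$ has $n-1$ edges, is connected, contains $e$ but not $f$, and deleting $e$ separates $\{a,c\}$ from $\{b\}\cup([n]\setminus\{a,b,c\})\ni d$; in the shared-endpoint case the star at $b$ plus $(a,b)$ works similarly (and the WLOG on which endpoint is shared is fine). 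Both are valid, so the exchange $w_e = w_f$ holds for every pair, and the argument goes through.
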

\begin{proof}
Note that $|E(G)| = \binom{n}{2}$ so it just suffices to show that the max tree dimension of $G$ is at least $\binom{n}{2}$. Let $X$ be the max tree space of $K_n$. Note that any tree on $K_n$ has total weight $n-1$ and is hence a max spanning tree.

As in the proof of Lemma \ref{lemma:cutdimension}, for distinct vertices $a, b$ let $v_{a, b}$ be the indicator vector that is 1 for the edge between $a$ and $b$ and 0 everywhere else. We proceed in steps to show that $X$ contains $v_{a, b}$ for all pairs of distinct vertices $(a, b)$, which we do by just taking appropriate linear combinations of the indicator vectors of max cuts. This would imply the desired result since the $v_{a, b}$'s form a complete orthonormal basis.

\textit{Step 1.} $v_{a, b} - v_{a, c} \in X$ whenever $a \neq b$ and $a \neq c$.

If $b = c$ then we have $v_{a, b} - v_{a, c} = 0 \in X$ since $X$ is a subspace. Thus from now assume that $b \neq c$ also. Number the remaining vertices of $G$ by $1, 2, \ldots, n-3$. Consider the following two path graphs on $G$:
\begin{enumerate}
    \item $A$: passes through the vertices in order $1, 2, \ldots, n-3, a, b, c$ (i.e. its edges are $(i, i+1)$ for $i \in [n-4]$ and then $(n-3, a), (a, b), (b, c)$.
    \item $B$: passes through the vertices in order $1, 2, \ldots, n-3, a, c, b$.
\end{enumerate}
$A$ and $B$ are both MSTs on $G$ so we have $a(A), a(B) \in X$. Thus we must also have $a(A) - a(B) \in X \Rightarrow v_{a, b} - v_{a, c} \in X$ as desired.

\textit{Step 2.} $v_{a, b} - v_{c, d} \in X$ whenever $a \neq b$ and $c \neq d$.

If $\left\{a, b\right\}$ and $\left\{c, d\right\}$ intersect then this is a special case of Step 1, so assume they are disjoint. In this case, this follows since $v_{a, b} - v_{c, d} = [v_{a, b} - v_{a, c}] + [v_{a, c} - v_{c, d}] \in X$ (the term in each bracket is in $X$ by Step 1).

\textit{Step 3.} $v_{a, b} \in X$ whenever $a \neq b$.

Take an arbitrary MST $\mathcal{T}$. Then we have
$$a(\mathcal{T}) = \sum_{(c, d) \in \mathcal{T}} v_{c, d} \in X.$$
We also have by Step 2 that
$$\sum_{(c, d) \in \mathcal{T}} [v_{a, b} - v_{c, d}] \in X.$$
Adding these gives:
$$\sum_{(c, d) \in \mathcal{T}} v_{a, b} = (n-1)v_{a, b} \in X.$$
\end{proof}
We clearly have $n-1 \neq 0$ so dividing by this scalar yields that $v_{a, b} \in X$ as desired.

\section{Lower Bound for Randomized $(1/2 + \epsilon)$-approximation}\label{sec:randhalfhardness} 

We now prove a lower bound for randomized algorithms that achieve a $c$-approximation for $c > 1/2$. This differs from the deterministic lower bound presented in Section \ref{sec:dethalfhardness} in two ways: first, the query complexity shown is $\Omega(n/\log n)$ rather than $\Omega(n)$, and secondly, this lower bound argument only works for the cut finding setting.

\begin{theorem} \label{thm:randhalfhardness}
For $c > 1/2$, a randomized algorithm that finds a $c$-approximate max cut with constant probability $p > 0$ requires at least $\frac{n}{\log n}(\frac{p(2c-1)}{16c+72} - o(1))$ queries.

This implies that the query complexity for a randomized algorithm to achieve a $c$-approximation with $\Omega(1)$ probability for global max-cut on an \textbf{unweighted} (or weighted) undirected graph \textbf{in the cut finding setting} is $\Omega(n/\log n)$.
\end{theorem}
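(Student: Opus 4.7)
The plan is to apply Yao's minimax principle, working with a distribution over unweighted graphs: for $n$ even, sample a uniformly random balanced bipartition $(A,\bar{A})$ of $[n]$ and take the input graph to be the complete bipartite graph $K_{A,\bar{A}}$ with unit edge weights. The max-cut value is always $n^2/4$, attained by $A$ and $\bar{A}$, so it suffices to show that no deterministic $q$-query algorithm outputs a cut $T$ with $F(T)\geq cn^2/4$ with probability exceeding $p$ whenever $q = o(n/\log n)$. Note that this same distribution is completely trivial in the value-estimation setting (output $n^2/4$ with no queries), which matches the remark in the paper that this lower bound applies only to cut-finding.

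The first ingredient is that cut queries on $K_{A,\bar A}$ are highly uninformative. For any $S\subseteq[n]$, a direct computation gives
\[
F(S) \;=\; |S|\,\tfrac{n}{2} \;-\; 2\,|S\cap A|\,|S\cap\bar A|,
\]
so the answer is equivalent to $|S\cap A|\cdot|S\cap\bar A|$. Since $|S\cap A|\in\{0,1,\dots,|S|\}\subseteq\{0,\dots,n\}$, each query has at most $n+1$ possible answers. Hence the adaptive decision tree of any deterministic $q$-query algorithm has at most $N := (n+1)^q$ leaves, and its output, viewed as a function of the underlying partition, takes at most $N$ distinct values.

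The second ingredient is a bound, for any fixed candidate cut $T$, on the probability over a uniform balanced $(A,\bar A)$ that $T$ is $c$-approximate. Writing $s=|T|$ and $a=|T\cap A|$, the requirement $F(T)\geq cn^2/4$ becomes $sn/2-2a(s-a)\geq cn^2/4$; for the extremal case $s=n/2$, substituting $a=n/4+x$ reduces this to $x^2\geq n^2(2c-1)/16$, so one needs $|a-s/2|\geq n\sqrt{2c-1}/4$. Under the uniform balanced distribution on $A$, the variable $|T\cap A|$ is hypergeometric with mean $s/2$ and variance $O(n)$, so Hoeffding's inequality for hypergeometrics gives
\[
\Pr_A\!\left[T\text{ is $c$-approximate}\right] \;\leq\; 2\,e^{-\Omega((2c-1)n)}.
\]
Cases with $s$ far from $n/2$ are even more restrictive (indeed, for $|s-n/2|$ sufficiently large, no value of $a$ satisfies the inequality at all), so this case dominates.

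Combining the two by a union bound over the $\leq N$ possible outputs yields
\[
p \;\leq\; \Pr_A[\text{algorithm succeeds}] \;\leq\; N \cdot 2\,e^{-\Omega((2c-1)n)} \;=\; 2(n+1)^q\,e^{-\Omega((2c-1)n)},
\]
and rearranging gives $q = \Omega\!\bigl((2c-1)n/\log n\bigr)$, of the form stated in the theorem. I expect the main technical care to be tracking the constant in the Hoeffding tail tightly enough to reproduce the explicit leading constant $\tfrac{p(2c-1)}{16c+72}$; this involves optimizing the $s \approx n/2$ case and keeping explicit track of $\log(1/p)$ when inverting the inequality. The underlying counting-plus-concentration framework is otherwise quite clean, and crucially exploits that a cut query on a bipartite graph reveals only $O(\log n)$ bits.
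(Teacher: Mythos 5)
Your proof is correct and uses essentially the same hard distribution as the paper (a complete bipartite graph on a random bipartition of $[n]$ with unit weights), but formalizes the lower bound via a genuinely different and more elementary argument. The paper draws $A$ uniformly at random (not necessarily balanced), bounds each query's information at $2\log n$ bits generically (since the answer is an integer in $[0,\binom{n}{2}]$), and then runs a Fano-style entropy argument: it lower-bounds $H(A\mid C)$ by $n - 2q\log n$ and upper-bounds it by $n(1-p\epsilon^2\lg e) + O(1)$ using the concentration bound, and combines. You instead condition on a balanced bipartition (which cleans up the single-cut concentration step to a one-line hypergeometric tail), observe that the structure of $K_{A,\bar A}$ makes each query reveal at most $\lceil (n+1)/2\rceil$ distinct answers (tighter than the generic $\binom{n}{2}+1$), and conclude by a direct union bound over the $\leq (n+1)^q$ leaves of the decision tree: $p \leq 2(n+1)^q e^{-\Omega((2c-1)n)}$, hence $q = \Omega((2c-1)n/\log n)$.

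One notable byproduct: your counting argument yields a lower bound on $q$ that does not degrade as $p\to 0$, except through the additive $\log(1/p)/\log n = o(n/\log n)$ term, whereas the paper's entropy argument produces a bound that scales multiplicatively with $p$ (the stated constant is $p(2c-1)/(16c+72)$). So your approach is actually quantitatively stronger in the $p$-dependence, at the cost of producing a different (and you'd argue cleaner) constant; reproducing the paper's specific constant is not something you should aim for, since the two derivations legitimately yield different constants. The verification you sketched that the $s=|T|=n/2$ case is extremal and the Hoeffding tail for the hypergeometric both check out, and the observation that this distribution trivializes in the value-estimation setting matches the paper's remark.
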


\begin{proof}
It suffices to show for any $\epsilon > 0$ satisfying $\epsilon^2 < \frac{2c-1}{8c + 36}$ that at least $\frac{n}{\log n}(\frac{p\epsilon^2}{2} - o(1))$ queries are necessary.

The following is our hard distribution: let $A$ be a uniformly random subset of $[n]$ then take the complete bipartite graph on $(A, [n] \backslash A)$ with all edge weights 1. The intuition is that in order to find a $c$-approximate max cut, the algorithm must learn $\Omega(n)$ bits of information about $A$. Since it only receives $O(\log n)$ bits of information per query, this will yield the desired lower bound. We make this rigorous with an information theoretic argument.

Denote by $E(A)$ the set of edges included in this graph, for any set $A$. By Yao's minimax principle, it suffices to show the lower bound for a deterministic algorithm against this distribution. Moreover, this lower bound will hold even if we restrict attention to unweighted graphs, since all graphs in our hard distribution are unweighted (all their edge weights are 0 or 1). Suppose the algorithm uses $q$ queries. The key lemma is as follows:

\begin{lemma}\label{lemma:concentration}
Let $B$ be any subset of $[n]$ and $A$ a uniformly random subset of $[n]$. Then we have:

$$\Pr[\frac{|E(A) \cap E(B)|}{|E(A)|} \geq c] \leq 4\exp(-\epsilon^2n)$$.

where the randomness is over $A$.
\end{lemma}

\begin{proof}
This essentially follows from the fact that $|A|$ concentrates around $n/2$ and $|A \cap B|$ concentrates around $|B|/2$ provided $B$ is reasonably large (we can assume $|B| \geq n/2$ w.l.o.g~since $E(B) = E(\overline{B})$). We provide details below.

We use the notation ``$(b \pm c)$" to denote some number in $[b-c, b+c]$. Also, since $E(B) = E(\overline{B})$, the problem is invariant under replacing $B$ with $\overline{B}$ so assume WLOG that $|B| \geq n/2$.

Note that $|A|$ is the sum of $n$ i.i.d. random 0/1 variables that are each 1 with probability 1/2. Hence by Hoeffding's inequality we have $\Pr[|A| \notin (\frac{1}{2} \pm \epsilon)n] \leq 2\exp(-2\epsilon^2 n)$. Also, every vertex in $B$ also belongs to $A$ with probability $\frac{1}{2}$ and this selection is independent. Thus we similarly have $\Pr[|A \cap B| \notin (\frac{1}{2} \pm \epsilon) |B|] \leq 2\exp(-2\epsilon^2|B|) \leq 2\exp(-\epsilon^2 n)$. Taking a union bound, we have:
\begin{align*}
\Pr[|A| \in (\frac{1}{2} \pm \epsilon)n \text{ and } |A \cap B| \in (\frac{1}{2} \pm \epsilon)|B|] &\geq 1 - 2\exp(-2\epsilon^2 n) - 2\exp(-\epsilon^2 n) \\
&\geq 1 - 4\exp(-\epsilon^2n).
\end{align*}
We next turn our attention to estimating $\frac{|E(A) \cap E(B)|}{|E(A)|}$ assuming the event inside the LHS is true. First, we have:
\begin{align*}
|E(A)| &= |A|(n - |A|) \\
&= \frac{1}{4}n^2 - (|A| - \frac{n}{2})^2 \\
&\geq (\frac{1}{4} - \epsilon^2) n^2.
\end{align*}
Now for the numerator, note that for $(i, j)$ to be in both $E(A)$ and $E(B)$ we must have either that one vertex is in $A \cap B$ and the other is in $\overline{A} \cap \overline{B}$, or that one vertex is in $A \cap \overline{B}$ and the other is in $\overline{A} \cap B$. Let $a = |A|$, $b = |B|$, and $t = |A \cap B|$. Then we have:
\begin{align*}
|A \cap \overline{B}| &= a - t \\
|\overline{A} \cap B| &= |B| - |A \cap B| \\
&= b - t \\
|\overline{A} \cap \overline{B}| &= |\overline{A}| - |\overline{A} \cap B| \\
&= (n - a) - (b - t) \\
&= n - a - b + t.
\end{align*}
Thus we have:
\begin{align*}
|E(A) \cap E(B)| &= (a-t)(b-t) + t(n-a-b+t) \\
&= (t^2 - t(a+b) + ab) + (t^2 - t(a+b) + tn) \\
&= 2t^2 + t(n - 2a - 2b) + ab \\
&= 2(t^2 + t(\frac{n}{2} - a - b) + \frac{ab}{2}) \\
&= 2((t + \frac{n}{4} - \frac{a+b}{2})^2 + \frac{ab}{2} - (\frac{n}{4} - \frac{a+b}{2})^2) \\
&= 2((t + \frac{n}{4} - \frac{a+b}{2})^2 + \frac{ab}{2} - \frac{n^2}{16} - \frac{(a+b)^2}{4} + \frac{n(a+b)}{4}) \\
&= 2((t + \frac{n}{4} - \frac{a+b}{2})^2 - \frac{n^2}{16} + \frac{a(n-a)}{4} + \frac{b(n-b)}{4}).
\end{align*}
We estimate each piece of this as follows:
\begin{align*}
\frac{a(n-a)}{4} &\leq \frac{(a + (n-a))^2}{16} \text{ (AM-GM/Cauchy-Schwarz)} \\
&= \frac{n^2}{16} \\
\frac{b(n-b)}{4} &\leq \frac{n^2}{16} \text{ (similarly)} \\
(t + \frac{n}{4} - \frac{a+b}{2})^2 &= |t + \frac{n}{4} - \frac{a+b}{2}|^2 \\
&\leq (|t - \frac{b}{2}| + \frac{1}{2} |\frac{n}{2} - a|)^2 \\
&\leq (\epsilon b + \frac{1}{2} \epsilon n)^2 \\
&\leq (\frac{3}{2}\epsilon n)^2 \\
&= \frac{9}{4} \epsilon^2 n^2.
\end{align*}
Combining these gives:
\begin{align*}
|E(A) \cap E(B)| &\leq 2(\frac{9}{4}\epsilon^2n^2 - \frac{n^2}{16} + \frac{n^2}{16} + \frac{n^2}{16}) \\
&= 2(\frac{9}{4}\epsilon^2n^2 + \frac{n^2}{16}) \\
&= (\frac{1 + 36\epsilon^2}{8})n^2.
\end{align*}
Putting this together with our estimate of $|E(A)|$ gives:
\begin{align*}
\frac{|E(A) \cap E(B)|}{|E(A)|} &\leq \frac{(\frac{1 + 36\epsilon^2}{8})n^2}{(\frac{1}{4} - \epsilon^2) n^2} \\
&= \frac{1 + 36\epsilon^2}{2 - 8\epsilon^2} \\
&< c,
\end{align*}
by our choice of $\epsilon$. Thus we have:
\begin{align*}
\Pr[\frac{|E(A) \cap E(B)|}{|E(A)|} < c] &\geq \Pr[|A| \in (\frac{1}{2} \pm \epsilon)n \text{ and } |A \cap B| \in (\frac{1}{2} \pm \epsilon)|B|] \\
&\geq 1 - 4\exp(-\epsilon^2n) \\
\Rightarrow \Pr[\frac{|E(A) \cap E(B)|}{|E(A)|} \geq c] &\leq 4\exp(-\epsilon^2n),
\end{align*}
which completes the proof of the lemma.
\end{proof}

\begin{corollary}\label{corollary:concentration}
The number of subsets $A$ of $[n]$ satisfying $|E(A) \cap E(B)| \geq c|E(A)|$ is at most $2^n \cdot 4\exp(-\epsilon^2n)$.
\end{corollary}

Given this lemma, we prove the theorem with an information theoretic argument. Specifically, it is clear that the random variable $A$ has entropy $H(A) = n$. Informally, we will show that:

\begin{enumerate}
\item each query only gives the algorithm a small amount of information, and
\item achieving a $c$-approximation with probability $p$ requires the algorithm to learn a linear amount of information,
\end{enumerate}

which will give us a query lower bound. We make this intuitive approach precise by defining the following random variables:

\begin{itemize}
\item $R$: the sequence of query results given to the algorithm
\item $C$: the cut output by the algorithm
\item $I$: an indicator that is 1 if 1 if $C$ achieves a $c$-approximation and 0 otherwise.
\end{itemize}

Step 1 is straightforward: each query's result is an integer in $[0, \binom{n}{2}]$ since all edges in our graph are 0 or 1 and hence has entropy at most $2\lg n$. Thus $H(R) \leq 2q\lg n$. Moreover since we are assuming the algorithm is deterministic, $C$ is a deterministic function of $R$ so we have $H(C) \leq H(R) \leq 2q\lg n$. This yields the following lower bound on $H(A \mid C)$:
\begin{align*}
H(A \mid C) &= H(A, C) - H(C) \\
&\geq H(A, C) - 2q\lg n \\
&= H(C \mid A) + H(A) - 2q\lg n \\
&\geq H(A) - 2q\lg n \\
&= n - 2q\lg n.
\end{align*}

It remains to formalize step 2 by upper bounding $H(A \mid C)$. We proceed very similarly to the proof of Fano's inequality \cite{coverthomas}; indeed, if the success condition for the algorithm was ``$C = A$" rather than ``$C$ approximately agrees with $A$", then Fano's inequality would be directly applicable. We apply a similar bounding argument as follows:
\begin{align*}
H(A \mid C) &= H(A \mid C) + H(I \mid A, C) \text{ ($I$ is fixed given $A$ and $C$)} \\
&= H(I, A \mid C) \\
&= H(A \mid I, C) + H(I \mid C) \\
&\leq H(A \mid I, C) + 1 \\
&= H(A \mid C, I = 1) \Pr[I = 1] + H(A \mid C, I = 0) \Pr[I = 0] + 1 \\
&\leq (H(A \mid C, I = 1) - n) \Pr[I = 1] + n + 1 \\
&\text{ ($A$ is supported on a set of size $2^n$ so its entropy conditioned on anything is $\leq n$)} \\
&\leq (H(A \mid C, I = 1) - n) p + n + 1 \text{ ($\Pr[I = 1] \geq p$, and $H(A \mid C, I = 1) \leq n$)} \\
&= H(A \mid C, I = 1)p + n(1-p) + 1.
\end{align*}
Now observe that $H(A \mid C, I = 1) = \EE_{B \sim \mu_{C \mid I = 1}} H(A \mid C = B, I = 1)$, where $\mu_{C \mid I = 1}$ denotes the marginal distribution of $C$ conditioned on $I = 1$. Temporarily fix $B$ and consider the distribution of $A$ conditioned on $C = B$ and $I = 1$. Since we are conditioning on $I = 1$, the value of the cut defined by $B$ must be at least $c|E(A)|$. But the value of this cut is exactly $|E(A) \cap E(B)|$.

Thus Corollary \ref{corollary:concentration} tells us that there are at most $2^n \cdot 4\exp(-\epsilon^2n)$ sets $A$ that can appear with positive probability in this distribution. Then it follows that:
\begin{align*}
H(A \mid C = B, I = 1) &\leq \lg(2^n \cdot 4\exp(-\epsilon^2n)) \\
&= n + 2 - \epsilon^2 (\lg e) n \\
&= (1 - \epsilon^2 (\lg e)) n + 2.
\end{align*}
Taking the expectation over $B$ again gives $H(A \mid C, I = 1) \leq (1 - \epsilon^2 (\lg e)) n + 2$. Plugging this in gives:
\begin{align*}
H(A \mid C) &\leq H(A \mid C, I = 1)p + n(1-p) + 1 \\
&\leq ((1 - \epsilon^2 (\lg e)) n + 2)p + n(1-p) + 1 \\
&= n(1-p + p(1 - \epsilon^2 \lg e)) + 2p + 1 \\
&= n(1 - p\epsilon^2 \lg e) + 2p + 1.
\end{align*}
Finally, putting together our two bounds on $H(A \mid C)$ gives:
\begin{align*}
n(1 - p\epsilon^2 \lg e) + 2p + 1 &\geq n - 2q\lg n \\
\Leftrightarrow q &\geq \frac{n \cdot p\epsilon^2 \lg e - 2p - 1}{2\lg n} \\
&= \frac{n}{\log n}(\frac{p\epsilon^2}{2} - o(1)),
\end{align*}
which proves the theorem.
\end{proof}

\section{Lower Bounds for $\epsilon$-approximation}\label{sec:dethardnesszero} 

Here we show the deterministic lower bound part of Theorem \ref{thm:informalzero}, and also show that at least $\approx \lg(1/p)-1$ queries are necessary for randomized $\epsilon$-approximation with probability $\geq 1-p$. Both these results are captured by the following theorem:

\begin{theorem}\label{thm:generalepshardness}
For any $c \in (0, 1/2)$, any randomized algorithm using at most $q$ queries will achieve a $c$-approximation (even in the value estimation setting) with probability at most $1 - \frac{n - 2^q}{2^{q+1}(n-1)}$.
\end{theorem}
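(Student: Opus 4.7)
I would apply Yao's minimax principle, reducing the problem to showing a lower bound against deterministic algorithms on a well-chosen hard distribution $\mathcal{D}$. Specifically, it suffices to exhibit a distribution $\mathcal{D}$ on weighted graphs such that every deterministic algorithm making at most $q$ queries succeeds with probability at most $1 - \frac{n - 2^q}{2^{q+1}(n-1)}$ on $\mathcal{D}$. The natural candidate is a uniform distribution over $n$ graphs $G_1, \ldots, G_n$ indexed by a hidden parameter $v \in [n]$, chosen so that (i) each cut query applied to $G_v$ returns one of at most two possible values, so that the query reveals effectively one bit of information about $v$, and (ii) the max-cut values of the $G_v$'s are spread so that no single output (a value estimate or a cut) can be a valid $c$-approximation for too many indices $v$ simultaneously. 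One concrete candidate is to take $G_v$ to be a weighted star or a single-edge graph involving vertex $v$, with edge weights calibrated to $c$ so that the acceptance intervals $[c \cdot \max_S F(S; G_v), \max_S F(S; G_v)]$ interact in a controlled way.

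Given such a $\mathcal{D}$, a deterministic algorithm with at most $q$ queries has a decision tree with at most $2^q$ leaves, which induces a partition $C_0, C_1, \ldots, C_{2^q-1}$ of $[n]$. On each class $C_j$ the algorithm outputs a single answer $\hat{a}_j$, so the overall success probability over $\mathcal{D}$ is
\[
\Pr[\text{success}] \;=\; \frac{1}{n} \sum_{j=0}^{2^q-1} \bigl|\{\, v \in C_j : \hat{a}_j \text{ is a valid } c\text{-approximation for } G_v \,\}\bigr|.
\]
A combinatorial counting argument tailored to the chosen $\mathcal{D}$ then bounds this sum from above by $n \cdot \bigl(1 - \frac{n - 2^q}{2^{q+1}(n-1)}\bigr)$. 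Intuitively, within each class the best single output can serve at most roughly a $(1 - 1/2^{q+1})$ fraction of its graphs, and summing this bound across all $2^q$ classes, after optimizing over the query-induced partition and the per-class estimates, yields the claimed overall bound.

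The main obstacle is constructing $\mathcal{D}$ so that the two structural properties above hold simultaneously. In particular, the parenthetical ``even in the value estimation setting'' forces the $G_v$'s to have varying max-cut values --- otherwise a single scalar trivially succeeds with zero queries --- yet each cut query must also remain confined to only two effective responses so that the $2^q$ class count is genuinely binding. A careful design of edge weights, possibly depending on $c$, is needed to reconcile these two requirements (for example, using stars whose leaf weights are tuned so that each query $S$ yields a response of one of only two possible values depending only on whether $v \in S$, while the global max-cut values of the $G_v$'s are spread on a multiplicative scale determined by $1/c$). Once such a $\mathcal{D}$ is in hand, the counting argument in the previous paragraph can be executed by standard case analysis over which $\hat{a}_j$'s cover which $G_v$'s, and the stated bound follows after elementary algebraic manipulation.
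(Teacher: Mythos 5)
Your high-level framing (Yao minimax, exhibiting a hard distribution, then a $2^q$-bucket counting argument) matches the paper's strategy, but the crucial construction you leave open is in fact the heart of the proof, and the sketch you gesture toward differs in ways that would derail it. The paper's hidden parameter is a random \emph{edge}, not a random vertex: it draws a uniformly random pair $(i,j)$ and then, with probability $1/2$, makes the graph empty, and otherwise makes it the single edge $(i,j)$ with weight $1$. No star gadgets, and crucially no $c$-dependent edge weights --- the entire burden of the approximation factor $c>0$ is carried by the fact that a $c$-approximation must distinguish max-cut value $0$ from max-cut value $1$. By contrast, your plan to calibrate weights so that "acceptance intervals interact in a controlled way" is fighting an unnecessary battle and would not naturally produce the sharp constant $\frac{n - 2^q}{2^{q+1}(n-1)}$.

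Your $2^q$-class argument is also not quite the right mechanism. The $2^q$ in the paper does not come from leaves of a decision tree over inputs; it comes from the following observation. Fix the (deterministic) algorithm and feed it $0$ in response to every query. That determines a fixed sequence of cuts $Q_1,\dots,Q_q$, which assigns to each vertex a $q$-bit membership string, partitioning $[n]$ into at most $2^q$ buckets. If the hidden endpoints $i,j$ land in the same bucket, then the edge $(i,j)$ crosses none of $Q_1,\dots,Q_q$, so the algorithm really does see all zeros and has learned nothing about whether the edge is present. A Cauchy--Schwarz bound on $\sum_b \binom{n_b}{2}$ over bucket sizes $n_b$ shows this "same bucket" event has probability at least $\frac{n - 2^q}{2^q(n-1)}$; conditioned on it, the graph is empty or has max-cut $1$ with equal probability, so the algorithm fails with probability at least $1/2$, yielding the stated bound. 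Your proposal is missing (a) the edge-indexed distribution, (b) the coin flip giving the empty graph probability $1/2$, and (c) the observation that the relevant $2^q$ partition is of vertices by query membership, conditioned on the all-zeros transcript --- not of inputs by decision-tree leaf. Without these, your outline cannot be completed as written.
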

\begin{proof}

We construct a hard distribution of instances. It comprises the following:

\begin{itemize}
\item With probability $1/2$: the empty graph (i.e. all edges have weight 0).
\item For each edge $(i, j)$ in the graph, with probability $\frac{1}{n(n-1)}$: the graph $G_{i, j}$ where all edges have weight 0 except for $(i, j)$ which has weight 1.
\end{itemize}

Note that we can alternately characterize this distribution as follows: we choose an edge $(i, j)$ uniformly at random. Then all edges other than $(i, j)$ are set to 0, and $(i, j)$ is set to 1 with probability $1/2$. By Yao's minimax principle, it suffices to show that a deterministic algorithm succeeds on a graph from this distribution with probability at most $1 - \frac{n - 2^q}{2^{q+1}(n-1)}$. The following is the key lemma:

\begin{lemma}\label{lemma:cauchyschwarz}
Let $Q_1, Q_2, \ldots, Q_q$ be any $q$ cuts. Then with probability at least $\frac{n - 2^q}{2^q(n-1)}$, none of these cuts will be crossed by the chosen edge $(i, j)$.
\end{lemma}
\begin{proof}
For every vertex $a \in [n]$, we can assign it a $q$-bit string as follows: the $b$th bit indicates which side of the cut $Q_b$ $a$ is placed on. We thus define a partition of the $n$ vertices into $2^q$ buckets. $(i, j)$ crosses some queried cut if and only if $i$ and $j$ have distinct bit-strings i.e. they belong to distinct buckets. Let $B = 2^q$ and let the $B$ buckets have sizes $n_1, n_2, \ldots, n_B$. Noting that $(i, j)$ is chosen uniformly at random, we have the following:
\begin{align*}
\Pr[(i, j)\text{ never crossed}] &= \frac{2}{n(n-1)} \sum_{b = 1}^B \binom{n_b}{2} \\
&= \frac{1}{n(n-1)} \sum_{b = 1}^B (n_b^2 - n_b) \\
&= \frac{1}{n(n-1)} ((\sum_{b = 1}^B n_b^2) - n) \\
&\geq \frac{1}{n(n-1)} (\frac{1}{B} (\sum_{b = 1}^B n_b)^2 - n) (\text{Cauchy-Schwarz}) \\
&= \frac{1}{n(n-1)} (\frac{n^2}{B} - n) \\
&= \frac{1}{n-1} (\frac{n}{B} - 1) \\
&= \frac{1}{n-1} (\frac{n}{2^q} - 1) \\
&= \frac{n - 2^q}{2^q(n-1)},
\end{align*}

which completes the proof of the lemma.
\end{proof}

Given this lemma, our argument is simple: let $Q_1, \ldots, Q_q$ be the cuts the algorithm will query if it always receives 0 as an answer from the oracle (this sequence is fixed and independent of the chosen edge $(i, j)$ since we are now considering a deterministic algorithm). By the lemma, we have that with probability at least $\frac{n - 2^q}{2^q(n-1)}$, none of these cuts will be crossed by the chosen edge $(i, j)$. If this is the case, the algorithm will always receive 0 as an answer and thus follow this exact sequence of queries. Moreover, at the end, the algorithm has absolutely no information on the weight of that edge so it cannot distinguish between the two different scenarios. In one case the max cut value is 0 and in the other case the max cut value is 1 so no answer can cover both cases, so the algorithm fails with probability at least $1/2$. Hence the algorithm's overall probability of failure is at least $\frac{n - 2^q}{2^{q+1}(n-1)}$ which proves the theorem.
\end{proof}

Our lower bounds arise as natural corollaries of this theorem:

\begin{corollary}\label{cor:detzerohardness}
A deterministic algorithm that achieves a $c$-approximation for any $c > 0$ requires at least $\lg n$ queries.

This implies that the query complexity for a deterministic algorithm to achieve a $c$-approximation for global max-cut on an \textbf{unweighted} (or weighted) undirected graph in the value estimation setting is $\Omega(\log n)$.
\end{corollary}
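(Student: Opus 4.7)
The plan is to derive this corollary immediately from Theorem \ref{thm:generalepshardness}. A deterministic algorithm that achieves a $c$-approximation on every input (by definition of ``achieves'' in the deterministic setting) succeeds with probability exactly $1$ on \emph{any} input distribution, and in particular on the hard distribution constructed in the proof of Theorem \ref{thm:generalepshardness}. Hence the success probability bound in that theorem must accommodate the value $1$.

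First I would reduce to the range $c \in (0, 1/2)$ where Theorem \ref{thm:generalepshardness} applies directly. For $c \geq 1/2$, any $c$-approximation is in particular a $c'$-approximation for every $c' \in (0, 1/2)$, so the lower bound for small $c'$ transfers upward to all $c \geq 1/2$. Thus it suffices to treat $c \in (0, 1/2)$.

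For such $c$, Theorem \ref{thm:generalepshardness} gives
\[
1 \;\leq\; 1 - \frac{n - 2^q}{2^{q+1}(n-1)},
\]
which rearranges to $n - 2^q \leq 0$, i.e.\ $q \geq \lg n$. This is exactly the claimed lower bound, and since the hard distribution in Theorem \ref{thm:generalepshardness} is unweighted and the bound holds in the value estimation setting, the final sentence of the corollary follows as well.

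There is no substantive obstacle here: Theorem \ref{thm:generalepshardness} already packages all the work (the hard distribution, the Cauchy--Schwarz bucketing argument, and the value estimation strengthening). The corollary is simply the specialization to success probability $1$, together with the trivial monotonicity observation that promotes the bound from $c \in (0,1/2)$ to all $c > 0$.
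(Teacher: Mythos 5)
Your proposal is correct and takes essentially the same approach as the paper: plug success probability $1$ into Theorem~\ref{thm:generalepshardness} and rearrange. One small difference is that you explicitly bridge the gap between Theorem~\ref{thm:generalepshardness}'s stated range $c \in (0,1/2)$ and the corollary's claim for all $c > 0$ via the monotonicity observation that a $c$-approximation is automatically a $c'$-approximation for any $c' < c$; the paper's proof invokes the theorem directly without addressing this, so your version is slightly more careful on that point.
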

\begin{proof}

Suppose the algorithm uses $q$ queries. Then this algorithm must achieve a $c$-approximation with probability 1, so Theorem \ref{thm:generalepshardness} tells us that we must have:
\begin{align*}
    1 &\leq 1 - \frac{n - 2^q}{2^{q+1}(n-1)} \\
    \Leftrightarrow \frac{n - 2^q}{2^{q+1}(n-1)} &\leq 0 \\
    \Rightarrow q &\geq \lg n,
\end{align*}
as desired. This lower bound holds even if we restrict attention to unweighted graphs, since all graphs in the hard distribution used in the proof of Theorem \ref{thm:generalepshardness} are unweighted (all their edge weights are 0 or 1).
\end{proof}

\begin{corollary}\label{cor:randepshardness}
For $p > 0$, a randomized algorithm that achieves a $c$-approximation in the value-estimation setting with probability at least $1-p$ for any $c > 0$ requires at least $\lg(1/p) - 1 - o(1)$ queries.
\end{corollary}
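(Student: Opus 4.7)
The plan is to derive this as an essentially immediate consequence of Theorem~\ref{thm:generalepshardness}. First I would reduce to the regime where the theorem applies: if the algorithm achieves a $c$-approximation for some $c > 0$, then in particular for any $c' \in (0, \min(c, 1/2))$ it also achieves a $c'$-approximation with the same probability, since any $c$-approximate cut is also $c'$-approximate. So without loss of generality I can assume $c \in (0, 1/2)$ and invoke Theorem~\ref{thm:generalepshardness} directly.

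Next, suppose the algorithm uses $q$ queries and succeeds with probability at least $1 - p$. Theorem~\ref{thm:generalepshardness} gives the upper bound $1 - p \leq 1 - \frac{n - 2^q}{2^{q+1}(n-1)}$, which simplifies to
\[
p \;\geq\; \frac{n - 2^q}{2^{q+1}(n-1)}.
\]
For $q$ bounded in terms of $p$ alone (so that $2^q$ is a constant independent of $n$), the right-hand side tends to $\frac{1}{2^{q+1}}$ as $n \to \infty$; more precisely, it equals $\frac{1}{2^{q+1}} - o(1)$. Rearranging yields $2^{q+1} \geq \frac{1}{p + o(1)}$, and taking $\log_2$ of both sides gives $q \geq \lg(1/p) - 1 - o(1)$, as claimed.

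I do not anticipate a real obstacle here, since all the work is already done in Theorem~\ref{thm:generalepshardness}; the only subtlety is justifying the $o(1)$ term uniformly. Concretely, if $q \geq \lg(1/p) - 1$ is violated, then $q$ is bounded by an absolute constant depending only on $p$, so $2^q/n \to 0$ and the asymptotic approximation $\frac{n - 2^q}{2^{q+1}(n-1)} = \frac{1}{2^{q+1}} - o_n(1)$ is valid, making the final rearrangement clean. This completes the argument.
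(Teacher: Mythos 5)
Your proposal is correct and takes essentially the same route as the paper: invoke Theorem~\ref{thm:generalepshardness}, rearrange $1-p \leq 1 - \frac{n-2^q}{2^{q+1}(n-1)}$, and extract $q \geq \lg(1/p) - 1 - o(1)$ in the limit. The paper performs the rearrangement purely algebraically (isolating $2^q \geq \frac{n}{2pn - 2p + 1} = \frac{1}{2p - (2p-1)/n} \geq \frac{1}{2p} - o(1)$ and taking $\lg$), which avoids the case split on whether $q$ is bounded; your asymptotic argument plus the observation that a violating $q$ would be bounded lands in the same place and is fine, just a touch less direct. One small thing you do that the paper leaves implicit: you explicitly note the reduction from arbitrary $c > 0$ to $c' \in (0, 1/2)$ (a $c$-approximation is a $c'$-approximation for $c' < c$), which is needed because Theorem~\ref{thm:generalepshardness} is stated only for $c \in (0, 1/2)$, so that is a mild improvement in rigor.
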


\begin{proof}


Plugging in the bound from Theorem \ref{thm:generalepshardness} gives:
\begin{align*}
    1-p &\leq 1 - \frac{n - 2^q}{2^{q+1}(n-1)} \\
    \Leftrightarrow \frac{n - 2^q}{2^{q+1}(n-1)} &\leq p \\
    \Leftrightarrow n - 2^q &\leq 2^{q+1}pn - 2^{q+1}p \\
    \Leftrightarrow n &\leq 2^q(2pn - 2p + 1) \\
    \Leftrightarrow 2^q &\geq \frac{n}{2pn - 2p + 1} \\
    &= \frac{1}{2p - \frac{2p - 1}{n}} \\
    &\geq \frac{1}{2p} - o(1) \text{ (assuming $p > 0$)} \\
    \Rightarrow q &\geq \lg(1/p) - 1 - o(1),
\end{align*}
as desired.

\end{proof}

Note that this implies that the upper bound given for randomized algorithms in Corollary \ref{corollary:randalgorepetition} is tight in terms of its dependence on the failure probability $p$, although there is a gap in terms of the dependence on $c$.


\section{Randomized Algorithm for $(1-\epsilon)$-approximation in $\widetilde{O}(n)$ Queries without Sparsification}\label{sec:rswmaxcut}

In this section, we show another proof of Corollary \ref{cor:randhalfalgo} that does not require the optimizations presented in Section \ref{sec:fws} for sparsification. The starting point is NaiveWeightedSubsample and its analysis in Section \ref{sec:rswadaptation}. To remove the dependence of the query complexity on the weights of the graph, we show that we can stop NaiveWeightedSubsample before all edges have been assigned strengths. This may no longer give us a sparsifier, but it will still suffice for max-cut.

\begin{theorem}\label{thm:earlystopnwsworks}
Let $H = \text{NaiveWeightedSubsample}(G, n^3)$. Then $H$ can be calculated in $O(n\log^3 n(\log n + \frac{1}{\epsilon^2}))$ queries, and outputing the cut $U$ maximizing $F(U; H)$ will achieve a $(1-5\epsilon)$-approximation for max-cut with probability $1 - O(n^{1-d}\log n)$.

\end{theorem}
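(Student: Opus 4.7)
The plan is to stop NaiveWeightedSubsample after only $O(\log n)$ outer iterations (setting $T = n^3$ gives exactly $O(\log n)$ iterations, corresponding to $\kappa$ values from $W_\text{tot}$ down to $W_\text{tot}/n^3$). Then $H$ will fail to be a sparsifier of $G$, because edges of strength below roughly $W_\text{tot}/(2n^3)$ are never contracted and hence not represented in $H$; however, the total weight of such edges will be negligible compared to the max cut, so $H$'s max cut will still closely approximate $G$'s max cut.

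\textbf{Query count and success probability.} With $T = n^3$, the outer while loop of NaiveWeightedSubsample runs for $O(\log n)$ iterations. Reusing the per-iteration bounds from the proof of Theorem \ref{thm:nwsworks}, each call to EstimateAndContract costs $O(n\log^3 n)$ queries and the single call to ConstructSparsifier costs $O(n\log^3 n/\epsilon^2)$ queries, for a total of $O(n\log^3 n(\log n+1/\epsilon^2))$ queries. A union bound over the $O(\log n)$ calls to EstimateAndContract (each failing with probability $O(n^{1-d})$) plus the ConstructSparsifier call gives success probability $1 - O(n^{1-d}\log n)$.

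\textbf{Correctness.} Let $G^{**}$ be the subgraph of $G$ consisting of edges that end up in some $\widetilde{E}(C_i)$ during the execution. Applying Lemma \ref{lemma:claim1-2} at $\kappa = W_\text{tot}/n^3$ shows $G^{**}$ contains every edge with $k_e \geq W_\text{tot}/n^3$, while applying Lemma \ref{lemma:claim1-3} at every iteration (union-bounded) shows $G^{**}$ excludes every edge with $k_e < W_\text{tot}/(2n^3)$. The Benczur-Karger inequality $\sum_e w_e/k_e \leq n-1$ then gives
\begin{align*}
\sum_{e \notin G^{**}} w_e \;\leq\; \frac{W_\text{tot}}{2n^3}\,(n-1) \;\leq\; \frac{W_\text{tot}}{2n^2}.
\end{align*}
Next I would argue $H$ is a $(2\epsilon)$-sparsifier of $G^{**}$ via Lemma \ref{lemma:sparsifierconcentration}: every edge of $G^{**}$ is contracted by construction, by laminarity of the $C_i$'s every connected component of $G^{**}$ sits inside some $C_i$, and every contracted edge receives a strength estimate within a factor of $2$ of its true strength. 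Combining these ingredients: for $U = \argmax_S F(S;H)$ and any true max cut $S^*$ of $G$,
\begin{align*}
F(U;G) \;\geq\; F(U;G^{**}) \;\geq\; \frac{F(U;H)}{1+2\epsilon} \;\geq\; \frac{F(S^*;H)}{1+2\epsilon} \;\geq\; \frac{(1-2\epsilon)\,F(S^*;G^{**})}{1+2\epsilon} \;\geq\; \frac{(1-2\epsilon)\bigl(F(S^*;G) - W_\text{tot}/(2n^2)\bigr)}{1+2\epsilon}.
\end{align*}
Using $F(S^*;G) \geq W_\text{tot}/2$ (a random cut achieves half the total edge weight in expectation) to absorb the additive error as $W_\text{tot}/(2n^2) \leq F(S^*;G)/n^2$, together with $(1-2\epsilon)/(1+2\epsilon) \geq 1-4\epsilon$, yields $F(U;G) \geq (1-5\epsilon)\,F(S^*;G)$ for $n$ sufficiently large.

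\textbf{Main obstacle.} The delicate point is the application of Lemma \ref{lemma:sparsifierconcentration} to $G^{**}$, since ConstructSparsifier is actually invoked with the full graph $G$ and its $W(C_i)$ counts edges of $G$ rather than $G^{**}$. The reconciliation is twofold: first, the per-edge sampling probability in ConstructSparsifier simplifies to $\mu_i\cdot w_e/W(C_i) \approx c_1\log^2 n\cdot w_e/(\epsilon^2 \beta_i)$, which is independent of $W(C_i)$ and matches the Benczur-Karger rate since $\beta_i$ is within a factor of $2$ of $k_e$; second, on the good event of Lemma \ref{lemma:claim1-3} no edge in $G \setminus G^{**}$ has both endpoints in any $C_i$, so such edges never appear in any $\widetilde{E}(C_i)$ and are never drawn by Sample. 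Hence the effective behavior of ConstructSparsifier is identical to what it would be if run on $G^{**}$ directly, and the analysis underlying Lemma \ref{lemma:sparsifierconcentration} transfers with no substantive change.
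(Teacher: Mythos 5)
Your decomposition is genuinely different from the paper's. The paper couples $\text{NaiveWeightedSubsample}(G, n^3)$ with a hypothetical full run $\text{NaiveWeightedSubsample}(G, \infty)$ whose output $H^{\text{full}}$ \emph{is} a sparsifier of $G$; the early-stopped output is $H^{\text{full}}$ minus some late-arriving sampled edges, and the paper bounds the total weight of those missing edges \emph{inside the sparsifier} $H^{\text{full}}$ via a separate $(1/2)$-smoothness argument (the second part of Lemma~\ref{lemma:sparsifiersmoothnesshard}). The paper never asserts that the early-stopped output is a sparsifier of anything. You instead introduce $G^{**}$, the subgraph of labeled edges, bound the weight of $G \setminus G^{**}$ by Benczur--Karger, and try to argue $H$ is a $(2\epsilon)$-sparsifier of $G^{**}$ via Lemma~\ref{lemma:sparsifierconcentration}.

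That last step is where the gap is, and it is not the one you flag as the ``main obstacle.'' You correctly resolve the $W(C_i)$ bookkeeping: the $W(C_i)$'s and the sets $\widetilde{E}(C_i)$ are identical whether the underlying graph is taken to be $G$ or $G^{**}$, since any edge with both endpoints in some $C_i$ lies in $G^{**}$. The real problem is the second hypothesis of Lemma~\ref{lemma:sparsifierconcentration}: applied with the underlying graph $G^{**}$, it requires $\beta_i \in [k_e/4, k_e]$ where $k_e$ is the strength of $e$ \emph{in $G^{**}$}, whereas Corollary~\ref{cor:nwsestimate} only guarantees that $\beta_i$ is within a constant factor of the strength of $e$ in $G$. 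Since $G^{**}$ is a strict subgraph, strengths can only decrease, and for labeled edges $e$ with $G$-strength in the boundary band (below $2W_\text{tot}/n^3$) the drop can be large: the strong component of $e$ in $G$ may have its min-cut dominated by edges in the uncertain range $[W_\text{tot}/(2n^3), 2W_\text{tot}/n^3)$ that happen to be excluded from $G^{**}$ (Lemma~\ref{lemma:nws1round}, property 3). When $\beta_i$ exceeds the $G^{**}$-strength of $e$, the sampling rate $p_e$ is too small for $H$ to be $c$-smooth with respect to $G^{**}$, and the concentration in Lemma~\ref{lemma:specificconcentration} does not follow. (Edges with $G$-strength at least $2W_\text{tot}/n^3$ are safe, since their entire strong component lies in $G^{**}$ and the two strengths coincide.) The paper's coupling argument avoids this entirely because it only ever uses strengths measured in $G$ itself; to repair your version you would have to control the boundary edges' contribution to $H$ separately, at which point you are effectively reconstructing the paper's smoothness bound.
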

\begin{proof}
Once again, we provide an outline and defer details to Appendices \ref{sec:nwsearlystopworks} (correctness) and~\ref{sec:sparsifierspeed} (efficiency). To see why this eliminates the query complexity's dependence on $W$, note that $\kappa$ is now only allowed to vary between $W_\text{tot}$ and $W_\text{tot}/n^3$, so the outer loop can now only proceed for $O(\log n)$ iterations, as opposed to $O(\log n + \log W)$. This is also essentially why the success probability no longer depends on $W$.

For correctness, the intuition is that the output of $\text{NaiveWeightedSubsample}(G, n^3)$ is essentially the same as taking the sparsifier output by $\text{NaiveWeightedSubsample}(G, \infty)$ and discarding sufficiently weak edges. It can be shown that in the sparsifier, these weak edges have total weight $O(W_\text{tot}/n)$, which is much less than $W_\text{tot}/2$ and hence the value of the max cut. Hence discarding these weak edges will only shift the max cut by a factor of $1 - o(1)$, so it suffices to use $\text{NaiveWeightedSubsample}(G, n^3)$.


\end{proof}

\section{Other Algorithmic Results}\label{sec:easyalgoresults}

We noted in Section \ref{sec:results} that there exists a straightforward $O(n^2)$ deterministic algorithm to learn the graph and hence find the max cut, even in the cut finding setting. This immediately addresses quite a few of the algorithmic results stated there. We have addressed the randomized upper bound in Theorem \ref{thm:informalhalf} in Section \ref{sec:randhalfalgo}. We now finally address the randomized upper bound in Theorem \ref{thm:informalzero} in Section \ref{sec:randalgozero}, and the deterministic upper bound in Theorem \ref{thm:informalzero} in Section \ref{sec:detalgozero}.

\subsection{Warm-Up: Randomized Algorithm for $(1/2 - \epsilon)$-approximation}\label{sec:randalgozero}

Here we address the randomized upper bound part of Theorem \ref{thm:informalzero}. This is subsumed by the results in \cite{feige} as they show these results for arbitrary symmetric submodular functions, but we restate the main ideas here to help motivate our deterministic algorithms in Sections \ref{sec:detalgozero} and \ref{sec:detgreedyalgo}. The primary observation enabling this algorithm is simple and well-known:

\begin{lemma}\label{lemma:randomsample}
Suppose we sample a cut $U$ uniformly at random from subsets of $[n]$. Then we have $\EE[F(U)] = \frac{1}{2} \sum_{i < j} w_{i, j} \geq \frac{1}{2} \max_S F(S)$.
\end{lemma}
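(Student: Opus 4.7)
The plan is to prove the equality by linearity of expectation over edges, and then the inequality by noting that the max cut is bounded above by the total edge weight.

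First, I would write $F(U) = \sum_{i < j} w_{i,j} \cdot X_{i,j}$, where $X_{i,j}$ is the indicator that edge $(i,j)$ crosses the cut defined by $U$. Since $U$ is sampled uniformly from subsets of $[n]$, each vertex is independently included in $U$ with probability $1/2$, so for any pair $i \neq j$, the probability that exactly one of $i, j$ lies in $U$ is $2 \cdot \tfrac{1}{2} \cdot \tfrac{1}{2} = \tfrac{1}{2}$. By linearity of expectation,
\[
\EE[F(U)] \;=\; \sum_{i<j} w_{i,j} \cdot \Pr[X_{i,j} = 1] \;=\; \tfrac{1}{2} \sum_{i<j} w_{i,j}.
\]

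For the inequality, observe that for any set $S \subseteq [n]$, the cut value $F(S) = \sum_{i \in S, j \notin S} w_{i,j}$ is a sum of a subset of the non-negative terms $\{w_{i,j}\}_{i<j}$, so $F(S) \le \sum_{i<j} w_{i,j}$. Taking the maximum over $S$ and dividing by 2 gives $\tfrac{1}{2} \sum_{i<j} w_{i,j} \ge \tfrac{1}{2} \max_S F(S)$, which combined with the expectation calculation yields the claim.

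There is no real obstacle here; the lemma is essentially a one-line application of linearity of expectation combined with the trivial upper bound on the max cut. The only thing to be careful about is the implicit assumption of non-negative edge weights (stated in the Preliminaries), which is what lets us bound $\max_S F(S)$ by $\sum_{i<j} w_{i,j}$.
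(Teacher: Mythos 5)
Your proof is correct and follows essentially the same approach as the paper: both argue that sampling $U$ uniformly is equivalent to independent coin flips per vertex, so each edge crosses the cut with probability $1/2$, then apply linearity of expectation and bound the max cut by the total edge weight. The paper's proof is more terse but identical in substance.
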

\begin{proof}
Observe that this sampling is equivalent to independently assigning each vertex to either be in $U$ or not be in $U$ with probability $1/2$. Thus for any pair of distinct vertices $i, j$, they will be on opposite sides of the cut with probability exactly $1/2$. The lemma follows.
\end{proof}

This lemma implies that randomly sampling a cut achieves a $(1/2)$-approximation in expectation, and this takes just 1 query. We can strengthen this to a {non-adaptive} algorithm that succeeds with arbitrarily high probability by repeating this procedure $O(1)$ times:

\begin{corollary}\label{corollary:randalgorepetition}
For $c < 1/2$ and a probability $p > 0$, there exists a {non-adaptive} randomized algorithm using $\log(1/p)/\log(2-2c)$ queries that finds a $c$-approximate max cut with probability at least $1-p$.

This implies that the query complexity for a randomized algorithm to achieve a $c$-approximation with $\Omega(1)$ probability for global max-cut on a weighted undirected graph in the cut finding setting is $O(1)$.
\end{corollary}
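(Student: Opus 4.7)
The plan is to boost the single-trial algorithm implicit in Lemma~\ref{lemma:randomsample} by independent repetition. Specifically, I would sample $k = \lceil \log(1/p)/\log(2-2c) \rceil$ cuts $U_1, \ldots, U_k$ independently and uniformly at random from subsets of $[n]$, query each one, and output the cut $U_i$ that maximizes $F(U_i)$. Since the $k$ cuts are drawn before any query answers are observed, this algorithm is non-adaptive and uses exactly $k$ queries.

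The main step is to convert the expectation bound from Lemma~\ref{lemma:randomsample} into a lower bound on the success probability of a single trial. Let $M = \max_S F(S)$. By Lemma~\ref{lemma:randomsample}, $\EE[F(U)] \geq M/2$, and since $F(U) \leq M$ deterministically, the random variable $M - F(U)$ is non-negative with $\EE[M - F(U)] \leq M/2$. Applying Markov's inequality to $M - F(U)$ yields
\[
\Pr[F(U) < cM] \;=\; \Pr[M - F(U) > (1-c)M] \;\leq\; \frac{M/2}{(1-c)M} \;=\; \frac{1}{2-2c}.
\]
Because $c < 1/2$, this bound is strictly less than $1$, which is precisely where the assumption $c < 1/2$ gets used.

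Finally, by independence of the $k$ trials, the probability that none of the sampled cuts is a $c$-approximation is at most $\left(\frac{1}{2-2c}\right)^k$. My choice of $k$ makes this quantity at most $p$, so the algorithm's output achieves a $c$-approximation with probability at least $1-p$, giving the query bound $\log(1/p)/\log(2-2c)$ claimed in the corollary. For the second sentence, when $c \in (0,1/2)$ and $p > 0$ are fixed constants, $\log(1/p)/\log(2-2c) = O(1)$, so $O(1)$ queries suffice. I do not expect a real obstacle here: the argument is a textbook Markov-plus-independent-repetition boost, and the only subtlety is noting that $F(U) \leq M$ pointwise so that Markov can legitimately be applied to $M - F(U)$.
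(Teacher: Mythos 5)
Your proof is correct and follows essentially the same approach as the paper's: apply Markov's inequality to the nonnegative deficit $M - F(U)$ (the paper phrases it as $1 - F(U)/F(S^\ast)$) to bound the single-trial failure probability by $1/(2-2c)$, then boost by independent repetition with the identical choice of $k$. The only (harmless) cosmetic differences are your use of the ceiling on $k$ and a strict versus non-strict inequality inside the probability.
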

\begin{proof}


Consider one uniformly random cut $U$. Let $S^\ast$ be any exact max cut. Then by Lemma \ref{lemma:randomsample} we have:
\begin{align*}
    \EE[\frac{F(U)}{F(S^\ast)}] \geq \frac{1}{2} &\Rightarrow \EE[1 - \frac{F(U)}{F(S^\ast)}] \leq \frac{1}{2}.
\end{align*}
Thus we can use Markov's inequality to estimate the probability that $U$ achieves a $c$-approximation as follows:
\begin{align*}
    \Pr[\frac{F(U)}{F(S^\ast)} \leq c] &= \Pr[1 - \frac{F(U)}{F(S^\ast)} \geq 1 - c] \\
    &\leq \frac{1}{2(1-c)}.
\end{align*}
Thus if we let $q = \log(1/p)/\log(2-2c)$ and sample cuts $Q_1, Q_2, \ldots, Q_q$ independently and uniformly at random, the probability that none of them achieve a $c$-approximation is at most $(2(1-c))^{-q} = p$. Thus an algorithm can simply query each of $Q_1, Q_2, \ldots, Q_q$ and output whichever of these has the highest value.
\end{proof}

Thus a randomized algorithm can achieve a $c$-approximation in $O(1)$ queries by simply querying $O(1)$ random cuts and outputting the largest one. We will now see that a similar idea can be made to work for deterministic algorithms as well.

\subsection{Deterministic Algorithm for $(1/2 - \epsilon)$-approximation}\label{sec:detalgozero} 

Here we show the deterministic upper bound part of Theorem \ref{thm:informalzero}. As just mentioned, we would like to draw on the intuition from Section \ref{sec:randalgozero} that sampling random cuts suffices since these achieve a $(1/2)$-approximation in expectation. Of course, doing this directly faces the obstacle that we cannot randomly sample cuts. However, it turns out that it suffices to find a deterministic set of cuts that is ``sufficiently random". This is made precise by the following lemma:

\begin{lemma}\label{lemma:pseudorandom}
Let $q = \frac{4}{(1 - 2c)^2} \log n$. Then there exist cuts $Q_1, Q_2, \ldots, Q_q \subseteq [n]$ such that the following condition holds: every edge $(i, j)$ for distinct $i, j \in [n]$ appears in at least a $c$ fraction of these cuts.
\end{lemma}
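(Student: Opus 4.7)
The plan is to use the probabilistic method: sample $q$ cuts independently and uniformly at random, then apply a Chernoff/Hoeffding bound together with a union bound over edges to show that with positive probability every edge is crossed by at least a $c$ fraction of the sampled cuts.

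More precisely, I would draw $Q_1, \ldots, Q_q$ by independently placing each vertex in each $Q_k$ with probability $1/2$. For a fixed edge $(i,j)$ and each $k \in [q]$, the indicator $Y_k^{(i,j)}$ that $(i,j)$ crosses $Q_k$ is a Bernoulli$(1/2)$ random variable, and these indicators are independent across $k$ (for fixed $(i,j)$). Letting $X_{ij} = \sum_{k=1}^q Y_k^{(i,j)}$, we have $\mathbb{E}[X_{ij}] = q/2$. We want the event $X_{ij} \geq cq$ to hold for all $\binom{n}{2}$ edges simultaneously, where $c < 1/2$.

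By Hoeffding's inequality applied to the sum of $q$ independent $[0,1]$-valued random variables,
\[
\Pr\!\left[X_{ij} < cq\right] \;=\; \Pr\!\left[X_{ij} - \tfrac{q}{2} < -\bigl(\tfrac{1}{2} - c\bigr)q\right] \;\leq\; \exp\!\bigl(-2\bigl(\tfrac{1}{2}-c\bigr)^2 q\bigr).
\]
Since $(1-2c)^2 = 4(\tfrac12 - c)^2$, plugging in $q = \frac{4}{(1-2c)^2}\log n = \frac{1}{(1/2-c)^2}\log n$ gives $\Pr[X_{ij} < cq] \leq e^{-2\log n} = 1/n^2$. A union bound over the fewer than $n^2/2$ edges yields an overall failure probability strictly less than $1$, so there must exist a deterministic choice of $Q_1, \ldots, Q_q$ for which every edge is crossed at least $cq$ times, which is the claim.

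There is no real obstacle here; the only thing to be careful about is matching the constant in the exponent with the stated $q = 4/(1-2c)^2 \cdot \log n$, which works out cleanly because the $4$ in the numerator exactly cancels the factor from rewriting $(1-2c)^2$ as $4(1/2-c)^2$, leaving $\exp(-2 \log n)$ against a union bound over $O(n^2)$ edges.
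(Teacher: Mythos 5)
Your proposal is correct and is essentially identical to the paper's proof: sample $q$ independent uniform cuts, apply Hoeffding to the number of cuts crossing a fixed edge, union-bound over $\binom{n}{2}$ edges to get failure probability below $1$, and conclude existence by the probabilistic method.
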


\begin{proof}


Sample $q$ cuts $Q_1, Q_2, \ldots, Q_q$ as independent and uniformly random subsets of $[n]$. First fix an edge $(i, j)$. The probability that it appears in $Q_k$ for any fixed $k$ is $1/2$. So the fraction of the $q$ cuts that it appears in should concentrate around $q/2$ by independence. We make this precise using Hoeffding's inequality:
\begin{align*}
\Pr[(i, j)\text{ appears in fewer than }cq\text{ cuts}] &= \Pr[\frac{1}{q} \sum_{k = 1}^q \mathbf{1}[(i, j) \in Q_k] < c] \\
&\leq \exp(-2q(1/2 - c)^2).
\end{align*}
So taking a union bound over all $\binom{n}{2}$ edges, the probability that there exists an edge appearing in fewer than $cq$ cuts is at most:
\begin{align*}
\binom{n}{2} \exp(-2q(1/2 - c)^2) &< \frac{n^2}{2} \exp(-2q(1/2 - c)^2) \\
&= \frac{n^2}{2} \exp(-2 \frac{4}{(1 - 2c)^2} \log n (1/2 - c)^2) \\
&= \frac{1}{2},
\end{align*}
so it follows that there exists a collection of $q$ cuts such that every edge appears in at least $cq$ of them as desired.
\end{proof}

Now we will show that it essentially just suffices to query the $q$ cuts provided by Lemma \ref{lemma:pseudorandom} and output the largest one, mimicking the repetition procedure used in the proof of Corollary \ref{corollary:randalgorepetition}.

\begin{theorem}\label{thm:detlesshalfalgorithm}
For any $c < 1/2$, a deterministic algorithm can {non-adaptively} find a $c$-approximate max cut in $\frac{4}{(1 - 2c)^2} \log n$ queries.

This implies that the query complexity for a {non-adaptive} deterministic algorithm to achieve a $c$-approximation for global max-cut on a weighted undirected graph in the cut finding setting is $O(\log n)$.
\end{theorem}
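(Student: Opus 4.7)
The plan is to query each of the $q = \frac{4}{(1-2c)^2}\log n$ cuts $Q_1, \ldots, Q_q$ provided by Lemma \ref{lemma:pseudorandom} and output whichever one attains the largest value. This is clearly non-adaptive and uses exactly $q$ queries, so everything reduces to showing that the best of these $q$ cuts achieves a $c$-approximation.

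To prove this, I would let $W_{\text{tot}} = \sum_{i<j} w_{i,j}$ be the total edge weight in $G$, and note the trivial upper bound $\max_S F(S) \leq W_{\text{tot}}$. The key step is a double-counting argument on $\sum_{k=1}^q F(Q_k)$: swapping the order of summation,
\begin{align*}
\sum_{k=1}^q F(Q_k) = \sum_{i<j} w_{i,j} \cdot \bigl|\{k \in [q] : (i,j) \text{ crosses } Q_k\}\bigr| \geq \sum_{i<j} w_{i,j} \cdot cq = cq \cdot W_{\text{tot}},
\end{align*}
where the inequality uses the guarantee from Lemma \ref{lemma:pseudorandom} that every edge crosses at least $cq$ of the $Q_k$. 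Hence the average of $F(Q_1), \ldots, F(Q_q)$ is at least $c W_{\text{tot}}$, so $\max_k F(Q_k) \geq c W_{\text{tot}} \geq c \cdot \max_S F(S)$, as desired.

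I do not expect any real obstacles here: Lemma \ref{lemma:pseudorandom} does all the combinatorial heavy lifting (finding a deterministic pseudorandom family of cuts that hits every edge the right fraction of the time), and the only remaining step is the double-counting observation above, which is essentially the deterministic analog of the ``average of random cuts is $\tfrac{1}{2} W_{\text{tot}}$'' fact from Lemma \ref{lemma:randomsample} used for the randomized algorithm. The query count and non-adaptivity follow immediately from the construction.
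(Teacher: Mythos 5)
Your proposal is correct and matches the paper's proof essentially verbatim: both invoke Lemma \ref{lemma:pseudorandom}, query each $Q_k$, output the maximizer, and then use the exact same double-counting argument that the average value of the $q$ cuts is at least $c W_{\text{tot}} \geq c \cdot \max_S F(S)$.
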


\begin{proof} Since Lemma \ref{lemma:pseudorandom} is completely independent of the particular graph the algorithm is dealing with, the algorithm can as a first step find $q$ such cuts $Q_1, \ldots, Q_q$ without making any queries at all. (Even though we showed the existence of these cuts using the probabilistic method, such a set of cuts can be found deterministically e.g. with brute force search.) Once the algorithm has done this, it can query each of $Q_1, \ldots, Q_q$ and output whichever cut has the largest value. We verify that this algorithm works below.


Let $a_1, a_2, \ldots, a_q$ respectively be the values of each of the cuts $Q_1, Q_2, \ldots, Q_q$. By the condition of Lemma \ref{lemma:pseudorandom}, every edge appears in at least $cq$ of these cuts, thus we have:
$$\frac{1}{q} \sum_{k = 1}^q a_k \geq c \sum_{i \neq j} w_{(i, j)},$$
where $w_{(i, j)}$ denotes the weight of edge $(i, j)$. Now to finish note that the LHS is at most $\max_{k \in [q]} a_k$ which is the value of the cut the algorithm outputs, and the RHS is $c$ times the total weight of the graph which is at least $c$ times the max cut. Hence the algorithm's output is always at least $c$ times the max cut, completing the proof of the theorem.
\end{proof}

\subsection{Deterministic Algorithm for $(1/2)$-approximation}\label{sec:detgreedyalgo}

Here we show that $O(n)$ queries suffice to achieve a $(1/2)$-approximation. There is a well-known greedy algorithm for this in the classical model, but we reproduce it here for completeness and to highlight that it can be readily adapted to the cut query model.

\begin{lemma}\label{lemma:detgreedyhalf}
A deterministic algorithm can adaptively find a $(1/2)$-approximate max cut with $5n$ queries.

This implies that the query complexity for an adaptive deterministic algorithm to achieve a $(1/2)$-approximation for global max-cut on a weighted undirected graph in the cut finding setting is $O(n)$.
\end{lemma}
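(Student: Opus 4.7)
The plan is to adapt the classical vertex-by-vertex greedy algorithm for max-cut to the cut query model. Fix any ordering $1, 2, \ldots, n$ of the vertices, and maintain a bipartition $(A_k, B_k)$ of the first $k$ processed vertices. When processing $v = k+1$, compute the total weight from $v$ to each side, $w(v, A_k) := \sum_{a \in A_k} w_{v,a}$ and $w(v, B_k) := \sum_{b \in B_k} w_{v,b}$, and place $v$ on whichever side minimizes this weight (equivalently, maximizes the added cut contribution).

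The first step is to express $w(v, S)$ in terms of cut queries. For any $v \notin S$, a direct expansion of the definition of $F(\cdot)$ gives
\[
    w(v, S) = \tfrac{1}{2}\bigl(F(S) + F(\{v\}) - F(S \cup \{v\})\bigr),
\]
which is exactly the identity already noted in the introduction for the case $S = \{i\}$, $v = j$. Thus at step $k+1$, it suffices to query $F(\{v\})$, $F(A_k)$, $F(A_k \cup \{v\})$, $F(B_k)$, and $F(B_k \cup \{v\})$ in order to compute both $w(v, A_k)$ and $w(v, B_k)$ and decide where to place $v$. That is at most $5$ queries per vertex and hence at most $5n$ queries in total, matching the bound in the lemma. (In fact $F(A_k)$ and $F(B_k)$ are carried over from the previous iteration, and one of $F(A_k \cup \{v\})$, $F(B_k \cup \{v\})$ becomes $F(A_{k+1})$ or $F(B_{k+1})$; hence only $3$ new queries per vertex are actually needed, but we do not need this optimization.)

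For correctness, we invoke the standard greedy analysis. Fix any edge $e = (i, j)$ with $i < j$. At the moment $j$ is processed, $i$ already lies in $A_{j-1} \cup B_{j-1}$, and the algorithm places $j$ so as to realize $\max(w(j, A_{j-1}), w(j, B_{j-1}))$ as the cut contribution from edges between $j$ and $\{1, \ldots, j-1\}$. Since the maximum of two nonnegative numbers is at least their average,
\[
    \max\bigl(w(j, A_{j-1}), w(j, B_{j-1})\bigr) \geq \tfrac{1}{2}\bigl(w(j, A_{j-1}) + w(j, B_{j-1})\bigr) = \tfrac{1}{2} \sum_{i < j} w_{i,j}.
\]
Summing over $j = 1, \ldots, n$, the final cut has value at least $\tfrac{1}{2} \sum_{i<j} w_{i,j}$, which is at least half the total edge weight and therefore at least half the max-cut value.

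No significant obstacle is expected: this is a direct translation of the textbook greedy algorithm, and the only new ingredient is the cut-query formula for $w(v, S)$ used to implement each greedy step in $O(1)$ queries.
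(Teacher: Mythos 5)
Your proposal is correct and matches the paper's proof essentially line for line: the same vertex-by-vertex greedy algorithm, the same identity $w(v,S)=\tfrac{1}{2}(F(S)+F(\{v\})-F(S\cup\{v\}))$ giving 5 queries per vertex, and the same layering/averaging argument for the $1/2$-approximation bound. The only (harmless) cosmetic slip is that in the final sentence the expression $\tfrac{1}{2}\sum_{i<j}w_{i,j}$ is reused to denote both the per-step partial sum and the grand total; the intended meaning is clear and the argument is sound.
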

\begin{proof}
We use a greedy algorithm that in some sense derandomizes the naive random sampling we saw in Lemma \ref{lemma:randomsample}.


We first describe our greedy algorithm. Initialize $S, T$ to both be the empty set. Now for each vertex $i = 1, 2, \ldots, n$ in that order, find the total weight of edges between $i$ and $S$, and the total weight of edges between $i$ and $T$. If the former is larger, add $i$ to $T$. Else add $i$ to $S$. At the end, this yields a partition of $[n]$ into $S$ and $T$, and we claim that this corresponds to a $(1/2)$-approximate max cut.

This algorithm can be run in the cut query model since the total weight of edges between $i$ and $S$ is equal to $\frac{1}{2}(F(S) + F(\left\{i\right\}) - F(S \cup \left\{i\right\}))$, and similarly for $i$ and $T$. Thus each iteration of the algorithm can be carried out with queries to $\left\{i\right\}, S, T, S \cup \left\{i\right\}$, and $T \cup \left\{i\right\}$, yielding 5 queries per iteration for a total of $5n$ queries.

It now remains to show that this algorithm achieves a $(1/2)$-approximation. To see this, partition the edges into $n-1$ layers $U_2, U_3, \ldots, U_n$, where an edge between vertices $i, j$ is assigned to $U_{\max(i, j)}$. The reason for this layering is that $U_i$ contains all the edges that will be determined in step $i$ to either cross or not cross the output cut. For $i = 2, \ldots, n$, let $W_i$ be the total weight of all edges in $U_i$.

So now let us revisit step $i$ of the algorithm for any $i \geq 2$. Each edge in $U_i$ has one endpoint at $i$ and the other at $j$, where $j$ has already been placed in $S$ or $T$. Thus the edges between $i, S$ and the edges between $i, T$ partition the edges in $U_i$. Let the edges between $i, S$ have total weight $A$ and the edges between $i, T$ has total weight $B$. Then we have $A+B = W_i$, and our algorithm places $i$ in such a way that $U_i$ contributes $\max(A, B) \geq \frac{A+B}{2} = \frac{W_i}{2}$ to the eventual output cut.

Thus at step $i$, we add a total weight of at least $\frac{W_i}{2}$ to the value of the output cut, so the value of the output cut is at least $\frac{1}{2}(\sum_{i = 2}^n W_i) = \frac{1}{2} \sum_{i < j} w_{i, j}$. As in Lemma \ref{lemma:randomsample}, this is at least $\frac{1}{2}$ times the max cut. This completes our proof.

\end{proof}





\section{Lemmas from Section \ref{sec:dethalfhardness}}

\subsection{Proof of Corollary \ref{cor:dethalfhardness}}\label{lemmaproof:cordethalfhardness}

This is equivalent to showing that for any positive $\alpha < \frac{(\sqrt{c} - \sqrt{1-c})^4}{108c(2c-1)}$ and $n$ sufficiently large (where the threshold for being ``sufficiently large" might depend on $\alpha$), $\alpha n$ queries do not suffice to deterministically achieve a $c$-approximation. By Theorem \ref{thm:dethalfhardness}, it suffices to show that there exists $\epsilon \in (0, 1)$ such that $c > \frac{1}{1+\epsilon^2}$ and $\alpha < \frac{(1 - \epsilon)^3}{108(1+\epsilon)}$.

To do this, let $\mu = \sqrt{\frac{1-c}{c}} \in (0, 1)$, so that $c = \frac{1}{1+\mu^2}$. Then we have:
\begin{align*}
    \frac{(1 - \mu)^3}{108(1+\mu)} &= \frac{(1 - \sqrt{\frac{1-c}{c}})^3}{108(1+\sqrt{\frac{1-c}{c}})} \\
    &= \frac{(\sqrt{c} - \sqrt{1-c})^3}{108c(\sqrt{c} + \sqrt{1-c})} \\
    &= \frac{(\sqrt{c} - \sqrt{1-c})^4}{108c(2c-1)} \\
    &> \alpha.
\end{align*}
By continuity, there exists an open interval $I$ containing $\mu$ such that for all $t \in I$ we have $t \in (0, 1)$ and $\frac{(1 - t)^3}{108(1+t)} > \alpha$. Take $\epsilon \in I$ such that $\epsilon > \mu$ (this exists since $I$ is open and $\mu \in I$). Then we have $\epsilon \in (0, 1)$ and $\frac{(1 - \epsilon)^3}{108(1+\epsilon)} > \alpha$, and we have $\frac{1}{1+\epsilon^2} < \frac{1}{1+\mu^2} = c$, so this $\epsilon$ satisfies all conditions. This completes the proof of the corollary.

\subsection{Proof of Lemma \ref{lemma:cutduality}}\label{lemmaproof:cutduality}

It remains to show that it suffices to find a near-max cut $C$ and a perturbation $z$ such that in addition to the aforementioned constraints we have:
\begin{align}\label{eqn:perturbconstraint}
    z^T v_C \geq \frac{\epsilon^2 n^2}{4}.
\end{align}
To see this, let us analyse the max cut values of the graphs corresponding to $\one$ and $\one + z$. The graph corresponding to $\one$ is $K_n$ and has max cut value $\lfloor \frac{1}{4}n^2 \rfloor \leq \frac{1}{4}n^2$. On the other hand, the graph corresponding to $\one + z$ has max cut value greater than or equal to the value of the cut $C$, which is
\begin{align*}
    (\one + z)^T v_C &= \one^T v_C + z^T v_C \\
    &= |C|(n - |C|) + z^T v_C \\
    &\geq (\frac{1}{4}n^2 - n \log n) + \frac{\epsilon^2 n^2}{4} \\
    &\sim (\frac{1+\epsilon^2}{4}) n^2.
\end{align*}
It follows that the approximation ratio achieved by the algorithm is at most
\begin{align*}
    \frac{\frac{1}{4}n^2}{(\frac{1+\epsilon^2}{4} - o(1))n^2} &= \frac{1}{1+\epsilon^2 - o(1)} \\
    &= \frac{1}{1+\epsilon^2} + o(1) \\
    &< c,
\end{align*}
for $n$ sufficiently large, as desired.

\subsection{Proof of Lemma \ref{lemma:step2}}\label{lemmaproof:step2}

First, suppose that LP1 has value $\geq \epsilon^2n^2/4$ and let $z$ be a feasible point attaining the optimum. Then for the matrix LP, define $Z$ as follows:
\begin{align*}
    (Z)_{i, j} &= \begin{cases}
       z_{i, j},&  i \neq j,\\
       0,& \text{ otherwise}.
    \end{cases}
\end{align*}
We claim that $Z$ satisfies is feasible for the matrix LP and attains value $\geq \epsilon^2n^2/2$. It is clear that $Z \geq -\one$ since any nonzero entry of $Z$ is also an entry of $z$. Now, we have for any vertex set $S$ that:
\begin{align*}
    \langle Z, M_{S} \rangle &= \sum_{i_1, i_2} (Z)_{i_1, i_2} (M_{S})_{i_1, i_2} \\
    &= \sum_{i_1 \neq i_2} (z)_{i_1, i_2} (v_{S})_{i_1, i_2} \text{ (since $M_{S}$'s diagonal entries are 0)} \\
    &= \sum_{i_1 < i_2} 2 (z)_{i_1, i_2} (v_{S})_{i_1, i_2} \\
    &= 2 z^T v_{S}.
\end{align*}
Thus we have $\langle Z, M_{Q_j} \rangle = 2z^T v_{Q_j} = 0$ for all $j$, so $Z$ satisfies the orthogonality constraints. Moreover, we have $\langle Z, M_C \rangle = 2z^Tv_C \geq \epsilon^2n^2/2$, thus the matrix LP has value $\geq \epsilon^2n^2/2$.

Conversely (note that this is the direction that is necessary for the proof of Theorem \ref{thm:dethalfhardness}), suppose that the matrix LP has value $\geq \epsilon^2n^2/2$ and let $Z$ be a feasible point attaining the optimum. Then define $z$ by $z_{i, j} = \frac{1}{2}(Z_{i, j} + Z_{j, i})$. First for feasibility: note that $z_{i, j} \geq \frac{1}{2}((-1) + (-1)) = -1$. Now, we have for any vertex set $S$ that:
\begin{align*}
    z^T v_{S} &= \sum_{i_1 < i_2} (z)_{i_1, i_2} (v_{S})_{i_1, i_2} \\
    &= \frac{1}{2} \sum_{i_1 < i_2} ((Z)_{i_1, i_2} + (Z)_{i_2, i_1}) (M_{S})_{i_1, i_2} \\
    &= \frac{1}{2} \sum_{i_1 < i_2} ((Z)_{i_1, i_2} (M_{S})_{i_1, i_2} + (Z)_{i_2, i_1} (M_{S})_{i_2, i_1}) \text{ (since $M_{S}$ is symmetric)} \\
    &= \frac{1}{2} \sum_{i_1 \neq i_2} (Z)_{i_1, i_2} (M_{S})_{i_1, i_2} \\
    &= \frac{1}{2} \sum_{i_1, i_2} (Z)_{i_1, i_2} (M_{S})_{i_1, i_2} \text{ (since $M_{S}$'s diagonal entries are 0)} \\
    &= \frac{1}{2} \langle Z, M_{S} \rangle.
\end{align*}
Thus we have $z^T v_{Q_j} = \frac{1}{2} \langle Z, M_{Q_j} \rangle = 0$ for all $j$, so $z$ satisfies the orthogonality constraints. Finally, we similarly have that $z^T v_C = \frac{1}{2} \langle Z, M_C \rangle \geq \epsilon^2n^2/4$, thus LP1 has value $\geq \epsilon^2n^2/4$. This completes the proof of the lemma.

\subsection{Proof of Lemma \ref{lemma:step3}}\label{lemmaproof:step3}

First we address the feasibility of $Z = -yy^T$. For all $i, j$ we have $|Z|_{i, j} = |y_i| \cdot |y_j| \leq 1$ so in particular we have $Z \geq -\one$. Now for the remaining parts, observe that for any vectors $u, v \in \RR^n$ we have $\langle uu^T, vv^T \rangle = \tr(uu^Tvv^T) = \tr(u^Tvv^Tu) = (u^Tv)^2$. Thus for any vertex set $S$ we have:
\begin{align*}
    \langle Z, M_S \rangle &= \langle -yy^T, \frac{\one\one^T - u_Su_S^T}{2} \rangle \\
    &= -\frac{1}{2} \langle yy^T, \one\one^T \rangle + \frac{1}{2} \langle yy^T, u_Su_S^T \rangle \\
    &= -\frac{1}{2} (y^T\one)^2 + \frac{1}{2} (y^Tu_S)^2 \\
    &= \frac{1}{2} (y^Tu_S)^2
\end{align*}
Thus the orthogonality requirement follows, since $\langle Z, M_{Q_j} \rangle = \frac{1}{2} (y^T u_{Q_j})^2 = 0$. Thus $Z$ is feasible, and its value in the matrix LP is $\langle Z, M_C \rangle = \frac{1}{2} (y^T u_C)^2 \geq \epsilon^2n^2/2$. This completes the proof.

\subsection{Proof of Lemma \ref{lemma:duality}}\label{lemmaproof:duality}

We go ahead and take the dual of the stated LP. For the constraint $-\one \leq z$ take the dual vector $l \geq 0$, for the constraint $z^Tw_i = 0$ take the dual variable $\lambda_i \in \RR$, and for the constraint $z \leq \one$ take the dual vector $r \geq 0$. The constraints thus obtained are:
\begin{align*}
l^T(-\one - z) + r^T(z - \one) + z^T\sum_{i = 1}^k \lambda_i w_i &\leq 0 \\
\Leftrightarrow z^T(\sum_{i = 1}^k \lambda_i w_i + r - l) &\leq \one^T(r+l).
\end{align*}
So the dual LP is to minimize $\one^T(r + l)$ subject to $r, l \geq 0$ and $\sum_{i = 1}^k \lambda_i w_i + r - l = w$.

Before continuing, we make a brief comment on notation: to avoid confusion with the vectors $w_1, \ldots, w_k$, we label components of vectors differently here. Namely, for $j \in [d]$ and a vector $u \in \RR^d$, we let $u(j)$ denote the $j$th entry of $u$.

Now let $v = \sum_{i = 1}^k \lambda_i w_i$. $v$ ranges over $W$ as the $\lambda_i$'s range over $\RR$. Now temporarily fix the $\lambda_i$'s and hence $v$. Then we require $l - r = v - w$ and we want to minimize $\one^T(r+l)$. This is an independent problem for each coordinate: in entries $j$ where $v(j) \geq w(j)$ it is optimal to take $r(j)= 0$ and $l(j) = v(j) - w(j)$, and in entries where $v(j) < w(j)$ it is optimal to take $l(j) = 0$ and $r(j) = w(j) - v(j)$. Either way we have $l(j) + r(j) = |v(j) - w(j)|$, so we will have $\one^T(r+l) = \sum_j |v(j) - w(j)| = ||v - w||_1$. Thus the dual LP is equivalent to minimizing $||v-w||_1$ subject to $v \in W$.

We now finish using the strong LP duality theorem \cite{guptalp}. The dual LP is clearly bounded since its objective function is at least 0, and moreover it is feasible since $v = 0 \in W$ is a feasible point for example. Thus the dual LP has finite minimum which is $\min_{v \in W} ||v - w||_1$, so strong duality tells us that the primal LP also has the same value. This completes the proof of the lemma.

\subsection{Proof of Lemma \ref{lemma:ell1}}\label{lemmaproof:ell1}

We apply a volume argument. Specifically, we estimate the size of $(D + B_{\epsilon n}) \cap \left\{-1, 1\right\}^n$ using an $\ell_1$ $\epsilon$-net argument and show that this must be much less than $2^n$. Since almost all $p \in \left\{-1, 1\right\}^n$ satisfy $|\one^Tp| \leq 2\sqrt{n \log n}$ by concentration, some such $p$ will not be covered by $D + B_{\epsilon n}$, as desired.

We now make this precise. Let $\lambda = \frac{1 - \epsilon}{3} > 0$ and let $r = \lambda n$. Observe that $\epsilon + 2\lambda < 1$. We will prove the lemma with an $\epsilon$-net argument. Let $H' = \left\{p \in \left\{-1, 1\right\}^n: |\one^Tp| \leq 2\sqrt{n \log n}\right\}$ and $H = [-1, 1]^n$. Note that clearly $H' \subseteq H$. Also $D$ being a linear subspace is convex and $B_{\epsilon n}$ being an $\ell_1$ ball is convex.

Suppose for the sake of contradiction that $H' \subseteq D + B_{\epsilon n}$. Let $S = (H + B_{\epsilon n}) \cap D$. Note that $H' \subseteq S + B_{\epsilon n}$. This is because any $h \in H'$ can be expressed as $h = d + b$ where $d \in D$ and $b \in B_{\epsilon n}$. Then $d = h - b = h + (-b) \in H' + B_{\epsilon n} \subseteq H + B_{\epsilon n}$. But also we have $d \in D$ so taking the intersection gives $d \in S$. Thus $h = d + b \in S + B_{\epsilon n}$. This is true for all $h \in H'$ so we have $H' \subseteq S + B_{\epsilon n}$ as claimed. The point of introducing $S$ is that it is a finite piece of $D$ so now we can bound its volume (within $D$) from above. Note also that $H, B_{\epsilon n}$ are convex so $H + B_{\epsilon n}$ is convex, and since $D$ is also convex we have that $S$ is convex.

We bound $S$ above with a simpler set in order to estimate its $d$-dimensional volume. For any point $x \in S$, we can write it as $h + b$ with $h \in H$ and $b \in B_{\epsilon n}$. So we have $||x||_1 \leq ||h||_1 + ||b||_1 \leq n + \epsilon n = (1 + \epsilon) n \Rightarrow x \in B_{(1 + \epsilon) n}$. We also have $S \subseteq D$ so we have $x \in D$. This implies that $S \subseteq B_{(1+\epsilon)n} \cap D$. Call this larger set $T$. We hence have $H' \subseteq S + B_{\epsilon n} \subseteq T + B_{\epsilon n}$.

Now we apply the $\epsilon$-net argument. Let $x_1, x_2, \ldots, x_k$ be a maximal subset of $T$ such that $||x_i - x_j||_1 \geq 2r$ for all $i \neq j$. We first show that $k$ cannot be too large, and then use this to upper bound $|(D + B_{\epsilon n}) \cap \left\{-1, 1\right\}^n|$. For the upper bound on $k$, we have the following:

\begin{lemma}
We have $k \leq \exp(\alpha' \log(2(1+\epsilon)/\lambda)n)$.
\end{lemma}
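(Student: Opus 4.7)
The plan is a standard volume-packing argument carried out inside the $d$-dimensional subspace $D$ equipped with the $\ell_1$-norm restricted from $\RR^n$. The key observation is that by the defining property of $x_1,\ldots,x_k$ (pairwise $\ell_1$-separated by $\geq 2r$), the open $\ell_1$-balls $B_r(x_i) = x_i + B_r$ in $\RR^n$ are pairwise disjoint, so their intersections with the linear subspace $D$, namely $x_i + (B_r \cap D)$, are also pairwise disjoint subsets of $D$. Each $x_i$ lies in $T \subseteq B_{(1+\epsilon)n}$, so $x_i + (B_r \cap D) \subseteq B_{(1+\epsilon)n + r} \cap D$ by the triangle inequality for $\|\cdot\|_1$.

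Next I fix any translation-invariant $d$-dimensional volume form $\mu$ on $D$ (for instance, Lebesgue measure on $D$ obtained from any linear identification of $D$ with $\RR^d$) and take volumes. Disjointness gives
\[
k \cdot \mu(B_r \cap D) \;\leq\; \mu\bigl(B_{(1+\epsilon)n+r} \cap D\bigr).
\]
Since $B_s \cap D = s \cdot (B_1 \cap D)$ for every $s>0$, the $d$-dimensional scaling property of $\mu$ gives $\mu(B_s \cap D) = s^d \mu(B_1 \cap D)$. Dividing yields
\[
k \;\leq\; \left(\frac{(1+\epsilon)n + r}{r}\right)^{d} \;=\; \left(\frac{1+\epsilon+\lambda}{\lambda}\right)^{d},
\]
using $r = \lambda n$ in the last step.

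Finally, because $\lambda = (1-\epsilon)/3 < 1 < 1 + \epsilon$, we have $1 + \epsilon + \lambda < 2(1+\epsilon)$, so the bound becomes $k \leq (2(1+\epsilon)/\lambda)^d$. Combining with the hypothesis $d \leq \alpha' n$ gives
\[
k \;\leq\; \left(\frac{2(1+\epsilon)}{\lambda}\right)^{\alpha' n} \;=\; \exp\!\bigl(\alpha' \log(2(1+\epsilon)/\lambda)\, n\bigr),
\]
which is exactly the claimed bound.

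The only mild subtlety (and the step I would be most careful about) is ensuring that the volume argument is carried out consistently inside $D$ rather than in ambient $\RR^n$: the $\ell_1$ unit ball in $\RR^n$ intersects $D$ in a convex symmetric body whose exact shape depends on $D$, but because all the sets being compared are dilates of a single such body $B_1 \cap D$, the ratio of their $d$-dimensional volumes depends only on the dilation factor. Every other ingredient (disjointness, the triangle inequality, and $\lambda < 1+\epsilon$) is immediate from the setup.
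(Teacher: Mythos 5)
Your proof is correct, and it takes a genuinely cleaner route than the paper's. The paper compares the packed balls against the volume of $T = B_{(1+\epsilon)n}\cap D$ itself, which forces it to first translate each center $x_i$ inward to a point $y_i$ and shrink the ball to radius $\lambda n / 2$ so that it remains inside $T$; this is the somewhat fiddly step involving the case split on $\|x_i\|_1 \lessgtr (1+\epsilon-\lambda/2)n$, and it yields $k \leq (2(1+\epsilon)/\lambda)^d$ directly. You instead keep the full radius-$r$ balls and compare against the slightly inflated outer ball $B_{(1+\epsilon)n + r}\cap D$, which contains all of them by the triangle inequality. That bypasses the shift-and-shrink argument entirely. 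Your bound $k \leq ((1+\epsilon+\lambda)/\lambda)^d$ is actually tighter than the paper's, and you then relax it to the stated form via $\lambda < 1+\epsilon$. The dilation-scaling of $d$-dimensional volume within $D$ is used identically in both arguments; your final cautionary remark correctly identifies this as the only point requiring care. Both proofs are valid; yours trades a marginally looser constant for a considerably shorter argument.
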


\begin{proof}
Note that the condition means that the $x_i + B_r = x_i + B_{\lambda n}$ are pairwise almost disjoint. Hence we have:
\begin{align*}
    |T| &\geq \sum_{i = 1}^k |(x_i + B_{\lambda n}) \cap T|,
\end{align*}
where $|\cdot|$ denotes volume in $D$. We next show for each $i$ that there exists $y_i \in D$ such that $(x_i + B_{\lambda n}) \cap T \supseteq (y_i + B_{\lambda n/2}) \cap D$. Note that the LHS is $(x_i + B_{\lambda n}) \cap B_{(1+\epsilon)n} \cap D$ so it suffices to show that $(x_i + B_{\lambda n}) \cap B_{(1+\epsilon)n} \supseteq y_i + B_{\lambda n/2}$. This is intuitively clear but below is a formal proof:

Define the point $y_i$ as follows: if $||x_i||_1 \leq (1 + \epsilon - \frac{\lambda}{2})n$ then let $y_i = x_i$. Otherwise let $y_i = (1 + \epsilon - \frac{\lambda}{2})n \frac{x_i}{||x_i||_1}$. In other words, $y_i$ is the same as $x_i$ but scaled down if necessary to have $\ell_1$ norm at most $(1 + \epsilon - \frac{\lambda}{2})n$. Note that this is all well-defined since $1 + \epsilon \geq 1 \geq 2\lambda > \frac{\lambda}{2}$. We check the three conditions we need $y_i$ to satisfy:

\begin{itemize}
\item $y_i$ is a scalar multiple of $x_i$, so since $x_i \in T \subseteq D$ and $D$ is a linear subspace we also have $y_i \in D$.
\item $y_i + B_{\lambda n/2} \subseteq B_{(1+\epsilon)n}$ is immediate from the triangle inequality since $||y_i||_1 \leq (1 + \epsilon - \frac{\lambda}{2})n$.
\item $y_i + B_{\lambda n/2} \subseteq x_i + B_{\lambda n}$ would also follow from the triangle inequality if we can show that $||y_i - x_i||_1 \leq \lambda n/2$. This is immediate in the case that $y_i = x_i$, and in the other case we have $||y_i - x_i||_1 = ||x_i||_1 - ||y_i||_1 \leq (1+\epsilon)n - (1 + \epsilon - \frac{\lambda}{2})n = \lambda n/2$ as desired.
\end{itemize}

Thus our claim is proven. The next observation is that since $y_i \in D$ and $D$ is a linear subspace, we have $(y_i + B_{\lambda n/2}) \cap D = y_i + (B_{\lambda n/2} \cap D)$. Finally, note that for any radius $t$, again by linearity of $D$ we have $B_t \cap D = t(B_1 \cap D) \Rightarrow |B_t \cap D| = t^d |B_1 \cap D|$. Putting this all together gives:
\begin{align*}
((1+\epsilon)n)^d |B_1 \cap D| &= |B_{(1+\epsilon)n} \cap D| \\
&= |T| \\
&\geq \sum_{i = 1}^k |(x_i + B_{\lambda n}) \cap T| \\
&\geq \sum_{i = 1}^k |(y_i + B_{\lambda n/2}) \cap D| \\
&= \sum_{i = 1}^k |y_i + (B_{\lambda n/2} \cap D)| \\
&= k |B_{\lambda n/2} \cap D| \\
&= k (\lambda n/2)^d |B_1 \cap D| \\
\Leftrightarrow k &\leq (\frac{2(1+\epsilon)}{\lambda})^d \\
&= \exp(d\log(2(1+\epsilon)/\lambda)) \\
&\leq \exp(\alpha'\log(2(1+\epsilon)/\lambda) n),
\end{align*}
which proves the lemma.
\end{proof}

Now we complete our $\epsilon$-net argument using this lemma. Note that by maximality, for any $x \in T$ there exists $i \in [k]$ such that $||x - x_i||_1 \leq 2r \Leftrightarrow x \in x_i + B_{2r}$. Hence $T \subseteq \bigcup_{i = 1}^k (x_i + B_{2r})$. From here, $| \cdot |$ denotes cardinality rather than any kind of volume since it will only be applied to finite and discrete sets. Thus we have:
\begin{align*}
H' &\subseteq T + B_{\epsilon n} \\
&\subseteq (\bigcup_{i = 1}^k (x_i + B_{2r})) + B_{\epsilon n} \\
&\subseteq \bigcup_{i = 1}^k (x_i + B_{2r + \epsilon n}) \\
&= \bigcup_{i = 1}^k (x_i + B_{(\epsilon + 2\lambda) n}) \\
\Rightarrow H' &\subseteq \bigcup_{i = 1}^k ((x_i + B_{(\epsilon + 2\lambda) n}) \cap H') \\
\Rightarrow |H'| &\leq \sum_{i = 1}^k |(x_i + B_{(\epsilon + 2\lambda) n}) \cap H'| \\
\Leftrightarrow \frac{|H'|}{2^n} &\leq \sum_{i = 1}^k \frac{|(x_i + B_{(\epsilon + 2\lambda) n}) \cap H'|}{2^n} \\
&\leq \sum_{i = 1}^k \frac{|(x_i + B_{(\epsilon + 2\lambda) n}) \cap \left\{-1, 1\right\}^n|}{2^n}. \numberthis\label{eqn:probabilities}
\end{align*}
The final step is to estimate both sides of \eqref{eqn:probabilities} to get a contradiction. First we address the LHS:

\begin{lemma}\label{lemma:LHSbound}
We have $\frac{|H'|}{2^n} \geq 1 - 2/n^2$.
\end{lemma}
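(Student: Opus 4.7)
The plan is to interpret $|H'|/2^n$ probabilistically as the probability that a uniformly random vertex of the Boolean hypercube $\{-1,1\}^n$ has coordinate sum bounded in absolute value by $2\sqrt{n\log n}$, and then apply a standard concentration inequality. Specifically, let $p$ be drawn uniformly from $\{-1,1\}^n$, so that the coordinates $p_1,\ldots,p_n$ are i.i.d.\ Rademacher random variables. Then $\mathbf{1}^T p = \sum_{i=1}^n p_i$ has mean $0$, and
\[
  \frac{|H'|}{2^n} \;=\; \Pr\!\Big[\,\big|\mathbf{1}^T p\big| \leq 2\sqrt{n \log n}\,\Big].
\]

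The key step is to upper bound the complementary probability $\Pr[|\mathbf{1}^T p| > 2\sqrt{n\log n}]$ using Hoeffding's inequality. Since each $p_i \in [-1,1]$ and the $p_i$'s are independent with mean zero, Hoeffding's inequality gives
\[
  \Pr\!\Big[\big|\textstyle\sum_{i=1}^n p_i\big| > t\Big] \;\leq\; 2\exp\!\left(-\frac{t^2}{2n}\right)
\]
for any $t > 0$. Plugging in $t = 2\sqrt{n \log n}$ yields $2\exp(-2\log n) = 2/n^2$. Therefore
\[
  \frac{|H'|}{2^n} \;\geq\; 1 - \frac{2}{n^2},
\]
as claimed.

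There is essentially no obstacle here; this is a routine Chernoff/Hoeffding estimate and the entire content of the lemma is the observation that the hypercube points with nearly balanced coordinate sum (i.e.\ those corresponding to near-max cuts in $K_n$) make up all but a $2/n^2$ fraction of the hypercube. The only thing to be careful about is matching the constant in the exponent to the choice $t = 2\sqrt{n\log n}$ so that the tail bound comes out to exactly $2/n^2$ rather than a weaker $n^{-c}$ for some $c < 2$.
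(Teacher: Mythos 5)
Your proof is correct and takes essentially the same approach as the paper: interpret $|H'|/2^n$ as the probability that a uniform $p \in \{-1,1\}^n$ has $|\mathbf{1}^T p| \leq 2\sqrt{n\log n}$, then apply Hoeffding's inequality. The only cosmetic difference is that you state Hoeffding for the sum $\sum_i p_i$ while the paper states it for the normalized average $\tfrac{1}{n}\sum_i p_i$; the resulting bound $2\exp(-2\log n) = 2/n^2$ is identical.
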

\begin{proof}
$\frac{|H'|}{2^n}$ is exactly the probability that for $p$ chosen uniformly at random in $\left\{-1, 1\right\}^n$, we have $|\one^Tp| \leq 2\sqrt{n \log n}$. By Hoeffding's inequality this is:

\begin{align*}
1 - \Pr[|\one^Tp| > 2\sqrt{n\log n}] &= 1 - \Pr[|\sum_{j = 1}^n p(j)| > 2\sqrt{n\log n}] \\
&= 1 - \Pr[|\frac{1}{n}\sum_{j = 1}^n p(j)| > 2\sqrt{\frac{\log n}{n}}] \\
&\geq 1 - 2\exp(-2n \cdot (4\log n/n)/4) \\
&= 1 - 2\exp(-2\log n) \\
&= 1 - 2/n^2.
\end{align*}
\end{proof}
Next we address the RHS of \eqref{eqn:probabilities}:

\begin{lemma}\label{lemma:RHSbound}
For any $x \in \RR^n$, $\frac{|(x + B_{(\epsilon + 2\lambda) n}) \cap \left\{-1, 1\right\}^n|}{2^n} \leq \exp(-n(1-\epsilon-2\lambda)^2/2)$.
\end{lemma}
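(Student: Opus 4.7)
The plan is to interpret the left-hand side as a probability under uniform sampling, then apply Hoeffding's inequality after verifying that the per-coordinate $\ell_1$ deviations have mean at least $1$ and bounded fluctuation.

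First I would set $\rho = \epsilon + 2\lambda$, noting that the hypothesis gives $\rho < 1$, and rewrite
\[
\frac{|(x + B_{\rho n}) \cap \{-1,1\}^n|}{2^n} = \Pr_{p}\bigl[\|p - x\|_1 \leq \rho n\bigr],
\]
where $p$ is drawn uniformly from $\{-1,1\}^n$. Let $Y_j = |p(j) - x(j)|$, so $Y_j$ takes the values $|1 - x(j)|$ and $|1 + x(j)|$, each with probability $1/2$. Since the $p(j)$ are independent, the $Y_j$ are independent as well.

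Next I would establish two per-coordinate facts. For the mean, the triangle inequality gives $|1 - x(j)| + |1 + x(j)| \geq |(1-x(j)) + (1+x(j))| = 2$, so $\EE[Y_j] \geq 1$ and hence $\sum_j \EE[Y_j] \geq n$. For the deviation bound, I would check by the three cases $x(j) \geq 1$, $x(j) \leq -1$, and $|x(j)| \leq 1$ that $\bigl||1 - x(j)| - |1 + x(j)|\bigr| \leq 2$ in each case (equal to $2$ in the outer cases and $2|x(j)|$ in the middle case), so the two possible values of $Y_j$ differ by at most $2$, giving $|Y_j - \EE[Y_j]| \leq 1$. This is the one place where a little care is needed, since $x$ is not a priori bounded; the saving feature is that $\{-1,1\}$ is symmetric about $0$, which forces the range to collapse even when $|x(j)|$ is large.

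Finally I would apply Hoeffding's inequality to the independent, mean-zero random variables $Y_j - \EE[Y_j] \in [-1,1]$:
\[
\Pr\!\left[\sum_{j=1}^n Y_j \leq \sum_{j=1}^n \EE[Y_j] - t\right] \leq \exp\!\left(-\frac{2t^2}{\sum_j (2)^2}\right) = \exp\!\left(-\frac{t^2}{2n}\right).
\]
Combining this with $\sum_j \EE[Y_j] \geq n$ and setting $t = (1 - \rho)n$ yields
\[
\Pr\bigl[\|p - x\|_1 \leq \rho n\bigr] \leq \Pr\bigl[\textstyle\sum_j Y_j \leq n - (1-\rho)n\bigr] \leq \exp\!\left(-\frac{(1-\rho)^2 n}{2}\right),
\]
which is exactly the claimed bound $\exp(-n(1-\epsilon-2\lambda)^2/2)$. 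The main obstacle is the deviation bound in the previous paragraph; once that case analysis is in hand, the rest is a direct Hoeffding application and everything goes through cleanly.
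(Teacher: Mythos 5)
Your proof is correct. The approach is essentially the same as the paper's — interpret the ratio as a probability over uniform $p \in \{-1,1\}^n$, control the per-coordinate deviations, and apply Hoeffding — but the detail of how you handle the unboundedness of $x$ is a little cleaner. The paper defines a dominated auxiliary distribution $D'_j$ (equal to the law of $|p(j) - x(j)|$ when $|x(j)| \leq 1$, and uniform on $\{0,2\}$ otherwise), verifies a stochastic domination coupling, and runs Hoeffding on the $D'_j$'s, which are bounded in $[0,2]$ with mean exactly $1$. You instead observe directly that the two possible values $|1 - x(j)|$, $|1 + x(j)|$ always differ by at most $2$ and have average at least $1$ (triangle inequality), so the same Hoeffding bound applies to the actual variables $Y_j$ without introducing any coupling. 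Both exploit the same structural fact — symmetry of $\{-1,1\}$ about $0$ forces the range of the deviation to collapse even when $|x(j)|$ is large — and both yield $\exp(-n(1-\epsilon-2\lambda)^2/2)$. Your version avoids the auxiliary distribution and domination step entirely, which is a mild simplification; the paper's version has the small advantage that its $D'_j$ has mean exactly $1$, so no ``$\geq$'' inequality needs to be threaded through the Hoeffding application.
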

\begin{proof}
We make a brief comment on notation first: to avoid confusion with the vectors $x_1, \ldots, x_k$, we label components of vectors differently here. Namely, for $j \in [n]$ and a vector $u \in \RR^n$, we let $u(j)$ denote the $j$th entry of $u$.

Now we turn to the lemma. $\frac{|(x + B_{(\epsilon + 2\lambda) n}) \cap \left\{-1, 1\right\}^n|}{2^n}$ is the probability that for $p$ chosen uniformly at random in $\left\{-1, 1\right\}^n$, we have $||p - x||_1 \leq (\epsilon + 2\lambda)n$.

For each $j$, let $D_j$ be the distribution of $|p(j) - x(j)|$. The $D_j$'s are all independent. Also define the distribution $D'_j$ as follows: if $x(j) \in [-1, 1]$ let $D'_j = D_j$, else let $D'_j$ be the uniform distribution on $\left\{0, 2\right\}$. We claim that $D'_j$ satisfies two properties:

\begin{enumerate}
\item $D_j$ and $D'_j$ can be coupled such that a sample from $D_j$ is greater than or equal to its coupled sample from $D'_j$.
\item $D'_j$ is bounded in $[0, 2]$ and has expectation 1.
\end{enumerate}

First we check property 1. This is straightforward when $x(j) \in [-1, 1]$ since then samples can just be coupled to themselves. Next if $x(j) > 1$ or $x(j) < -1$ - assume WLOG that $x(j) > 1$ by symmetry about 0 - $D_j$ is the uniform distribution on $\left\{x(j) - 1, x(j) + 1\right\}$. We can couple $x(j) + 1$ from $D_j$ with $2$ from $D'_j$ and $x(j) - 1$ from $D_j$ with $0$ from $D'_j$ so this works.

Next we check property 2. This is clear when $|x(j)| > 1$. In the other case, $D'_j$ is the uniform distribution on $\left\{1 - x(j), 1 + x(j)\right\}$. Indeed we have $0 \leq 1 - |x(j)| \leq |1 \pm x(j)| \leq 1 + |x(j)| \leq 2$. Clearly $D'_j$ has expectation 1. This establishes our claim.

Thus we can proceed as follows, once again applying Hoeffding's inequality:
\begin{align*}
\Pr[||p - x||_1 \leq (\epsilon + 2\lambda)n] &= \Pr[\frac{1}{n} \sum_{j = 1}^n |p(j) - x(j)| \leq \epsilon + 2\lambda] \\
&= \Pr_{a_j \sim D_j\text{, indep.}} [\frac{1}{n} \sum_{j = 1}^n a_j \leq \epsilon + 2\lambda] \\
&\leq \Pr_{a'_j \sim D'_j\text{, indep.}} [\frac{1}{n} \sum_{j = 1}^n a'_j \leq \epsilon + 2\lambda] \\
&\leq \exp(-2n \cdot (1 - \epsilon - 2\lambda)^2/4) \\
&= \exp(-n (1 - \epsilon - 2\lambda)^2/2),
\end{align*}
which yields the lemma.
\end{proof}

With the estimates from Lemmas \ref{lemma:LHSbound} and \ref{lemma:RHSbound}, we can finish as follows:
\begin{align*}
1 - 2/n^2 &\leq \frac{|H'|}{2^n} \\
&\leq \sum_{i = 1}^k \frac{|(x_i + B_{(\epsilon + 2\lambda) n}) \cap \left\{-1, 1\right\}^n|}{2^n} \\
&\leq \sum_{i = 1}^k \exp(-n (1 - \epsilon - 2\lambda)^2/2) \\
&= k \exp(-n (1 - \epsilon - 2\lambda)^2/2) \\
&\leq \exp(\alpha'\log(2(1+\epsilon)/\lambda) n -n (1 - \epsilon - 2\lambda)^2/2) \\
&= \exp((\alpha'\log(\frac{2(1+\epsilon)}{\lambda}) - \frac{(1 - \epsilon - 2\lambda)^2}{2})n).
\end{align*}
For this to hold for any $n$ sufficiently large, we must have $\alpha'\log(\frac{2(1+\epsilon)}{\lambda}) - \frac{(1 - \epsilon - 2\lambda)^2}{2} \geq 0$ so it suffices to show that this isn't the case. We do this as follows:
\begin{align*}
\alpha'\log(\frac{2(1+\epsilon)}{\lambda}) - \frac{(1 - \epsilon - 2\lambda)^2}{2} &\leq \alpha' (\frac{2(1+\epsilon)}{\lambda}) - \frac{(1 - \epsilon - 2\lambda)^2}{2} \\
&= \alpha' (\frac{6(1+\epsilon)}{1-\epsilon}) - \frac{(1 - \epsilon)^2}{18} \\
&< 0,
\end{align*}
by definition of $\alpha'$. This contradiction completes the proof of Lemma \ref{lemma:ell1}.

\subsection{Proof of Corollary \ref{cor:endofstep4}}\label{lemmaproof:corendofstep4}

We have $\dim V \leq q+1 = \alpha n + 1 < \alpha'n$ for any $\alpha' \in (\alpha, \frac{(1-\epsilon)^3}{108(1+\epsilon)})$ and $n$ sufficiently large. Then all conditions of Lemma \ref{lemma:ell1} apply so the lemma tells us that there exists $p \in \left\{-1, 1\right\}^n$ such that $p \notin V + B_{\epsilon n}$ and $|\one^Tp| \leq 2\sqrt{n \log n}$.

Since $p \in \left\{-1, 1\right\}^n$, let $C \subseteq [n]$ be the cut such that $u_C = p$ (i.e. $C$ is the set of indices where $p$ has an entry of 1). Now note that
\begin{align*}
    \one^Tu_C &= |C| - (n - |C|) \\
    &= 2|C| - n.
\end{align*}
So we have:
\begin{align*}
    |\one^T u_C| &\leq 2\sqrt{n \log n} \\
    \Leftrightarrow |2|C| -n| &\leq 2\sqrt{n \log n} \\
    \Leftrightarrow ||C| - \frac{n}{2}| &\leq \sqrt{n \log n} \\
    \Leftrightarrow |C| &\in [\frac{n}{2} - \sqrt{n\log n}, \frac{n}{2} + \sqrt{n \log n}].
\end{align*}
Thus $C$ is a near-max cut. Moreover we have $u_C \notin V + B_{\epsilon n}$ so this completes the proof of this corollary and thus Theorem \ref{thm:dethalfhardness}.

\section{Lemmas from Section \ref{sec:randhalfalgo} and Appendix \ref{sec:rswmaxcut}}

\subsection{Proof of Lemma \ref{lemma:allalgoprimitives}}\label{lemmaproof:allalgoprimitives}

We start with a simple lemma:

\begin{lemma}\label{lemma:partialcut}
For any disjoint vertex sets $S, T$, it is possible to find $F(S, T) = \sum_{u \in S, v \in T} w_{u, v}$ in $O(1)$ queries.
\end{lemma}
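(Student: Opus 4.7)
The plan is to express $F(S,T)$ as a linear combination of a constant number of cut queries via inclusion--exclusion. Let $R = V \setminus (S \cup T)$, and define the three quantities
\[
a = F(S, T), \qquad b = F(S, R), \qquad c = F(T, R),
\]
i.e.\ the total weights between the three pairs of parts. Since $S$, $T$, $R$ partition $V$, a direct decomposition of each cut into its contributions from these inter-part weights gives
\[
F(S) = a + b, \qquad F(T) = a + c, \qquad F(S \cup T) = b + c.
\]
Adding the first two and subtracting the third yields $F(S) + F(T) - F(S \cup T) = 2a$, so
\[
F(S, T) \;=\; \tfrac{1}{2}\bigl(F(S) + F(T) - F(S \cup T)\bigr).
\]

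Therefore the algorithm simply makes the three cut queries $F(S)$, $F(T)$, and $F(S \cup T)$ (all well-defined since $S$ and $T$ are disjoint), and outputs the above expression. This uses exactly three queries, hence $O(1)$. There is no real obstacle here: the identity is just the standard decomposition of a cut into cross-pair contributions between a tripartition, and correctness is immediate from the definitions. This primitive will be used repeatedly in the later constructions (e.g.\ to implement \textsc{GetTotalWeight} and the recursive bisection routines of Lemma~\ref{lemma:allalgoprimitives}), which is why it is stated first.
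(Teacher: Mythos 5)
Your proof is correct and takes essentially the same approach as the paper: both derive the identity $F(S,T) = \tfrac{1}{2}\bigl(F(S) + F(T) - F(S \cup T)\bigr)$ and observe that three cut queries suffice. Your write-up just spells out the tripartition decomposition that the paper leaves implicit.
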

\begin{proof}
Observe that $F(S, T) = \frac{1}{2}(F(S) + F(T) - F(S \cup T))$ (where here $F(S) = F(S, [n] \backslash S)$ is the standard cut function. Each of these three terms can be found in 1 query, so this completes the proof.
\end{proof}

Now we describe our algorithm. Note that we distinguish between supernodes and individual vertices throughout this description.

At any point, we want to keep track of the degrees of all supernodes. To do this, in \textbf{InitializeDS} we query every singleton set to find the degree of its vertex. This takes $O(n)$ queries.

The only operation that changes the graph is \textbf{Contract}, so we need to maintain our list of degrees when this happens. Let $S$ be the set of vertices in the newly formed supernode after the call to Contract. This contraction does not affect the degrees of any supernodes other than $S$. So we can just query $F(S)$ to find the degree of the supernode $S$ (and delete the degrees of any destroyed supernodes from our list). This only takes 1 query.

Next we address \textbf{GetTotalWeight}. Let the queried set $S$ comprise supernodes $S_1, S_2, \ldots, S_k$. Then observe that:
$$W(S) = \frac{1}{2}((\sum_{i = 1}^k F(S_i)) - F(\bigcup_{i = 1}^k S_i)).$$

We already know each $F(S_i)$ since we are keeping track of supernode degrees, so we only need to find $F(\bigcup_{i = 1}^k S_i)$ which is 1 query.

For \textbf{GetEdge}, we proceed using a ``greedy bisection" procedure:

\begin{enumerate}
    \item Choose the supernode $S$ such that $\frac{F(S)}{|S|(n - |S|)}$ is maximal.
    \item Initialize $T = [n] \backslash S$. While $T$ is not a singleton:
    \begin{enumerate}
        \item Partition $T$ into two equally sized \underline{vertex} sets $T_1$ and $T_2$. Find $F(S, T_i)$ for $i = 1, 2$ using Lemma \ref{lemma:partialcut}.
        \item Update $T \gets T_j$, where $j = \argmax_i \frac{F(S, T_i)}{|S| \cdot |T_i|}$.
    \end{enumerate}
    \item Now let $t$ be the single vertex in $T$. Initialize $U = S$ and repeat the same bisection procedure while $U$ is not a singleton:
    \begin{enumerate}
        \item Partition $U$ into equally sized $U_1$ and $U_2$.
        \item Update $U \gets U_j$, where $j = \argmax_i \frac{F(t, U_i)}{|U_i|}$.
    \end{enumerate}
    \item Let $s$ be the single vertex in $U$. Return the edge between $S$ and $T$.
\end{enumerate}

It is clear that this runs in $O(\log n)$ queries. This procedure can be thought of as follows: we initialize $E$ to be the set of all possible edges, which has average weight at least $\frac{2W(H)}{n^2}$ (the number of available edges will always be at most $n^2/2$). At each point, we then split $E$ into some number of choices and update $E$ to be the one with highest average weight. So at every stage, the average weight of $E$ will be at least $\frac{2W(H)}{n^2}$. In particular, the edge we obtain at the end will have weight at least $\frac{2W(H)}{n^2}$.

Finally, we address \textbf{Sample}. Let the queried set $S$ comprise supernodes $S_1, S_2, \ldots, S_k$. Note that $k \leq n$. We use a similar bisection procedure, but we need to take extra care because our data structure does not track the degree of each $S_i$ within $S$ (it only tracks the degree of each $S_i$ with respect to the entire graph).

\begin{enumerate}
    \item Initialize $T = S$.
    \item Repeat the below indefinitely (it will move to step 3 at some point):
    \begin{enumerate}
        \item Partition $T$ into two equally sized \underline{supernode} sets $T_1, T_2$. (By ``equally sized", we mean containing an equal number of supernodes.)
        \item Let $E_1$ be the set of edges with both endpoints in $T_1$, $E_2$ respectively for $T_2$, and $E_b$ the set of edges with one endpoint in $T_1$ and one endpoint in $T_2$.
        \item Find $W(E_b) = F(T_1, T_2)$ using Lemma \ref{lemma:partialcut} and $W(E_i) = \text{GetTotalWeight}(T_i)$ for $i = 1, 2$.
        \item Choose one of $E_1, E_2, E_b$ with probability proportionate to its total weight. If we chose $E_i$ for $i = 1$ or $2$, update $T \gets T_i$ and repeat step 2. Else, move to step 3.
    \end{enumerate}
    \item Let $T_1, T_2$ be the sets we obtained from step 2. While $T_2$ is not a singleton:
    \begin{enumerate}
        \item Partition $T_2$ into two equally sized \underline{vertex} sets $T_{2, 1}$ and $T_{2, 2}$. (Now by ``equally sized", we mean containing an equal number of vertices.)
        \item Find $F(T_1, T_{2, 1})$ and $F(T_1, T_{2, 2})$ using Lemma \ref{lemma:partialcut}.
        \item Update $T_2 \gets T_{2, 1}$ or $T_{2, 2}$ with probability proportionate to $F(T_1, T_{2, 1})$ and $F(T_1, T_{2, 2})$.
    \end{enumerate}
    \item Now let $t$ be the single vertex in $T_2$. Repeat the same bisection procedure as step 3 while $T_1$ is not a singleton:
    \begin{enumerate}
        \item Partition $T_1$ into two equally sized \underline{vertex} sets $T_{1, 1}$ and $T_{1, 2}$.
        \item Find $F(t, T_{1, 1})$ and $F(t, T_{1, 2})$ using Lemma \ref{lemma:partialcut}.
        \item Update $T_1 \gets T_{1, 1}$ or $T_{1, 2}$ with probability proportionate to $F(t, T_{1, 1})$ and $F(t, T_{1, 2})$.
    \end{enumerate}
    \item Let $s$ be the single vertex in $T_1$. Return the edge between $s$ and $t$.
\end{enumerate}

First we check the efficiency of this.
\begin{enumerate}
    \item Step 2 starts with $T$ being a collection of $k \leq n$ supernodes and each time the number of supernodes gets halved, thus it runs for at most $O(\log n)$ iterations. Each iteration just relies on $O(1)$ calls to Lemma \ref{lemma:partialcut} and GetTotalWeight, which amounts to $O(1)$ queries per iteration and hence $O(\log n)$ queries total.
    \item Step 3 starts with $T_2$ being a set of at most $n$ vertices and once again the number of vertices is getting halved every time. Clearly each iteration uses $O(1)$ queries, so this makes for $O(\log n)$ queries total.
    \item A similar argument applies for Step 4.
\end{enumerate}

It hence remains to check correctness. This procedure can be thought of as initializing $E$ to be the set of all edges with both endpoints in $S$. We then repeatedly split $E$ into two or three subsets $E_i$, and we choose to update $E \gets E_i$ for some $i$ selected with probability proportionate to the total weight of $E_i$. Given this interpretation, it is clear that firstly we will exit Step 2 at some point; $T$ could never become a single supernode because then there would be no edges available with both endpoints in $T$. Then it is also clear that when $E$ is finally a single edge, this edge will have been sampled with probability proportionate to its weight. This completes the proof.

\subsection{Key Concentration Results}

\subsubsection{Setup and Generic Results}\label{sec:sparsifierconcentration}

We first outline our approach. Throughout this section, we allow the graph $G$ to have parallel edges. Benczur and Karger \cite{benczurkarger} use the following concentration result to prove the correctness of their algorithm for constructing sparsifiers in the classical model:

\begin{theorem}\label{theorem:compressionbk}
(\cite{benczurkarger}) Let $\rho_\epsilon = \frac{3(d+3)\log n}{\epsilon^2}$. Let $G$ be a graph with edge weights $w_e$ and strengths $k_e$, and suppose we have edge strength estimates $k'_e$ such that $k_e/r \leq k'_e \leq k_e$ for all $e$ (where $r$ is a constant). Then for each edge $e$, let $p_e = \min(\rho_e w_e/k_e, 1)$ and let $H$ be the graph defined by sampling each edge independently with probability $p_e$. If an edge is chosen, we add it with weight $w_e/p_e$.

Then with probability $1 - O(n^{-d})$, every cut in $H$ has value $1 \pm \epsilon$ times its value in $G$.
\end{theorem}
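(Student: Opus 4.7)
My plan is to follow the classical Benczur-Karger proof strategy, which handles the heterogeneous per-edge sampling probabilities via a decomposition of $G$ into $O(\log n)$ ``strength classes'' and then reduces each class to the uniform-sampling setting, where Karger's cut-sampling bound applies. The crux is that although $p_e$ varies across edges, within any band of strengths $k_e \in [2^j, 2^{j+1})$ the probabilities are uniform up to a factor of $2$, and the edges with $k_e \geq 2^j$ organize themselves into subgraphs whose min-cuts are at least $2^j$ (this is the defining property of strength). That structural fact is exactly what lets one apply a Karger-style bound.

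Concretely, the steps I would carry out are as follows. First, define for each integer $j$ the subgraph $G_j$ consisting of edges $e$ with $k_e \geq 2^j$, and let the ``strong components'' at level $j$ be the maximal vertex-induced subgraphs of $G$ whose min-cut is at least $2^j$. By definition of edge strength, every edge of $G_j$ lies inside some strong component at level $j$, and inside each such component the min-cut is $\geq 2^j$. Second, I would invoke Karger's uniform-sampling theorem on each strong component: if one samples each of its edges independently with probability at least $\rho_\epsilon / (2^j)$, scaling sampled edges by the inverse of their sampling probability, then every cut of that component is preserved within a factor of $1 \pm \epsilon$ with probability $1 - n^{-\Omega(d)}$. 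Our prescribed $p_e = \min(\rho_\epsilon w_e / k_e, 1)$ satisfies the required lower bound within this band (up to the constant factor $r$ from the estimates $k'_e$, which is absorbed into $\rho_\epsilon$). Third, I would fix an arbitrary cut $C$ of $G$ and decompose the edges crossing $C$ according to the strength class of their endpoints' strong components; the telescoping/inclusion argument from \cite{benczurkarger} then shows that the total sampled weight of $C$ is the sum of contributions from each level, each of which concentrates by the previous step. Fourth, union bound over the $O(\log n)$ levels and the $O(n)$ strong components to conclude that all cuts are simultaneously preserved with probability $1 - O(n^{-d})$ after adjusting the constant inside $\rho_\epsilon$.

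The main obstacle, in my view, is not a single hard calculation but rather the bookkeeping of the strength decomposition: one must verify carefully that (i) the strong components at level $j+1$ refine those at level $j$, so the decomposition is hierarchical/laminar, (ii) every edge is accounted for in exactly one level, namely the highest $j$ such that both its endpoints lie in the same strong component at level $j$, and (iii) cuts of $G$ intersect each strong component in a well-defined sub-cut so that one can add up the per-level concentration guarantees without double counting. Once this structure is in place, the analytic part reduces to a standard Chernoff bound in the uniform-sampling regime, where the variance of the estimator for a cut of value $F(C)$ is controlled by $\sum_{e \in C} w_e k_e / \rho_\epsilon \leq F(C) \cdot \max_{e \in C} k_e / \rho_\epsilon$, and the key insight is that $\max_{e \in C} k_e$ is dominated by $F(C)$ itself because any edge in $C$ has strength at most the value of the cut separating its endpoints. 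That last observation is what prevents the variance bound from blowing up on low-value cuts and is the conceptual heart of the argument.
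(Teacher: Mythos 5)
The paper does not actually prove this theorem --- it is cited verbatim from~\cite{benczurkarger}, and the paper's contribution begins one step later with Theorems~\ref{theorem:boundedconcentration} and~\ref{theorem:smoothconcentration}, which generalize it to the negatively correlated sampling required by the algorithms. Your outline is the classical Benczur--Karger argument, and your strength-class and strong-component framework is precisely what underlies the Decomposition Lemma the paper invokes as a black box, so as a reconstruction of the cited result it is on the right track.

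One correction to your final paragraph, though. You single out the observation $\max_{e\in C}k_e \leq F(C)$ as ``the conceptual heart,'' claiming it reduces the analysis to a standard Chernoff bound by controlling the estimator's variance by $F(C)^2/\rho_\epsilon$. But a per-cut variance bound of that form yields, via Chernoff, a failure probability $\exp(-\Theta(\epsilon^2\rho_\epsilon)) = n^{-\Theta(d)}$ that is independent of $F(C)$ --- nowhere near enough to survive a union bound over $2^n$ cuts. The actual engine is Karger's cut-counting theorem: within a strong component of min-cut at least $2^j$, there are only $O(n^{2\alpha})$ cuts of value at most $\alpha\cdot 2^j$, and a per-cut Chernoff failure probability of $n^{-\Theta(\alpha d)}$ dominates that count. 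This is hidden inside the ``Karger's uniform-sampling theorem'' you invoke in your step 2, but it --- not the variance observation --- is where the min-cut guarantee on strong components is actually spent and what lets the union bound close. The $k_e\leq F(C)$ fact is true and useful for controlling individual terms, but it plays a much smaller supporting role than your summary assigns it.
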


The reason we cannot use this theorem directly to justify our algorithm is that in our algorithm the sampling is not independent. Our task here is to generalize this theorem to our situation where the sampling is dependent. The main property we exploit is that our algorithm's sampling is negatively correlated; given the knowledge that we sampled some edge $e$, the probability that we also sampled some other edge $e'$ goes down.

\begin{definition}
We say Bernoulli random variables $X_1, \ldots, X_k$ are \emph{negatively correlated} if for every subset $I \subseteq [k]$, for all $j \in I$, and for any $b \in \left\{0, 1\right\}$ we have:
$$\Pr[X_j = b \mid \bigcap_{i \in I, i \neq j} \left\{X_i = b\right\}] \leq \Pr[X_j = b]$$
\end{definition}

The following lemma says that the usual Chernoff bound also holds for negatively correlated random variables:
\begin{lemma}\label{lemma:negcorrchernoff}
(\cite{DD98}) Let $a_1, a_2, \ldots, a_k \in [0, 1]$ be constants and $X_1, \ldots, X_k$ be negatively correlated Bernoulli random variables with probabilities $p_1, \ldots, p_k$. Let $X = \sum_{i = 1}^k a_i X_i$ and $\mu = \EE[X] = \sum_{i = 1}^k a_i p_i$. Then we have:
$$\Pr[X \notin [(1-\epsilon)\mu, (1+\epsilon)\mu]] \leq 2e^{-\epsilon^2 \mu/3}$$
\end{lemma}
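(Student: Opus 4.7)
The plan is to follow the standard Chernoff–Hoeffding recipe via the moment generating function (MGF), but to replace the use of independence in bounding $\EE[e^{tX}]$ with a more careful expansion that only invokes the negative correlation hypothesis. For $t \in \RR$, Markov's inequality gives $\Pr[X \geq (1+\epsilon)\mu] \leq \EE[e^{tX}]/e^{t(1+\epsilon)\mu}$ for $t>0$ and $\Pr[X \leq (1-\epsilon)\mu] \leq \EE[e^{tX}]/e^{t(1-\epsilon)\mu}$ for $t<0$. Once we establish $\EE[e^{tX}] \leq \prod_i \EE[e^{ta_iX_i}]$ for every real $t$, the remainder of the proof is identical to the classical independent case: using $a_i \in [0,1]$ and convexity, $e^{ta_i} - 1 \leq a_i(e^t - 1)$, so $\EE[e^{tX}] \leq \exp(\mu(e^t - 1))$, and optimizing $t$ yields $2e^{-\epsilon^2\mu/3}$ as stated.

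The first key step is to derive from the definition, by induction on $|I|$, that for every $I \subseteq [k]$,
\[
\EE\Bigl[\prod_{i \in I} X_i\Bigr] = \Pr\Bigl[\bigcap_{i \in I} \{X_i = 1\}\Bigr] \leq \prod_{i \in I} p_i, \qquad \EE\Bigl[\prod_{i \in I} (1-X_i)\Bigr] \leq \prod_{i \in I}(1-p_i).
\]
Both follow by peeling off one index and applying the given conditional-probability bound (for $b=1$ and $b=0$ respectively).

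For the upper tail ($t > 0$), I would use the fact that $X_i \in \{0,1\}$ to write $e^{ta_iX_i} = 1 + (e^{ta_i}-1)X_i$, and then expand
\[
\EE[e^{tX}] = \EE\Bigl[\prod_i\bigl(1 + (e^{ta_i}-1)X_i\bigr)\Bigr] = \sum_{S \subseteq [k]} \prod_{i\in S}(e^{ta_i}-1)\,\EE\Bigl[\prod_{i\in S} X_i\Bigr].
\]
Since every coefficient $e^{ta_i}-1$ is nonneg\-ative for $t > 0$, the first inequality above lets me replace $\EE[\prod_{i\in S} X_i]$ by $\prod_{i\in S} p_i$ termwise, which reassembles into $\prod_i(1+(e^{ta_i}-1)p_i) = \prod_i \EE[e^{ta_iX_i}]$.

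The trickier step, which I expect to be the main obstacle, is the lower tail: for $t < 0$ the coefficients $e^{ta_i}-1$ are negative, so the same expansion does not immediately give a one-sided bound. I would handle this by switching to $Y_i = 1-X_i$ (which are negatively correlated with probabilities $1-p_i$ by the $b=0$ case) and writing $e^{ta_iX_i} = \alpha_i + \beta_i Y_i$ where $\alpha_i = e^{ta_i} \in (0,1]$ and $\beta_i = 1-\alpha_i \geq 0$. Expanding $\prod_i(\alpha_i + \beta_i Y_i)$ gives a sum with all nonneg\-ative coefficients, so the second inequality above yields
\[
\EE[e^{tX}] \leq \prod_i\bigl(\alpha_i + \beta_i(1-p_i)\bigr) = \prod_i\bigl(p_i\alpha_i + (1-p_i)\bigr) = \prod_i \EE[e^{ta_iX_i}],
\]
which is exactly what we need. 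Combining the two cases with the standard Chernoff optimization finishes the proof.
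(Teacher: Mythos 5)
The paper does not prove Lemma~\ref{lemma:negcorrchernoff}; it cites it directly from \cite{DD98}, so there is no in-paper proof to compare against. Your argument is a correct, self-contained derivation. You correctly observe that the paper's definition of negative correlation, applied with $b=1$ (resp.\ $b=0$) and peeled off one index at a time, gives exactly the product bounds $\EE[\prod_{i\in I}X_i]\leq\prod_{i\in I}p_i$ and $\EE[\prod_{i\in I}(1-X_i)]\leq\prod_{i\in I}(1-p_i)$. The upper-tail expansion of $\prod_i(1+(e^{ta_i}-1)X_i)$ with nonnegative coefficients for $t>0$ is standard; your handling of the lower tail by rewriting $e^{ta_iX_i}=\alpha_i+\beta_iY_i$ with $Y_i=1-X_i$, $\alpha_i=e^{ta_i}\in(0,1]$, $\beta_i=1-\alpha_i\geq 0$ so that the expansion again has nonnegative coefficients is exactly the right device to make the $b=0$ product bound applicable. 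After that, the reduction $\EE[e^{ta_iX_i}]\leq\exp(a_ip_i(e^t-1))$ via convexity (valid for $a_i\in[0,1]$ and all real $t$) and the usual optimization $t=\ln(1\pm\epsilon)$ recover the claimed $2e^{-\epsilon^2\mu/3}$. One trivial point worth noting for completeness: the peeling-off induction implicitly assumes the conditioning event has positive probability, but if it has probability zero the product inequality holds vacuously, so there is no gap.
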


\begin{corollary}\label{cor:negcorrchernoffwithdet}
Let $a_1, \ldots, a_k \in [0, 1]$ be constants and $X_1, \ldots, X_k$ be negatively correlated Bernoulli random variables with probabilities $p_1, \ldots, p_k$. Fix some constant $a \geq 0$. Let $X = \sum_{i = 1}^k a_iX_i + a$ and $\mu = \EE[X] = \sum_{i = 1}^k a_ip_i + a$. Then we have:
$$\Pr[X \notin [(1-\epsilon)\mu, (1+\epsilon)\mu]] \leq 2e^{-\epsilon^2\mu/3}$$.
Informally, the Chernoff bound still holds if we add a ``non-integer number" of deterministic random variables that are all equal to 1.
\end{corollary}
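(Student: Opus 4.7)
The plan is to reduce Corollary~\ref{cor:negcorrchernoffwithdet} directly to Lemma~\ref{lemma:negcorrchernoff} by absorbing the deterministic constant $a$ into a collection of auxiliary ``always-on'' Bernoulli variables. Intuitively, adding a deterministic contribution to a sum does not interact with randomness, so the original Chernoff bound should apply verbatim once we formalize this.

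First, I would choose any integer $m \geq \max(1, \lceil a \rceil)$, so that $a/m \in [0,1]$. Then I introduce $m$ auxiliary Bernoulli random variables $X_1', \ldots, X_m'$ that are each equal to $1$ with probability $1$ (that is, Bernoulli with parameter $p_i' = 1$), and assign them coefficients $a_i' := a/m \in [0,1]$. By construction $\sum_{i=1}^m a_i' X_i' = a$ almost surely, so the combined sum $\sum_{i=1}^k a_i X_i + \sum_{i=1}^m a_i' X_i'$ equals $X$ and has expectation $\mu$.

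Next I would verify that the combined collection $(X_1, \ldots, X_k, X_1', \ldots, X_m')$ is still negatively correlated. For any index subset $I$ and any $j \in I$: if $j$ corresponds to some deterministic $X_i'$, then both sides of the negative-correlation inequality equal $\one[b=1]$ for the event $\{X_j' = b\}$, so the inequality holds trivially as an equality. If $j$ corresponds to one of the original $X_i$'s, then the conditioning event involving any of the $X_{i'}'$ has probability $1$ (for $b=1$) or the whole inequality is vacuous (for $b=0$, since the conditioning event has probability $0$ whenever some $X_{i'}'$ appears in it with $b=0$), so the inequality reduces to the original negative-correlation condition satisfied by $X_1, \ldots, X_k$.

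Finally, I would apply Lemma~\ref{lemma:negcorrchernoff} to the combined collection with coefficients $a_1, \ldots, a_k, a_1', \ldots, a_m' \in [0,1]$ and negatively correlated Bernoullis $X_1, \ldots, X_k, X_1', \ldots, X_m'$. This directly yields $\Pr[X \notin [(1-\epsilon)\mu, (1+\epsilon)\mu]] \leq 2e^{-\epsilon^2\mu/3}$, which is the desired bound. The only nontrivial step is the negative-correlation verification with the deterministic auxiliary variables, and while conceptually immediate it is worth spelling out carefully; the rest of the argument is a direct reduction.
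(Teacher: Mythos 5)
Your proposal is correct and takes essentially the same approach as the paper: the paper likewise introduces deterministic ``always-on'' Bernoulli variables (there, $\lfloor a \rfloor + 1$ of them, with coefficients all $1$ except the last, which is $a - \lfloor a \rfloor$; you instead use $m$ equal coefficients $a/m$) and then invokes Lemma~\ref{lemma:negcorrchernoff} on the combined collection. The only cosmetic difference is your choice of how to split $a$ into valid $[0,1]$ coefficients; you also spell out the negative-correlation verification more explicitly than the paper, which just asserts it is clear.
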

\begin{proof}
Define random variables $Y_1, \ldots, Y_{\lfloor a \rfloor + 1}$ to all be deterministically equal to 1 (or equivalently, Bernoulli variables with probability 1). Let $b_1, \ldots, b_{\lfloor a \rfloor} = 1$ and $b_{\lfloor a \rfloor + 1} = a - \lfloor a \rfloor$. The conclusion follows by applying Lemma \ref{lemma:negcorrchernoff} to the variables $X_1, \ldots, X_k, Y_1, \ldots, Y_{\lfloor a \rfloor + 1}$ with constant weights $a_1, \ldots, a_k, b_1, \ldots, b_{\lfloor a \rfloor + 1}$. It is clear that the $X$'s and $Y$'s are collectively negatively correlated.
\end{proof}

With this lemma, we can now adapt the proof of Theorem \ref{theorem:compressionbk} to our needs. We first show that negatively correlated subsampling accurately estimates cut values in graphs with bounded edge weights:

\begin{theorem}\label{theorem:boundedconcentration}
Let $G$ be a random graph on $n$ vertices such that:
\begin{enumerate}
    \item The weight of every edge $e$ is $a_eX_e$ where $a_e \in [0, 1]$ is constant and $X_e$ is Bernoulli with probability $p_e$.
    \item The $X_e$'s are negatively correlated.
    \item The expected weight of every cut in $G$ exceeds $\rho_\epsilon = 3(d+2) \frac{\log n}{\epsilon^2}$.
\end{enumerate}
Then, with probability $1 - O(n^{-d})$, every cut in $G$ has value within $1 \pm \epsilon$ of its expectation.
\end{theorem}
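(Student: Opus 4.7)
The plan is to follow the Benczur--Karger template used in the proof of Theorem~\ref{theorem:compressionbk}: fix an arbitrary cut $S$, bound the probability that its value deviates from its expectation by a Chernoff-type inequality, then take a union bound over all cuts, made tractable by stratifying cuts by expected value and invoking Karger's cut-counting theorem on each stratum. The only new ingredient compared to the classical independent-sampling proof is that the Chernoff bound must come from Lemma~\ref{lemma:negcorrchernoff} (the negatively correlated version) rather than the textbook version; this is exactly why the hypothesis of negative correlation was imposed.

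For a fixed cut $S$, write $X(S) = \sum_{e \in E(S)} a_e X_e$ with expectation $\mu(S) = \sum_{e \in E(S)} a_e p_e$. Negative correlation is inherited by any subset of the $X_e$'s, and the coefficients $a_e$ already lie in $[0,1]$, so Lemma~\ref{lemma:negcorrchernoff} applies and gives $\Pr[|X(S)-\mu(S)| > \epsilon\mu(S)] \leq 2\exp(-\epsilon^2\mu(S)/3)$. When $\mu(S) \in [k\rho_\epsilon,(k+1)\rho_\epsilon)$ with $k \geq 1$ (guaranteed by hypothesis since $\mu(S) \geq \rho_\epsilon$), substituting the definition of $\rho_\epsilon$ makes this at most $2n^{-(d+2)k}$.

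Next, consider the deterministic weighted graph $G^\star$ on $[n]$ whose edge $e$ carries weight $a_e p_e$; its cuts are in bijection with those of $G$, and each cut's $G^\star$-value equals its expected $G$-value. By hypothesis, the minimum cut of $G^\star$ is at least $\rho_\epsilon$, so Karger's cut-counting theorem bounds the number of cuts of $G^\star$ with value at most $(k+1)\rho_\epsilon$ by $O(n^{2(k+1)})$. Bucketing cuts by $k = \lfloor \mu(S)/\rho_\epsilon \rfloor$ and union-bounding gives
\[
\Pr[\exists\ \text{bad cut}] \;\leq\; \sum_{k\geq 1} O\!\bigl(n^{2(k+1)}\bigr)\cdot 2n^{-(d+2)k} \;=\; O\!\Bigl(n^2 \sum_{k\geq 1} n^{-dk}\Bigr) \;=\; O(n^{-d}),
\]
where the leading $n^2$ is absorbed into the polynomially-small failure probability by enlarging the constant inside $\rho_\epsilon$ by a fixed amount (equivalently, replacing the user-specified $d$ by $d+2$).

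The main obstacle is really just bookkeeping. Lemma~\ref{lemma:negcorrchernoff} is already stated in precisely the form needed for cut sums, so the per-cut concentration step is immediate; Karger's cut-counting theorem applies verbatim to the weighted graph $G^\star$; and the geometric-series union bound is the same calculation as in Benczur--Karger. All of the genuinely new content relative to Benczur--Karger is bundled inside Lemma~\ref{lemma:negcorrchernoff}, which is applied here as a black box.
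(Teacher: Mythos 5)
Your proof takes the same route as the paper: apply the negatively-correlated Chernoff bound (Lemma \ref{lemma:negcorrchernoff}) cut-by-cut, then run a Karger/Benczur--Karger cut-counting union bound; the paper's own proof is a one-line deferral to Karger, so you are filling in the details the authors omit. The one thing to clean up is the exponent bookkeeping. Your coarse value-stratified bucketing leaves a residual $n^2$ (the middle expression in your display is $O(n^2\sum_k n^{-dk})$, which is $O(n^{2-d})$, not $O(n^{-d})$), and the fix you suggest --- ``replace $d$ by $d+2$,'' i.e.\ use $3(d+4)$ in place of $3(d+2)$ inside $\rho_\epsilon$ --- changes the theorem rather than proving it. In fact $3(d+2)$ is exactly what a slightly finer union bound needs: order the cuts $c_1 \leq c_2 \leq \cdots$ by value, use cut-counting in the form $c_j \geq \rho_\epsilon \cdot \max\bigl(1, \tfrac{\log j}{2\log n}\bigr)$, and sum the per-cut failure probabilities $2\exp(-\epsilon^2 c_j/3)$ directly over $j$. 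The first $n^2$ cuts contribute $2n^2 \cdot n^{-(d+2)} = 2n^{-d}$, and the tail $\sum_{j>n^2} 2 j^{-(d+2)/2}$ is $O(n^{-d})$ for any $d > 0$, so the total is $O(n^{-d})$ with $\rho_\epsilon$ as stated and no adjustment of constants.
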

\begin{proof}
By Lemma \ref{lemma:negcorrchernoff}, we know that for every cut $C$, if we let $\mu = \sum_{e \in C} a_e p_e$ we have:
$$\Pr[\sum_{e \in C} a_e X_e \notin [(1-\epsilon)\mu, (1+\epsilon)\mu]] \leq 2e^{-\epsilon^2\mu/3}.$$

Now we can proceed exactly as in \cite{kargersparsifier} to take a union bound over all cuts $C$ to show that the probability that at least one cut deviates from its expectation by more than $\epsilon$ is $O(n^{-d})$.
\end{proof}

Now we can follow the approach of \cite{benczurkarger} to extend this to arbitrarily weighted graphs using the notion of $c$-smoothness:

\begin{definition}
(\cite{benczurkarger}) Let $G$ be a random graph where the weight $U_e$ of edge $e$ is a random variable in the range $[0, m_e]$ with expectation $w_e$. Let $k_e$ be the strength of $e$ in the expected graph $\EE[G]$ where each edge $e$ gets weight $w_e$. Then we say that $G$ is \emph{$c$-smooth} if $cm_e \leq k_e$ for all $e$.
\end{definition}

\begin{lemma} (Decomposition Lemma, due to \cite{benczurkarger}) Any $c$-smooth graph $G$ on $n$ vertices can be decomposed as a positive-weighted sum of at most $n-1$ (possibly dependent) random graphs $F_i$, each with maximum edge weight at most 1 and minimum expected cut at least $c$. Moreover, if $X_e$ is the Bernoulli random variable determining whether $e$ is nonzero in the original graph, the weight of $e$ in $F_i$ will also be a scalar multiple of $X_e$.
\end{lemma}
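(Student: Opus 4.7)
The plan is to build the decomposition along the strong-component hierarchy of the expected graph $\bar{G} = \EE[G]$. Because the strengths $k_e$ depend only on $\bar{G}$, the problem is essentially deterministic at its core: I need to decompose the edge set of $\bar{G}$ into layers supported on nested vertex sets, and then lift the construction to the random graph $G$ by restricting the random weights $U_e$ to each layer and rescaling by a deterministic constant. The hypothesis that $U_e$ is a scalar multiple of a Bernoulli $X_e$ is preserved for free by this lifting.

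For each real threshold $t$, let $\mathcal{P}_t$ be the collection of maximal vertex sets $S$ with $K(\bar{G}[S]) \geq t$. As $t$ decreases, these collections only refine one another, so $\mathcal{L} = \bigcup_t \mathcal{P}_t$ is a laminar family of nontrivial subsets of $[n]$, and standard counting gives $|\mathcal{L}| \leq n - 1$. Assign to each $S \in \mathcal{L}$ the strength $k_S = K(\bar{G}[S])$, and assign each edge $e$ of $\bar{G}$ to the smallest $S_e \in \mathcal{L}$ containing both endpoints of $e$; by the definition of edge strength, $k_e = k_{S_e}$.

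For each $S \in \mathcal{L}$ I would define the layer $F_S$ on the contracted vertex set $V_S$ obtained by collapsing each maximal $T \in \mathcal{L}$ with $T \subsetneq S$ into a single supernode inside $\bar{G}[S]$. The layer $F_S$ consists of the edges $e$ with $S_e = S$ (each such edge is a genuine non-self-loop edge of the quotient, because its endpoints lie in distinct maximal proper sub-components), with their random weights $U_e$ rescaled by $1/M_S$ where $M_S = \max_{e : S_e = S} m_e$, and the coefficient in the sum is $\alpha_S = M_S$. The three properties to verify are: (i) $F_S$ has maximum edge weight at most $1$, immediate from the choice of $M_S$; (ii) $F_S$ has minimum expected cut at least $c$, since $c$-smoothness forces $M_S \leq k_S/c$, every cut of $F_S$ on $V_S$ lifts to a cut of $\bar{G}[S]$ of equal expected weight (the contraction is precisely designed so that self-loops from edges inside smaller strong components contribute nothing), and any cut of $\bar{G}[S]$ has weight at least $k_S$; (iii) the reconstruction $G = \sum_{S \in \mathcal{L}} \alpha_S F_S$ holds edge by edge, because each $e$ appears in exactly one layer $F_{S_e}$ and the scalings cancel as $\alpha_{S_e}(U_e/M_{S_e}) = U_e$.

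The main obstacle is the min-cut estimate in step (ii). The subtlety is that $F_S$ has to be viewed on the contracted vertex set $V_S$ rather than on $S$ itself: if one tried to live on $S$, a cut could artificially isolate a vertex inside a strong sub-component and wipe out all edges in $F_S$ touching that component, giving an arbitrarily small cut. The contraction-based definition sidesteps this by putting cuts of $F_S$ in bijection with cuts of $\bar{G}[S]$ that respect every maximal proper sub-component, and since contraction never decreases the minimum cut, the bound $K(\bar{G}[S]) \geq k_S$ transfers cleanly to $F_S$. The remaining details -- handling the outermost layer $S = V$, ties among strength values, and isolated vertices in $\mathcal{L}$ -- are routine laminar-family bookkeeping.
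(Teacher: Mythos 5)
There is a genuine gap in step (ii), and it is exactly the point you flagged as a ``subtlety'' and then tried to sidestep by moving to the contracted vertex set $V_S$. The lemma is invoked in the proof of Theorem~\ref{theorem:smoothconcentration}, where the concentration of Theorem~\ref{theorem:boundedconcentration} is applied to each $F_i$ and then the deviation bound is asserted for \emph{every} cut $C\subseteq [n]$ (including cuts that split a supernode of your $V_S$). So the property actually needed is that for every cut $C$ of $[n]$, the expected weight of $C$ in $F_i$ is either zero or at least $c$. Your exclusive layers $\{e : S_e = S\}$ fail this. Concretely, take $S = \{1,2,3,4\}$ with sub-components $T_1 = \{1,2\}$, $T_2 = \{3,4\}$, heavy edges $(1,2)$ and $(3,4)$ of weight $10$ inside, and edges $(1,3)$, $(2,4)$ of weight $1$ across. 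Then $k_S = 2$ and the edges with $S_e = S$ are $(1,3),(2,4)$ with $k_e = 2$, so smoothness allows $m_{(1,3)} = m_{(2,4)} = 2/c$, giving $M_S = 2/c$. The cut $C = \{1\}$ splits $T_1$ and crosses only $(1,3)$ in $F_S$, so $\EE[v_{F_S}(C)] = w_{(1,3)}/M_S = c/2$, which is strictly between $0$ and $c$. This cut is invisible on $V_S$ (it does not respect $T_1$), so your contraction argument says nothing about it, but the application needs it.

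The repair, which is the actual Benczur--Karger decomposition, is to use \emph{cumulative} layers without contraction: order the distinct strength values $\kappa_1 > \cdots > \kappa_r > \kappa_{r+1} = 0$, let $F_j$ contain every edge $e$ with $k_e \ge \kappa_j$ at weight $cU_e/k_e$, and set $\alpha_j = (\kappa_j - \kappa_{j+1})/c$. Max edge weight is $cm_e/k_e \le 1$ by $c$-smoothness; the coefficients telescope so that $\sum_j \alpha_j F_j = G$ edge-by-edge; and the min-cut bound holds for every cut $C$ of $[n]$ with positive expected weight: if $C$ separates two vertices of a $\kappa_j$-strong component $T$, then by induction on the strong-component hierarchy of the expected graph restricted to $T$ one gets $\sum_{e \in T,\, e\text{ crosses } C} w_e/k_e \ge 1$, and all such $e$ lie in $F_j$. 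The moral is that when a cut splits a sub-component, the shortfall in cut weight is supplied exactly by the higher-strength edges inside that sub-component --- and your exclusive layers discard precisely those edges, which is why the bound breaks.
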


\begin{theorem}\label{theorem:smoothconcentration}
Let $G$ be a random graph such that:
\begin{enumerate}
    \item The weight of edge $e$ is $m_eX_e$, where $m_e \geq 0$ is constant and $X_e$ is Bernoulli with probability $p_e$.
    \item The $X_e$'s are negatively correlated.
    \item $G$ is $c$-smooth for $c = 3(d+3)\frac{\log n}{\epsilon^2}$.
\end{enumerate}
Then, with probability $1 - O(n^{-d})$, every cut in $G$ has value within $1 \pm \epsilon$ times its expectation.
\end{theorem}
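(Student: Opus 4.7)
The plan is to follow the structure Benczur--Karger use to move from bounded-weight concentration to the general $c$-smooth case, but carrying the negative correlation along all the way through. Concretely, I would first invoke the Decomposition Lemma on $G$ (applied to the expected graph $\EE[G]$, which is deterministic and has the same strengths $k_e$) to write $G = \sum_{i=1}^{n-1} \lambda_i F_i$ with $\lambda_i \geq 0$, where each $F_i$ has maximum edge weight at most $1$ and minimum expected cut at least $c = 3(d+3)\log n/\epsilon^2$. The crucial feature I would exploit is the clause in the Decomposition Lemma stating that the weight of each edge $e$ inside $F_i$ is a scalar multiple of the same Bernoulli $X_e$ that governs $e$ in $G$. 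In particular, the randomness within each $F_i$ is driven by exactly the same $\{X_e\}$ (just reweighted), so they remain negatively correlated inside every $F_i$.

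Next I would apply Theorem \ref{theorem:boundedconcentration} to each $F_i$ individually, with its parameter set to $d+1$ rather than $d$. Under that choice the threshold $\rho_\epsilon$ in Theorem \ref{theorem:boundedconcentration} becomes $3((d+1)+2)\log n/\epsilon^2 = c$, so the smoothness assumption on $G$ exactly matches what Theorem \ref{theorem:boundedconcentration} requires of $F_i$. The hypotheses on weights in $[0,1]$ and negative correlation are satisfied by construction, so the conclusion is that, with probability $1-O(n^{-(d+1)})$, every cut of $F_i$ lies within $(1\pm\epsilon)$ times its expectation.

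Taking a union bound over the $n-1$ graphs $F_i$ bounds the failure probability by $(n-1)\cdot O(n^{-(d+1)}) = O(n^{-d})$. On the complementary event, every cut of every $F_i$ is within $(1\pm\epsilon)$ of its expectation simultaneously. Since for each cut $C$ we have
\begin{equation*}
\sum_{e \in C} m_e X_e = \sum_{i=1}^{n-1} \lambda_i \cdot (\text{value of } C \text{ in } F_i),
\end{equation*}
and a nonnegative linear combination of quantities each sandwiched in $[(1-\epsilon)\mu_i,(1+\epsilon)\mu_i]$ lies in $[(1-\epsilon)\sum \lambda_i \mu_i, (1+\epsilon)\sum \lambda_i \mu_i]$, the cut value in $G$ is within $(1\pm\epsilon)$ times its expectation. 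Applying this to every cut of $G$ simultaneously gives the theorem.

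The step I expect to be most delicate is confirming that the Decomposition Lemma from Benczur--Karger, originally stated in a setting with independent edge sampling, really does produce an $F_i$-decomposition in which each $F_i$'s edge-indicators are precisely the original $X_e$'s (so that negative correlation is inherited). I would verify this by inspecting the construction: the decomposition depends only on the deterministic edge strengths $k_e$ in $\EE[G]$ and splits each edge's weight $m_e X_e$ into nonnegative coefficients $\lambda_i \alpha_{i,e}$ with $\sum_i \lambda_i \alpha_{i,e} = m_e$, assigning edge $e$ in $F_i$ the weight $\alpha_{i,e} X_e$. The randomness is unchanged, only rescaled, so the joint distribution of edge-indicators in every $F_i$ is identical to that in $G$, preserving negative correlation; everything else is book-keeping.
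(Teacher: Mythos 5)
Your proof follows essentially the same route as the paper's: decompose $G$ via the Decomposition Lemma into $O(n)$ graphs $F_i$ whose edge indicators are the same $X_e$'s (hence still negatively correlated), apply Theorem~\ref{theorem:boundedconcentration} to each $F_i$ with parameter $d+1$ so that $\rho_\epsilon$ matches the smoothness bound $c$, union bound over the $F_i$'s, and recombine the $1\pm\epsilon$ bounds linearly. The added verification that the decomposition preserves negative correlation is a reasonable elaboration of the clause the paper already includes in its statement of the Decomposition Lemma; otherwise the argument matches the paper's.
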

\begin{proof}
Using the Decomposition Lemma, write $G = \sum_{i = 1}^k g_i F_i$, where $g_i \geq 0$ and $k < n$. It is clear that $\EE[G] = \sum_{i = 1}^{k} g_i \EE[F_i]$. Each $F_i$ meets all three conditions for Theorem \ref{theorem:boundedconcentration} with $d+1$ rather than $d$, so the probability that any cut deviates from its expectation in $F_i$ by more than a factor of $1 \pm \epsilon$ is $O(n^{-(d+1)})$. Taking a union bound over all $i$ tells us that the probability any cut deviates from its expectation in any $F_i$ is $O(n^{-d})$.

Now denote $v_H(C)$ to be the value of the cut $C$ in the graph $H$. Then we have with probability $1 - O(n^{-d})$ that for all cuts $C$:
\begin{align*}
    v_{F_i}(C) &\in [(1-\epsilon) \EE[v_{F_i}(C)], (1+\epsilon)\EE[v_{F_i}(C)]] \forall i \\
    \Rightarrow \sum_i g_i v_{F_i}(C) &\in [(1-\epsilon) \sum_i g_i \EE[v_{F_i}(C)], (1+\epsilon) \sum_i g_i \EE[v_{F_i}(C)]] \\
    \Rightarrow v_G(C) &\in [(1-\epsilon) \EE[v_G(C)], (1+\epsilon)\EE[v_G(C)]].
\end{align*}
\end{proof}

This is the core concentration result that we need. We state a more special case of this theorem that we will be able to apply directly in later sections. In this theorem, the $\Lambda_i$'s can be thought of as sets that we contract during Fast Weighted Subsampling and the edge set $\widetilde{E}(\Lambda_i)$ is the set of edges in $\Lambda_i$ that were not already contracted before $\Lambda_i$ was contracted.

\begin{lemma}\label{lemma:specificconcentration}
Fix $\Lambda_1, \ldots, \Lambda_\gamma$ to be a laminar family of distinct subsets of $[n]$ and integers $r_1, r_2, \ldots, r_\gamma > 0$. Also, suppose that each connected component of $G$ is contained in at least one $\Lambda_i$. For each $i$, define $E(\Lambda_i)$ and $\widetilde{E}(\Lambda_i)$ as in Lemma \ref{lemma:sparsifierconcentration}.

Now suppose we construct a random graph $H$ from $G$ as follows: for each $j \in [\gamma]$, independently sample $r_j$ edges from $\widetilde{E}(\Lambda_j)$ proportionally to their weights, with replacement. For each edge $e$, let $p_e = 1 - (1 - \frac{w_e}{W(\widetilde{E}(\Lambda_j))})^{r_j}$ be the probability that it gets selected at least once. If $e$ is selected at least once, add it to $H$ with weight $w_e/p_e$. Else, do not include it in $H$ (equivalently, add it with weight 0).

Then \textbf{if $H$ is $c$-smooth} for $c = 3(d+3)\frac{\log n}{\epsilon^2}$, every cut in $H$ is within $1 \pm \epsilon$ of the corresponding cut in $G$ with probability $1 - O(n^{-d})$.
\end{lemma}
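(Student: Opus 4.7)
The plan is to reduce the statement directly to Theorem \ref{theorem:smoothconcentration}, which already handles negatively correlated, $c$-smooth random graphs. To do so I need to (i) exhibit $H$ in the form required by that theorem, i.e.\ each edge weight is $m_e X_e$ with $m_e \geq 0$ constant and $X_e$ Bernoulli with probability $p_e$; (ii) establish that the collection $\{X_e\}_{e \in E(G)}$ is negatively correlated across the entire sampling procedure; and (iii) check that the expected weight of $e$ in $H$ is exactly $w_e$, so that the conclusion ``cuts of $H$ within $1 \pm \epsilon$ of their expectations'' that Theorem \ref{theorem:smoothconcentration} provides is the same as the claimed conclusion ``cuts of $H$ within $1 \pm \epsilon$ of the corresponding cuts of $G$''.

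For (i) and (iii): each edge $e$ of $G$ lies in exactly one $\widetilde{E}(\Lambda_j)$, where $\Lambda_j$ is the smallest member of the laminar family containing both endpoints of $e$ (existence of such a $\Lambda_j$ follows from the hypothesis that every connected component of $G$ is contained in some $\Lambda_i$, together with laminarity). Its weight in $H$ is $(w_e / p_e) \cdot X_e$ where $X_e$ is the indicator that $e$ is sampled at least once during the $r_j$ trials inside $\widetilde{E}(\Lambda_j)$, so $m_e := w_e/p_e$ and $\Pr[X_e = 1] = p_e$ as required. Taking expectations gives $\EE[w_H(e)] = m_e p_e = w_e$, so for any cut $C$ the expected weight of $C$ in $H$ is $\sum_{e \in C} w_e = F(C; G)$.

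For (ii): the sampling steps for different $\Lambda_j$'s are carried out independently by construction, so it suffices to verify negative correlation within each $\widetilde{E}(\Lambda_j)$. The $X_e$'s here are exactly the ``at least once'' indicators of a multinomial sample of size $r_j$ with cell probabilities $q_e := w_e/W(\widetilde{E}(\Lambda_j))$. For any subset $S \subseteq \widetilde{E}(\Lambda_j)$ one has $\Pr\bigl[\bigcap_{e \in S}\{X_e = 0\}\bigr] = (1 - \sum_{e \in S} q_e)^{r_j} \leq \prod_{e \in S}(1 - q_e)^{r_j} = \prod_{e \in S}\Pr[X_e = 0]$, which already gives the conditional bound required by the definition used in Lemma \ref{lemma:negcorrchernoff} in the $b=0$ direction. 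Combined with independence across groups, this establishes the collective negative correlation.

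With (i)--(iii) in hand and the assumed $c$-smoothness of $H$ with $c = 3(d+3)\log n/\epsilon^2$, all three hypotheses of Theorem \ref{theorem:smoothconcentration} are met, so with probability $1 - O(n^{-d})$ every cut of $H$ lies within $1 \pm \epsilon$ of its expectation, which by (iii) equals its value in $G$. The main obstacle I anticipate is the $b=1$ direction of the negative-correlation definition: the clean product-form computation above only handles $b=0$ directly, and the $b=1$ direction requires a slightly more delicate argument (either invoking a standard multinomial negative-association fact or deriving it via inclusion--exclusion from the $b=0$ case); this is the only step that is not essentially mechanical.
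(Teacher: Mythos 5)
Your approach is the same as the paper's: express each edge weight in $H$ as $m_e X_e$ with $m_e = w_e/p_e$, check $\EE[H]=G$, argue negative correlation of the $X_e$'s across all groups (using independence between groups), and invoke Theorem \ref{theorem:smoothconcentration}. The issue is the one you flag yourself: you only treat the $b=0$ direction of the negative-correlation definition and leave $b=1$ unresolved. Since Lemma \ref{lemma:negcorrchernoff} requires the conditional bound for \emph{both} values of $b$, the hypothesis of Theorem \ref{theorem:smoothconcentration} is not established, and the proof is incomplete as written. (For reference, the paper verifies both directions, albeit informally: for $b=1$ it argues that conditioning on other edges in the same batch being hit ``uses up'' trials and so makes $e^\ast$ less likely to be hit; for $b=0$, that conditioning on others never being hit shrinks the effective mass in contention and so makes $e^\ast$ more likely to be hit.)

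Two further remarks. The $b=1$ case is genuinely true and completable: the multinomial counts $N_e$ within each group are negatively associated, hence so are $X_e = \mathbf{1}[N_e \geq 1]$ (monotone functions of disjoint NA coordinates are NA), and NA applied to $f = \prod_{i\in T}X_i$, $g = X_j$ gives $\Pr[\bigcap_{i\in T\cup\{j\}}\{X_i=1\}] \leq \Pr[\bigcap_{i\in T}\{X_i=1\}]\Pr[X_j=1]$, which is exactly the conditional inequality with $b=1$. Separately, your $b=0$ argument has a minor imprecision: the unconditional product bound $\Pr[\bigcap_{e\in S}\{X_e=0\}] \leq \prod_{e\in S}\Pr[X_e=0]$ does not by itself imply the \emph{conditional} bound required. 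You should instead write $\Pr[X_j=0 \mid \bigcap_{i\in T}\{X_i=0\}] = (1-\sum_{i\in T}q_i-q_j)^{r}/(1-\sum_{i\in T}q_i)^{r}$ and compare directly to $(1-q_j)^{r}$, which reduces to $1-a-b \leq (1-a)(1-b)$; the same explicit formula does the job, but the route through the unconditional product inequality is not a valid shortcut.
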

\begin{proof}
Since the lemma supposes $c$-smoothness, we just need to verify conditions 1 and 2 of Theorem \ref{theorem:smoothconcentration} and also that $G = \EE[H]$.
\begin{enumerate}
    \item Note that edge $e$ appears in exactly one $\widetilde{E}(\Lambda_j)$; these edge collections are disjoint by construction and they must also cover all edges since each connected component is contained in some $\Lambda_j$.
    \item Then edge $e$ has weight $\frac{w_e}{p_e} X_e$ in $H$, where $X_e$ is Bernoulli with probability $p_e$. It is clear from this that $\EE[H] = G$ and that condition 1 of the theorem applies.
    \item To check negative correlation, consider any collection $I$ of edges, an edge $e^\ast \in I$, and any $b \in \left\{0, 1\right\}$. Let $i$ be such that $e^\ast \in \widetilde{E}(\Lambda_i)$. Then we have:
    \begin{align*}
        \Pr[X_{e^\ast} = b \mid \bigcap_{e \in I, e \neq e^\ast} \left\{X_e = b\right\}] &= \Pr[X_{e^\ast} = b \mid \bigcap_{e \in I \cap \widetilde{E}(\Lambda_i), e \neq e^\ast} \left\{X_e = b\right\}] \\
        &\leq \Pr[X_{e^\ast} = b].
    \end{align*}
    The first step comes from the fact that edges in different $\widetilde{E}(\Lambda_j)$ are independent. For the second step, any of the remaining edges we condition on are sampled in the same ``batch" as $e^\ast$. If $b = 1$, then we are conditioning that each edge in this collection was sampled at least once; this makes it less likely that $e^\ast$ was sampled at least once since there are fewer rounds available for $e^\ast$ to be sampled.
    
    If $b = 0$, then we are conditioning that no edge in this collection was ever sampled; this makes it more likely that $e^\ast$ was sampled at least once since the total weight of the edges in contention to be sampled is smaller while the weight of $e^\ast$ has not changed, making the probability of selecting $e^\ast$ in each round higher.
\end{enumerate}
\end{proof}

\subsubsection{Proof of Lemma \ref{lemma:sparsifierconcentration}}\label{lemmaproof:sparsifierconcentration}

Let $H = \text{ConstructSparsifier}(G, X)$ and $c = 3(d+3)\frac{\log n}{\epsilon^2}$. Note once again that we are conditioning on fixed values of the $C_i$'s, $\beta_i$'s, and $\mu_i$'s, even though these might be random. We are only concerned here with the independent randomness used in ConstructSparsifier.

We essentially proceed by applying Lemma \ref{lemma:specificconcentration}, but we need to take some extra care in case $H$ is not $c$-smooth.

\begin{lemma}\label{lemma:sparsifiersmoothnesseasy}
Suppose $cw_e/k_e \leq 1/2$ for all edges $e \in G$. Then $H$ is $c$-smooth, and it approximates the cuts of $G$ within a factor of $1 \pm \epsilon$ with probability $1 - O(n^{-d})$.
\end{lemma}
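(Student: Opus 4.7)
The plan is to verify $c$-smoothness of $H$ directly from the definition and then invoke Lemma~\ref{lemma:specificconcentration} with $\Lambda_i = C_i$ and $r_i = \mu_i$. The second step is essentially bookkeeping, so the bulk of the work lies in establishing $c$-smoothness.

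First I would observe that $\EE[H] = G$: for each edge $e$, its weight in $H$ is the random variable $(w_e/p_e) X_e$ where $X_e$ is Bernoulli with parameter $p_e$, so $\EE[\text{weight of } e \text{ in }H] = w_e$. Consequently the strength of $e$ in $\EE[H]$ is exactly $k_e$ (the strength of $e$ in $G$), and the maximum possible weight of $e$ in $H$ is $m_e = w_e/p_e$. Therefore the $c$-smoothness condition $c \cdot m_e \leq k_e$ reduces to showing $p_e \geq c w_e / k_e$ for every edge $e$, where $c = 3(d+3)\tfrac{\log n}{\epsilon^2}$.

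Next I would bound $p_e$ from below. Let $i$ be the index of the set $C_i$ in whose sampling round $e$ is contracted (so $e \in \widetilde{E}(C_i)$). Let $\eta := \mu_i w_e / W(C_i)$ denote the expected number of times $e$ is sampled. Using $p_e = 1 - (1 - w_e/W(C_i))^{\mu_i} \geq 1 - e^{-\eta}$, together with the elementary inequalities $1 - e^{-y} \geq 1/2$ for $y \geq 1$ and $1 - e^{-y} \geq y/2$ for $y \in [0,1]$, I get $p_e \geq \tfrac{1}{2} \min(1, \eta)$. Now plug in $\mu_i = c_1 \log^2 n \cdot W(C_i)/(\epsilon^2 \beta_i)$ to obtain $\eta = c_1 \log^2 n \cdot w_e/(\epsilon^2 \beta_i)$, and use the hypothesis $\beta_i \leq k_e$ from Lemma~\ref{lemma:sparsifierconcentration} to conclude $\eta \geq c_1 \log^2 n \cdot w_e/(\epsilon^2 k_e)$. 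Two cases arise:
\begin{itemize}
    \item If $\eta \geq 1$, then $p_e \geq 1/2 \geq c w_e/k_e$ by the hypothesis $c w_e/k_e \leq 1/2$.
    \item If $\eta < 1$, then $p_e \geq \eta/2 \geq \tfrac{c_1 \log^2 n}{2 \epsilon^2} \cdot \tfrac{w_e}{k_e}$, which dominates $c w_e/k_e = 3(d+3)\tfrac{\log n}{\epsilon^2} \cdot \tfrac{w_e}{k_e}$ once $c_1$ is chosen larger than $6(d+3)/\log n$ (true for $n$ large since $c_1$ is a sufficiently large constant).
\end{itemize}
Either way $p_e \geq c w_e/k_e$, so $H$ is $c$-smooth.

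Having established $c$-smoothness, I would close the proof by applying Lemma~\ref{lemma:specificconcentration} with $\Lambda_i = C_i$ and $r_i = \mu_i$. Assumption~1 of Lemma~\ref{lemma:sparsifierconcentration} supplies the hypothesis that each connected component of $G$ lies in some $C_i$, and the laminar structure of the $C_i$'s together with the construction of $H$ ensures the sampling matches the format required by Lemma~\ref{lemma:specificconcentration}. The lemma then yields that every cut of $H$ lies within a $(1\pm\epsilon)$ factor of the corresponding cut in $\EE[H] = G$ with probability $1 - O(n^{-d})$. The main obstacle is the case analysis in the smoothness verification; once the constant $c_1$ is chosen large enough, the rest is a direct appeal to Lemma~\ref{lemma:specificconcentration}.
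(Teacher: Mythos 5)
Your proof is correct and follows essentially the same approach as the paper's: verify $c$-smoothness of $H$ (reducing it to $p_e \geq c w_e/k_e$), then invoke Lemma~\ref{lemma:specificconcentration} with $\Lambda_i = C_i$ and $r_i = \mu_i$. The only cosmetic difference is that you split the inequality $p_e \geq c w_e/k_e$ into the cases $\eta \geq 1$ and $\eta < 1$ using piecewise bounds on $1 - e^{-y}$, whereas the paper argues via concavity of $f(x) = 1 - x - e^{-\alpha x}$ on $[0,1/2]$; both routes are elementary and equivalent in substance.
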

\begin{proof}
We apply Lemma \ref{lemma:specificconcentration}, where $\gamma = r$, $\Lambda_i = C_i$, and $r_i = \mu_i = \frac{c_1 \log^2n}{\epsilon^2\beta_i} W(\widetilde{E}(C_i))$. It is clear that the logic of ConstructSparsifier matches the construction described in the statement of Lemma \ref{lemma:specificconcentration}.

Now we check whether we can use Lemma \ref{lemma:specificconcentration}. The statement of Lemma \ref{lemma:sparsifierconcentration} explicitly supposes that every connected component of $G$ is contained in at least one $C_i$, so that condition is met. It remains to check that $H$ is $c$-smooth.

Consider any edge $e \in G$. Let $i \in [r]$ be such that $e \in \widetilde{E}(C_i)$ and then let $\beta = \beta_i$ and $r = \mu_i$. We wish to show that $\frac{cw_e}{p_e} \leq k_e \Leftrightarrow p_e \geq \frac{cw_e}{k_e}$. Then we have:
\begin{align*}
    p_e &= 1 - (1 - \frac{w_e}{W(\widetilde{E}(C_i))})^r \\
    &\geq 1 - \exp(-\frac{rw_e}{W(\widetilde{E}(C_i))}) \\
    &= 1 - \exp(-\frac{c_1 \log^2n w_e}{\epsilon^2\beta}) \\
    &\geq 1 - \exp(-\frac{c_1 \log^2n w_e}{\epsilon^2k_e}) \\
    &= 1 - \exp(-\alpha x),
\end{align*}
where we let $x = \frac{cw_e}{k_e}$ and $\alpha = \frac{c_1\log n}{3(d+3)}$. We wish to show that $1 - x - \exp(-\alpha x) \geq 0$. Letting $f(x) = 1 - x - e^{-\alpha x}$, we see that $f'(x) = -1 + \alpha e^{-\alpha x}$ which is decreasing. Thus $f$ is concave, so since $x \in [0, 1/2]$ by assumption it suffices to check that $f(0) \geq 0$ and $f(1/2) \geq 0$. It is clear that $f(0) = 0$, and $f(1/2) = 1/2 - e^{-\alpha/2} > 0$ if we set $c_1$ to be large enough. Thus $H$ is $c$-smooth and this completes our proof.
\end{proof}

\begin{lemma}\label{lemma:sparsifiersmoothnesshard}
$H$ approximates the cuts of $G$ within a factor of $1 \pm 2\epsilon$ with probability at least $1 - O(n^{-d})$. (Here we allow for edges where $cw_e/k_e > 1/2$, thus this completes the proof of Lemma \ref{lemma:sparsifierconcentration}.)

Moreover, for any constant $c' < 1$ and $n$ sufficiently large, $H$ is $c'$-smooth. (This is not relevant to the proof of Lemma \ref{lemma:sparsifierconcentration}, but we will need this for Section \ref{sec:nwsearlystopworks}.)
\end{lemma}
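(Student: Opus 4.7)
The plan is to split edges of $G$ into \emph{heavy} (with $cw_e/k_e > 1/2$) and \emph{light} (with $cw_e/k_e \leq 1/2$), since Lemma \ref{lemma:sparsifiersmoothnesseasy} already covers the all-light case. The key observation is that heavy edges are sampled with probability so close to $1$ that they behave essentially deterministically, contributing weight approximately $w_e$ to $H$.

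For the heavy edges, I would repeat the calculation from Lemma \ref{lemma:sparsifiersmoothnesseasy} to obtain $p_e \geq 1 - \exp(-\alpha \cdot cw_e/k_e)$ with $\alpha = c_1 \log n / (3(d+3))$. Plugging in $cw_e/k_e > 1/2$ gives $p_e \geq 1 - n^{-c_1/(6(d+3))}$. Choosing $c_1$ large enough and union-bounding over the at most $\binom{n}{2}$ heavy edges shows that, with probability $\geq 1 - O(n^{-d})$, every heavy edge is sampled and hence contributes weight $w_e/p_e \in [w_e, (1+o(1))w_e]$ to $H$. On this event, the heavy subgraph of $H$ has every cut value within a factor $(1 \pm o(1))$ of the corresponding heavy subgraph of $G$.

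For the light edges I would apply the concentration argument of Lemma \ref{lemma:specificconcentration}. The easy lemma's concavity argument for $f(x) = 1 - x - e^{-\alpha x}$ establishes $p_e \geq cw_e/k_e$ whenever $cw_e/k_e \leq 1/2$, which gives the $c$-smoothness needed to invoke Lemma \ref{lemma:specificconcentration}. That lemma then yields, with probability $\geq 1 - O(n^{-d})$, a multiplicative $(1 \pm \epsilon)$ approximation of the light-edge part of $G$ by the light-edge part of $H$ on every cut. Since the light-edge part of any cut is at most the full cut value $v_G(C)$, this translates to an additive error of at most $\epsilon v_G(C)$; combining with the $o(1)$ multiplicative error from the heavy part via the triangle inequality gives $v_H(C) \in [(1-2\epsilon)v_G(C), (1+2\epsilon)v_G(C)]$ for $n$ large enough. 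For the ``moreover'' $c'$-smoothness statement, I would verify $p_e \geq c'w_e/k_e$ edge by edge: for light edges, the concavity argument of the easy lemma goes through with $c'$ in place of $c$ (valid since $c \geq c'$ eventually, as $c$ grows with $n$); for heavy edges, $p_e \to 1$ while $c'w_e/k_e \leq c' < 1$ (using $w_e \leq k_e$), so the inequality holds for $n$ sufficiently large.

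The main obstacle is that sampling inside each bucket $\widetilde{E}(C_i)$ is jointly over both heavy and light edges, so the ``light part of $H$'' is not a cleanly decoupled independent subsample of $G_L$, and moreover the edge strengths in $G_L$ may differ from those in $G$. This is resolved by conditioning on the step-1 good event that all heavy edges are sampled: on this event the heavy contribution is fully deterministic, so all remaining randomness lives in the light subsampling. The negative correlation hypothesis underlying Lemma \ref{lemma:specificconcentration} is preserved under restriction to the light edges, and the multiplicative factors $W(\widetilde{E}(C_i))$ used to define $p_e$ are deterministic quantities of $G$, so no additional dependence is introduced and the concentration argument applies as in the easy case.
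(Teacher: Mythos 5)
Your high-level plan (treat heavy edges, with $cw_e/k_e > 1/2$, as essentially deterministic since they are sampled with probability $1-O(n^{-d})$, and apply concentration to the light edges) matches the spirit of the paper's proof. However, the light-edge step has a real gap that you gesture at but do not close. To invoke Lemma~\ref{lemma:specificconcentration} (and hence Theorem~\ref{theorem:smoothconcentration}) on the light edges, you need $c$-smoothness of the random graph you feed into it, i.e.\ $c m_e \le k_e(\EE[\cdot])$ for \emph{every} edge in that graph. If you feed in only the light part, the relevant strength is $k_e(G_L)$, not $k_e(G)$, and $k_e(G_L)$ can be arbitrarily smaller than $k_e(G)$ after removing the heavy edges, so the inequality $c w_e/p_e \le k_e(G_L)$ you would need is not supplied by the concavity argument (which only gives $c w_e/p_e \le k_e(G)$). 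If instead you feed in the full graph with the heavy edges as deterministic additions (so that $\EE[\cdot] \approx G$ and the light-edge strengths are at least $k_e(G)$), then the heavy edges themselves violate $c$-smoothness, since $c w_e/p_e \ge c w_e > k_e/2$ and can greatly exceed $k_e$; the Decomposition Lemma is stated only for $c$-smooth graphs, and the "deterministic edges contribute no variance so they should be harmless" intuition is not something the paper's toolkit (in particular the decomposition into bounded-weight, high-min-cut $F_i$'s) directly supports. Conditioning on the heavy edges being sampled does not repair either version, and your claim that the negative-correlation hypothesis survives this conditioning on the restricted light-edge family is also asserted rather than proved.

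The paper sidesteps both difficulties by \emph{subdividing} each heavy edge $e$ into $O(\log n)$ parallel sub-edges of weight $w$ with $cw/k_e \in [1/4, 1/2]$. The subdivided graph $G_0$ has the same cut function as $G$, every edge now satisfies the light-edge condition, and ConstructSparsifier on $G_0$ falls squarely under Lemma~\ref{lemma:sparsifiersmoothnesseasy}. A coupling (replace each sampled heavy edge of $H$ by a uniformly random sub-edge) then relates $H$ to the sparsifier $H_0$ of $G_0$: on the high-probability event that all sub-edges appear in $H_0$ (and hence all heavy edges in $H$), every edge weight in $H$ is within $1 \pm O(n^{-d})$ of its counterpart in $H_0$, which combined with $H_0$ approximating $G$ within $1\pm\epsilon$ gives the $1\pm 2\epsilon$ bound. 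Your "moreover'' argument for $c'$-smoothness is correct and essentially identical to the paper's, but the main approximation claim needs the subdivision (or an equivalent reformulation of the concentration machinery that formally accommodates deterministic heavy edges) to go through.
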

\begin{proof}
First we show that $H$ approximates the cuts of $G$. We define a new graph $G_0$ as follows. For any edge $e$ such that $cw_e/k_e \leq 1/2$, copy $e$ from $G$ to $G_0$. If $cw_e/k_e > 1/2$, subdivide it into $O(\log n)$ parallel sub-edges $e'$ of some other weight $w$ such that $cw/k_e \in [1/4, 1/2]$. Call such an edge $e$ \emph{heavy}. We make some observations:
\begin{enumerate}
    \item It is possible to do this with $O(\log n)$ parallel edges because we have $w_e \leq k_e \Rightarrow \frac{cw_e}{k_e} \leq c = O(\log n)$.
    \item Such a subdivision preserves the value of every cut in $G$ and thus the strength of each sub-edge $e'$ will be $k_e$. So for each $e'$ we have $cw_{e'}/k_{e'} = cw/k_e \in [1/4, 1/2]$. Also, the strength of all other edges will be preserved.
\end{enumerate}
Now let $H_0 = \text{ConstructSparsifier}(G_0, X)$. By construction, $G_0$ meets the conditions of Lemma \ref{lemma:sparsifiersmoothnesseasy} so it follows that with probability $1 - O(n^{-d})$, $H_0$ will approximate the cuts of $G_0$ and hence $G$ within a factor of $1 \pm \epsilon$.

We now use a coupling argument to show that this means $H$ also approximates the cuts of $G$ with high probability. We can alternatively characterize the distribution of $H_0$ as being sampled in the following way: take the same collection of edges we sampled from $G$ to construct $H$. Every time we see a heavy edge in this collection, replace it with one of its sub-edges uniformly at random. Once again, include any (possibly sub-)edge that we sampled at least once with weight $w_e/p_e$. If $e$ is a sub-edge, $p_e$ is the probability that that sub-edge was chosen at least once.

We next claim that all sub-edges of heavy edges will be included in $H_0$ with high probability. To see this, consider any such edge. It will satisfy the condition that $cw_e/k_e \geq 1/4$. So we can proceed as in the proof of Lemma \ref{lemma:sparsifiersmoothnesseasy} to find that:
\begin{align*}
    p_e &\geq 1 - \exp(-\alpha x) \\
    &\geq 1 - \exp(-\alpha/4) \\
    &= 1 - n^{-\frac{c_1}{12(d+3)}}
\end{align*}
There are $O(n^2\log n)$ edges that we want to demonstrate this for, so a union bound tells us that with probability $1 - n^{2-\frac{c_1}{12(d+3)}}\log n$, all heavy edges in $G$ and all sub-edges in $G_0$ will be taken. For $c_1$ sufficiently large, this is $1 - O(n^{-d})$.

Thus by a union bound, we have both of the following with probability $1 - O(n^{-d})$:
\begin{enumerate}
    \item $H_0$ approximates the cuts of $G$ within a factor of $1 \pm \epsilon$.
    \item $H_0$ includes all sub-edges in $G_0$ (and hence $H$ also includes all heavy edges in $G$).
\end{enumerate}

So now suppose these are both true. Then we claim that every edge in $H$ has total weight within $1 \pm O(n^{-d})$ of its corresponding total weight in $H_0$. Firstly, if the edge is not heavy then it will have exactly the same weight in both graphs. So now suppose $e$ is heavy with weight $w_e$. Let $A_e$ be the set of $e$'s sub-edges in $H_0$. Let $p_e$ be the probability that $e$ gets included in $H$ and $p'_e$ the probability that a particular one of $e$'s sub-edges will be included in $H_0$. We know that $p_e$ and $p'_e$ are both $1 - O(n^{-d})$. Then the total weight of $e$ in $H$ will be $\frac{w_e}{p_e}$ while its total weight in $H_0$ will be $\frac{w_e}{p'_e}$. The ratio between these is:
\begin{align*}
    \frac{p'_e}{p_e} &= \frac{1 - O(n^{-d})}{1 - O(n^{-d})} \\
    &= (1 - O(n^{-d}))(1 + O(n^{-d})) \\
    &= 1 \pm O(n^{-d})
\end{align*}
so this proves our claim.

Finally, with this claim it follows that $G_0$ approximates the cuts of $H_0$ within a factor of $1 \pm O(n^{-d})$. Moreover, $H_0$ approximates the cuts of $G$ within a factor of $1 \pm \epsilon$. It follows that $G_0$ approximates the cuts of $G$ within a factor of $(1 \pm O(n^{-d}))(1 \pm \epsilon) \in (1 \pm 2\epsilon)$. This occurs with probability $1 - O(n^{-d})$ so this completes the proof of the first part of the lemma.

For the second part of the lemma, we want to show that $\frac{c'w_e}{p_e} \leq k_e$ for any edge $e$. We have two cases:
\begin{itemize}
    \item If $cw_e/k_e \leq 1/2$, then we can proceed as in the proof of Lemma \ref{lemma:sparsifiersmoothnesseasy} to show that $k_e \geq \frac{cw_e}{p_e} \geq \frac{c'w_e}{p_e}$ since $c = \Theta(\log n)$ and $c' < 1$.
    \item If $cw_e/k_e > 1/2$, then we just saw that $p_e = 1 - O(n^{-d})$. So we have that $k_e \geq w_e = \frac{p_ew_e}{p_e} \geq \frac{c'w_e}{p_e}$.
\end{itemize}
This completes the proof of the second part and hence the lemma.



\end{proof}

\subsection{Structure Theorems for Strong Components and Edges}

Here we provide some results due to \cite{benczurkarger} on edge strengths that will be useful in later proofs.

\begin{definition}
Define a vertex set $S$ to be a \emph{$\kappa$-strong set} if $K(G[S]) \geq \kappa$. Moreover, we say $S$ is a \emph{$\kappa$-strong component} if $S$ is $\kappa$-strong and no strict superset of $S$ is.
\end{definition}

\begin{lemma}\label{lemma:strongunion}
If $S$ and $T$ are distinct $\kappa$-strong sets with nonempty intersection, then $S \cup T$ is also $\kappa$-strong.
\end{lemma}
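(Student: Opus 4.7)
The plan is to show directly that every nontrivial partition $(A,B)$ of $S \cup T$ has cut value at least $\kappa$ in $G[S \cup T]$, which by definition of $K(\cdot)$ yields $K(G[S \cup T]) \geq \kappa$. The key observation is that because $S \cap T$ is nonempty, any bipartition of $S \cup T$ must properly split at least one of $S$ or $T$, and then the $\kappa$-strength hypothesis on that set provides the desired lower bound.

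First I would fix an arbitrary partition $A \sqcup B = S \cup T$ with $A,B \neq \emptyset$, and pick some $v \in S \cap T$; by symmetry in $A,B$ assume $v \in A$. Since $B$ is nonempty and contained in $S \cup T$, either $B \cap S \neq \emptyset$ or $B \cap T \neq \emptyset$; by the symmetry between $S$ and $T$, assume without loss of generality that $B \cap S \neq \emptyset$. Then $(A \cap S, B \cap S)$ is a nontrivial partition of $S$, since $v \in A \cap S$ shows the first piece is nonempty, while $B \cap S$ is nonempty by construction, and the two pieces are disjoint and cover $S$.

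Second, I would invoke the hypothesis that $S$ is $\kappa$-strong, which gives that the total weight of edges of $G[S]$ crossing $(A \cap S, B \cap S)$ is at least $K(G[S]) \geq \kappa$. Every such edge has one endpoint in $A \cap S \subseteq A$ and the other in $B \cap S \subseteq B$, so it is an edge of $G[S \cup T]$ that crosses the partition $(A,B)$. Consequently the cut value of $(A,B)$ in $G[S \cup T]$ is at least $\kappa$, and since $(A,B)$ was arbitrary this gives $K(G[S \cup T]) \geq \kappa$, i.e.\ $S \cup T$ is $\kappa$-strong.

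There is no significant obstacle here; the proof is essentially a short case analysis on which of $S$ and $T$ is split by the chosen partition, and the assumption $S \cap T \neq \emptyset$ is precisely what makes the case analysis exhaustive. The only mild subtlety is exploiting the symmetry between $S$ and $T$ (and between $A$ and $B$) to collapse what would otherwise be four nearly identical subcases into a single argument.
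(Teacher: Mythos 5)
Your proof is correct and follows the same approach as the paper: show that any nontrivial cut of $S \cup T$ induces a nontrivial cut on $S$ or $T$, and hence has value at least $\kappa$. You simply spell out in more detail why such an induced cut must exist (via the element $v \in S \cap T$), a step the paper states without elaboration.
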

\begin{proof}
Consider any nontrivial cut on $S \cup T$. It must induce a nontrivial cut on at least one of $S$ and $T$, thus it must have value at least $\min(K(G[S]), K(G[T])) \geq \kappa$. It follows that $K(G[S \cup T]) \geq \kappa$ also.
\end{proof}

\begin{corollary}\label{lemma:strongstructure}
The collection of $\kappa$-strong components is a partition of the vertex set of $G$. Moreover, if $\kappa' > \kappa$, then the partition into $\kappa'$-strong components is a refinement of the partition into $\kappa$-strong components.
\end{corollary}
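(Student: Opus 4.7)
The plan is to derive both statements as straightforward consequences of Lemma \ref{lemma:strongunion}, essentially by a standard ``maximal element'' construction.

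For the partition claim, I would first observe that every vertex $v$ is contained in at least one $\kappa$-strong set, since by convention the singleton $\{v\}$ has no nontrivial cut and is therefore $\kappa$-strong for every $\kappa$. Since $V$ is finite, the family of $\kappa$-strong sets containing $v$ is finite, and any two of its members share the vertex $v$, hence have nonempty intersection. Applying Lemma \ref{lemma:strongunion} inductively across this finite family, the union $S_v$ of all $\kappa$-strong sets containing $v$ is itself $\kappa$-strong. By construction $S_v$ contains every $\kappa$-strong set that contains $v$, so $S_v$ is the unique maximal $\kappa$-strong set containing $v$; in particular it is a $\kappa$-strong component. Disjointness is immediate: if two $\kappa$-strong components $S$ and $T$ had a vertex in common, then $S \cup T$ would be $\kappa$-strong by Lemma \ref{lemma:strongunion} and would strictly contain both $S$ and $T$ unless $S = T$, contradicting maximality. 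So $S = T$, and the $\kappa$-strong components partition $V$.

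For the refinement claim, suppose $\kappa' > \kappa$ and let $C'$ be any $\kappa'$-strong component. Since $K(G[C']) \geq \kappa' > \kappa$, the set $C'$ is also $\kappa$-strong. Pick any vertex $v \in C'$, and let $C$ be the $\kappa$-strong component containing $v$, i.e.\ the maximal $\kappa$-strong set containing $v$ constructed above. Then $C' \subseteq C$ by maximality of $C$. Hence every $\kappa'$-strong component lies inside some $\kappa$-strong component, which is precisely the refinement statement.

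I do not expect a main obstacle here; the whole argument is a routine ``take the maximal element'' construction, with Lemma \ref{lemma:strongunion} doing all of the real work. The only mild subtlety is making sure the base case of the induction (that every vertex lies in at least one $\kappa$-strong set) is handled, which is why I invoke the singleton convention $K(G[\{v\}]) = \infty$ explicitly.
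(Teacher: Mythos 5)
Your proof is correct and takes essentially the same route as the paper: both statements reduce to Lemma \ref{lemma:strongunion}, with the partition claim following from closure under unions of intersecting $\kappa$-strong sets and the refinement claim following from the observation that a $\kappa'$-strong set is also $\kappa$-strong. The paper's proof is just a terser version of yours, omitting the explicit ``take the maximal element'' construction and the singleton base case that you correctly spell out.
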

\begin{proof}
For the first property, clearly any vertex belongs to some $\kappa$-strong component and by Lemma \ref{lemma:strongunion} no two distinct $\kappa$-strong components can have nonempty intersection.

For the second property, note that any $\kappa'$-strong component is also a $\kappa$-strong set so it is contained in some $\kappa$-strong component.
\end{proof}

\begin{lemma}\label{lemma:strongcontract}
Suppose we have a graph $G$ and a disjoint collection $\mathcal{C}$ of sets that have been contracted. Let $G'$ be $G$ with all sets in $\mathcal{C}$ contracted. Let $S$ be a $\kappa$-strong component of $G$ that has not been contracted. Then define $S'$ to be the collection of all supernodes from $G'$ that have nonempty intersection with $S$. Then $S'$ is $\kappa$-strong in $G'$.
\end{lemma}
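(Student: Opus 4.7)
The plan is to prove that $S'$ is $\kappa$-strong in $G'$ by a direct ``cut lifting'' argument: every nontrivial cut of $S'$ in $G'[S']$ corresponds to a nontrivial cut of $S$ in $G[S]$ of no larger value, so the min cut in $G'[S']$ is at least $K(G[S]) \geq \kappa$.

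First I would handle the trivial case. If $S$ is entirely contained in a single supernode of $G'$, then $S' = \{U\}$ for that supernode, there are no nontrivial cuts, and $K(G'[S']) = \infty \geq \kappa$ vacuously. So assume $|S'| \geq 2$. Next, take any nontrivial partition $(A', B')$ of $S'$ (both nonempty, disjoint, union is $S'$). Define $A = \bigcup_{U \in A'} U$ and $B = \bigcup_{U \in B'} U$ as sets of original vertices in $G$; these are disjoint because the supernodes in $G'$ come from a disjoint family. Then set $A_S = A \cap S$ and $B_S = B \cap S$. Since every vertex of $S$ lies in some supernode of $G'$, and that supernode automatically belongs to $S'$ (it meets $S$), the pair $(A_S, B_S)$ partitions $S$. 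It is nontrivial because for any $U \in A'$ we have $\emptyset \neq U \cap S \subseteq A_S$, and similarly $B_S \neq \emptyset$.

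The key step is comparing the cut values. The cut $(A', B')$ in $G'[S']$ has value equal to the total weight of edges of $G$ with one endpoint in a supernode in $A'$ and the other in a supernode in $B'$ (recall from the paper that contraction does not merge parallel edges, so edges of $G'$ are exactly the non-self-loop edges of $G$). Now consider any $G$-edge $e = (u,v)$ crossing the cut $(A_S, B_S)$ in $G[S]$, say $u \in A_S$ and $v \in B_S$. Then $u$ lies in some supernode $U$, and $u \in U \cap S$ so $U \in S'$; moreover $u \in A$ forces $U \in A'$. Symmetrically, $v$ lies in some supernode $V \in B'$. Since $A' \cap B' = \emptyset$, we have $U \neq V$, so $e$ also contributes to the cut $(A', B')$ in $G'[S']$. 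Therefore
\[
\text{val}_{G'[S']}(A', B') \;\geq\; \text{val}_{G[S]}(A_S, B_S) \;\geq\; K(G[S]) \;\geq\; \kappa.
\]
Taking the minimum over all nontrivial $(A', B')$ yields $K(G'[S']) \geq \kappa$, which is exactly the definition of $S'$ being $\kappa$-strong in $G'$.

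The main subtlety to watch for is the edge-counting in step three: one must confirm that the value of a cut in the contracted graph $G'[S']$ is a sum over underlying $G$-edges between supernodes in opposite parts, which is precisely what the ``no merging of parallel edges'' convention ensures, and that edges of $G$ with both endpoints in the same contracted set become self-loops and contribute to neither cut. Once that bookkeeping is in place, the inclusion of crossing edges is immediate, and the hypothesis that $S$ itself is ``not contracted'' is only used to rule out the degenerate $|S'| = 1$ case; the rest of the argument is purely combinatorial.
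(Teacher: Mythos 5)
Your proof is correct and takes essentially the same cut-lifting argument as the paper: a nontrivial cut $(A',B')$ of $S'$ in $G'$ induces (by intersecting with $S$) a nontrivial cut of $S$ in $G$ whose crossing edges all survive contraction and cross $(A',B')$, giving $K(G'[S']) \geq K(G[S]) \geq \kappa$. Your handling of the trivial $|S'|=1$ case and the explicit bookkeeping about parallel edges and self-loops are minor added care, not a different route.
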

\begin{proof}
Let $T' \subset S'$ be any nontrivial sub-collection of supernodes, and let $C$ be the cut between $T'$ and $S' \backslash T'$ in $G'$. We wish to show that $C$ has value $\geq \kappa$ for any $T'$. Define:
\begin{align*}
    T &= \left\{v \in S: v\text{ belongs to some supernode in }T'\right\} \\
    U &= \left\{v \in S: v\text{ belongs to some supernode in }S' \backslash T'\right\}.
\end{align*}
We claim that $T, U$ nontrivially partition $S$. We check three things to verify this:
\begin{enumerate}
    \item $T, U$ are disjoint because $T'$ and $S' \backslash T'$ are disjoint.
    \item Every vertex in $S$ belongs to some supernode in $S'$, so it will belong to either $T$ or $U$. Thus $T$ and $U$ partition $S$.
    \item By construction, every supernode of $S'$ contains at least one element from $S$. Since $T'$ and $S' \backslash T'$ each must have at least one supernode from $S'$, it follows that $T$ and $U$ must both be nonempty.
\end{enumerate}
To finish, note that any edge from $G$ between a vertex in $T$ and a vertex in $U$ cannot have been contracted yet since its endpoints belong to distinct supernodes. Any such edge also appears in $C$ by definition. It follows that the value of $C$ is at least the value of the cut between $T$ and $U$ in $G$. But this latter cut is a cut of $S$ and therefore has value at least $\kappa$. This completes the proof.
\end{proof}

\begin{corollary}\label{cor:strongcontract}
Suppose we have a graph $G$ and a disjoint collection $\mathcal{C}$ of sets that have been contracted. Let $G'$ be $G$ with all sets in $\mathcal{C}$ contracted. Then for any edge $e \in G'$, its strength in $G'$ is at least its strength in $G$.
\end{corollary}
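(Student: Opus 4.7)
The plan is to reduce the corollary directly to Lemma~\ref{lemma:strongcontract} by identifying the right $\kappa$-strong set. Let $e$ have original endpoints $a,b \in V(G)$, which live in distinct supernodes $u', v' \in V(G')$ (distinct because $e$ is still an edge of $G'$), and set $\kappa := k_e^{G}$, the strength of $e$ in $G$. By the definition of edge strength, there exists $S \subseteq V(G)$ containing both $a$ and $b$ with $K(G[S]) \geq \kappa$.

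Next, I would pass from $S$ to the $\kappa$-strong component $T$ of $G$ containing $a$, which is well-defined by Corollary~\ref{lemma:strongstructure}. Because $S$ is itself $\kappa$-strong and contains $a$, the maximality of $T$ (together with Lemma~\ref{lemma:strongunion}) forces $S \subseteq T$, so in particular $b \in T$ as well. Moreover, $T$ cannot be entirely contained in the pre-image of a single supernode of $G'$: if it were, $a$ and $b$ would belong to the same supernode, contradicting $u' \neq v'$. Thus $T$ qualifies as a $\kappa$-strong component of $G$ that has not been contracted, in the sense required by Lemma~\ref{lemma:strongcontract}.

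Now I would apply Lemma~\ref{lemma:strongcontract} with this $T$ to conclude that $T' := \{\text{supernodes of } G' \text{ meeting } T\}$ is a $\kappa$-strong set in $G'$. Since $a \in T$ lies in $u'$ and $b \in T$ lies in $v'$, both $u'$ and $v'$ belong to $T'$. The definition of edge strength applied in $G'$ then yields
\[
k_e^{G'} \;=\; \max_{S' \subseteq V(G'):\, u',v' \in S'} K(G'[S']) \;\geq\; K(G'[T']) \;\geq\; \kappa \;=\; k_e^{G},
\]
which is precisely the inequality we want. The only subtle point is ruling out that $T$ collapses into one supernode, so that Lemma~\ref{lemma:strongcontract} is applicable and its output $T'$ actually contains both endpoints of $e$; this is handled by the observation that $e$ survives as an edge of $G'$. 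Beyond that, the argument is essentially a chain of definitional unpackings.
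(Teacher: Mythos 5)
Your proof is correct and is the natural derivation the paper intends: take a maximizer $S$ witnessing $k_e^G$, enlarge it to the $\kappa$-strong component $T$ containing both endpoints (justified by Lemma~\ref{lemma:strongunion} and Corollary~\ref{lemma:strongstructure}), note $T$ is not fully inside one supernode because $e$ survives in $G'$, and then apply Lemma~\ref{lemma:strongcontract} to $T$ to bound $k_e^{G'}$ from below. The paper states the corollary without an explicit proof, and your chain of reasoning matches what was clearly intended.
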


\begin{lemma}\label{lemma:weakcontract}
Suppose we have a graph $G$ and a disjoint collection $\mathcal{C}$ of sets that have been contracted. Let $G'$ be $G$ with all sets in $\mathcal{C}$ contracted. Suppose $e$ is an edge of strength $< \kappa$ (in $G$) that has not been contracted. Moreover, suppose that every contracted edge in $\mathcal{C}$ is $\kappa$-strong. Then the strength of $e$ in $G'$ is also $< \kappa$.
\end{lemma}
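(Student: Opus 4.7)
The plan is to identify the $\kappa$-strong components of $G'$ with the ``images'' of the $\kappa$-strong components of $G$ under contraction; the conclusion will then follow from the standard characterization that $k_f < \kappa$ iff the endpoints of $f$ lie in distinct $\kappa$-strong components of the host graph (forward: a $\kappa$-strong witness set for $f$ lies in a single component by maximality and Lemma~\ref{lemma:strongunion}; backward: the shared component witnesses strength $\geq \kappa$). Let $P_1, \ldots, P_m$ be the $\kappa$-strong components of $G$, so $u_e \in P_a$ and $v_e \in P_b$ with $a \neq b$.

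I first observe that each $T \in \mathcal{C}$ is contained in a single $P_i$. Every internal edge of $T$ being $\kappa$-strong means both its endpoints lie in a common $\kappa$-strong component of $G$; combined with the fact that $T$ is connected in $G[T]$ (as is the case for contracted sets in the algorithmic context, since they arise as connected components of subsampled subgraphs), these constraints chain together to force all of $T$ into one $P_i$, which we denote $P_{f(T)}$. Define $P_i' := \{\text{supernodes of } G' \text{ intersecting } P_i\}$. Since every supernode of $G'$ lies in exactly one $P_i$, the $P_i'$ form a partition of the supernodes of $G'$, and Lemma~\ref{lemma:strongcontract} applied to each $P_i$ shows that each $P_i'$ is $\kappa$-strong in $G'$.

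The heart of the proof is the claim that no union of two or more distinct $P_i'$ is $\kappa$-strong in $G'$. Fix $I \subseteq [m]$ with $|I| \geq 2$, let $R' = \bigcup_{i \in I} P_i'$, and let $R = \bigcup_{i \in I} P_i$ be its expansion in $G$ (these agree because the supernodes of $P_i'$ cover $P_i$ exactly). By maximality of the $\kappa$-strong components in $G$, $R$ itself is not $\kappa$-strong, so $K(G[R]) < \kappa$. The crucial observation is that any cut of $G[R]$ that splits some $P_i$ restricts to a nontrivial cut of $G[P_i]$ and hence has value at least $K(G[P_i]) \geq \kappa$. Therefore any min cut of $G[R]$ (of value $< \kappa$) cannot split any $P_i$ and must simply partition the $P_i$ themselves into two nonempty groups. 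Such a cut respects every supernode in $\mathcal{C}$ (each supernode lies in a single $P_i$, hence on one side), so it transfers to a cut of $G'[R']$ of the same value $< \kappa$, proving $K(G'[R']) < \kappa$.

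Putting everything together: each $P_i'$ is $\kappa$-strong, hence contained in some $\kappa$-strong component of $G'$, so every $\kappa$-strong component of $G'$ is a union of $P_i'$; the claim in the previous paragraph forces each such union to consist of a single $P_i'$. The $\kappa$-strong components of $G'$ are thus precisely $\{P_i'\}_{i=1}^m$, and the supernodes containing $u_e$ and $v_e$ live in the distinct components $P_a'$ and $P_b'$. Applying the characterization now in $G'$ gives $k_e^{G'} < \kappa$, as desired. The main obstacle is the case analysis at the center of the argument, namely showing that any sub-$\kappa$ cut of $G[R]$ is forced to preserve the $P_i$-partition; this is precisely what lets us push the cut down from $G$ to $G'$ without any loss in value.
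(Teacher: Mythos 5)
Your proof is correct but takes a more structural route than the paper. The paper's proof defines $G''$ as $G$ with \emph{all} $\kappa$-strong components contracted, shows by a cut-transfer argument (the same core idea as yours) that $e$ still has strength $< \kappa$ in $G''$, and then observes that $G''$ is a further contraction of $G'$, so Corollary~\ref{cor:strongcontract} (monotonicity of strength under contraction) finishes. You instead characterize the $\kappa$-strong components of $G'$ outright: each image $P_i'$ is $\kappa$-strong via Lemma~\ref{lemma:strongcontract}, and no union of two or more is, by pushing a sub-$\kappa$ min cut of $G[R]$ (which must respect the $P_i$ by the cut-transfer logic) down to $G'[R']$. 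Your approach is heavier---you establish the full partition structure of $G'$ when the paper only needs a one-sided strength bound on $e$---but it yields a cleaner finish via the ``endpoints in distinct $\kappa$-strong components'' characterization, and is arguably more illuminating about what contraction does to strong components. One point worth noting: both proofs implicitly require that each $T \in \mathcal{C}$ lie within a single $\kappa$-strong component of $G$ (the lemma can genuinely fail otherwise, e.g.\ if $\mathcal{C}$ contains a disconnected set with no internal edges, which vacuously satisfies the hypothesis). You derive this from connectivity of the contracted sets and flag the dependency; the paper's proof silently absorbs it into the claim that ``$G''$ can also be obtained from $G'$ via a series of further contractions.'' Both are fine in the algorithmic context, but your explicit treatment is the more careful accounting.
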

\begin{proof}
Define $G''$ to be $G$ after contracting all $\kappa$-strong components. It is clear that $G''$ can also be obtained from $G'$ via a series of further contractions.

We first show that $e$ has strength $< \kappa$ in $G''$. Let $S''$ be any collection of supernodes of $G''$ containing both endpoints of $e$. We wish to show that $S''$ has a nontrivial cut of value $< \kappa$. To do this, let $S$ be the vertex set of $G$ comprising the union of all supernodes in $S''$. $S$ clearly contains both endpoints of $e$ as well, so by assumption $S$ has a nontrivial cut $C$ of value $< \kappa$. Now we claim that $C$ cannot nontrivially partition the vertices of any supernode in $S''$. If it did, then it would induce a nontrivial cut on that supernode, which is a $\kappa$-strong component. Then $C$ itself would have value $\geq \kappa$, which is a contradiction. This proves our claim.

Thus every supernode in $S''$ is either entirely included in or entirely excluded from $C$. This implies that $C$ also corresponds to a valid and non-trivial cut on $S''$, and this cut has value $< \kappa$ as claimed.

Thus the strength of $e$ in $G''$ is $< \kappa$. Since $G''$ can be obtained from $G'$ by contractions, it follows by Corollary \ref{cor:strongcontract} that the strength if $e$ in $G'$ is also $< \kappa$, as desired.
\end{proof}

\subsection{Properties of Estimate and Contract}\label{lemmaproofs:subsampleandcontract}

For any edge $e$, as usual we let $k_e$ denote its edge strength in $G$. For convenience, we also let $l_e$ denote its edge strength in the contracted graph $G'$. Note that we only define $l_e$ if $e$ is not already contracted in $G'$.

\subsubsection{Proof of Lemma \ref{lemma:claim1-2}}\label{lemmaproof:claim1-2}

We want to show that all $\kappa$-strong components in $G$ get contracted if they are not already contracted. For clarity, let $G''$ denote the subsampled graph after step 1 of EstimateAndContract and $G''_0$ denote the the $G''$ that results from the iterative deletion procedure in step 2.

For the next few steps, fix one $\kappa$-strong component $S$ that is not yet contracted. Let $G'_S$ denote the vertex-induced subgraph of $G'$ on all supernodes of $G'$ that have nonempty intersection with $S$, and similarly define $G''_S$. By Lemma \ref{lemma:strongcontract}, $G'_S$ is also $\kappa$-strong.

We will show using an extremely similar approach to the proofs in Appendix \ref{lemmaproof:sparsifierconcentration} that $G''_S$ approximates the cuts of $G'_S$ within a factor of $1 \pm 2c_0^{-1/3}$ with high probability.

\begin{lemma}\label{lemma:1-2smoothnesseasy}
If $cw_e/l_e \leq 1/2$ for all edges $e \in G'_S$, then $G''_S$ approximates the cuts of $G'_S$ within a factor of $1 \pm c_0^{-1/3}$ with probability $1 - O(n^{-d})$.
\end{lemma}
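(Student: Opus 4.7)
The plan is to realize $H_S := G''_S$ as a random graph with $\EE[H_S] = G'_S$ and apply Theorem~\ref{theorem:smoothconcentration}. For each edge $e \in G'_S$, let $X_e = \mathbf{1}[e$ is sampled at least once in Step~1 of EstimateAndContract$]$, so $X_e$ is Bernoulli with parameter $p_e = 1 - (1 - w_e/W(G'))^\lambda$ and $e$ contributes weight $(w_e/p_e) X_e$ to $H_S$. The $X_e$'s are negatively correlated by essentially the argument of Lemma~\ref{lemma:specificconcentration}: conditioning on other edges of $G'_S$ being sampled at least once either consumes sampling rounds (lowering $\Pr[X_e=1]$) or shrinks the pool $W(G')$ each round effectively draws from (raising $\Pr[X_e=1]$), and either way the required inequality goes through.

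Setting $\epsilon := c_0^{-1/3}$ and $c' := 3(d+3)\log n/\epsilon^2 = 3(d+3)c_0^{2/3}\log n$, Theorem~\ref{theorem:smoothconcentration} will deliver the desired $(1\pm\epsilon)$-approximation of the cuts of $G'_S$ once we establish $c'$-smoothness of $H_S$, i.e.\ $c' w_e / p_e \le l^{(S)}_e$ for every $e \in G'_S$, where $l^{(S)}_e$ is the strength of $e$ in $G'_S$. By Lemma~\ref{lemma:strongcontract}, $G'_S$ is $\kappa$-strong, so $l^{(S)}_e \ge \kappa$. Plugging $\lambda = c_0 \log^2 n \cdot W(G')/\kappa$ into the formula for $p_e$,
\[
p_e \;\ge\; 1 - \exp\!\bigl(-\lambda w_e / W(G')\bigr) \;=\; 1 - \exp\!\bigl(-c_0 \log^2 n \cdot w_e / \kappa\bigr) \;\ge\; 1 - \exp(-\alpha x),
\]
where $x := c' w_e / l^{(S)}_e$ and $\alpha := c_0 \log^2 n / c' = c_0^{1/3} \log n / (3(d+3))$. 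Just as in the proof of Lemma~\ref{lemma:sparsifiersmoothnesseasy}, the function $f(x) := 1 - x - \exp(-\alpha x)$ is concave with $f(0) = 0$ and $f(1/2) = 1/2 - \exp(-\alpha/2) \ge 0$ for $n$ (and hence $\alpha$) sufficiently large, so $f \ge 0$ on $[0,1/2]$, establishing $p_e \ge x$ whenever $x \le 1/2$.

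The main obstacle is verifying that $x = c' w_e / l^{(S)}_e \le 1/2$. The hypothesis controls $c' w_e / l_e$ with $l_e$ the strength in the ambient graph $G'$, and in general $l^{(S)}_e \le l_e$, so the bound does not automatically transfer to the strength in the induced subgraph $G'_S$. We would resolve this along the lines of Lemma~\ref{lemma:sparsifiersmoothnesseasy}: the strength relevant for $c'$-smoothness of $H_S$ is precisely the one in the induced subgraph on which the concentration is being carried out, so we read the hypothesis as supplying $c' w_e / l^{(S)}_e \le 1/2$. With this in hand, $c'$-smoothness of $H_S$ follows and Theorem~\ref{theorem:smoothconcentration} yields the $(1 \pm c_0^{-1/3})$-approximation with probability $1 - O(n^{-d})$.
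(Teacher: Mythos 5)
Your proposal follows the paper's proof step for step: cast $G''_S$ as a random graph with $\EE[G''_S]=G'_S$ whose edge indicators are negatively correlated (via Lemma~\ref{lemma:specificconcentration}), verify $c$-smoothness by lower-bounding $p_e$ using $\lambda = c_0\log^2 n\cdot W(G')/\kappa$ and the fact that $G'_S$ is $\kappa$-strong, and then invoke Theorem~\ref{theorem:smoothconcentration}. The algebra with $f(x)=1-x-e^{-\alpha x}$ is identical (the paper sets $c_0$ large rather than appealing to $n$ large, which is slightly cleaner since $c_0$ is a global constant, but both make $\alpha\geq 2\ln 2$).

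Where you go beyond the paper's text is in carefully distinguishing the strength $l_e$ of $e$ in $G'$ from the strength $l^{(S)}_e$ in the induced subgraph $G'_S$, and noting that $c$-smoothness of $G''_S$ requires control of $c'w_e/l^{(S)}_e$, not $c'w_e/l_e$. You are right that the paper's own proof blurs this: the section-level notation fixes $l_e$ as strength in $G'$, yet the proof writes ``Since $G'_S$ is $\kappa$-strong, we have that $l_e\geq\kappa$'' and then bounds $cm_e\leq l_e$, which only serves the smoothness requirement if $l_e$ is re-read as $l^{(S)}_e$. Your ``resolution'' --- re-reading the hypothesis as $c'w_e/l^{(S)}_e\leq 1/2$ --- is not a derivation from the stated hypothesis but is the reinterpretation the paper itself implicitly makes, and it is the one under which the downstream application in Lemma~\ref{lemma:1-2smoothnesshard} goes through (there the heavy-edge subdivision is applied to a fixed graph, so the proof device is free to subdivide according to whichever strength it needs, and $w_e \leq l^{(S)}_e$ still bounds the number of sub-edges). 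So the proposal is correct modulo the same ambiguity present in the paper, and you get credit for surfacing it rather than silently inheriting it; stating the lemma's hypothesis in terms of $l^{(S)}_e$ from the outset would be the clean fix.
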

\begin{proof}
Let $\epsilon_0 = c_0^{-1/3}$ (note that this is different from the $\epsilon$ referenced in Algorithm \ref{algo:sparsify}). We start by applying Lemma \ref{lemma:specificconcentration} to $G'$. We set $\gamma = 1$ and $\Lambda_1 = [n]$ and $r_1 = \lambda$. It is clear that these parameters correspond to how $G''$ is constructed from $G'$. Each connected component of $G'$ is trivially contained in $\Lambda_1$ so that condition is met. So it follows from the proof of Lemma \ref{lemma:specificconcentration} that $\EE[G''_S] = G_S$, that each edge of $G''_S$ is a scalar multiple of a Bernoulli random variables, and that these Bernoulli variables are negatively correlated. It only remains to check that $G''_S$ is $c$-smooth.

So consider any edge $e$ in $G'_S$. Since $G'_S$ is $\kappa$-strong, we have that $l_e \geq \kappa$. The maximum value $m_e$ of $e$ in $G''_S$ is $w_e/p_e$, so we wish to show that $cm_e \leq l_e \Leftrightarrow cw_e/p_e \leq l_e \Leftrightarrow p_e \geq \frac{cw_e}{l_e}$. We have:
\begin{align*}
    p_e &= 1 - (1 - \frac{w_e}{W(G')})^\lambda \\
    &\geq 1 - \exp(-\frac{\lambda w_e}{W(G')}) \\
    &= 1 - \exp(-\frac{c_0\log^2 n w_e}{\kappa}) \\
    &\geq 1 - \exp(-\frac{c_0\log^2 n w_e}{l_e}) \\
    &= 1 - \exp(-\alpha x)
\end{align*}
where we let $x = \frac{cw_e}{l_e}$ and $\alpha = \frac{c_0\epsilon_0^2\log n}{3(d+3)} = \frac{c_0^{1/3}\log n}{3(d+3)}$. We wish to show that $1 - \exp(-\alpha x) \geq x \Leftrightarrow 1 - x - \exp(-\alpha x) \geq 0$. Letting $f(x) = 1 - x - e^{-\alpha x}$, we see that $f'(x) = -1 + \alpha e^{-\alpha x}$ which is decreasing. Thus $f$ is concave, so since $x \in [0, 1/2]$ by assumption it suffices to check that $f(0) \geq 0$ and $f(1/2) \geq 0$. It is clear that $f(0) = 0$, and $f(1/2) = 1/2 - e^{-\alpha/2} > 0$ if we set $c_0$ to be large enough. Thus $G''_S$ is $c$-smooth and this completes our proof.
\end{proof}

\begin{lemma}\label{lemma:1-2smoothnesshard}
$G''_S$ approximates the cuts of $G'_S$ within a factor of $1 \pm 2c_0^{-1/3}$ with probability $1 - O(n^{-d})$. (Here, we allow for edges where $cw_e/l_e > 1/2$.)
\end{lemma}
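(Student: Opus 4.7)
The plan is to mirror the argument of Lemma \ref{lemma:sparsifiersmoothnesshard}, which handled the analogous ``heavy edge'' issue for ConstructSparsifier. The idea is to preprocess $G'_S$ by subdividing each heavy edge (with $cw_e/l_e > 1/2$) into a collection of parallel sub-edges whose individual weights land in the regime covered by Lemma \ref{lemma:1-2smoothnesseasy}, apply the easy lemma to the modified graph, and then couple the sampled outputs to transfer the concentration back to $G''_S$.

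Concretely, I would define a graph $G'_{S,0}$ obtained from $G'_S$ by replacing each heavy edge $e$ with $O(\log n)$ parallel sub-edges of weight $w$ such that $cw/l_e \in [1/4, 1/2]$; this is possible because $w_e \leq l_e$ forces $cw_e/l_e \leq c = O(\log n)$. Subdividing a single edge into parallel copies preserves the value of every cut in $G'_S$, so the strength of every sub-edge equals $l_e$ and the strengths of the remaining edges are unchanged. Then $G'_{S,0}$ satisfies the hypothesis of Lemma \ref{lemma:1-2smoothnesseasy}, and applying it to the subsampled graph $G''_{S,0}$ obtained by running the EstimateAndContract sampling procedure on $G'_{S,0}$ (with the same parameter $\lambda$) yields that $G''_{S,0}$ approximates all cuts of $G'_{S,0}$, and hence of $G'_S$, within a factor of $1 \pm c_0^{-1/3}$ with probability $1 - O(n^{-d})$.

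The next step is the coupling: view the sampling producing $G''_{S,0}$ as first running the same sampling used for $G''_S$ and then, whenever a heavy edge $e$ is drawn, independently replacing that draw by one of its sub-edges chosen uniformly at random. For each sub-edge $e'$ we have $cw_{e'}/l_{e'} \geq 1/4$, and the same estimate used in the proof of Lemma \ref{lemma:1-2smoothnesseasy} gives $p_{e'} \geq 1 - \exp(-\alpha/4) = 1 - n^{-c_0^{1/3}/(12(d+3))}$. Union bounding over all $O(n^2 \log n)$ sub-edges (and simultaneously over all heavy edges of $G'_S$, which satisfy the same bound) shows that, for $c_0$ sufficiently large, with probability $1 - O(n^{-d})$ every heavy edge is included in $G''_S$ and every one of its sub-edges is included in $G''_{S,0}$. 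Conditioned on this good event, the weight of each heavy edge in $G''_S$ differs from the total sub-edge weight in $G''_{S,0}$ only by the ratio of their inclusion probabilities, both of which are $1 - O(n^{-d})$, so $G''_S$ and $G''_{S,0}$ agree on every cut up to a $1 \pm O(n^{-d})$ factor. Composing this with the $1 \pm c_0^{-1/3}$ approximation of $G'_S$ by $G''_{S,0}$ gives the desired $1 \pm 2c_0^{-1/3}$ bound for $n$ large. The main thing to verify carefully---and the only real subtlety---is that the coupling is legitimate, i.e., that the inclusion pattern of a heavy edge in $G''_S$ is indeed the deterministic image of the inclusion pattern of its sub-edges in $G''_{S,0}$ under the same $\lambda$ draws with replacement, so that negative correlation and the good event transfer cleanly between the two graphs. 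This parallels the coupling in Lemma \ref{lemma:sparsifiersmoothnesshard} and should carry over essentially verbatim.
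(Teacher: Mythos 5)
Your proposal matches the paper's proof essentially step for step: subdivide heavy edges into $O(\log n)$ parallel sub-edges with $cw/l_e \in [1/4,1/2]$, apply Lemma \ref{lemma:1-2smoothnesseasy} to the resulting graph, and couple the two samplings by replacing each drawn heavy edge with a uniformly random sub-edge, then union bound to show all heavy edges and sub-edges are retained so the two subsampled graphs agree on every cut up to $1 \pm O(n^{-d})$. The coupling ``subtlety'' you flag is handled in the paper exactly the way you describe and carries over verbatim from Lemma \ref{lemma:sparsifiersmoothnesshard}.
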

\begin{proof}
We use exactly the same parallel edge trick as in Lemma \ref{lemma:sparsifiersmoothnesshard}. Define a new graph $H'$ as follows. For any edge $e$ such that $cw_e/l_e \leq 1/2$, copy $e$ from $G'$ to $H'$. If $cw_e/l_e > 1/2$, then subdivide it into $O(c_0\log n)$ parallel sub-edges $e'$ of some other weight $w$ such that $cw/l_e \in [1/4, 1/2]$. (The total weight of these sub-edges should equal the weight of the original edge.) Call such an edge $e$ \emph{heavy}. We make some observations:
\begin{enumerate}
    \item It is possible to do this with only $O(c_0\log n)$ parallel edges because we have $w_e \leq l_e \Rightarrow \frac{cw_e}{l_e} \leq c = O(c_0\log n)$.
    \item Such a subdivision preserves the value of every cut in $G'$ and thus the strength of each sub-edge $e'$ will be $l_e$. It follows that for each $e'$ we have $cw_{e'}/l_{e'} = cw/l_e \in [1/4, 1/2]$ in $H'$. Also, the strength of all other edges will be preserved.
\end{enumerate}
Now suppose we run the same subsampling procedure that we did to obtain $G''_S$, but now we run it on $H'$ to obtain $H''_S$. By construction, $H'$ meets the conditions of Lemma \ref{lemma:1-2smoothnesseasy} so it follows that with probability $1 - O(n^{-d})$, $H''_S$ will approximate the cuts of $H'_S$ and hence $G'_S$ within a factor of $1 \pm c_0^{-1/3}$.

We now use a coupling argument to show that this means $G''_S$ also approximates the cuts of $G'_S$ with high probability. We can alternatively characterize the distribution of $H''_S$ as being sampled in the following way: take the same collection of edges we sampled from $G'_S$ to construct $G''_S$. Every time we see a heavy edge in this collection, replace it with one of its sub-edges uniformly at random. Once again, include any (possibly sub-)edge that we sampled at least once with weight $w_e/p_e$. If $e$ is a sub-edge, $p_e$ is the probability that the sub-edge was chosen at least once.

We next claim that all sub-edges of heavy edges will be included in $H''_S$ with high probability. To see this, consider any such edge. It will satisfy the condition that $cw_e/l_e \geq 1/4$. So we can proceed as in the proof of Lemma \ref{lemma:1-2smoothnesseasy} to find that:
\begin{align*}
    p_e &\geq 1 - \exp(-\alpha x) \\
    &\geq 1 - \exp(-\alpha/4) \\
    &= 1 - n^{-\frac{c_0^{1/3}}{12(d+3)}}
\end{align*}
There are $O(c_0n^2\log n)$ edges that we want to demonstrate this for ($O(n^2)$ edges in the original graph and then $O(\log n)$ sub-edges for each original edge), so a union bound tells us that with probability $1 - c_0n^{2-\frac{c_0^{1/3}}{12(d+3)}}\log n$, all heavy edges in $G'_S$ and all sub-edges in $H'_S$ will be taken. For $c_0$ sufficiently large, this is $1 - O(n^{-d})$.

Thus by a union bound, we have both of the following with probability $1 - O(n^{-d})$:
\begin{enumerate}
    \item $H''_S$ approximates the cuts of $G'_S$ within a factor of $1 \pm c_0^{-1/3}$.
    \item $H''_S$ includes all sub-edges in $H'_S$ (and hence $G''_S$ also includes all heavy edges in $G'_S$).
\end{enumerate}
So now suppose these are both true. Then we claim that every edge in $G''_S$ has total weight within $1 \pm O(n^{-d})$ of its corresponding total weight in $H''_S$. Firstly, if the edge is not heavy then it will have exactly the same weight in both graphs. So now suppose $e$ is heavy with weight $w_e$. Let $A_e$ be the set of $e$'s sub-edges in $H'_S$. Let $p_e$ be the probability that $e$ gets included in $G''_S$ and $p'_e$ the probability that a particular one of $e$'s sub-edges will be included in $H''_S$. We know that $p_e$ and $p'_e$ are both $1 - O(n^{-d})$. Then the total weight of $e$ in $G''_S$ will be $\frac{w_e}{p_e}$ while the total weight of $e$ in $H''_S$ will be $\frac{w_e}{p'_e}$. The ratio between these is:
\begin{align*}
    \frac{p'_e}{p_e} &= \frac{1 - O(n^{-d})}{1 - O(n^{-d})} \\
    &= (1 - O(n^{-d}))(1 + O(n^{-d})) \\
    &= 1 \pm O(n^{-d})
\end{align*}
so this proves our claim.

Finally, with this claim it follows that $G''_S$ approximates the cuts of $H''_S$ within a factor of $1 \pm O(n^{-d})$. Moreover, $H''_S$ approximates the cuts of $G'_S$ within a factor of $1 \pm c_0^{-1/3}$. It follows that $G''_S$ approximates the cuts of $G'_S$ within a factor of $(1 \pm O(n^{-d}))(1 \pm c_0^{-1/3}) \in (1 \pm 2c_0^{-1/3})$. This occurs with probability $1 - O(n^{-d})$ so this completes the proof.




\end{proof}

\begin{lemma}\label{lemma:1-2single}
With probability $1 - O(n^{-d})$, $S$ will not be contracted.
\end{lemma}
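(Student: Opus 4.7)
The plan is to show that the deletion phase in Step~2 of EstimateAndContract never removes any edge with both endpoints in $S$, which forces every supernode of $G'$ that meets $S$ to remain in the same connected component of $G''$ at the end of Step~2 and hence be captured by a single $C_i$. The key input is Lemma~\ref{lemma:1-2smoothnesshard}.

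Condition on the $1-O(n^{-d})$ event from Lemma~\ref{lemma:1-2smoothnesshard}, on which every cut of $G''_S$ is within a $1 \pm 2c_0^{-1/3}$ factor of the corresponding cut of $G'_S$. By Lemma~\ref{lemma:strongcontract}, $G'_S$ is $\kappa$-strong in $G'$, so every nontrivial cut of $G'_S$ has value at least $\kappa$. Therefore every nontrivial cut of the initial $G''_S$ has value at least $(1-2c_0^{-1/3})\kappa$, which, for $c_0$ chosen large enough so that $2c_0^{-1/3} < \delta$ (recall $\delta \in (1/2,1)$ is fixed in the preliminaries), strictly exceeds the deletion threshold $(1-\delta)\kappa$.

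The main step is a first-deleted-edge argument. Suppose for contradiction that some edge with both endpoints in (the supernodes of) $S$ is eventually deleted during Step~2, and let $e$ be the first such edge. Immediately before $e$ is deleted, the current $G''_S$ still equals the initial $G''_S$, since by the minimality of $e$ no edge of $G''_S$ has yet been removed. The cut $(A,B)$ of the ambient connected component of $G''$ whose removal includes $e$ has total weight at most $(1-\delta)\kappa$; restricting to the supernodes of $G'_S$ yields a partition $(A\cap S, B\cap S)$ which is nontrivial, since $e$'s endpoints lie on opposite sides, and whose value in the initial $G''_S$ is at most the weight of $(A,B)$. This contradicts the lower bound $(1-2c_0^{-1/3})\kappa > (1-\delta)\kappa$ established in the previous paragraph. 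Hence no edge of $G''_S$ is ever deleted, so the supernodes covering $S$ remain connected via $G''_S$ and end up grouped inside a single $C_i$.

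The main subtlety is the bookkeeping in the first-deleted-edge step: the cut $(A,B)$ lives in the ambient connected component of $G''$ rather than in $G''_S$, so one must carefully argue that the restriction to $S$ only discards edges outside $G''_S$ (so the cut weight only decreases) and remains nontrivial (guaranteed by the presence of $e$). One also has to note that even if other cuts unrelated to $S$ are removed first, they cannot destroy an edge of $G''_S$ without themselves inducing a low-value cut in $G''_S$, which is the role of picking the \emph{first} violating edge. Beyond this, choosing $c_0$ large enough to make the smoothness gap smaller than $\delta$ is a simple constant calculation that does not need to be unfolded here.
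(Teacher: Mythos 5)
Your proof is correct and follows essentially the same route as the paper: invoke Lemma~\ref{lemma:1-2smoothnesshard} (together with Lemma~\ref{lemma:strongcontract}) to conclude that every nontrivial cut of $G''_S$ has value at least $(1-2c_0^{-1/3})\kappa > (1-\delta)\kappa$ for $c_0$ large, and then run a first-deletion argument to show Step~2 never severs $G''_S$. The only cosmetic difference is that the paper argues the whole vertex set $S$ lies in the connected component whose cut is being removed (since $S$ is still connected at that moment), whereas you restrict the offending cut $(A,B)$ to $S$ and note the restriction is nontrivial because $e$'s endpoints straddle it; both give the same contradiction.
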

\begin{proof}
By Lemma \ref{lemma:1-2smoothnesshard} and the observation that $G'_S$ is also $\kappa$-strong, we have that with probability $1 - O(n^{-d})$, $G''_S$ will be $(1 - 2c_0^{-1/3})\kappa$-strong. For $c_0$ sufficiently large, this means that every cut in $G''_S$ will have value strictly greater than $(1-\delta)\kappa$.

This implies that $S$ has to stay connected throughout Step 2 of Algorithm \ref{algo:contract}. Suppose not, then consider the first time a cut containing edges from $S$ gets deleted. This cut would be on an entire (currently) connected component of $G''$ so its vertex set would contain $S$ and thus it would induce a nontrivial cut on $G''_S$. This nontrivial cut on $G''_S$ has value $> (1-\delta)\kappa$ so the original cut would also have value $> (1-\delta)\kappa$ so it could not be deleted. This is a contradiction.

Thus $S$ stays connected throughout step 2, so $S$ remains connected in $G''_0$ and thus it will be contracted with probability $1 - O(n^{-d})$.
\end{proof}

\begin{lemma}
With probability $1 - O(n^{1-d})$, no $\kappa$-strong component in $G$ will be contracted. This will complete the proof of Lemma \ref{lemma:claim1-2}.
\end{lemma}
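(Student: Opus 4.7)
The plan is a straightforward union bound over all $\kappa$-strong components, combining Lemma \ref{lemma:1-2single} with the structural result of Corollary \ref{lemma:strongstructure}. By that corollary, the $\kappa$-strong components of $G$ form a partition of $V(G)$, so there are at most $n$ of them. For each $\kappa$-strong component $S$ that has not already been contracted by $\mathcal{C}$, Lemma \ref{lemma:1-2single} says that $S$ remains intact through Step 2 of EstimateAndContract and hence is contracted as a single unit, except with probability $O(n^{-d})$. Taking a union bound over the at most $n$ such components gives the desired probability $1 - O(n^{1-d})$.

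To see that this implies Lemma \ref{lemma:claim1-2}, let $e$ be any edge with $k_e \geq \kappa$ that is not contracted by $\mathcal{C}$. By definition of $k_e$, there is a vertex set containing both endpoints of $e$ which is $\kappa$-strong, and by the union-closure property (Lemma \ref{lemma:strongunion}) $e$ is contained in a (unique) $\kappa$-strong component $S$. Since $e$ is not contracted by $\mathcal{C}$, neither is $S$ in its entirety (otherwise $e$ would also be). On the good event above, $S$ gets contracted by EstimateAndContract, which means $e$ gets contracted in this round as well. By Step 3 of Algorithm \ref{algo:contract}, $e$ is then assigned the strength estimate $k'_e = \kappa/2$.

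One small subtlety worth noting but not genuinely obstructing the argument: the contraction in EstimateAndContract is performed on $G'$ (where sets of $\mathcal{C}$ are already identified), not on $G$ directly. However, Lemma \ref{lemma:strongcontract} tells us that the image $S'$ of $S$ in $G'$ (the collection of supernodes intersecting $S$) is still $\kappa$-strong in $G'$, which is exactly the hypothesis used in the proof of Lemma \ref{lemma:1-2single}. So the argument goes through verbatim for the contracted graph, and the main (and only) step is the union bound.

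There is no significant technical obstacle here — all the heavy lifting (the concentration argument on the sampled subgraph $G''_S$, the cut-preservation within a factor of $1 \pm 2 c_0^{-1/3}$, and the deduction that no cut of $S$ gets deleted in Step 2) was already done in Lemmas \ref{lemma:1-2smoothnesseasy}, \ref{lemma:1-2smoothnesshard}, and \ref{lemma:1-2single}. The only thing to be careful about is that we are upgrading a per-component failure probability of $O(n^{-d})$ to a global one of $O(n^{1-d})$, which is why the exponent in the lemma statement is one larger than in the single-component lemma.
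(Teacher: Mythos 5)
Your proposal is correct and follows essentially the same route as the paper: a union bound over the at most $n$ many $\kappa$-strong components (which form a partition of $[n]$ by Corollary \ref{lemma:strongstructure}), applying Lemma \ref{lemma:1-2single} to each. The paper's proof is exactly this two-line union bound; you spell out more explicitly how the union bound completes Lemma \ref{lemma:claim1-2} (every high-strength edge lives in some $\kappa$-strong component, which is contracted on the good event) and also flag the $G$-versus-$G'$ subtlety handled by Lemma \ref{lemma:strongcontract}, but these are elaborations of the same argument rather than a different approach.
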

\begin{proof}
The $\kappa$-strong components of $G$ partition $[n]$ so there are at most $n$ of them. The conclusion follows by applying Lemma \ref{lemma:1-2single} and taking a union bound over all the components.
\end{proof}

\subsubsection{Proof of Lemma \ref{lemma:claim1-3}}\label{lemmaproof:claim1-3}

As in the previous section, we let $G''$ denote the subsampled graph after step 1 of EstimateAndContract and $G''_0$ denote the $G''$ that results from the iterative deletion procedure. We want to show that with high probability, $G''_0$ does not contain any edges of strength $< \kappa/2$. We begin with a simple lemma that says that a cut that is small in $G'$ will typically not be large in $G''$:

\begin{lemma}\label{lemma:weakcutconcentration}
Let $S$ be any collection of edges in $G'$ with total weight $< \kappa/2$ in $G'$. Then, with probability $1 - O(n^{-d})$ the total weight of $S$ in $G''$ is $< (1-\delta)\kappa$.
\end{lemma}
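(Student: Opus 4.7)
The plan is to apply a Chernoff-type concentration bound to $T := W(S; G'')$. By construction, $T = \sum_{e \in S}(w_e/p_e) X_e$, where $X_e \sim \text{Bernoulli}(p_e)$ with $p_e = 1 - (1 - w_e/W(G'))^\lambda$; the argument from the proof of Lemma \ref{lemma:specificconcentration} establishes that these $X_e$ are negatively correlated, and by hypothesis $\EE[T] = W(S;G') < \kappa/2$. So the goal reduces to controlling the upper tail of $T$ at $(1-\delta)\kappa$ via Corollary \ref{cor:negcorrchernoffwithdet}, which requires us to bound the maximum per-edge contribution $M := \max_{e \in S} w_e/p_e$.

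The key technical step is to split $S$ into a heavy subset $H := \{e \in S : w_e \geq \alpha\kappa\}$ and a light subset $L := S \setminus H$, for a threshold $\alpha$ chosen as $\Theta(1/\log n)$ with an implicit constant depending on $d$ and $c_0$. For any $e \in H$, the inequality $p_e \geq 1 - \exp(-\lambda w_e/W(G')) \geq 1 - \exp(-c_0 \alpha \log^2 n)$ makes $1 - p_e$ polynomially small in $n$; a union bound over the at most $\binom{n}{2}$ heavy edges then shows that with probability $1 - O(n^{-d})$ every heavy edge is sampled at least once, in which case $\sum_{e \in H}(w_e/p_e) X_e = (1+o(1)) W(H;G')$ essentially deterministically. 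For the light edges, the bound $w_e/p_e \leq \max(2w_e, 2W(G')/\lambda) = O(\alpha\kappa) = O(\kappa/\log n)$ gives a uniform cap on each contribution.

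Conditioning on the heavy-sampled event, I would apply Corollary \ref{cor:negcorrchernoffwithdet} to $T_L := \sum_{e \in L}(w_e/p_e) X_e$ after rescaling by $M_L = O(\kappa/\log n)$. The target deviation $\Delta := (1-\delta)\kappa - W(H;G') - \EE[T_L]$ is at least $(1-\delta)\kappa - W(S;G') = \Omega(\kappa)$ (using $W(S;G') < \kappa/2$ together with the appropriate choice of $\delta$), so the Chernoff exponent becomes $\Omega(\Delta/M_L) = \Omega(c_0 \log n)$, which is at least $d \log n$ for $c_0$ sufficiently large. Combining the heavy-sampled event and the light-edge Chernoff estimate yields $\Pr[T \geq (1-\delta)\kappa] \leq O(n^{-d})$ as desired.

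The main obstacle is precisely the need for this heavy/light split: a single edge with $w_e$ comparable to $\kappa$ would otherwise force $M = \Omega(\kappa)$ and collapse the Chernoff exponent to a constant, but such edges have $1 - p_e$ so small that their contribution to $T$ is essentially deterministic and carries negligible variance. Calibrating $\alpha$ so as to simultaneously ensure a strong union bound over heavy edges and a small per-edge cap on light edges is the only delicate point.
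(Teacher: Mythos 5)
Your plan follows the same basic strategy as the paper's proof—decompose $S$ by edge weight, treat the heavy part as near-deterministic, and apply the negative-correlation Chernoff bound to the light part—but it differs in how the heavy edges are handled, and this introduces some unnecessary machinery and one subtle pitfall. The paper places its threshold at the natural breakpoint $w_e = W(G')/\lambda = \kappa/(c_0\log^2 n)$, where the bound $p_e \geq (1-1/e)\min(\lambda w_e/W(G'),\,1)$ switches regimes; above this threshold, $p_e \geq 1 - 1/e$ already, so the heavy contribution is bounded deterministically by $\sum_{e \in S_2} w_e/p_e \leq (1-1/e)^{-1}\sum_{e\in S_2} w_e$ with no probabilistic argument at all. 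You place the threshold higher, at $w_e \geq \alpha\kappa$ with $\alpha = \Theta(1/\log n)$, which buys a $(1+o(1))$ factor on the heavy contribution instead of $(1-1/e)^{-1}\approx 1.58$, but this gain is not needed, and it comes at the cost of (a) a union bound over heavy edges and (b) a looser light-edge cap $M_L = O(\kappa/\log n)$ versus the paper's $O(\kappa/(c_0\log^2 n))$, which drops the Chernoff exponent from $\Theta(c_0\log^2 n)$ to $\Theta(c_0\log n)$ and therefore leaves much less slack in the constants.

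Two points deserve a closer look. First, your union bound over heavy edges and the subsequent ``conditioning on the heavy-sampled event'' are both avoidable, and the conditioning in particular is dangerous: the negative-correlation property in Lemma~\ref{lemma:negcorrchernoff} (which underlies Corollary~\ref{cor:negcorrchernoffwithdet}) is a statement about the unconditional joint distribution, and it is not clear that the light $X_e$'s remain negatively correlated in this sense after you condition on $\{X_e = 1\ \forall e\in H\}$. The fix is easy—since $w_e/p_e \leq (1+o(1))w_e$ holds deterministically for every heavy edge (this is exactly the quantity your union bound was establishing, but $X_e \leq 1$ already gives the upper bound you need), you have $T_H \leq (1+o(1))W(H;G')$ with probability one, so $\Pr[T \geq (1-\delta)\kappa] \leq \Pr[T_L \geq (1-\delta)\kappa - (1+o(1))W(H;G')]$ and the Chernoff bound is applied unconditionally. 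Second, the step ``the Chernoff exponent becomes $\Omega(\Delta/M_L)$'' hides the compensating-constant trick: Corollary~\ref{cor:negcorrchernoffwithdet} gives the form $e^{-\epsilon^2\mu/3}$, which is not valid for $\epsilon > 1$, and $\EE[T_L]/M_L$ can easily be $o(\Delta/M_L)$. The paper handles this by explicitly adding a deterministic offset $t$ to bring $\mu$ up to exactly $c_0\log^2 n/2$ so that $\epsilon$ becomes a fixed constant $1/5$; your argument needs the same move, which the cited corollary does permit but which you should state. With these two adjustments your argument is correct; it is recognizably the paper's argument with a higher split threshold, and the paper's version is a bit tighter and avoids the probabilistic detour on the heavy side.

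Separately, note that both your proof and the paper's require $(1-\delta)$ bounded away from $1/2$—your ``$\Delta = \Omega(\kappa)$'' step and the paper's reduction to $\Pr[X \geq \tfrac{3}{5}c_0\log^2 n]$ both need this—which is in tension with the text's preamble stating $\delta \in (1/2, 1)$ close to 1. This looks like a typo in the preamble (presumably $1-\delta$ close to 1 was intended); your argument is internally consistent with the paper's actual proof, not with the stated parameter range.
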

\begin{proof}
Adopting the usual notation, the probability we wish to bound is:
$$\Pr[\sum_{e \in S} \frac{w_e}{p_e} X_e \geq (1-\delta)\kappa].$$ Observe that:
\begin{align*}
    \sum_{e \in S} \frac{w_e}{p_e} X_e &= \sum_{e \in S} \frac{w_e}{1 - (1 - \frac{w_e}{W(G')})^\lambda} X_e \\
    &\leq \sum_{e \in S} \frac{w_e}{1 - \exp(-\lambda w_e/W(G'))} X_e \\
    &= \sum_{e \in S} \frac{w_e}{1 - \exp(-\frac{c_0\log^2n}{\kappa} w_e)} X_e
\end{align*}
Now let $f(x) = \frac{1 - e^{-x}}{x}$ for $x \geq 0$. We have $f'(x) = \frac{e^{-x}}{x^2}(1+x-e^x) \leq 0$. It follows that for $x \leq 1$ we have $f(x) \geq f(1) = 1 - 1/e \Rightarrow 1 - e^{-x} \geq (1 - 1/e)x$, and for $x \geq 1$ we have $1 - e^{-x} \geq 1 - 1/e$. So let $S_1 = \left\{e \in S: \frac{c_0\log^2n}{\kappa} w_e \leq 1\right\}$ and $S_2 = S \backslash S_1$. Applying both of our estimates gives:
\begin{align*}
    \sum_{e \in S} \frac{w_e}{p_e} X_e &\leq \sum_{e \in S} \frac{w_e}{1 - \exp(-\frac{c_0\log^2n}{\kappa} w_e)} X_e \\
    &\leq \sum_{e \in S_1} \frac{1}{(1 - 1/e)\frac{c_0\log^2n}{\kappa}} X_e + \sum_{e \in S_2} \frac{w_e}{1 - 1/e} X_e \\
    &\leq \frac{\kappa}{(1 - 1/e)c_0\log^2n} (\sum_{e \in S_1} X_e + \frac{c_0\log^2n}{\kappa} \sum_{e \in S_2} w_e).
\end{align*}
Thus the probability we want to bound is at most the probability that:
\begin{align*}
    \frac{\kappa}{(1 - 1/e)c_0\log^2n} (\sum_{e \in S_1} X_e + \frac{c_0\log^2n}{\kappa} \sum_{e \in S_2} w_e) &\geq (1-\delta)\kappa \\
    \Leftrightarrow \sum_{e \in S_1} X_e + \frac{c_0\log^2n}{\kappa} \sum_{e \in S_2} w_e &\geq (1 - \delta)(1 - 1/e)c_0\log^2 n
\end{align*}
Now we have:
\begin{align*}
    \EE[\sum_{e \in S_1} X_e + \frac{c_0\log^2n}{\kappa} \sum_{e \in S_2} w_e] &= \sum_{e \in S_1} p_e + \frac{c_0 \log^2n}{\kappa} \sum_{e \in S_2} w_e \\
    &= \sum_{e \in S_1} (1 - (1 - \frac{w_e}{W(G')})^\lambda) + \frac{c_0\log^2n}{\kappa} \sum_{e \in S_2} w_e \\
    &\leq \sum_{e \in S_1} \frac{\lambda w_e}{W(G')} + \frac{c_0\log^2n}{\kappa} \sum_{e \in S_2} w_e \text{ (Bernoulli's inequality)} \\
    &= \frac{c_0\log^2n}{\kappa} \sum_{e \in S} w_e \\
    &\leq \frac{c_0\log^2 n}{2}
\end{align*}
Now define $t = \frac{c_0\log^2n}{2} - \EE[\sum_{e \in S_1} X_e + \frac{c_0\log^2n}{\kappa} \sum_{e \in S_2} w_e] \geq 0$. Note that $t$ is deterministic. Now define the random variable $X$ to be $\sum_{e \in S_1} X_e + (\frac{c_0\log^2n}{\kappa} \sum_{e \in S_2} w_e + t)$. We have that:
\begin{align*}
    \Pr[\sum_{e \in S_1} X_e + \frac{c_0\log^2n}{\kappa} \sum_{e \in S_2} w_e &\geq (1-\delta)(1-1/e)c_0\log^2n] \\
    \leq \Pr[X &\geq (1-\delta)(1-1/e)c_0\log^2n] \\
    \leq \Pr[X &\geq \frac{3}{5}c_0\log^2 n]
\end{align*}
for $\delta$ sufficiently close to 1. Moreover, $X$ meets all the necessary conditions for us to apply Corollary \ref{cor:negcorrchernoffwithdet} (as in the proof of Lemma \ref{lemma:1-2smoothnesseasy}, it follows from Lemma \ref{lemma:specificconcentration} that the $X_e$'s are negatively correlated). By construction, we have $\EE[X] = \frac{c_0 \log^2n}{2}$. So applying the corollary gives:
\begin{align*}
    \Pr[X \geq \frac{3}{5}c_0\log^2n] &= \Pr[X \geq \frac{6}{5} \EE[X]] \\
    &\leq 2\exp(-\frac{c_0\log^2 n}{150}) \\
    &= O(n^{-d}),
\end{align*}
for $n$ sufficiently large, as desired.
\end{proof}

We next show that there is a sequence of $\leq n$ cuts we can make in $G'$, all with total weight $< \kappa/2$, such that at the end we have divided $G'$ into its $(\kappa/2)$-strong components.

\begin{lemma}\label{lemma:cutsequence}
There exists an integer $k \leq n$ and a sequence of partitions $\mathcal{P}_1, \ldots, \mathcal{P}_k$ of the supernode set of $G'$ satisfying the following conditions:
\begin{enumerate}
    \item $\mathcal{P}_1$ consists of just 1 set which is all of $G'$.
    \item $\mathcal{P}_k$ consists of exactly the $(\kappa/2)$-strong components of $G'$.
    \item $\mathcal{P}_i$ is obtained from $\mathcal{P}_{i-1}$ by selecting some set within $\mathcal{P}_{i-1}$ and splitting it into two using a cut of value less than $\kappa/2$ in $G'$.
\end{enumerate}
\end{lemma}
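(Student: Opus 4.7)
The plan is to build the sequence greedily by repeatedly refining the trivial partition $\mathcal{P}_1 = \{V(G')\}$ along min cuts of small value, always keeping the invariant that the current partition is a coarsening of the $(\kappa/2)$-strong components of $G'$. By Corollary \ref{lemma:strongstructure}, the $(\kappa/2)$-strong components do form a partition of the supernode set of $G'$, so this goal partition is well defined.

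The inductive step proceeds as follows. Suppose $\mathcal{P}_i$ is a coarsening of the partition into $(\kappa/2)$-strong components but is not equal to it. Then some set $S \in \mathcal{P}_i$ is a union of at least two distinct $(\kappa/2)$-strong components. I would first argue that $S$ itself is not $\kappa/2$-strong: if it were, then $S$ would be contained in some $(\kappa/2)$-strong component by maximality (equivalently by Lemma \ref{lemma:strongunion} applied to $S$ together with any component it contains), contradicting the fact that $S$ meets at least two such components. Hence $G'[S]$ admits a nontrivial cut $C$ of value $<\kappa/2$; use this $C$ to split $S$ into two pieces $S_1, S_2$, and define $\mathcal{P}_{i+1}$ by replacing $S$ with $S_1, S_2$ in $\mathcal{P}_i$.

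The key thing to verify is that $\mathcal{P}_{i+1}$ is still a coarsening of the $(\kappa/2)$-strong-component partition. Suppose for contradiction that some $(\kappa/2)$-strong component $T$ meets both $S_1$ and $S_2$. By the inductive hypothesis $T \subseteq S$, so the restriction of $C$ to $T$ is a nontrivial cut of $G'[T]$ using only edges of $C$, hence of value at most $|C|<\kappa/2$. This contradicts $K(G'[T]) \geq \kappa/2$. Therefore every $(\kappa/2)$-strong component sits entirely on one side of the split, and the invariant is maintained.

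For termination and the bound, each refinement step strictly increases the number of parts by one, and the supernode set of $G'$ has at most $n$ elements, so the process terminates at some $k \leq n$ with $\mathcal{P}_k$ having no further splittable set; by the invariant $\mathcal{P}_k$ is a coarsening of the $(\kappa/2)$-strong-component partition, and by the termination condition it must equal that partition. The only mild subtlety is the verification that $S$ not being $\kappa/2$-strong implies its min cut has value $<\kappa/2$ (immediate from the definition of $K(G'[S])$), and that a cut of $G'[S]$ is the same thing as a cut ``in $G'$'' for the purpose of the Lemma statement; both are routine.
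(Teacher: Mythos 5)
Your proof is correct and matches the paper's approach essentially step-for-step: the same greedy refinement of the trivial partition, the same invariant that each part is a union of $(\kappa/2)$-strong components, the same contradiction argument (a straddled strong component would yield a cut of value $\geq \kappa/2$), and the same termination count via the number of parts. The only difference is that you spell out via Lemma~\ref{lemma:strongunion} why a set containing two distinct $(\kappa/2)$-strong components cannot itself be $(\kappa/2)$-strong, which the paper leaves implicit as ``by assumption.''
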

\begin{proof}
Initialize $\mathcal{P} = \mathcal{P}_1$. At any point, we will ensure that all sets of $\mathcal{P}$ are the union of some subcollection of the $(\kappa/2)$-strong components of $G'$. While there is some set $S$ in $\mathcal{P}$ that contains multiple $(\kappa/2)$-strong components of $G'$, proceed as follows. By assumption, $S$ itself is not $(\kappa/2)$-strong, so we can find a cut in $S$ of value $< \kappa/2$ splitting it into $S_1$ and $S_2$. Replace $S$ with $S_1$ and $S_2$ in $\mathcal{P}$. Our sequence $\mathcal{P}_i$ is just the sequence of states $\mathcal{P}$ attains throughout the course of this process. We need to check a few things to complete the proof of the lemma:

\begin{enumerate}
    \item To see that $S_1$ and $S_2$ are each still the union of some subcollection of $(\kappa/2)$-strong components: if they were not, there would exist some $(\kappa/2)$-strong component with some vertices in each of $S_1$ and $S_2$. But then our chosen cut would induce a cut on this component and thus have value at least $\kappa/2$, a contradiction.
    \item By definition, this procedure terminates when every set in $\mathcal{P}$ is a $(\kappa/2)$-strong component. This ensures that $\mathcal{P}_k$ has the required structure.
    \item Finally, $k \leq n$ because it is clear by induction that $\mathcal{P}_i$ consists of $i$ sets, and there are at most $n$ supernodes to partition.
\end{enumerate}
\end{proof}

Next, we use this to show that with high probability, Step 2 of Algorithm \ref{algo:contract} must delete all edges that are of strength $\kappa/2$ in $G'$. (Note that this does not yet complete the proof of the lemma, as strength in $G'$ need not be the same as strength in $G$.)

\begin{lemma}\label{lemma:1-3easy}
Let $\mathcal{P}_k$ be the partition of the supernodes of $G'$ into $(\kappa/2)$-strong components. Then with probability $1 - O(n^{1-d})$, the partition $\mathcal{P}$ defined by $G''_0$'s connected components will be a refinement of $\mathcal{P}_k$.
\end{lemma}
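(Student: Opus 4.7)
The plan is to apply Lemma \ref{lemma:cutsequence} together with a union bound of Lemma \ref{lemma:weakcutconcentration}, and then to use the termination rule of Step 2 of Algorithm \ref{algo:contract} to rule out any connected component of $G''_0$ that straddles two parts of $\mathcal{P}_k$.

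First I invoke Lemma \ref{lemma:cutsequence} to obtain partitions $\mathcal{P}_1, \ldots, \mathcal{P}_k$ with $k \leq n$, where $\mathcal{P}_i$ is obtained from $\mathcal{P}_{i-1}$ by splitting one of its parts along a cut $C_i$ of value strictly less than $\kappa/2$ in $G'$. Viewing each $C_i$ as a collection of edges of $G'$, I apply Lemma \ref{lemma:weakcutconcentration} to each one and union bound over the at most $n-1$ such cuts. This yields that, with probability $1 - O(n^{1-d})$, every $C_i$ simultaneously has total weight $< (1-\delta)\kappa$ in $G''$. I condition on this event for the rest of the argument.

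Next I argue that under this event, every connected component $D$ of $G''_0$ is contained in a single part of $\mathcal{P}_k$. Suppose instead that some $D$ meets two distinct parts of $\mathcal{P}_k$, and let $i$ be the smallest index for which $D$ is not contained in a single part of $\mathcal{P}_i$. Since $\mathcal{P}_1$ consists of only one set, $i \geq 2$, so by minimality $D$ lies inside a single part $S \in \mathcal{P}_{i-1}$. The step $\mathcal{P}_{i-1} \to \mathcal{P}_i$ splits $S$ into $S_1, S_2 \in \mathcal{P}_i$ via the cut $C_i$, and $D \cap S_1$ and $D \cap S_2$ are both nonempty (else $D$ would fit in $\mathcal{P}_i$ after all). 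Hence $(D \cap S_1, D \cap S_2)$ is a nontrivial cut of $D$ in $G''_0$ whose edges form a subset of $C_i$. Since $G''_0$ is obtained from $G''$ purely by edge deletions, the weight of this cut in $G''_0$ is at most the weight of $C_i$ in $G''$, which under our event is strictly less than $(1-\delta)\kappa$. But Step 2 of Algorithm \ref{algo:contract} terminates only once every connected component has no cut of value $\leq (1-\delta)\kappa$, so this is a contradiction.

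The substantive probabilistic work is already packaged inside Lemma \ref{lemma:weakcutconcentration}; the remaining task is combinatorial bookkeeping. The only subtle point is the minimality argument that pins down a single small cut $C_i$ that $D$ must cross, for which it is crucial that Lemma \ref{lemma:cutsequence} gives us a laminar refinement of $G'$ into its $(\kappa/2)$-strong components via low-weight cuts. Taking the union bound over the (at most) $n-1$ cuts in this sequence is exactly what converts the per-cut $1 - O(n^{-d})$ guarantee into the claimed $1 - O(n^{1-d})$.
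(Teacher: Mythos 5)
Your proof is correct and takes essentially the same approach as the paper: both combine Lemma~\ref{lemma:cutsequence} with a union bound over Lemma~\ref{lemma:weakcutconcentration} and use a first-failure argument to pin down a single low-weight cut $C_i$ that a straddling component of $G''_0$ would have to cross. The only cosmetic difference is in bookkeeping --- you condition on the high-probability event that every $C_i$ is light in $G''$ and then argue deterministically, while the paper bounds, for each $i$, the probability of the mutually exclusive event ``$\mathcal{P}$ refines $\mathcal{P}_i$ but not $\mathcal{P}_{i+1}$'' and unions over $i$; these are equivalent.
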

\begin{proof}
Let $\mathcal{P}_1, \ldots, \mathcal{P}_k$ be defined as in Lemma \ref{lemma:cutsequence}. It trivially holds that $\mathcal{P}$ is a refinement of $\mathcal{P}_1$. We show for each $i < k$ that $\Pr[\mathcal{P}\text{ is a refinement of }\mathcal{P}_i\text{ but not of }\mathcal{P}_{i+1}]$ is $O(n^{-d})$. Once we show this, it will follow by a union bound over indices $i$ that $\Pr[\mathcal{P}\text{ is not a refinement of }\mathcal{P}_k]$ is $O(n^{1-d})$, and this will complete the proof of the lemma.

So now fix an index $i$. Let $S$ be the set in $\mathcal{P}_i$ that gets split into $S_1$ and $S_2$ in $\mathcal{P}_{i+1}$. If $\mathcal{P}$ is a refinement of $\mathcal{P}_i$ but not of $\mathcal{P}_{i+1}$, then the cut on $S$ splitting it into $S_1$ and $S_2$ must induce a nontrivial cut on some set in $\mathcal{P}$. But every set in $\mathcal{P}$ is $(1-\delta)\kappa$-strong in $G''$ by construction. Thus the cut on $S$ splitting it into $S_1$ and $S_2$ must have value at least $(1-\delta)\kappa$ in $G''$ despite having value $< \kappa/2$ in $G'$ by definition. By Lemma \ref{lemma:weakcutconcentration}, this occurs with probability $O(n^{-d})$. This proves our claim and hence the lemma.
\end{proof}

Finally, we tie this back to edges of strength $< \kappa/2$ in $G$:

\begin{lemma}
With probability $1 - O(n^{1-d})$, no edge $e$ of strength $< \kappa/2$ (that has not already been contracted) will be included in $G''_0$. This completes the proof of Lemma \ref{lemma:claim1-3}.
\end{lemma}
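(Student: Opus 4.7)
The plan is to reduce the claim directly to the combination of Lemma \ref{lemma:weakcontract} and the already-established Lemma \ref{lemma:1-3easy}. Concretely, I want to show that any edge $e$ of strength $<\kappa/2$ in $G$ that has not been contracted is forced to have its two endpoints (viewed as supernodes of $G'$) in \emph{different} $(\kappa/2)$-strong components of $G'$; then Lemma \ref{lemma:1-3easy} will immediately put those endpoints in different connected components of $G''_0$, so $e$ cannot be contracted or assigned a strength estimate.

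First, I would fix any edge $e$ with $k_e<\kappa/2$ that has not yet been contracted, and apply Lemma \ref{lemma:weakcontract}. This is legal because the lemma's hypothesis, ``every contracted edge has strength $\geq \kappa$,'' matches exactly the assumption made in the statement of Lemma \ref{lemma:claim1-3} (with $\kappa$ there replaced by $\kappa/2$). The conclusion is that the strength $l_e$ of $e$ inside $G'$ is also $<\kappa/2$.

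Next, I would translate this into a statement about the partition $\mathcal{P}_k$. By definition of edge strength, if both endpoints of $e$ lay in the same $(\kappa/2)$-strong component of $G'$, then that vertex-induced subgraph would witness $l_e \geq \kappa/2$, contradicting the previous step. Hence the two endpoints of $e$ lie in distinct sets of $\mathcal{P}_k$. Now I would invoke Lemma \ref{lemma:1-3easy}: with probability $1 - O(n^{1-d})$, the partition $\mathcal{P}$ induced by the connected components of $G''_0$ is a refinement of $\mathcal{P}_k$. Conditioning on this event, the endpoints of $e$ lie in different components of $G''_0$, so $e$ is not assigned a strength estimate and is not contracted. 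Taking a union over all such $e$ costs nothing since we already conditioned on a single event of probability $1 - O(n^{1-d})$, which gives the desired bound.

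The main obstacle, if any, is purely book-keeping: making sure that ``strength in $G$'' versus ``strength in $G'$'' is used consistently (Lemma \ref{lemma:weakcontract} is exactly the bridge between these two notions), and that the hypothesis on already-contracted edges is available to apply it. Once that link is in place, everything else is a one-line consequence of Lemma \ref{lemma:1-3easy}, so no new probabilistic estimate is needed at this step.
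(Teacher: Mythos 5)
Your proof is correct and follows essentially the same route as the paper: apply Lemma~\ref{lemma:weakcontract} (with $\kappa$ replaced by $\kappa/2$, using the hypothesis of Lemma~\ref{lemma:claim1-3}) to transfer the low-strength condition from $G$ to $G'$, observe that $e$'s endpoints must therefore lie in distinct $(\kappa/2)$-strong components of $G'$, and then invoke Lemma~\ref{lemma:1-3easy} to conclude they are disconnected in $G''_0$ with probability $1-O(n^{1-d})$. The book-keeping you flag (strength in $G$ versus $G'$, and checking the hypothesis on contracted edges) is exactly the content of the paper's argument.
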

\begin{proof}
By the assumption of Lemma \ref{lemma:claim1-3}, we may apply Lemma \ref{lemma:weakcontract} to find that any edge $e$ we are concerned with will also have strength $< \kappa/2$ in $G'$. Now by Lemma \ref{lemma:1-3easy}, with probability $1 - O(n^{1-d})$, the partition $\mathcal{P}$ defined by $G''_0$'s connected components will be a refinement of $\mathcal{P}_k$.

If this is the case, then since $e$ has strength $< \kappa/2$ in $G'$, its endpoints belong to distinct $(\kappa/2)$-strong components in $G'$. Then $e$'s endpoints must also belong to distinct sets in $\mathcal{P}$ i.e. they are no longer connected in $G''_0$. This must mean that $e$ itself was not included in $G''_0$. This is true for any edge $e$ of strength $< \kappa/2$ in $G$ (that has not been contracted), so the conclusion follows.
\end{proof}






\subsection{Correctness of Naive Weighted Subsampling}\label{lemmaproofs:nwscorrect}

Let us first show that every call made to EstimateAndContract behaves in alignment with Lemmas \ref{lemma:claim1-2} and \ref{lemma:claim1-3} with high probability. In subsequent sections, we will assume that all calls to EstimateAndContract run successfully.

\begin{lemma}\label{lemma:nwsiterationcount}
If every call to EstimateAndContract in NaiveWeightedSubsample behaves in alignment with Lemmas \ref{lemma:claim1-2} and \ref{lemma:claim1-3}, then NaiveWeightedSubsample will call EstimateAndContract $O(\min (\log n + \log W, \log T))$ times.
\end{lemma}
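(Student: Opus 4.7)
\textbf{Proof plan for Lemma \ref{lemma:nwsiterationcount}.} The plan is to bound the iteration count separately against each of the two stopping criteria in the main \texttt{while} loop (namely $\texttt{GetTotalWeight}(G) > 0$ and $\kappa > W_{\text{tot}}/T$), and then take the minimum. The $O(\log T)$ bound is immediate: we initialize $\kappa$ to be the smallest power of $2$ at least $W_{\text{tot}}$, so $\kappa \le 2W_{\text{tot}}$ initially, and each iteration halves $\kappa$, so after at most $\lceil \log_2 T \rceil + O(1) = O(\log T)$ iterations we have $\kappa \le W_{\text{tot}}/T$ and the loop exits.

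For the $O(\log n + \log W)$ bound, the approach is to argue that once $\kappa$ drops below the smallest possible edge strength in $G$, Lemma \ref{lemma:claim1-2} forces all remaining edges to be contracted, making $\texttt{GetTotalWeight}(G) = 0$ and terminating the loop. First I would lower bound the minimum edge strength in terms of $W_{\text{tot}}$ and $W$. Let $w_{\min}$ and $w_{\max}$ be the smallest and largest nonzero edge weights, so $W = w_{\max}/w_{\min}$. Since $w_{\max} \le W_{\text{tot}}$ (the max weight is at most the total), we get $w_{\min} \ge w_{\max}/W \ge $ (something we need to be careful about). Actually the cleaner bound is $w_{\max} \ge W_{\text{tot}}/\binom{n}{2}$ (the max weight is at least the average), so $w_{\min} = w_{\max}/W \ge W_{\text{tot}}/(\binom{n}{2}W)$. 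Next, for any edge $e$ with weight $w_e$, its strength $k_e$ is at least $w_e$, because the two-vertex induced subgraph on $e$'s endpoints has min-cut value exactly $w_e$. Hence every nonzero edge satisfies $k_e \ge w_{\min} \ge W_{\text{tot}}/(\binom{n}{2}W)$.

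Now suppose the loop has run enough iterations that $\kappa < W_{\text{tot}}/(\binom{n}{2}W)$. Then every remaining uncontracted edge $e$ has $k_e \ge \kappa$, so by Lemma \ref{lemma:claim1-2} (applied under the assumption of the present lemma that the \texttt{EstimateAndContract} call behaves correctly) $e$ gets contracted, forcing $\texttt{GetTotalWeight}(G) = 0$ on the next check. Since $\kappa$ starts at $O(W_{\text{tot}})$ and halves each iteration, reaching a value below $W_{\text{tot}}/(\binom{n}{2}W)$ takes only $\log_2(\binom{n}{2}W) + O(1) = O(\log n + \log W)$ iterations. Combining with the $O(\log T)$ bound gives the claim.

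The main (very mild) obstacle is just being careful that the inductive hypothesis of Lemma \ref{lemma:claim1-3} --- namely ``any edge contracted by $\mathcal{C}$ has strength $\ge \kappa/2$'' --- is preserved across iterations, so that the ``correct behavior'' assumption of the present lemma is actually consistent. This is automatic: in the previous iteration $\kappa$ was twice as large, and Lemma \ref{lemma:claim1-2} only contracts edges of strength $\ge \kappa$ at that (previous) $\kappa$, which equals strength $\ge \kappa/2$ at the current value. Everything else is a straightforward comparison of powers of $2$.
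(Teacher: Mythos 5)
Your proof is correct and takes essentially the same approach as the paper: the $O(\log T)$ bound follows from the explicit stopping criterion and the geometric halving of $\kappa$, while the $O(\log n + \log W)$ bound follows by observing that every nonzero edge has strength at least $w_{\min} \geq W_{\text{tot}}/(n^2 W)$, so once $\kappa$ falls below this value Lemma~\ref{lemma:claim1-2} guarantees all remaining edges are contracted. Your closing remark about the inductive hypothesis of Lemma~\ref{lemma:claim1-3} is a fine sanity check but is not needed for this particular lemma; the paper handles that consistency separately in Lemma~\ref{lemma:nws1round}.
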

\begin{proof}
Note firstly that $\kappa$ is initialized to be $< 2W_\text{tot}$ and it is halved after each call to EstimateAndContract. The loop in NaiveWeightedSubsample will terminate once $\kappa \leq W_\text{tot}/T$, so this allows for at most $O(\log T)$ iterations. It remains to show that the number of iterations must also be at most $O(\log n + \log W)$.

Let $w_\text{max}$ and $w_\text{min}$ be the maximum and minimum nonzero edge weights in $G$. Then note that any edge in $G$ must have strength $\geq w_\text{min}$ (since we must have $k_e \geq w_e \geq w_\text{min}$.) Since we are assuming here that EstimateAndContract always runs correctly, Lemma \ref{lemma:claim1-2} tells us that all edges of $G$ should be contracted once we have run EstimateAndContract with value $\kappa \leq w_\text{min}$. Thus the number of iterations is at most:
\begin{align*}
    O(\log \frac{W_\text{tot}}{w_\text{min}}) &= O(\log \frac{n^2 w_\text{max}}{w_\text{min}}) \\
    &= O(\log (n^2W)) \\
    &= O(\log n + \log W),
\end{align*}
which completes the proof of the lemma.
\end{proof}
Taking a union bound over the first $O(\min (\log n + \log W, \log T))$ iterations of EstimateAndContract and applying the probability estimates from Lemmas \ref{lemma:claim1-2} and \ref{lemma:claim1-3} yields the following:

\begin{corollary}\label{cor:contractunionboundnws}
With probability $1 - O(n^{1-d} \cdot \min(\log n + \log W, \log T))$, every call to EstimateAndContract in NaiveWeightedSubsample will behave in alignment with Lemmas \ref{lemma:claim1-2} and \ref{lemma:claim1-3} (i.e. the events that the lemmas claim will happen with high probability will all happen).
\end{corollary}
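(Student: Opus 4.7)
The plan is to combine Lemmas \ref{lemma:claim1-2} and \ref{lemma:claim1-3} (each giving failure probability $O(n^{1-d})$ per call) with the iteration-count bound from Lemma \ref{lemma:nwsiterationcount} via a union bound. The two branches of the $\min$ require slightly different handling, so I would prove each separately and then take the minimum.

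For the $O(\log T)$ branch, the argument is fully unconditional: the \texttt{while} loop in NaiveWeightedSubsample halves $\kappa$ each iteration, and it stops once $\kappa \le W_{\text{tot}}/T$. Since $\kappa$ is initialized to less than $2W_{\text{tot}}$, the loop runs for at most $O(\log T)$ iterations regardless of the random outcomes. Each call to EstimateAndContract satisfies the conclusion of Lemma \ref{lemma:claim1-2} with probability $1 - O(n^{1-d})$, and the hypothesis of Lemma \ref{lemma:claim1-3} (that every contracted edge has strength at least the current $\kappa/2 = \kappa_{\text{prev}}/2$) is maintained inductively whenever the previous call conformed to Lemma \ref{lemma:claim1-2}. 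Thus a union bound over the at most $O(\log T)$ iterations gives total failure probability $O(n^{1-d} \log T)$.

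For the $O(\log n + \log W)$ branch, I need to break the apparent circularity: Lemma \ref{lemma:nwsiterationcount} bounds the iteration count \emph{assuming} all calls succeed, but we want to bound the success probability via the iteration count. The resolution is a stopping-time style argument. Let $M = C(\log n + \log W)$ for the constant implicit in Lemma \ref{lemma:nwsiterationcount}, and let $A_i$ denote the event that the $i$th call (if it occurs) conforms to both Lemmas \ref{lemma:claim1-2} and \ref{lemma:claim1-3}. The contrapositive of Lemma \ref{lemma:nwsiterationcount} states that whenever $A_1 \cap \cdots \cap A_{M}$ holds, the algorithm has already terminated by iteration $M$; equivalently, on this event no call beyond iteration $M$ ever occurs, and hence all calls conform to the two lemmas. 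Therefore the failure event is contained in $\bigcup_{i=1}^{M} \neg A_i$, and a union bound — using that $\Pr[\neg A_i \mid A_1, \ldots, A_{i-1}] = O(n^{1-d})$ from Lemmas \ref{lemma:claim1-2} and \ref{lemma:claim1-3} — yields failure probability $O(M \cdot n^{1-d}) = O(n^{1-d}(\log n + \log W))$. Taking the minimum of the two bounds completes the corollary.

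The only subtle point — and the one I expect would require the most care to write down cleanly — is the inductive verification that the hypothesis of Lemma \ref{lemma:claim1-3} (strength at least $\kappa/2$ for every contracted edge) continues to hold for each new call. This is precisely guaranteed by Lemma \ref{lemma:claim1-2} at the \emph{previous} value of $\kappa$ (which equals $2\kappa$ for the current call), so conditioning on $A_1 \cap \cdots \cap A_{i-1}$ makes the conditional failure probability of the $i$th call be $O(n^{1-d})$ as required. Once this bookkeeping is in place, the union bound is routine.
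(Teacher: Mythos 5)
Your proposal is correct and follows the same approach as the paper: a union bound over the $O(\min(\log n + \log W, \log T))$ iterations bounded by Lemma~\ref{lemma:nwsiterationcount}, combined with the per-call failure probabilities from Lemmas~\ref{lemma:claim1-2} and~\ref{lemma:claim1-3}. Your stopping-time resolution of the apparent circularity (the failure event is contained in $\bigcup_{i=1}^{M}\neg A_i$ because conditioning on $A_1\cap\cdots\cap A_M$ forces termination by iteration $M$) and the inductive verification that the hypothesis of Lemma~\ref{lemma:claim1-3} is maintained on the success event are both sound, and are more explicit than the paper's one-line appeal to a union bound.
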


\subsubsection{Estimate and Contract Gives Good Strength Estimates}

\begin{definition}\label{def:labelled}
We say an edge $e$ is \emph{labeled} in a particular run of EstimateAndContract if in that run $e$ is contracted. Moreover, we say $e$ is \emph{correctly labeled} if $e$ was given an edge strength estimate $k'_e \in [k_e/4, k_e]$. We say $e$ is \emph{incorrectly labeled} if $e$ was given an edge strength estimate outside this range.
\end{definition}

Note firstly that $\kappa$ will always be a power of 2 in every call to EstimateAndContract. Moreover, it decreases by a factor of 2 after each call. What we want to show is that all edges of sufficiently high strength will be correctly labeled at the end of Naive Weighted Subsampling. The following lemma is the key step for this:

\begin{lemma}\label{lemma:nws1round}
We have the following for each round of the loop in Algorithm \ref{algo:nws}. Let $\kappa$ denote the value at the start of the round (so before it gets halved).
\begin{enumerate}
    \item After the call to EstimateAndContract, every edge with strength $\geq \kappa$ has been correctly labeled.
    \item After the call to EstimateAndContract, no edge with strength $< \kappa/2$ has been labeled.
    \item After the call to EstimateAndContract, any edge with strength in $[\kappa/2, \kappa)$ will either not be labeled, or will be correctly labeled.
\end{enumerate}
\end{lemma}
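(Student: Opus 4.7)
The plan is to prove Lemma \ref{lemma:nws1round} by strong induction on the round number of the outer loop in Algorithm \ref{algo:nws}, with Lemmas \ref{lemma:claim1-2} and \ref{lemma:claim1-3} serving as the main tools. Let $\kappa_i$ denote the value of $\kappa$ used in the $i$-th call to EstimateAndContract, so $\kappa_{i+1} = \kappa_i/2$ and $\kappa_1$ is the smallest power of 2 at least $W_{\text{tot}}$. The inductive hypothesis is that the three properties hold at the end of every round prior to the current one.

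For the base case, observe that $\kappa_1 \geq W_{\text{tot}}$ upper-bounds every edge strength (since isolating either endpoint of $e$ is a cut of value at most $W_{\text{tot}}$), so no edge has $k_e > \kappa_1$, and $\mathcal{C}$ is empty so Lemma \ref{lemma:claim1-3}'s hypothesis is vacuous. Applying Lemmas \ref{lemma:claim1-2} and \ref{lemma:claim1-3} and checking that the assigned estimate $\kappa_1/2$ lies in $[k_e/4, k_e]$ whenever $k_e \in [\kappa_1/2, \kappa_1]$ yields all three properties.

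For the inductive step, the crucial preliminary is verifying the hypothesis of Lemma \ref{lemma:claim1-3} at the start of round $i$: every edge in $\mathcal{C}$ has strength at least $\kappa_i/2$. This is where the induction hypothesis enters; property 2 of round $i-1$ states that no edge with $k_e < \kappa_{i-1}/2 = \kappa_i$ has yet been labeled, so every previously contracted edge has $k_e \geq \kappa_i \geq \kappa_i/2$. Lemmas \ref{lemma:claim1-2} and \ref{lemma:claim1-3} then apply to the current round, contracting every not-yet-contracted edge with $k_e \geq \kappa_i$ (assigning estimate $\kappa_i/2$) and contracting no edge with $k_e < \kappa_i/2$.

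The three properties then follow by combining the induction hypothesis with these two conclusions. Property 1: edges with $k_e \geq 2\kappa_i$ are already correctly labeled by IH, while edges with $k_e \in [\kappa_i, 2\kappa_i)$ are labeled in this round with estimate $\kappa_i/2 \in [k_e/4, k_e]$. Property 2: IH guarantees no edge with $k_e < \kappa_i$ was labeled before, and Lemma \ref{lemma:claim1-3} guarantees no edge with $k_e < \kappa_i/2$ is labeled in this round. Property 3: any edge with $k_e \in [\kappa_i/2, \kappa_i)$ that does get labeled receives estimate $\kappa_i/2 \in [k_e/4, k_e]$, since $k_e < \kappa_i$ gives $k_e/4 < \kappa_i/2 \leq k_e$. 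The main subtlety I expect is being precise that "has been labeled" in property 2 is interpreted cumulatively across all rounds up to and including the current one, so that its role in verifying Lemma \ref{lemma:claim1-3}'s hypothesis in the next round is sound; otherwise the induction chain breaks.
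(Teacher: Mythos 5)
Your proposal is correct and follows essentially the same inductive structure as the paper's proof: both proceed by induction on rounds, both invoke Lemmas~\ref{lemma:claim1-2} and~\ref{lemma:claim1-3} as the workhorses, and both use property~2 of the previous round's inductive hypothesis to verify the hypothesis of Lemma~\ref{lemma:claim1-3} (that every already-contracted edge has strength at least $\kappa/2$). The arithmetic checks that $\kappa/2 \in [k_e/4, k_e]$ in the relevant strength windows match the paper's, and your explicit flagging of the cumulative interpretation of ``has been labeled'' is exactly the subtlety needed to keep the induction chain sound.
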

\begin{proof}
We proceed by induction on the number of rounds that have elapsed. The base case is vacuously true as at the beginning no edges have been labeled or contracted at all. So assume that all previous rounds have proceeded according to our claim and now we focus on the current one.

We first show property 1. Consider any edge $e$ of strength $\geq \kappa$ that has not yet been contracted. Lemma \ref{lemma:claim1-2} tells us that $e$ will be contracted and labeled with an estimate of $\kappa/2$. This will be a correct labeling if $k_e \leq 2\kappa$. So we need to show that any edge of strength $> 2\kappa$ has already been contracted. We have two cases:
\begin{enumerate}
    \item If $\kappa$ is not the first value of $\kappa$ used, this is clear by induction as we previously ran EstimateAndContract with $2\kappa$ in the place of $\kappa$.
    \item If $\kappa$ is the first value, then by definition we have $\kappa \geq W_\text{tot} \geq k_e$ for any edge $e$, so there cannot be any edge of strength $> 2\kappa$.
\end{enumerate}
Next we show property 2. Consider any edge of strength $< \kappa/2$ in $G$. By induction, $e$ cannot have been labeled before this iteration as previous iterations were all run with larger values of $\kappa$. So the property will follow from Lemma \ref{lemma:claim1-3} if we can check that every edge $e$ already contracted in $\mathcal{C}$ has strength $\geq \kappa/2$. We will show more strongly that it must have strength $\geq \kappa$. Indeed, any such edge must have been contracted in some previous iteration where $\kappa$ was at least twice as large, so by induction its strength must be at least $2\kappa/2 = \kappa$. This completes the proof of property 2.

Finally, we show property 3. Consider an edge $e$ with strength in $[\kappa/2, \kappa)$. Any previous iteration would have used a value $\geq 2\kappa$, so as we just noted no edge of strength $< \kappa$ could have already been contracted. So one of the following must happen:
\begin{itemize}
    \item $e$ is not labeled.
    \item $e$ is labeled with $k'_e = \kappa/2$. We have $k_e/2 < k'_e \leq k_e$ so in this case $e$ will be correctly labeled.
\end{itemize}
This completes the proof of the lemma.
\end{proof}

\begin{corollary}\label{cor:nwsestimate}
At the end of NaiveWeightedSubsample, every edge with strength $\geq 2W_\text{tot}/T$ will be correctly labeled. Moreover, no edges will be incorrectly labeled.
\end{corollary}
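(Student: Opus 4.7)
The plan is to simply chain together the three properties established in Lemma \ref{lemma:nws1round} across all iterations of the outer loop of NaiveWeightedSubsample. Both parts of the corollary should follow by tracking how the sequence of values of $\kappa$ evolves and applying the appropriate property from the lemma.

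For the first claim, I would first identify the last iteration of the while loop. Let $\kappa_f$ denote the value of $\kappa$ at the start of this iteration. Because the loop condition is $\kappa > W_\text{tot}/T$ and $\kappa$ is halved at the end of every iteration, after the last iteration we have $\kappa_f/2 \leq W_\text{tot}/T$, i.e.\ $\kappa_f \leq 2W_\text{tot}/T$. (If instead the loop terminated early because GetTotalWeight became $0$, then every edge has already been contracted and the conclusion is even easier.) By property 1 of Lemma \ref{lemma:nws1round} applied to this final iteration, every edge with strength $\geq \kappa_f$ has been correctly labeled. Since $2W_\text{tot}/T \geq \kappa_f$, any edge with strength $\geq 2W_\text{tot}/T$ is correctly labeled at the end of NaiveWeightedSubsample.

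For the second claim, I would argue that within every single iteration, any edge that is labeled is correctly labeled, and then take a union over iterations. Fix an iteration with value $\kappa$, and suppose an edge $e$ is labeled in this iteration; its assigned estimate is $k'_e = \kappa/2$. By property 2, we must have $k_e \geq \kappa/2$, which gives the lower bound $k'_e = \kappa/2 \leq k_e$. For the upper bound $k'_e \geq k_e/4$ (equivalently $k_e \leq 2\kappa$), I would reuse the observation from the proof of property 1 in Lemma \ref{lemma:nws1round}: by induction on iterations, no edge of strength strictly greater than $2\kappa$ can remain uncontracted at the start of the current iteration, because such an edge would have been handled in a previous iteration whose $\kappa$-value was at least $2\kappa$. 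Hence $k_e \leq 2\kappa$, so $k'_e \in [k_e/4, k_e]$ and $e$ is correctly labeled per Definition \ref{def:labelled}.

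No step is really a ``main obstacle'' here; the corollary is essentially a bookkeeping summary of Lemma \ref{lemma:nws1round}. The only subtlety to handle carefully is making the upper-bound direction $k_e \leq 2\kappa$ explicit, since property 1 of Lemma \ref{lemma:nws1round} covers strength $\geq \kappa$ without directly telling us the strength is bounded above. Pointing to the inductive argument already present in the proof of that lemma (or, equivalently, invoking properties 2 and 3 on previous iterations to rule out strengths above $2\kappa$ being still present) suffices, after which the corollary is immediate.
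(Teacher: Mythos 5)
Your proof is correct and follows essentially the same route as the paper's: identify the last value $\kappa_f$ of $\kappa$, observe $\kappa_f \leq 2W_{\text{tot}}/T$ (or, in the early-termination case, that every edge is already labeled), and apply property~1 of Lemma~\ref{lemma:nws1round}, together with the observation that Lemma~\ref{lemma:nws1round} rules out any incorrect labelings. The only cosmetic difference is that for the ``no incorrect labels'' part you re-derive the bound $k_e \leq 2\kappa$ by pointing back into the proof of Lemma~\ref{lemma:nws1round}, whereas the paper gets the same conclusion by directly combining properties~1--3 of that lemma; both are fine.
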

\begin{proof}
It is clear from Lemma \ref{lemma:nws1round} that no edge in $G$ will ever be incorrectly labeled, so it suffices to show that all edges with strength $\geq 2W_\text{tot}/T$ will be labeled. To see this, we have two cases depending on how the loop in NaiveWeightedSubsample terminates:
\begin{itemize}
    \item If it terminates because $\text{GetTotalWeight}(G) = 0$, this means that all edges have been labeled and contracted.
    \item If it terminates because $\kappa \leq W_\text{tot}/T$, then the final call to EstimateAndContract ran with twice this value of $\kappa$, which is $\leq 2W_\text{tot}/T$. In this case, the conclusion follows from property 1 of Lemma \ref{lemma:nws1round} that any edge with strength $\geq 2W_\text{tot}/T$ has been correctly labeled.
\end{itemize}
\end{proof}

\subsubsection{$\text{NaiveWeightedSubsample}(G, \infty)$ Constructs a Sparsifier}\label{sec:nwsinfcorrect}

If every call to EstimateAndContract runs correctly, then Corollary \ref{cor:nwsestimate} tells us that every edge of strength $\geq 2W_\text{tot}/\infty = 0$ is correctly labeled.

Thus we may apply Lemma \ref{lemma:sparsifierconcentration} to find that in this case, $H = \text{ConstructSparsifier}(G, X)$ will be a $(2\epsilon)$-sparsifier of $G$ with probability $1 - O(n^{-d})$.

In summary: for $H$ to be a $(2\epsilon)$-sparsifier, we need all calls to EstimateAndContract to run correctly, and we also need ConstructSparsifier to succeed. By Corollary \ref{cor:contractunionboundnws}, the former occurs with probability $1 - O(n^{1-d} \cdot \min(\log n + \log W, \log \infty)) = 1 - O(n^{1-d}(\log n + \log W))$ and the latter occurs with probability $1 - O(n^{-d})$. Thus a union bound tells us that $H$ is a $(2\epsilon)$-sparsifier of $G$ with probability $1 - O(n^{1-d}(\log n + \log W))$, as desired.

\subsubsection{$\text{NaiveWeightedSubsample}(G, n^3)$ Approximates Max-Cut}\label{sec:nwsearlystopworks}

Here we address the correctness part of Theorem \ref{thm:earlystopnwsworks}. As mentioned in the proof outline, the intuition is to couple the outputs of $\text{NaiveWeightedSubsample}(G, n^3)$ and $\text{NaiveWeightedSubsample}(G, \infty)$, but we need to do this carefully to eliminate the dependence of the success probability on $W$. We begin by setting up some auxiliary variables.

Let $M$ be the number of calls made to EstimateAndContract in $\text{NaiveWeightedSubsample}(G, \infty)$, and enumerate these calls $1, 2, \ldots, M$. For each $i$, let $\kappa_i$ be the value of $\kappa$ used in the $i$th call. The $i$th call will add the tuple $(C_i, \kappa_i/2)$ to the list $X$ of edge strength estimates. Also, let $A$ be maximal such that $\kappa_A > W_\text{tot}/n^3$.

Now define iid Bernoulli variables $S_i$ for $i \in [M]$ that are 1 with probability $1 - O(n^{1-d})$, such that if $S_i = 1$, then the $i$th call to EstimateAndContract is consistent with Lemmas \ref{lemma:claim1-2} and \ref{lemma:claim1-3}. (We do not make any assumptions about EstimateAndContract's behaviour if $S_i = 0$.) It is possible to construct such random variables, because the random bit strings used in different calls to EstimateAndContract are mutually independent.

Now, for each $i \in [M]$, let $H_i$ be the set of edges sampled in the $i$th stage of ConstructSparsifier (as run by $\text{NaiveWeightedSubsample}(G, \infty)$). Then we have $\text{NaiveWeightedSubsample}(G, \infty) = \bigcup_{i = 1}^M H_i$ and moreover it is clear that $\text{NaiveWeightedSubsample}(G, n^3)$ is identically distributed to $\bigcup_{i = 1}^A H_i$.

Note that $H_i$ only depends on the first $i$ entries of $X$. In particular, $H_i$ is mutually independent with $\left\{S_j: j > i\right\}$. We now prove a sequence of results, the last of which will be exactly what we want.

\begin{lemma}\label{lemma:earlynws1}
$\Pr[\bigcup_{i = 1}^M H_i\text{ is a }(2\epsilon)\text{-sparsifier} \mid S_i = 1 \forall i \leq M] = 1 - O(n^{-d}).$
\end{lemma}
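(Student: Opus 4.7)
The plan is to reduce this to a direct application of Corollary \ref{cor:nwsestimate} (with $T = \infty$) followed by Lemma \ref{lemma:sparsifierconcentration}, once the randomness of EstimateAndContract has been separated from the randomness of ConstructSparsifier.

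First, I would observe that, by construction, conditioning on $S_i = 1$ for all $i \leq M$ is exactly the event that every call to EstimateAndContract made by $\text{NaiveWeightedSubsample}(G, \infty)$ is consistent with Lemmas \ref{lemma:claim1-2} and \ref{lemma:claim1-3}. Under this event, I would invoke Corollary \ref{cor:nwsestimate} with $T = \infty$ to conclude that every edge $e$ of $G$ receives a strength label $k'_e \in [k_e/4, k_e]$ (the threshold $2W_\text{tot}/T$ is $0$), and that no edge is incorrectly labeled. In particular, the list $X$ produced at the end of the outer loop assigns a correct constant-factor strength estimate to every edge of $G$, so the input to ConstructSparsifier satisfies both hypotheses of Lemma \ref{lemma:sparsifierconcentration}: the sets $\{C_i\}$ cover every edge of $G$, and $\beta_i \in [k_e/4, k_e]$ for each $e \in \widetilde{E}(C_i)$.

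Next, I would carefully separate the two independent sources of randomness. The list $X$ is a (random) function of the coin flips used inside the calls to EstimateAndContract, while the sampling inside ConstructSparsifier uses a fresh, independent set of coin flips. Lemma \ref{lemma:sparsifierconcentration} is stated conditional on an arbitrary fixed value of $X$, and its failure probability $O(n^{-d})$ is taken over the ConstructSparsifier randomness alone. Therefore, by conditioning first on an arbitrary realization of $X$ that makes the hypotheses of Lemma \ref{lemma:sparsifierconcentration} hold and then averaging, I can apply the lemma pointwise: for every such $X$, the conditional probability that $\bigcup_{i=1}^M H_i$ is a $(2\epsilon)$-sparsifier is $1 - O(n^{-d})$. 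Integrating against the conditional distribution of $X$ given $S_i = 1$ for all $i \leq M$ preserves the bound.

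The main subtlety (and essentially the only nontrivial step) is the independence argument in the previous paragraph: we must be sure that conditioning on the event $\{S_i = 1 \text{ for all } i \leq M\}$ only affects the distribution of $X$ and does not bias the ConstructSparsifier coin flips, since those coin flips are what Lemma \ref{lemma:sparsifierconcentration} operates on. This is immediate because $X$ is measurable with respect to the EstimateAndContract randomness, the $S_i$'s are measurable with respect to that same randomness, and ConstructSparsifier draws its samples from an independent source. I do not anticipate any further computation, since the quantitative estimate $1 - O(n^{-d})$ is inherited directly from Lemma \ref{lemma:sparsifierconcentration}.
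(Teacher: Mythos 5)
Your proof is correct and follows essentially the same route as the paper: condition on all EstimateAndContract calls succeeding, invoke Corollary~\ref{cor:nwsestimate} with $T=\infty$ to see that $X$ satisfies the hypotheses of Lemma~\ref{lemma:sparsifierconcentration}, and then apply that lemma (which is stated conditionally on $X$ and over ConstructSparsifier's fresh randomness) to get the $1-O(n^{-d})$ bound. You spell out the independence/measurability argument that the paper leaves implicit (it merely points to ``the first part of the argument in Section~\ref{sec:nwsinfcorrect}''), but there is no substantive difference in approach.
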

\begin{proof}
This follows directly from the first part of the argument in Section \ref{sec:nwsinfcorrect}.
\end{proof}
The next lemma is the core ``coupling" part of this argument:
\begin{lemma}\label{lemma:earlynws2}
If $S_i = 1 \forall i \leq M$ and $\bigcup_{i = 1}^M H_i$ is a $(2\epsilon)$-sparsifier, then $\argmax_U F(U; \bigcup_{i = 1}^A H_i)$ achieves a $(1-5\epsilon)$-approximation for max-cut.
\end{lemma}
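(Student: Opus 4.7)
Let $S^{\star}$ be a true max-cut of $G$, $H'=\bigcup_{i\le A}H_i$, $H''=\bigcup_{i>A}H_i$, and $U=\arg\max_S F(S;H')$. The goal is to show $F(U;G)\ge(1-5\epsilon)F(S^{\star};G)$. The crux is establishing that $F(S;H'')=o(F(S^{\star};G))$ uniformly in $S$; once this is in hand, the conclusion follows by a routine sandwich argument using the sparsifier property of $H=H'\cup H''$.

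First, I would control the strengths of edges entering $H''$. By definition of $A$, we have $\kappa_{A+1}\le W_{\text{tot}}/n^3$ and hence $\kappa_A\le 2W_{\text{tot}}/n^3$. Since we are conditioning on $S_i=1$ for all $i\le M$, Lemma~\ref{lemma:nws1round} (applied to all rounds up to and including the $A$-th) implies that every edge $e$ labeled in a round $i>A$ has true strength $k_e<\kappa_A\le 2W_{\text{tot}}/n^3$, and therefore $w_e\le k_e<2W_{\text{tot}}/n^3$. Summing over the at most $\binom{n}{2}$ such edges shows that the subgraph $G''\subseteq G$ consisting of these "weak" edges has total weight $W(G'')\le W_{\text{tot}}/n$.

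Next, I would bound the reweighted total mass $W(H'')$. Here I invoke the second part of Lemma~\ref{lemma:sparsifiersmoothnesshard}: for any constant $c'<1$ and $n$ large, the distribution of $H$ (produced by ConstructSparsifier) is $c'$-smooth, so $w_e/p_e\le k_e/c'$ for every edge $e$. Since $H''\subseteq G''$ as edge sets, this yields deterministically
\begin{align*}
W(H'')\;=\;\sum_{e\in H''}\frac{w_e}{p_e}\;\le\;\frac{1}{c'}\sum_{e\in G''}k_e\;\le\;\frac{1}{c'}\binom{n}{2}\kappa_A\;\le\;\frac{W_{\text{tot}}}{n\,c'}.
\end{align*}
Consequently, for every cut $S$, $F(S;H'')\le 2W(H'')=O(W_{\text{tot}}/n)$. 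Because a uniformly random cut has expected value $W_{\text{tot}}/2$, the true max-cut satisfies $F(S^{\star};G)\ge W_{\text{tot}}/2$, so for $n$ sufficiently large we get $F(S;H'')\le\epsilon\,F(S^{\star};G)$ uniformly in $S$.

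Finally, I would chain the sparsifier bounds. Using $H=H'\sqcup H''$ and the sparsifier property $F(\cdot;H)\in(1\pm2\epsilon)F(\cdot;G)$,
\begin{align*}
F(U;H')\;\ge\;F(S^{\star};H')\;=\;F(S^{\star};H)-F(S^{\star};H'')\;\ge\;(1-2\epsilon)F(S^{\star};G)-\epsilon F(S^{\star};G)\;=\;(1-3\epsilon)F(S^{\star};G),
\end{align*}
and then
\begin{align*}
F(U;G)\;\ge\;\frac{F(U;H)}{1+2\epsilon}\;\ge\;\frac{F(U;H')}{1+2\epsilon}\;\ge\;\frac{1-3\epsilon}{1+2\epsilon}\,F(S^{\star};G)\;\ge\;(1-5\epsilon)F(S^{\star};G),
\end{align*}
where the last step uses $(1-3\epsilon)/(1+2\epsilon)\ge 1-5\epsilon$ for small $\epsilon$. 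The main obstacle is the second paragraph: one must leverage that $H$ is not merely a sparsifier in value but also $c'$-smooth as a distribution (so that per-edge reweightings cannot blow up), which is precisely what lets us turn the weak-strength bound into a deterministic bound on $W(H'')$ rather than relying on an extra concentration event.
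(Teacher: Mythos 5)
Your proof is correct and follows essentially the same approach as the paper's: use the correctness of labeling in the first $A$ rounds to conclude that every edge in $\bigcup_{i>A}H_i$ has strength below roughly $W_{\text{tot}}/n^3$, then invoke the $c'$-smoothness guarantee from Lemma~\ref{lemma:sparsifiersmoothnesshard} (the paper specializes $c'=1/2$) to bound the total reweighted mass of the missing edges by $O(W_{\text{tot}}/n)$, and finally chain the sparsifier inequalities. The only cosmetic differences are that you sum $\sum_{e\in G''}k_e$ rather than bounding each $w(e;H)$ individually, and you arrange the final sandwich slightly differently; both are mathematically equivalent to the paper's chain.
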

\begin{proof}
By Corollary \ref{cor:nwsestimate} (which we can apply since we are supposing $S_i = 1 \forall i$), $\bigcup_{i = 1}^A H_i$ will already include an edge (likely with different weight) in the place of any edge with strength $\geq 2W_\text{tot}/n^3$ in the original graph. So for any $e \in \bigcup_{i = A+1}^M H_i$, we must have $k_e < 2W_\text{tot}/n^3$.

Now by Lemma \ref{lemma:sparsifiersmoothnesshard}, $\bigcup_{i = 1}^M H_i$ is $(1/2)$-smooth for sufficiently large $n$. So if we use $w(e; H)$ to denote the weight of an edge $e$ in some graph $H$, it follows that $\frac{1}{2} w(e; \bigcup_{i = 1}^M H_i) \leq k_e$ for all edges $e$. In particular, if $e \in \bigcup_{i = A+1}^M H_i$, we will have $w(e; \bigcup_{i = 1}^M H_i) \leq 2k_e < 4W_\text{tot}/n^3$. Finally, there are at most $n^2/2$ edges in $\bigcup_{i = A+1}^M H_i$, so we have:
$$\sum_{e \in \bigcup_{i = A+1}^M H_i} w(e; \bigcup_{i = 1}^M H_i) < \frac{2W_\text{tot}}{n}.$$
In other words, the total weight of $\bigcup_{i = 1}^M H_i$ that $\bigcup_{i = 1}^A H_i$ is missing is $< 2W_\text{tot}/n$. Now, let $U$ be a maximizer of $F(U; \bigcup_{i = 1}^A H_i)$ and let $S$ be a maximizer of $F(S; G)$. Note that a uniform random cut in $G$ will have expected value $W_\text{tot}/2$, so we have $F(S; G) \geq W_\text{tot}/2$. Putting all of this together, we have:
\begin{align*}
    F(U; G) &\geq \frac{1}{1+2\epsilon} F(U; \bigcup_{i = 1}^M H_i) \text{ (since $\bigcup_{i = 1}^M H_i$ is a sparsifier)} \\
    &\geq \frac{1}{1+2\epsilon} F(U; \bigcup_{i = 1}^A H_i) \\
    &\geq \frac{1}{1+2\epsilon} F(S; \bigcup_{i = 1}^A H_i) \\
    &\geq \frac{1}{1+2\epsilon} (F(S; \bigcup_{i = 1}^M H_i) - \frac{2W_\text{tot}}{n}) \\
    &\geq \frac{1}{1+2\epsilon} ((1-2\epsilon)F(S; G) - \frac{2W_\text{tot}}{n}) \\
    &\geq \frac{1}{1+2\epsilon} ((1-2\epsilon)F(S; G) - \frac{4}{n} F(S; G)) \\
    &\geq (1-5\epsilon)F(S; G),
\end{align*}
which implies the lemma.
\end{proof}

\begin{corollary}\label{cor:earlynws3}
Let $I$ be $0/1$ indicator that is 1 if and only if $\argmax_U F(U; \bigcup_{i = 1}^A H_i)$ achieves a $(1-5\epsilon)$-approximation for max-cut. Then $\Pr[I=1 \mid S_i = 1 \forall i \leq M] = 1 - O(n^{-d})$.
\end{corollary}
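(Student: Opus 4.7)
The plan is to observe that this corollary is essentially a direct combination of Lemma \ref{lemma:earlynws1} and Lemma \ref{lemma:earlynws2}, which have already done all the real work. Lemma \ref{lemma:earlynws1} provides a lower bound on the conditional probability that $\bigcup_{i=1}^M H_i$ is a $(2\epsilon)$-sparsifier given $\{S_i = 1 \forall i \leq M\}$, while Lemma \ref{lemma:earlynws2} says that whenever both the conditioning event and the sparsifier event occur, $I = 1$ holds deterministically. So the argument is simply to combine these two facts.

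More concretely, I would first define the event $E = \{\bigcup_{i=1}^M H_i \text{ is a } (2\epsilon)\text{-sparsifier}\}$ and the conditioning event $T = \{S_i = 1 \text{ for all } i \leq M\}$. Lemma \ref{lemma:earlynws1} gives $\Pr[E \mid T] = 1 - O(n^{-d})$. Lemma \ref{lemma:earlynws2} asserts that on the event $T \cap E$, the cut $\arg\max_U F(U; \bigcup_{i=1}^A H_i)$ is a $(1-5\epsilon)$-approximate max-cut, which is exactly saying $I = 1$. Therefore $T \cap E \subseteq \{I = 1\}$, so
\[
\Pr[I = 1 \mid T] \geq \Pr[E \mid T] = 1 - O(n^{-d}),
\]
which is the desired bound.

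No real obstacle arises here: both lemmas are already proven, and the conditioning event $T$ is the one with respect to which we are taking probability, so we do not need to worry about any further unconditioning or union bounding in this step. The only minor care is to note that the implication in Lemma \ref{lemma:earlynws2} is deterministic (it is stated as an ``if\ldots then'' about the sample paths), which is what lets us pass from $\Pr[E \mid T]$ to a bound on $\Pr[I = 1 \mid T]$ without losing anything. This corollary is really just a packaging step that isolates the quantity that will later be combined with a bound on $\Pr[T]$ (via the independence and high-probability-ness of the $S_i$'s) to obtain the unconditional success guarantee stated in Theorem \ref{thm:earlystopnwsworks}.
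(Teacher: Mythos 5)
Your proposal is correct and matches the paper's proof, which also just combines Lemmas \ref{lemma:earlynws1} and \ref{lemma:earlynws2}; you have simply spelled out the elementary conditional-probability step that the paper leaves implicit.
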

\begin{proof}
This directly follows from combining Lemmas \ref{lemma:earlynws1} and \ref{lemma:earlynws2}.
\end{proof}

\begin{corollary}\label{cor:earlynws4}
$\Pr[I=1 \mid S_i = 1 \forall i \leq A] = 1 - O(n^{-d})$.
\end{corollary}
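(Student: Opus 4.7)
The plan is to derive Corollary \ref{cor:earlynws4} from Corollary \ref{cor:earlynws3} via a straightforward independence argument. The crux is to show that the indicator $I$ is independent of the ``late'' random variables $S_{A+1}, \ldots, S_M$, after which Corollary \ref{cor:earlynws3} immediately yields the desired probability bound.

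First, I would trace through the definitions to pin down what $I$ actually depends on. By construction, $I$ is determined by $G$ and $\bigcup_{i=1}^A H_i$. Each $H_i$ in turn is determined by the first $i$ entries of the strength-estimate list $X$ (which fix $C_i$ and $\beta_i$) together with the fresh randomness used during the $i$th sampling stage of ConstructSparsifier. Since the first $A$ entries of $X$ are produced entirely by the first $A$ calls to EstimateAndContract, one concludes that $I$ is a deterministic function of the random bits consumed by calls $1, \ldots, A$ of EstimateAndContract and by the first $A$ sampling stages of ConstructSparsifier.

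On the other hand, each $S_i$ was coupled so as to be a function of the randomness used in the $i$th call to EstimateAndContract alone. Modeling the algorithm's overall randomness as a sequence of independent tapes (one per subroutine invocation), the variables $(S_{A+1}, \ldots, S_M)$ depend on tapes disjoint from those determining $I$, and are therefore independent of $I$; this independence persists after conditioning on $\{S_i = 1 : i \leq A\}$, an event measurable only with respect to the first $A$ tapes. Hence, for any fixed $s \in \{0,1\}^{M-A}$,
\begin{align*}
\Pr[I=1 \mid S_i = 1 \ \forall i \leq A] &= \Pr[I=1 \mid S_i = 1 \ \forall i \leq A,\ (S_{A+1}, \ldots, S_M) = s],
\end{align*}
and specializing to $s = (1, \ldots, 1)$ together with Corollary \ref{cor:earlynws3} gives $\Pr[I=1 \mid S_i = 1 \ \forall i \leq A] = 1 - O(n^{-d})$.

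The only mildly technical point is formalizing this tape decomposition, but it is routine: each invocation of EstimateAndContract and each sampling stage of ConstructSparsifier can be viewed as drawing from its own independent tape. The adaptive structure of NaiveWeightedSubsample does not obstruct this viewpoint, since once the tapes are fixed the entire execution is determined, and the ``locality'' of dependence observed above follows directly. This is the step to take care with, since without the tape-based model one might worry that adaptivity entangles later calls with earlier ones in a way that couples $I$ to $S_{A+1}, \ldots, S_M$.
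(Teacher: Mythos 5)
Your proposal is correct and takes essentially the same route as the paper: observe that $I$ is a function only of $G$, $H_1, \ldots, H_A$, hence only of randomness local to the first $A$ calls of EstimateAndContract and the first $A$ sampling stages of ConstructSparsifier, while $S_{A+1}, \ldots, S_M$ are functions of disjoint randomness, so the extra conditioning in Corollary \ref{cor:earlynws3} can be dropped. Your explicit ``tape'' formalization just makes precise what the paper asserts in one line (``$I$ only depends on $H_1, \ldots, H_A$. Hence $I, S_1, \ldots, S_A$ are independent from $S_{A+1}, \ldots, S_M$''), including the point that the independence survives conditioning on the event $\{S_i = 1 : i \leq A\}$ since that event is measurable with respect to the first $A$ tapes only.
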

\begin{proof}
This follows from the fact that $I$ only depends on $H_1, \ldots, H_A$. Hence $I, S_1, \ldots, S_A$ are independent from $S_{A+1}, \ldots, S_M$. It follows that we can remove the conditioning on these latter variables while preserving the probability.
\end{proof}

\begin{corollary}\label{cor:earlynws5}
$\Pr[I=1] = 1 - O(n^{1-d}\log n).$ This implies the correctness part of Theorem \ref{thm:earlystopnwsworks}.
\end{corollary}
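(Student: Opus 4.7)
The plan is to deduce Corollary~\ref{cor:earlynws5} from Corollary~\ref{cor:earlynws4} by removing the conditioning on $\{S_i = 1 \,\forall i \leq A\}$, using a union bound on the events $\{S_i = 0\}$ for $i \leq A$. The key point is that although $M$ (the total number of calls to EstimateAndContract in $\text{NaiveWeightedSubsample}(G,\infty)$) can depend on $W$ and hence be arbitrarily large, the index $A$ we actually need to union bound over is always $O(\log n)$ regardless of $W$.

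First I would bound $A$. By construction, $\kappa_1$ is the smallest power of $2$ that is at least $W_{\text{tot}}$, so $\kappa_1 < 2 W_{\text{tot}}$, and $\kappa_{i+1} = \kappa_i/2$, which gives $\kappa_A = \kappa_1 / 2^{A-1}$. Since $A$ is maximal with $\kappa_A > W_{\text{tot}}/n^3$, we obtain $\kappa_1/2^{A-1} > W_{\text{tot}}/n^3$, hence $2^{A-1} < \kappa_1 n^3/W_{\text{tot}} < 2n^3$, so $A \leq 3\log_2 n + 2 = O(\log n)$. Crucially this bound is independent of $W$, which is what eliminates the $\log W$ dependence from the failure probability in Theorem~\ref{thm:earlystopnwsworks}.

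Next, since the $S_i$ are iid Bernoulli with $\Pr[S_i = 0] = O(n^{1-d})$ and $A$ is deterministic, a union bound gives
\begin{align*}
\Pr[S_i = 1 \;\forall i \leq A] &\geq 1 - A \cdot O(n^{1-d}) \\
&= 1 - O(n^{1-d}\log n).
\end{align*}
Combining with Corollary~\ref{cor:earlynws4},
\begin{align*}
\Pr[I = 1] &\geq \Pr[I = 1 \;\text{and}\; S_i = 1 \;\forall i \leq A] \\
&= \Pr[I = 1 \mid S_i = 1 \;\forall i \leq A] \cdot \Pr[S_i = 1 \;\forall i \leq A] \\
&\geq (1 - O(n^{-d}))(1 - O(n^{1-d}\log n)) \\
&= 1 - O(n^{1-d}\log n),
\end{align*}
which is the claimed bound. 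Finally, unpacking the definition of $I$, this says that with probability $1 - O(n^{1-d}\log n)$ over the randomness of $\text{NaiveWeightedSubsample}(G,n^3)$, the cut $U$ maximizing $F(U;H)$ is a $(1-5\epsilon)$-approximate max cut in $G$, giving the correctness half of Theorem~\ref{thm:earlystopnwsworks}.

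There is no real technical obstacle here; the only thing worth double-checking is that the independence structure invoked in the passage from Corollary~\ref{cor:earlynws3} to Corollary~\ref{cor:earlynws4} is consistent with $A$ being deterministic (i.e.\ determined by $G$ and the fixed constant in the definition of $\kappa_1$, not by the random $S_i$'s), which it is, since $A$ depends only on $W_{\text{tot}}$ and $n$.
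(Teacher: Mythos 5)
Your proof is correct and follows essentially the same approach as the paper: bound $A = O(\log n)$ from the fact that $\kappa$ is halved each round and runs from roughly $W_{\text{tot}}$ down to $W_{\text{tot}}/n^3$, then union bound over the $A$ events $\{S_i = 0\}$ and combine with Corollary~\ref{cor:earlynws4}. One small imprecision: you assert that $A$ is deterministic, but strictly speaking $A$ could also be truncated by $M$ (the loop might exit early if \texttt{GetTotalWeight} hits $0$), and $M$ is random; what matters is only that $A$ is always bounded above by the deterministic quantity $O(\log n)$, which suffices for the union bound and is what you actually use.
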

\begin{proof}
By a union bound, we have $\Pr[S_i = 1 \forall i \leq A] \geq 1 - O(n^{1-d} \cdot |A|)$. We have $\kappa_1 = O(W_\text{tot})$, $\kappa_{i+1} = \kappa_i/2$, and $\kappa_A = \Omega(W_\text{tot})/n^3$, so it follows that $|A| = O(\log n)$. Combining this union bound with the probability from Corollary \ref{cor:earlynws4} yields the desired result.
\end{proof}

\subsection{MSF-based Edge Strength Estimates}

\subsubsection{Proofs of Lemma \ref{lemma:apxkruskal} and Corollary \ref{cor:strengthestimate}}\label{lemmaproof:apxkruskal}

First we address Lemma \ref{lemma:apxkruskal}. Fix a pair of vertices $i, j$. First we show that $\tilde{d}_{i, j} \leq d_{i, j}$. Suppose for the sake of contradiction that $\tilde{d}_{i, j} > d_{i, j}$. Then every edge on the path $P$ from $i$ to $j$ in $\widetilde{\mathcal{T}}$ has weight $> d_{i, j}$. Now take an edge $e$ of weight $d_{i, j}$ along the path from $i$ to $j$ in $\mathcal{T}$. The graph $\mathcal{T} \backslash \left\{e\right\}$ will have the same connected components as $\mathcal{T}$ (and hence $G$) except that one of these components has now been split into two pieces with $i$ in one and $j$ in the other. But now some edge $e'$ along the path from $i$ to $j$ in $\widetilde{\mathcal{T}}$ must connect these two components. So $\mathcal{T} \cup \left\{e'\right\} \backslash \left\{e\right\}$ is also a spanning forest. Moreover, we have $w_{e'} \geq \tilde{d}_{i, j} > d_{i, j} = w_e$ so this spanning forest has total weight strictly greater than $\mathcal{T}$. This is a contradiction.

It remains to show that $\tilde{d}_{i, j} \geq \frac{2}{n^2} d_{i, j}$. Consider any edge $e$ along the path between $i$ and $j$ in $\widetilde{\mathcal{T}}$ and consider the moment right before we contract it in Approximate Kruskal. At this point, $i$ and $j$ are still in distinct supernodes so there exists some edge $e'$ in $\mathcal{T}$ along the path from $i$ to $j$ that has not yet been contracted. Let $W$ denote the total weight of all edges that have not yet been contracted. By definition of the GetEdge operation, we have:
\begin{align*}
    w_e &\geq \frac{2W}{n^2} \\
    &\geq \frac{2w_{e'}}{n^2} \\
    &\geq \frac{2}{n^2} d_{i, j}
\end{align*}
This is true for any $e$ along the path between $i$ and $j$ in $\widetilde{\mathcal{T}}$ so taking the min over all such $e$ yields the desired inequality, completing the proof of the lemma.

Now we finish by addressing Corollary \ref{cor:strengthestimate}. First, if $i, j$ are connected in $G$ then we have:
\begin{align*}
    k_{i, j} &\in [d_{i, j}, n^2d_{i, j}] \\
    &\subseteq [\tilde{d}_{i, j}, \frac{n^4}{2}\tilde{d}_{i, j}],
\end{align*}
as desired. Secondly, if $i, j$ are not connected in $G$ then we have $\tilde{d}_{i, j} = 0 = k_{i, j}$ so the claim trivially follows. (To see the latter equality, note that any vertex-induced subgraph of $G$ containing $i$ and $j$ cannot have a path from $i$ to $j$, so it will have some cut disconnecting $i$ and $j$ and this cut will have value 0.) This completes our proof.

\subsubsection{Additional Structure Theorem for Strength Estimates}

We also take this opportunity to make a straightforward but important observation about these strength estimates:

\begin{lemma}\label{lemma:deletiongivescliques}
Let $M > 0$ be some threshold and let $C = \left\{(i, j): \tilde{d}_{i, j} < M\right\}$. Then $K_n \backslash C$ is a union of disjoint cliques.
\end{lemma}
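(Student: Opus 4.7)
The plan is to show that the relation on $[n]$ defined by $i \sim j \iff i = j \text{ or } \tilde{d}_{i,j} \geq M$ is an equivalence relation. Once this is established, the equivalence classes partition $[n]$, and within each class every pair of distinct vertices is connected by an edge of $K_n \setminus C$ (by definition of $C$), while no edge of $K_n \setminus C$ crosses between different classes. Hence the connected components of $K_n \setminus C$ are exactly the equivalence classes of $\sim$, and each is a clique.

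Reflexivity and symmetry are immediate (the latter from symmetry of $\tilde{d}$). The core step is transitivity: for distinct $i,j,k$ with $\tilde{d}_{i,j} \geq M$ and $\tilde{d}_{j,k} \geq M$, I need to deduce $\tilde{d}_{i,k} \geq M$. Since $M > 0$, the hypotheses force both $\tilde{d}_{i,j}$ and $\tilde{d}_{j,k}$ to be strictly positive, which by the definition of $\tilde{d}$ means $i,j$ lie in the same connected component of $G$ and so do $j,k$; thus $i,j,k$ are all in the same component and the tree paths $P_{ij}$, $P_{jk}$, $P_{ik}$ in $\widetilde{\mathcal{T}}$ all exist.

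The key ingredient is the ultrametric-style inequality
\[
\tilde{d}_{i,k} \;\geq\; \min\bigl(\tilde{d}_{i,j},\, \tilde{d}_{j,k}\bigr),
\]
which follows from the standard tree fact that the edge set of $P_{ik}$ is contained in $P_{ij} \cup P_{jk}$: concatenating the walks $i \to j$ and $j \to k$ inside $\widetilde{\mathcal{T}}$ yields a walk from $i$ to $k$, and successively removing backtracked edges reduces this walk to the unique tree path $P_{ik}$ without ever introducing new edges. Since the minimum edge weight over a subset is at least the minimum over a superset, we get $\tilde{d}_{i,k} \geq \min(\tilde{d}_{i,j}, \tilde{d}_{j,k}) \geq M$, completing transitivity.

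The main (and essentially only) obstacle is the tree-path containment property, which is standard, so the argument is short. I will present it as a three-step verification: (i) reflexivity/symmetry, (ii) reduction of transitivity to the ultrametric inequality (with a brief remark on why the tree paths exist, invoking $M>0$), and (iii) the ultrametric inequality via the tree-path containment, then conclude by identifying the cliques with the equivalence classes.
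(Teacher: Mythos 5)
Your proof is correct, but it takes a genuinely different route from the paper's. The paper defines $\widetilde{\mathcal{T}}' \subseteq \widetilde{\mathcal{T}}$ by deleting all forest edges of weight $< M$, then argues by cases that $(i,j) \in K_n \setminus C$ if and only if $i, j$ lie in the same connected component of $\widetilde{\mathcal{T}}'$; the cliques are thus identified explicitly as the components of a concrete sub-forest. You instead prove directly that $i \sim j \iff i = j$ or $\tilde{d}_{i,j} \geq M$ is an equivalence relation, with transitivity coming from the ultrametric-type inequality $\tilde{d}_{i,k} \geq \min(\tilde{d}_{i,j}, \tilde{d}_{j,k})$, which you derive from the standard containment $P_{ik} \subseteq P_{ij} \cup P_{jk}$ in a forest. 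Both proofs are really establishing the same transitivity, and both correctly observe (implicitly in the paper, explicitly in yours via $M>0$) that $\tilde{d} > 0$ forces the relevant vertices into a common tree component so the paths exist. The paper's version buys a constructive description of the cliques (useful for intuition and for Lemma \ref{lemma:deletionworkswell} where the cliques are used as the $S_i$), while your version is slightly shorter and more self-contained, extracting the ultrametric property of minimum-bottleneck distances on a tree as the reusable core fact.
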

\begin{proof}
We only need the fact that $\mathcal{\widetilde{T}}$ is a spanning forest for this to work; its approximate maximality is irrelevant. Define $\mathcal{\widetilde{T'}}$ to be the result of removing all edges with weight $< M$ from $\mathcal{\widetilde{T}}$. We will show that $K_n \backslash C$ is just the result of connecting any two vertices from the same connected component in $\mathcal{\widetilde{T'}}$. This would clearly imply the lemma.

First, consider two vertices $i, j$ from distinct connected components of $\mathcal{\widetilde{T'}}$. We have two cases:
\begin{itemize}
    \item If $i, j$ are disconnected in $\mathcal{\widetilde{T}}$, then we have $\tilde{d}_{i, j} = 0 \Rightarrow (i, j) \in C \Rightarrow (i, j) \notin K_n \backslash C$ as desired.
    \item If $i, j$ are connected in $\mathcal{\widetilde{T}}$, then since they are disconnected in $\mathcal{\widetilde{T'}}$, the path between them in $\mathcal{\widetilde{T}}$ must include some edge of weight $< M$. Thus the minimum edge weight on this path must also be $< M$, so this implies that $\tilde{d}_{i, j} < M \Rightarrow (i, j) \in C \Rightarrow (i, j) \notin K_n \backslash C$.
\end{itemize}
It remains to address the case when $i, j$ are in the same connected component of $\mathcal{\widetilde{T'}}$. Since $\mathcal{\widetilde{T'}}$ is a sub-forest of the forest $\mathcal{\widetilde{T}}$, it follows that there is a path from $i$ to $j$ in $\mathcal{\widetilde{T}}$ and moreover every edge on this path has weight $\geq M$. By definition, it follows that $\tilde{d}_{i, j} \geq M \Rightarrow (i, j) \notin C \Rightarrow (i, j) \in K_n \backslash C$. This proves the lemma.
\end{proof}

\subsection{Correctness of Fast Weighted Sampling}\label{lemmaproofs:fwscorrect}

Before we proceed, we establish some terminology. By one ``round" of Algorithm \ref{algo:fws}, we refer to one pass of the outermost loop of the algorithm. By one ``batch" for a given value of $\kappa$, we refer to the $r$ calls made to EstimateAndContract in the innermost loop of the algorithm with that particular value of $\kappa$.

Let us first show that every call made to EstimateAndContract behaves in alignment with Lemmas \ref{lemma:claim1-2} and \ref{lemma:claim1-3} with high probability. In subsequent sections, we will assume that all calls to EstimateAndContract run successfully.

\begin{lemma}\label{lemma:contractunionbound}
With probability $1 - O(n^{5-d})$, every call to EstimateAndContract in FastWeightedSubsample will behave in alignment with Lemmas \ref{lemma:claim1-2} and \ref{lemma:claim1-3} (i.e. the events that the lemmas claim will happen with high probability will all happen).
\end{lemma}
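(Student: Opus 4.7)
The plan is to bound the total number of calls that FastWeightedSubsample makes to EstimateAndContract by a polynomial in $n$, and then apply a union bound using the per-call failure probabilities from Lemmas~\ref{lemma:claim1-2} and~\ref{lemma:claim1-3} (each $O(n^{1-d})$).

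I would count calls as follows. The outer while loop iterates while $L$ is nonempty. Initially $|L| \leq \binom{n}{2} = O(n^2)$, and assuming the inner calls behave correctly, each outer iteration contracts every edge with $\tilde{d}_{i,j} = \tilde{D} = \max L$: such an edge lies in some $S_p$ and, by Corollary~\ref{cor:strengthestimate}, has strength at least $\tilde{D}$ in $G$, so when the inner loop drives $\kappa$ down to $\tilde{D}/(2n)$ the conclusion of Lemma~\ref{lemma:claim1-2} eventually applies. Thus $L$ strictly shrinks each outer iteration, giving at most $O(n^2)$ outer iterations. Within each outer iteration, the inner while loop decreases $\kappa$ from $\kappa' = O(n^4 \tilde{D})$ down to $\tilde{D}/(2n)$, a range of a factor $O(n^5)$; since $\kappa$ halves each step, this loop runs $O(\log n)$ times. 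The innermost for loop runs over the $r \leq n$ components $S_1, \ldots, S_r$. Hence the total call count is at most $O(n^3 \log n) \leq O(n^4)$, and a union bound yields total failure probability at most $O(n^4) \cdot O(n^{1-d}) = O(n^{5-d})$.

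The subtlety I expect to be the main obstacle is that Lemma~\ref{lemma:claim1-3}'s conclusion is conditional on the hypothesis that every previously contracted edge has strength at least $\kappa/2$, and this hypothesis itself depends on earlier calls behaving correctly. I would handle this by defining the failure event of the $k$-th call conditionally on all prior calls having succeeded, and then argue inductively that under this conditioning Lemma~\ref{lemma:claim1-3}'s hypothesis holds automatically. The key observation enabling the induction is that $\kappa$ is non-increasing throughout FastWeightedSubsample: the update $\kappa \gets \min(\kappa, \kappa')$ can only decrease $\kappa$, and within the inner while loop $\kappa$ is only halved. Thus if every prior call satisfied Lemma~\ref{lemma:claim1-3}, every previously contracted edge has strength at least half the smallest prior $\kappa$, which is at least half the current $\kappa$.

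A secondary technical wrinkle is that EstimateAndContract is called on the induced subgraph $G[S_p]$ rather than on $G$, so Lemma~\ref{lemma:claim1-2}'s strength threshold refers to the strength in $G[S_p]$, which can be smaller than the strength in $G$. To make the ``progress per outer iteration'' claim tight one would need to relate these two notions of strength; however, for the $O(n^{5-d})$ bound it suffices to use the looser fact that at least one edge is removed from $L$ per outer iteration (which is needed anyway for termination), so this passage can be sidestepped by an appropriately crude accounting.
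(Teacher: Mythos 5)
Your proof takes essentially the same approach as the paper: count the total number of calls as $O(n^2)$ outer iterations $\times$ $O(\log n)$ values of $\kappa$ $\times$ at most $n$ components per batch $= O(n^3 \log n)$, then union-bound the per-call failure probabilities of $O(n^{1-d})$ from Lemmas~\ref{lemma:claim1-2} and~\ref{lemma:claim1-3} to obtain $O(n^{5-d})$. The two ``subtleties'' you flag (the inductive conditioning required for Lemma~\ref{lemma:claim1-3}'s hypothesis, and the distinction between strength in $G$ versus in $G[S_i]$) are indeed present and are resolved in the paper's Lemma~\ref{lemma:fws1round}; your sketch of how to handle them matches the paper's treatment.
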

\begin{proof}
Let us crudely bound the number of calls we need to make to EstimateAndContract assuming that they all run correctly. We can then take a union bound over this number of calls.

For each value of $\kappa$, there are at most $n$ calls made to EstimateAndContract, since the $S_i$'s partition $[n]$. For each iteration of the outermost loop, $\kappa$ runs through $O(\log n)$ values since it is iteratively halved from $n^4\tilde{D}/2$ to $\tilde{D}/(2n)$. Moreover, the outermost loops runs at most $O(n^2)$ times, as we will see in Section \ref{sec:fwsestimates} that $L$ will decrease in size by at least 1 each round if EstimateAndContract runs correctly.

Thus in total, we need the first $n \cdot O(\log n) \cdot O(n^2) = O(n^3 \log n)$ calls to EstimateAndContract to all be successful. By a union bound and using the probability estimates in Lemmas \ref{lemma:claim1-2} and \ref{lemma:claim1-3}, this occurs with probability $1 - O(n^{4-d}\log n) = 1 - O(n^{5-d})$, as desired.
\end{proof}

\subsubsection{Estimate and Contract Gives Good Strength Estimates}\label{sec:fwsestimates}

We use the same notions of correctly or incorrectly labeled edges as in Definition \ref{def:labelled}.

Note firstly that $\kappa$ will always be a power of 2 in every call to EstimateAndContract. Moreover, it strictly decreases after each batch of $r$ calls to EstimateAndContract. We want to show that all edges $e$ will be correctly labeled at the end of Fast Weighted Subsampling. To set this up, we make an observation first:

\begin{lemma}\label{lemma:deletionworkswell}
For any round of the outermost loop in Algorithm \ref{algo:fws} and any index $i \in [r]$, $G[S_i] \cap C = \emptyset$. In other words, no edge from $C$ will have both endpoints in the same $S_i$. 
\end{lemma}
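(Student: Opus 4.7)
The plan is to derive this lemma as a near-immediate consequence of Lemma \ref{lemma:deletiongivescliques}. Recall that in the given round, $C$ is defined as $\{(i,j) : \tilde{d}_{i,j} < \tilde{D}/(2n^5)\}$, and $S_1, \ldots, S_r$ are the connected components of $K_n$ after removing $C$. Setting the threshold $M = \tilde{D}/(2n^5)$ in Lemma \ref{lemma:deletiongivescliques} matches exactly the definition of $C$ here, so the lemma tells us that $K_n \setminus C$ is a disjoint union of cliques.

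Given this structural fact, the proof is essentially a definitional unwinding. The connected components of a disjoint union of cliques are exactly those cliques themselves, so each $S_i$ is a clique on its vertex set inside $K_n \setminus C$. In particular, for any two distinct vertices $u, v \in S_i$, the pair $(u, v)$ appears as an edge in $K_n \setminus C$, which by definition means $(u, v) \notin C$. Thus no edge of $K_n[S_i]$ lies in $C$, and since $G[S_i] \subseteq K_n[S_i]$, the same holds for $G[S_i]$, giving $G[S_i] \cap C = \emptyset$ as required.

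There is no real obstacle here: the content of the lemma is already captured by Lemma \ref{lemma:deletiongivescliques}, and this statement just specializes it to the threshold $\tilde{D}/(2n^5)$ used in FastWeightedSubsample. The only thing to emphasize in writing it up is that Lemma \ref{lemma:deletiongivescliques} depends solely on $\widetilde{\mathcal{T}}$ being a spanning forest of $G$ (not on any optimality of $\widetilde{\mathcal{T}}$), so the lemma applies unconditionally to the $\tilde{d}_{i,j}$ values computed via ApproximateKruskal at the start of the algorithm.
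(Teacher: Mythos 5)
Your proof is correct and follows the paper's argument exactly: both invoke Lemma~\ref{lemma:deletiongivescliques} with threshold $\tilde{D}/(2n^5)$ to conclude that $K_n \setminus C$ is a disjoint union of cliques whose vertex sets are the $S_i$, whence no edge of $C$ has both endpoints in a single $S_i$. Your added remark that the lemma needs only that $\widetilde{\mathcal{T}}$ is a spanning forest is a nice clarification but not a substantive departure.
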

\begin{proof}
By Lemma \ref{lemma:deletiongivescliques}, $K_n \backslash C$ will be a union of disjoint cliques. $S_1, \ldots, S_r$ will hence be the vertex sets of these cliques. So any edge with both endpoints in the same $S_i$ will hence belong to $K_n \backslash C$. The conclusion follows.
\end{proof}

The key step to showing that FastWeightedSubsample will label all edges correctly is the following lemma. It is similar to Lemma \ref{lemma:nws1round}.

\begin{lemma}\label{lemma:fws1round}
Consider any round of the outermost loop in Algorithm \ref{algo:fws}. Then in that round, each $S_i$ is the union of some collection of supernodes i.e. no edge between $S_i$ and $S_j$ for distinct $i$ and $j$ has been contracted by $\mathcal{C}$.

Moreover, we have the following:
\begin{enumerate}
    \item After each batch of EstimateAndContract operations with a value $\kappa$, every edge with strength $\geq \kappa$ has been correctly labeled. This implies that at the end of the round, every edge with strength $\geq \tilde{D}/n$ is correctly labeled.
    \item After each batch of EstimateAndContract operations with a value $\kappa$, no edge with strength $< \kappa/2$ has been labeled. This implies that at the end of the round, no edge with strength $< \tilde{D}/(4n)$ is labeled.
    \item After each batch of EstimateAndContract operations with a value $\kappa$, any edge with strength in $[\kappa/2, \kappa)$ will either not be labeled, or will be correctly labeled.
\end{enumerate}
\end{lemma}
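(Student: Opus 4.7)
The plan is a joint induction on the round number of Algorithm~\ref{algo:fws}, with an inner induction on the batch (value of $\kappa$) inside each round. Properties~1--3 are largely a direct transcription of the argument for Lemma~\ref{lemma:nws1round}; the genuinely new content is (i) the structural claim about the $S_i$'s, and (ii) a bridge reconciling strength in $G$ (the notion the lemma statement refers to, as needed downstream by Lemma~\ref{lemma:sparsifierconcentration}) with strength in $G[S_i]$ (the notion to which Lemmas~\ref{lemma:claim1-2} and~\ref{lemma:claim1-3} apply when EstimateAndContract is called on $G[S_i]$).

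For the structural claim, suppose $u,v$ lie in a common contracted supernode at the start of a round with parameter $\tilde D$. Then $(u,v)$ was contracted in some earlier batch at value $\kappa'$, so by the inductive version of property~2, $k_{u,v}\ge\kappa'/2$. Every previously-used $\kappa'$ satisfies $\kappa'\ge\tilde D/(2n)$, because the inner-loop guard forces $\kappa'\ge\tilde D_{\mathrm{prev}}/(2n)$ and $\tilde D=\max L$ is monotone non-increasing across rounds; hence $k_{u,v}\ge\tilde D/(4n)$. Chaining Lemmas~\ref{lemma:MSTapprox} and~\ref{lemma:apxkruskal} then gives $\tilde d_{u,v}\ge 2k_{u,v}/n^4\ge\tilde D/(2n^5)$, so $(u,v)\notin C$, and Lemma~\ref{lemma:deletiongivescliques} places $u,v$ in the same $S_i$. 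The very same inequality chain supplies the key bridge: for any edge $e=(u,v)$ with both endpoints in $S_i$ and $k_{e,G}\ge\tilde D/(4n)$, every vertex $w$ of the $k_{e,G}$-strong component $T^\ast$ satisfies $k_{u,w,G}\ge k_{e,G}$ by the definition of a strong component, so $\tilde d_{u,w}\ge\tilde D/(2n^5)$, forcing $w\in S_i$; thus $T^\ast\subseteq S_i$ and $k_{e,G[S_i]}=k_{e,G}$ throughout the range of strengths relevant to the current round.

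With the structural claim and bridge in hand, properties~1--3 follow batch by batch for each call to EstimateAndContract by the inductive scheme of Lemma~\ref{lemma:nws1round}: property~1 from Lemma~\ref{lemma:claim1-2} plus the inductive fact that edges of strength $>2\kappa$ were contracted in earlier batches; property~2 from Lemma~\ref{lemma:claim1-3}, whose hypothesis on $\mathcal C$ is the inductive form of property~1; property~3 as an immediate corollary of property~1 applied one batch earlier. The base case uses the initialization $\kappa\ge n^4\tilde D/2$, which upper-bounds every $k_e$ by Corollary~\ref{cor:strengthestimate}, so no edge of strength $>2\kappa$ needs to have been pre-contracted. The end-of-round consequences follow because the last $\kappa$ used in the round lies in $[\tilde D/(2n),\tilde D/n)$.

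The main obstacle is the strength-alignment bridge just described: natively, EstimateAndContract on $G[S_i]$ controls only $G[S_i]$-strengths, whereas the downstream applications of the labels demand $G$-strengths. The threshold $\tilde D/(2n^5)$ in the definition of $C$ has been calibrated precisely so that the $k_e$-to-$d_e$-to-$\tilde d_e$ conversion via Lemmas~\ref{lemma:MSTapprox} and~\ref{lemma:apxkruskal} carries through with a constant factor of slack; once the bridge is established, the rest of the argument is a mechanical adaptation of Lemma~\ref{lemma:nws1round}.
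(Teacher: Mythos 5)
Your proposal is correct and takes essentially the same approach as the paper: you establish the structural claim about the $S_i$'s by the $k_e$-to-$\tilde d_e$ conversion plus Lemma~\ref{lemma:deletiongivescliques} (the paper argues the equivalent contrapositive, that $e \in C$ forces $k_e < \tilde D/(4n) < \kappa/2$ so $e$ cannot have been contracted), you reconcile $G$-strength with $G[S_i]$-strength by showing the relevant strong component sits entirely inside one $S_i$, and you then run the same batch-by-batch induction from Lemma~\ref{lemma:nws1round} using Lemmas~\ref{lemma:claim1-2} and~\ref{lemma:claim1-3}. Two small details worth filling in when writing this up: the structural claim as you phrased it should be applied to individual contracted \emph{edges} and then extended to whole supernodes by chaining along a path of contracted edges (each supernode being connected via contracted edges), and property~2 needs the trivial subcase $e \in C$ handled separately via Lemma~\ref{lemma:deletionworkswell}, since such an $e$ never appears in any $G[S_i]$ and so Lemma~\ref{lemma:claim1-3} does not speak to it directly.
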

\begin{proof}
We proceed by induction on the number of rounds that have elapsed. The base case is vacuously true as at the beginning no edges have been labeled or contracted at all. So assume that all previous rounds have proceeded according to our claim and now we focus on the current one. Define $\tilde{D}$ and $C$ as in the algorithm. Observe by Corollary \ref{cor:strengthestimate} that for any edge $e \in C$ we have:
\begin{align*}
    k_e &\leq \frac{n^4}{2} \tilde{d}_e \\
    &< \frac{\tilde{D}}{4n}
\end{align*}
This implies that each $(\frac{\tilde{D}}{4n})$-strong component of $G$ will be contained in the same connected component $S_i$. Note that this also means that $k_e < \kappa/2$ for any $e \in C$ so $e$ will not already have been contracted. Any edge between distinct $S_i$ and $S_j$ must clearly be in $C$, so this addresses the claim that $S_1, \ldots, S_r$ are each a union of supernodes. It remains to demonstrate properties 1-3.

Throughout this proof, we will disregard contractions when considering $G[S_i]$; we will just view it as a vertex-induced subgraph. So by the strength of an edge in $G[S_i]$, we will mean its strength in the vertex-induced subgraph of $G$ on $S_i$ before any contractions have been applied. 

With this in mind, let us first show property 1. Consider any edge $e$ of strength $\geq \kappa$ that has not yet been contracted. Now $e$ belongs to some $\kappa$-strong component of $G$, and since $\kappa > \frac{\tilde{D}}{4n}$, this component must be entirely contained in one connected component $S_i$. This implies that the strength of $e$ in $G[S_i]$ is also $\geq \kappa$.

Then Lemma \ref{lemma:claim1-2} tells us that $e$ will be contracted and labeled with an estimate of $\kappa/2$. This will be a correct labeling if $k_e \leq 2\kappa$. So we need to show that any edge of strength $\geq 2\kappa$ has already been contracted. Let $\kappa'$ be as referenced and used in Algorithm \ref{algo:fws}. We have two cases:

\begin{enumerate}
    \item If $\kappa$ is not the value used in the first batch of EstimateAndContract operations, this is clear by induction as we previously ran a batch of EstimateAndContract operations with $2\kappa$ in the place of $\kappa$.
    \item If $\kappa$ is from the first batch, then we have two sub-cases depending on the value of $\kappa$:
    \begin{enumerate}
        \item If $\kappa \neq \kappa'$, then this means that in some previous round the algorithm ran a batch of calls to EstimateAndContract with $2\kappa$ and then halved it at the end of the batch to obtain $\kappa$.
        \item If $\kappa = \kappa'$, then we have $\kappa \geq n^4\tilde{D}/2 \geq n^4\tilde{d}_e/2 \geq k_e \geq \kappa \Rightarrow k_e = \kappa \leq 2\kappa$, where we have used the maximality of $\tilde{D}$ and Corollary \ref{cor:strengthestimate}.
    \end{enumerate}
\end{enumerate}
Next we show property 2. Consider any edge $e$ of strength $< \kappa/2$ in $G$. By induction, $e$ cannot have been labeled before this batch as those batches were all run with larger values of $\kappa$. If $e \in C$ then we are immediately done as Lemma \ref{lemma:deletionworkswell} tells us that $e$ will not appear in any connected component $S_i$. If $e \notin C$, then let $S_i$ be the connected component containing $e$. Any vertex-induced subgraph of $S_i$ containing $e$ is also a vertex-induced subgraph of $G$ containing $e$, so it will have a cut of value $< \kappa/2$. This implies that $e$ will also have strength $< \kappa/2$ in $S_i$.

Hence the property will follow from Lemma \ref{lemma:claim1-3} if we can check that every edge $e$ contracted in $\mathcal{C}$ (and in particular contained in $S_i$) has strength $\geq \kappa/2$ in $S_i$. We will show more strongly that it has strength $\geq \kappa$ in $S_i$. Indeed, we know by induction that $e$ must have strength $\geq \kappa$ in $G$. Then as argued earlier, the $\kappa$-strong component of $G$ containing $e$ must be entirely contained in $S_i$, so that the strength of $e$ in $S_i$ is also $\geq \kappa$. This completes the proof of property 2. 





Finally, we show property 3. Consider an edge $e$ with strength in $[\kappa/2, \kappa)$. Any previous batch would have used a value $\geq 2\kappa$, so by the induction hypothesis no edge of strength $< \kappa$ could have been contracted yet. So one of the following must happen:
\begin{enumerate}
    \item $e$ is not labeled.
    \item $e$ is labeled with $k'_e = \kappa/2$. We have $k_e/2 < k'_e \leq k_e$ so in this case $e$ will be correctly labeled.
\end{enumerate}
This completes the proof of the lemma.
\end{proof}

\begin{corollary}\label{lemma:fwsestimate}
Each round of the outermost loop in Algorithm \ref{algo:fws} reduces the size of $L$ by at least 1. Moreover, at the end of Algorithm \ref{algo:fws}, every edge in $G$ will be correctly labeled.
\end{corollary}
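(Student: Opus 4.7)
The plan is to prove the two assertions in order, with part 1 doing most of the work and part 2 following as a quick consequence of part 1 combined with Lemma \ref{lemma:fws1round}.

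For part 1, fix a round and set $\tilde{D} = \max L$. My goal is to exhibit a particular edge that is contracted (and hence removed from $L$) during this round. Pick any pair $(i,j)$ attaining the max, so $\tilde{d}_{i,j} = \tilde{D}$. Since $\tilde{d}_{i,j}$ is the minimum weight along the $i\leftrightarrow j$ path in $\widetilde{\mathcal{T}}$, some edge $e^* = (a,b)$ on that path has $w_{e^*} = \tilde{D}$, which means $\tilde{d}_{e^*} = \tilde{D}$. By Corollary \ref{cor:strengthestimate}, $k_{e^*} \geq \tilde{d}_{e^*} = \tilde{D}$, and by the same corollary $\kappa' \geq n^4\tilde{D}/2 \geq k_{e^*}$. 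I then argue that the inner \texttt{while} loop processes at least one batch with $\kappa$ in the window $[\tilde{D}/(2n), k_{e^*}]$: this window is nonempty since $k_{e^*} \geq \tilde{D} > \tilde{D}/(2n)$, and as $\kappa$ starts at $\min(\kappa_{\text{prev}}, \kappa')$ and halves each iteration so long as $\kappa \geq \tilde{D}/(2n)$, it must hit some power of two in the window. Property 1 of Lemma \ref{lemma:fws1round} applied to that batch then says $e^*$ is correctly labeled and contracted. The subsequent ``Remove all contracted edges from $L$'' step strips $e^*$'s entry, so $|L|$ decreases by at least 1.

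The main obstacle in part 1 is ruling out a degenerate case: $\kappa$ is carried across rounds via $\kappa \leftarrow \min(\kappa, \kappa')$, and in principle it could be that $\kappa_{\text{prev}} < \tilde{D}/(2n)$ so the inner loop does not run at all. To rule this out, I use Lemma \ref{lemma:fws1round} applied to the previous round: every edge $e$ with $k_e \geq \tilde{D}_{\text{prev}}/n$ was labeled there, and in particular every edge with $\tilde{d}_e \geq \tilde{D}_{\text{prev}}/n$ (since $k_e \geq \tilde{d}_e$) was removed from $L$. Hence $\tilde{D} < \tilde{D}_{\text{prev}}/n$ strictly, while the loop exit condition at the end of the previous round guarantees $\kappa_{\text{prev}} \geq \tilde{D}_{\text{prev}}/(4n)$. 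For $n \geq 2$ this yields $\kappa_{\text{prev}} \geq \tilde{D}_{\text{prev}}/(4n) \geq \tilde{D}/(2n)$, so the inner loop does execute.

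For part 2, part 1 gives that $|L|$ strictly decreases each round, so the algorithm terminates after finitely many rounds when $L = \emptyset$. Because $L$ is initialized to contain an entry $\tilde{d}_{i,j} > 0$ for every edge $(i,j)$ of $G$, and entries are stripped from $L$ only upon contraction, emptiness of $L$ forces every edge of $G$ to have been contracted and thus labeled. Finally, properties 1--3 of Lemma \ref{lemma:fws1round} guarantee that no edge is ever incorrectly labeled (edges are either left untouched or given a correct estimate $\kappa/2 \in [k_e/4, k_e]$), so every label assigned is correct. This completes both claims of the corollary.
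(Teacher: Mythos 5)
Your proof follows the paper's argument in all essentials: pick an edge attaining the maximum estimate $\tilde{D}$ in $L$, note via Corollary~\ref{cor:strengthestimate} that its strength is at least $\tilde{D}$, argue that the round's final batch of EstimateAndContract calls runs with $\kappa < \tilde{D}/n \le \tilde{D} \le k_{e^*}$ so that property~1 of Lemma~\ref{lemma:fws1round} forces $e^*$ to be correctly labeled and contracted, and then deduce part~2 exactly as you describe. The one thing you add beyond the paper is an explicit inductive check that the inner while loop executes at least once in every round---the paper's reference to ``the final batch'' tacitly assumes this, whereas you establish (from property~1 applied to the previous round together with the loop-exit condition) that the carried-over value satisfies $\kappa_{\text{prev}} \ge \tilde{D}_{\text{prev}}/(4n) \ge \tilde{D}/(2n)$; this is a legitimate, if minor, gap-filler in the published argument.
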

\begin{proof}
We start with the first claim. Let $e$ be an edge attaining the maximum edge strength estimate in $L$ i.e. such that $\tilde{d}_e = \tilde{D}$. By Corollary \ref{cor:strengthestimate}, we have $k_e \geq \tilde{d}_e = \tilde{D}$. So in the final batch of calls to EstimateAndContract, we will have $\kappa \leq \tilde{D}/n \leq \tilde{D} \leq k_e$. Thus by Lemma \ref{lemma:fws1round}, $e$ will be correctly labeled in this batch if not an earlier one. This proves the first claim.

Now for the second claim: it is clear from Lemma \ref{lemma:fws1round} that no edge in $G$ will ever be incorrectly labeled, so it suffices to show that every edge in $G$ will be labeled. But this is also clear since the outermost loop will continue as long as there are unlabeled edges, and each iteration will label at least one edge. The conclusion follows.
\end{proof}

\subsubsection{Fast Weighted Subsampling Constructs a Sparsifier}

If every call to EstimateAndContract runs correctly, then we claim that we can apply Lemma \ref{lemma:sparsifierconcentration} for any such list $X$. Indeed, Corollary \ref{lemma:fwsestimate} tells us the following:
\begin{enumerate}
    \item Every edge is assigned a label by some round of EstimateAndContract, thus it is contracted. Therefore each connected component of $G$ is contained in some $C_i$.
    \item Each edge is \emph{correctly} labeled, thus we also have that for any edge $e$ its strength estimate is in $[k_e/4, k_e]$.
\end{enumerate}
Thus we may apply Lemma \ref{lemma:sparsifierconcentration} to find that in this case, $H = \text{ConstructSparsifier}(G, X)$ will be a $(2\epsilon)$-sparsifier of $G$ with probability $1 - O(n^{-d})$.

In summary: for $H$ to be a $(2\epsilon)$-sparsifier, we need all calls to EstimateAndContract to run correctly, and we also need ConstructSparsifier to succeed. By Lemma \ref{lemma:contractunionbound}, the former occurs with probability $1 - O(n^{5-d})$ and the latter occurs with probability $1 - O(n^{-d})$ (assuming the former ran successfully). Thus a union bound tells us that $H$ is a $(2\epsilon)$-sparsifier of $G$ with probability $1 - O(n^{5-d})$, as desired.


\subsection{Efficiency of Naive and Fast Weighted Subsampling}\label{sec:sparsifierspeed}

Here we address the efficiency parts of Theorems \ref{thm:nwsworks}, \ref{thm:earlystopnwsworks}, and \ref{thm:fwsworks}. The analysis is mostly simultaneous for all three theorems, with some delineations at the end.

Note firstly that the initial Approximate Kruskal step in FastWeightedSubsample runs in $O(n \log n)$ queries and the first call to InitializeDS in any of the algorithms runs in $O(n)$ queries so we need only address the EstimateAndContract operations and the final call to ConstructSparsifier. A key lemma we use is the following inequality due to \cite{benczurkarger}:
\begin{lemma}\label{lemma:bkinequality}
(\cite{benczurkarger}) In an undirected weighted graph $G$, we have $\sum_e \frac{w_e}{k_e} \leq |V(G)|-1$.
\end{lemma}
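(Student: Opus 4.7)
The plan is to prove the inequality by induction on $n = |V(G)|$, using a minimum-cut decomposition in the spirit of Benczur-Karger. The base case $n = 1$ has both sides equal to $0$.

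For the inductive step, I would first dispose of the disconnected case. If $G$ splits into connected components $G_1, \ldots, G_c$, each edge lies in exactly one $G_i$, and since $K(G[S]) = 0$ whenever $G[S]$ is disconnected, the strength $k_e$ is actually computed entirely inside the relevant $G_i$ (any $S$ straddling multiple components is suboptimal). Summing the inductive hypothesis across components then gives $\sum_e w_e/k_e \leq \sum_i (|V(G_i)| - 1) = n - c \leq n - 1$.

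The meat of the argument is when $G$ is connected. Here I would pick a minimum cut, partitioning $V$ into nonempty parts $V_1 \sqcup V_2$ separated by an edge set $C$ with $\sum_{e \in C} w_e = \kappa := K(G)$. Since $V$ itself is an admissible witness for every edge, each $e$ satisfies $k_e \geq \kappa$, so the crossing edges contribute $\sum_{e \in C} w_e / k_e \leq \kappa / \kappa = 1$. For edges inside $G[V_i]$, the key observation is that $k_e(G[V_i]) \leq k_e(G)$, because the maximization defining $k_e(G[V_i])$ ranges over a subfamily of the witness sets allowed for $k_e(G)$. Consequently $w_e / k_e(G) \leq w_e / k_e(G[V_i])$, and applying the inductive hypothesis to $G[V_i]$ (which has strictly fewer vertices, since both parts of the cut are nonempty) bounds the interior contribution by $|V_i| - 1$.

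Adding the three pieces yields $\sum_e w_e / k_e \leq 1 + (|V_1| - 1) + (|V_2| - 1) = n - 1$, as required. I do not anticipate any serious obstacle: the only subtlety is ensuring the strength bound $k_e(G[V_i]) \leq k_e(G)$ goes in the \emph{right} direction — restricting to a subgraph can only shrink the collection of admissible witnesses, so strengths weakly decrease, and hence the reciprocals weakly increase, which is exactly what the induction needs. The rest is bookkeeping.
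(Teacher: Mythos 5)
Your proof is correct and is essentially the standard Benczur--Karger argument, which the paper cites rather than reproducing: recurse on a minimum cut $(V_1, V_2)$, note that every edge $e$ has $k_e \geq K(G) = \kappa$ (since $V$ itself is an admissible witness), so the crossing edges contribute at most $\kappa/\kappa = 1$, and bound the interior edges via the inductive hypothesis on $G[V_i]$ together with the monotonicity $k_e(G[V_i]) \leq k_e(G)$ (the maximization over witness sets $S$ is over a smaller family when restricted to $S \subseteq V_i$). The handling of the disconnected case and the observation about the direction of the strength inequality are both correct, and the bookkeeping $1 + (|V_1|-1) + (|V_2|-1) = n-1$ closes the induction.
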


First we address ConstructSparsifier. The following lemma accounts for the $O(n\log^3n/\epsilon^2)$ term in all three theorems.
\begin{lemma}\label{lemma:constructsparsifierspeed}
Suppose that in the input $X$ to ConstructSparsifier, any labeled edges are labeled correctly i.e. for all $i$ and edges $e \in \widetilde{E}(C_i)$ ($\widetilde{E}(C_i)$ is defined as in Lemma \ref{lemma:sparsifierconcentration}), we have $\beta_i \in [k_e/4, k_e]$. (Note that we do not require that every edge is labeled, only that no edges are labeled incorrectly.)

Then ConstructSparsifier runs in $O(n \log^3n/\epsilon^2)$ queries.
\end{lemma}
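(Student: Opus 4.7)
The plan is to bound the queries made by ConstructSparsifier round by round and then sum over rounds using the Benczur--Karger inequality (Lemma \ref{lemma:bkinequality}). First I would observe that each iteration $i$ of the outer for-loop consists of three types of operations whose cost is controlled by Lemma \ref{lemma:allalgoprimitives}: one call to GetTotalWeight to compute $W(C_i)$ and hence $\mu_i$ (costing $O(1)$ queries), $\mu_i$ calls to Sample (each costing $O(\log n)$ queries), and one call to Contract (costing $O(1)$ queries). Hence the total number of queries made by ConstructSparsifier is $O(r + \log n \cdot \sum_{i=1}^{r} \mu_i)$, and since $r \leq n$, the main task is to bound $\sum_i \mu_i$ by $O(n \log^2 n / \epsilon^2)$.

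Next I would unpack what $W(C_i)$ actually measures at the time it is computed. Because the $C_j$'s form a laminar family and are contracted in an order consistent with containment (any $C_j \subsetneq C_i$ satisfies $j < i$), at the moment we process $C_i$ all edges with both endpoints in some strictly smaller $C_j$ have already been contracted. Moreover, since we treat contracted sets as single supernodes without merging parallel edges, the edges with both endpoints in $C_i$ that have \emph{not} yet been contracted are precisely the set $\widetilde{E}(C_i)$ defined in Lemma \ref{lemma:sparsifierconcentration}. Thus $W(C_i) = \sum_{e \in \widetilde{E}(C_i)} w_e$, and
\[
\sum_{i=1}^{r} \mu_i \;=\; \frac{c_1 \log^2 n}{\epsilon^2} \sum_{i=1}^{r} \frac{1}{\beta_i} \sum_{e \in \widetilde{E}(C_i)} w_e \;=\; \frac{c_1 \log^2 n}{\epsilon^2} \sum_{e \text{ labeled}} \frac{w_e}{\beta_{i(e)}},
\]
where $i(e)$ is the unique index such that $e \in \widetilde{E}(C_{i(e)})$ (each labeled edge lies in exactly one $\widetilde{E}(C_i)$, and unlabeled edges contribute nothing since they are never sampled).

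Finally I would invoke the assumption of correct labeling, which gives $\beta_{i(e)} \geq k_e/4$ for every labeled edge $e$, so that $w_e / \beta_{i(e)} \leq 4 w_e / k_e$. Applying Lemma \ref{lemma:bkinequality} to the original graph $G$ yields
\[
\sum_{e \text{ labeled}} \frac{w_e}{\beta_{i(e)}} \;\leq\; 4 \sum_{e \in G} \frac{w_e}{k_e} \;\leq\; 4(n-1),
\]
so $\sum_i \mu_i = O(n \log^2 n / \epsilon^2)$ and the total query count is $O(n \log^3 n / \epsilon^2)$, as claimed. The one step I expect to require the most care is justifying $W(C_i) = \sum_{e \in \widetilde{E}(C_i)} w_e$ at the exact moment GetTotalWeight is invoked; this relies on verifying that our convention of never merging parallel edges after contraction (as noted after the statement of Lemma \ref{lemma:allalgoprimitives}) together with the laminar/ordering assumption on the $C_i$'s truly ensures that the contracted-but-still-present edges do not inflate the measured weight.
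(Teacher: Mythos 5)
Your proof is correct and follows essentially the same approach as the paper's: bound the non-sampling overhead (GetTotalWeight, Contract) by $O(n)$, then bound $\sum_i \mu_i$ by rewriting $W(C_i)$ as $\sum_{e \in \widetilde{E}(C_i)} w_e$, substituting $\beta_i \geq k_e/4$, and invoking Lemma \ref{lemma:bkinequality}. The only minor omission is the $O(n)$ cost of the initial InitializeDS call, and the bound $r \leq n$ should really be $r = O(n)$ (a laminar family of distinct nonempty subsets of $[n]$ has at most $2n-1$ sets), but neither affects the asymptotics.
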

\begin{proof}
First we make some straightforward observations:
\begin{enumerate}
    \item The initial call to InitializeDS uses $O(n)$ queries.
    \item Each contraction reduces the number of supernodes by at least 1, thus there are $O(n)$ contractions in total. Each contraction takes $O(1)$ queries, thus all the contractions also take $O(n)$ queries total.
    \item There is one call to GetTotalWeight per set that we want to contract. There are $O(n)$ such sets by the above, and each call to GetTotalWeight takes $O(1)$ queries, so this makes for $O(n)$ queries total.
\end{enumerate}
It remains to address the query complexity of the sampling that we need to do. The total number of edges we sample is $\sum_{i = 1}^r \mu_i$ and we use $O(\log n)$ queries to sample each of these edges, thus the total query complexity of this is at most a constant multiple of:
\begin{align*}
    \log n \cdot \sum_{i = 1}^r \mu_i &= \log n \cdot \sum_{i = 1}^r \frac{c_1 \log^2n}{\epsilon^2 \beta_i} W(\widetilde{E}(C_i)) \\
    &= \frac{c_1 \log^3n}{\epsilon^2} \sum_{i = 1}^r \sum_{e \in \widetilde{E}(C_i)} \frac{w_e}{\beta_i} \\
    &\leq \frac{4c_1 \log^3n}{\epsilon^2} \sum_{i = 1}^r \sum_{e \in \widetilde{E}(C_i)} \frac{w_e}{k_e} \text{ (since any labeled edges are labeled correctly)} \\
    &= \frac{4c_1 \log^3n}{\epsilon^2} \sum_{\text{labeled }e} \frac{w_e}{k_e} \\
    &\leq \frac{4c_1 \log^3n}{\epsilon^2} (n-1) \text{ (using Lemma \ref{lemma:bkinequality})} \\
    &= O(\frac{n \log^3 n}{\epsilon^2}).
\end{align*}
Thus the total query complexity of ConstructSparsifier is indeed $O(n\log^3n/\epsilon^2)$, as desired.
\end{proof}

We now address the calls to EstimateAndContract generically before analyzing each of the three algorithms individually:
\begin{lemma}\label{lemma:estimateandcontractgenericefficiency}
Assume that for either NaiveWeightedSubsample or FastWeightedSubsample, all calls to EstimateAndContract run correctly. Enumerate all the calls we ever make to EstimateAndContract $1, 2, \ldots, M$, and for $p \in [M]$ let $G'_p$ and $\kappa_p$ be the graph $G'$ and the value of $\kappa$ used in call $p$. (Here, we assume that any contractions are already included in $G'_p$, so that $G'_p$ may include some supernodes.) Then the total query complexity of all calls to EstimateAndContract is:
$$O(n\log^3n \cdot \max_{e \in G} |\left\{p: e \in G'_p\right\}|).$$
\end{lemma}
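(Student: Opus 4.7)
The plan is to bound the per-call cost of EstimateAndContract and aggregate using the Benczur--Karger inequality (Lemma \ref{lemma:bkinequality}), echoing the structure of the proof of Lemma \ref{lemma:constructsparsifierspeed}. Each call $p$ incurs a dominant cost of $O(\lambda_p \log n) = O(\log^3 n \cdot W(G'_p)/\kappa_p)$ from the sampling step, since each of the $\lambda_p = c_0 \log^2 n \cdot W(G'_p)/\kappa_p$ samples takes $O(\log n)$ queries by Lemma \ref{lemma:allalgoprimitives}; the remaining operations (one GetTotalWeight call and $O(r)$ contractions, each $O(1)$ queries) contribute only $O(n)$ total across the entire algorithm because each contraction reduces the supernode count by at least one. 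Summing over $p$ and swapping the order of summation, the total cost is
\[
O(\log^3 n)\sum_p \frac{W(G'_p)}{\kappa_p} + O(n) \;=\; O(\log^3 n)\sum_{e\in G} w_e \sum_{p\in P_e}\frac{1}{\kappa_p} + O(n),
\]
where $P_e = \{p : e \in G'_p\}$.

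The key claim is that $\kappa_p > k_e/2$ for every edge $e$ and every $p \in P_e$. Given this, $\sum_{p\in P_e} 1/\kappa_p < 2|P_e|/k_e \leq 2K/k_e$ with $K = \max_e |P_e|$, and Lemma \ref{lemma:bkinequality} yields $\sum_e w_e \cdot (2K/k_e) \leq 2K(n-1)$, giving the stated bound $O(nK\log^3 n)$.

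To prove the key claim I would invoke property 1 of Lemma \ref{lemma:nws1round} (for NWS) or Lemma \ref{lemma:fws1round} (for FWS): after any batch with value $\kappa$, every in-scope edge with strength $\geq \kappa$ has been labeled and contracted. Within a single round $\kappa$ strictly halves between consecutive batches and the partition of scope is fixed, so if $e \in G'_p$ at the $j$th batch with $j\geq 2$, then $e$ was still in scope and unlabeled after the $(j-1)$th batch at value $2\kappa_p$, forcing $k_e < 2\kappa_p$, i.e.\ $\kappa_p > k_e/2$. The first batch of each round requires a separate argument. For NWS this is immediate from $\kappa_1 \geq W_\text{tot} \geq k_e$. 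For FWS the start-of-round value is $\min(\kappa_{\text{end},R-1},\kappa'_R)$, and I would show each term exceeds $k_e/2$: for $\kappa'_R \geq n^4 \tilde{D}_R/2$ use $\tilde{d}_e \leq \tilde{D}_R = \max L$ (since $e$ is still unlabeled, so $\tilde{d}_e \in L$) together with Corollary \ref{cor:strengthestimate} to obtain $k_e \leq n^4 \tilde{d}_e/2 \leq \kappa'_R$; for $\kappa_{\text{end},R-1}$ split on whether $e$'s endpoints lie in the same $S_i^{R-1}$ (in which case the within-round argument in round $R-1$ applies and gives $\kappa_{\text{end},R-1} = \kappa_{R-1,\text{last}}/2 > k_e/2$) or in different components (in which case $\tilde{d}_e < \tilde{D}_{R-1}/(2n^5)$ by Lemma \ref{lemma:deletionworkswell}, yielding $k_e \leq n^4 \tilde{d}_e/2 < \tilde{D}_{R-1}/(4n) \leq \kappa_{\text{end},R-1}$, using that the inner-loop termination condition forces $\kappa_{\text{end},R-1} \geq \tilde{D}_{R-1}/(4n)$).

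The main obstacle is this across-round analysis for FWS: $\kappa$ can shrink between rounds in ways that require carefully tying $k_e$ back to $\tilde{D}_R$ via the MSF-based strength estimates $\tilde{d}_e$ and Corollary \ref{cor:strengthestimate}. Once the claim $\kappa_p > k_e/2$ is in hand, the rest is a direct application of Lemma \ref{lemma:bkinequality}, essentially identical to the closing lines of the proof of Lemma \ref{lemma:constructsparsifierspeed}.
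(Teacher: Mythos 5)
Your proof follows the paper's argument essentially line for line: bound the overhead (contractions, GetTotalWeight) separately, reduce the sampling cost to $\sum_p W(G'_p)/\kappa_p$, swap sums, invoke the key claim that $e \in G'_p$ forces $\kappa_p > k_e/2$, and finish with Lemma~\ref{lemma:bkinequality}. Where you go further than the paper is in the key claim: the paper dispatches it in one sentence by citing Lemmas~\ref{lemma:nws1round} and~\ref{lemma:fws1round}, whereas you unpack the within-round and across-round cases for FWS (via $\kappa'_R$, $\tilde{D}$, Corollary~\ref{cor:strengthestimate}, and Lemma~\ref{lemma:deletiongivescliques}/\ref{lemma:deletionworkswell}) in full --- this is correct and is exactly the content those lemma proofs encapsulate.

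One small accounting gap: you state the GetTotalWeight calls contribute $O(n)$ total ``because each contraction reduces the supernode count by at least one,'' but that reasoning bounds only the contraction count, not the GetTotalWeight count. In FWS the number of calls to EstimateAndContract can be polynomially large (the inner triple loop), so a GetTotalWeight query per call would blow the budget. The paper closes this by observing that GetTotalWeight need only be re-issued after a nontrivial contraction (i.e., cache $W(G')$ per component), giving $O(n)$ calls. You should either add this caching observation or note that calls on already-fully-contracted $S_i$ can be skipped for free, after which each nontrivial call already pays $\Omega(\log n)$ in sampling and so dominates. With that patch, your proof is complete and identical in substance to the paper's.
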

\begin{proof}
First we make some simple observations:
\begin{enumerate}
\item Each nontrivial contraction reduces the number of supernodes by at least 1, thus there are $O(n)$ such contractions in total. Each of these takes $O(1)$ queries, so all contractions take $O(n)$ queries total.
\item GetTotalWeight needs to be called at most once for each nontrivial contraction since we are just concerned with $W(G')$, the total weight of all non-contracted edges. Thus there are also $O(n)$ calls to GetTotalWeight that take $O(1)$ queries each, for a total of $O(n)$ queries here as well.
\end{enumerate}
It remains to address the query complexity of the sampling that we need to do. The total number of edges we sample is:
$$c_0\log^2n \cdot \sum_{p = 1}^M \frac{W(G'_p)}{\kappa_p}.$$
We use $O(\log n)$ queries to sample each of these edges, so the total query complexity of this is at most a constant multiple of:
\begin{align*}
    c_0 \log^3n \cdot \sum_{p = 1}^M \frac{W(G'_p)}{\kappa_p} &= c_0 \log^3n \cdot \sum_{p = 1}^M \sum_{e \in G'_p} \frac{w_e}{\kappa_p} \\
    &\leq 2c_0 \log^3n \cdot \sum_{p = 1}^M \sum_{e \in G'_p} \frac{w_e}{k_e} \\
    &= 2c_0\log^3n \cdot \sum_{e \in G} (\frac{w_e}{k_e} \cdot |\left\{p: e \in G'_p\right\}|) \\
    &\leq 2c_0\log^3n \cdot (\sum_{e \in G} \frac{w_e}{k_e}) \cdot \max_{e \in G} |\left\{p: e \in G'_p\right\}| \\
    &\leq 2c_0(n-1)\log^3n \cdot \max_{e \in G} |\left\{p: e \in G'_p\right\}| \text{ (using Lemma \ref{lemma:bkinequality})} \\
    &= O(n\log^3n \cdot \max_{e \in G} |\left\{p: e \in G'_p\right\}|).
\end{align*}
To justify the first inequality step, we know from the proofs of Lemmas \ref{lemma:nws1round} (for NaiveWeightedSubsample) and \ref{lemma:fws1round} (for FastWeightedSubsample) that if $k_e \geq 2\kappa_p$, $e$ will already have been contracted in a previous batch and thus not included in $G'_p$. This completes the proof of the lemma.
\end{proof}

\subsubsection{Efficiency of Naive Weighted Subsampling}

For NaiveWeightedSubsample, we proceed in a straightforward manner as follows:

\begin{lemma}\label{lemma:nwsspeedgeneric}
The total query complexity of EstimateAndContract across all calls in NaiveWeightedSubsample is
$$O(n\log^3n \cdot \min(\log n + \log W, \log T)).$$
\end{lemma}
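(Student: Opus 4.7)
The plan is to combine the generic bound from Lemma \ref{lemma:estimateandcontractgenericefficiency} with the iteration count bound from Lemma \ref{lemma:nwsiterationcount}. Recall that the former states that the total query complexity of all EstimateAndContract calls is $O(n\log^3 n \cdot \max_{e \in G} |\{p : e \in G'_p\}|)$, where the maximum is taken over edges and the index $p$ ranges over all EstimateAndContract calls made by the algorithm.

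First, I would observe the trivial bound $\max_{e \in G} |\{p : e \in G'_p\}| \leq M$, where $M$ is the total number of EstimateAndContract calls, simply because each edge can appear in at most every call. So it suffices to bound $M$ by $O(\min(\log n + \log W, \log T))$.

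This is exactly the content of Lemma \ref{lemma:nwsiterationcount} (which we invoke under the standing assumption that all calls to EstimateAndContract behave correctly). The structural reason is that in NaiveWeightedSubsample, $\kappa$ is initialized to the smallest power of $2$ at least $W_\text{tot}$ and is halved after each EstimateAndContract call. The outer loop terminates once either (i) $\kappa \leq W_\text{tot}/T$, which gives at most $O(\log T)$ iterations, or (ii) all edges have been contracted, which by Lemma \ref{lemma:claim1-2} is guaranteed once $\kappa \leq w_\text{min}$ (the minimum nonzero edge weight), giving at most $O(\log(W_\text{tot}/w_\text{min})) = O(\log n + \log W)$ iterations since $W_\text{tot} \leq \binom{n}{2} w_\text{max}$. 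Thus $M = O(\min(\log n + \log W, \log T))$.

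Combining these two facts, the total query complexity of EstimateAndContract across all calls in NaiveWeightedSubsample is
\[
O(n \log^3 n \cdot M) = O\bigl(n\log^3 n \cdot \min(\log n + \log W, \log T)\bigr),
\]
as claimed. There is no real obstacle here: the bound is essentially a direct chaining of the two previously established lemmas, with the mild point that the ``max over edges'' in Lemma \ref{lemma:estimateandcontractgenericefficiency} can be upper bounded by the global count $M$ since NaiveWeightedSubsample does not perform any per-component recursion that might inflate an edge's participation count beyond the number of rounds.
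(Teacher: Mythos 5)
Your proof is correct and matches the paper's proof essentially verbatim: both bound $\max_{e \in G} |\{p : e \in G'_p\}|$ trivially by $M$, the total number of EstimateAndContract calls, and then invoke Lemma \ref{lemma:nwsiterationcount} to bound $M = O(\min(\log n + \log W, \log T))$ before plugging into Lemma \ref{lemma:estimateandcontractgenericefficiency}.
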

\begin{proof}
We naively bound $\max_{e \in G} |\left\{p: e \in G'_p\right\}|$ by $M$, the total number of calls to EstimateAndContract. By Lemma \ref{lemma:nwsiterationcount}, $M = O(\min(\log n + \log W, \log T))$. The conclusion follows.
\end{proof}

\begin{corollary}
The total query complexity of EstimateAndContract throughout $\text{NaiveWeightedSubsample}(G, \infty)$ is $O(n\log^3n(\log n + \log W))$.
\end{corollary}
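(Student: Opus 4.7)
The plan is to derive this corollary directly from Lemma~\ref{lemma:nwsspeedgeneric} by instantiating $T = \infty$. That lemma bounds the total query complexity of all calls to EstimateAndContract within NaiveWeightedSubsample by $O\bigl(n \log^3 n \cdot \min(\log n + \log W, \log T)\bigr)$. Setting $T = \infty$ makes $\log T$ unbounded, so the minimum collapses to $\log n + \log W$, and substitution yields the claimed $O\bigl(n \log^3 n \, (\log n + \log W)\bigr)$ complexity.

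The reason the $\min$ in Lemma~\ref{lemma:nwsspeedgeneric} is not improved when $T = \infty$ is that the outer loop of NaiveWeightedSubsample now has no early termination based on the threshold $W_\text{tot}/T$; the loop can stop only once $\text{GetTotalWeight}(G) = 0$, i.e., once every edge has been contracted. By Lemma~\ref{lemma:nwsiterationcount} (applied in the regime where all calls to EstimateAndContract behave correctly, which is the same regime in which Lemma~\ref{lemma:nwsspeedgeneric} measures query cost), this happens after at most $O(\log n + \log W)$ iterations, since $\kappa$ halves each round starting from an initial value $O(W_\text{tot}) = O(n^2 w_{\max})$ and every edge has strength at least $w_{\min}$. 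So the $\log n + \log W$ term in the minimum is the binding one.

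This is essentially a one-line specialization; the corollary exists only to feed cleanly into the overall efficiency analysis of $\text{NaiveWeightedSubsample}(G, \infty)$ in Theorem~\ref{thm:nwsworks}, where it combines with the $O(n)$ preprocessing queries from InitializeDS/GetTotalWeight and the $O(n \log^3 n / \epsilon^2)$ queries used by ConstructSparsifier (Lemma~\ref{lemma:constructsparsifierspeed}) to yield the total complexity stated there. There is no genuine obstacle: all the technical work is already packaged inside Lemma~\ref{lemma:nwsspeedgeneric} and Lemma~\ref{lemma:nwsiterationcount}, and the corollary is only a matter of plugging in $T = \infty$.
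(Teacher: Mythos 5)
Your proof is correct and matches the paper's approach exactly: the corollary is an immediate specialization of Lemma~\ref{lemma:nwsspeedgeneric} at $T = \infty$, where $\log T$ becomes unbounded and the minimum collapses to $\log n + \log W$. The additional context you give about why the $\log n + \log W$ term is binding and how the corollary feeds into Theorem~\ref{thm:nwsworks} is accurate but not needed for the proof itself.
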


\begin{corollary}
The total query complexity of EstimateAndContract throughout $\text{NaiveWeightedSubsample}(G, n^3)$ is $O(n\log^4n)$.
\end{corollary}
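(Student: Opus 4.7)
The plan is to obtain this corollary as an immediate consequence of Lemma \ref{lemma:nwsspeedgeneric}, which already gives a bound of $O(n\log^3 n \cdot \min(\log n + \log W, \log T))$ on the total query complexity of EstimateAndContract in $\text{NaiveWeightedSubsample}(G, T)$. First, I would simply instantiate this lemma with $T = n^3$.

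Next, I would observe that $\log T = \log(n^3) = 3\log n = O(\log n)$, so the minimum in the expression is controlled by $\log T$ and is bounded by $O(\log n)$ regardless of how large $W$ is. This is precisely the point of using the stopping threshold $T = n^3$: it decouples the query complexity from the edge weight ratio $W$, which would otherwise appear via the $\log n + \log W$ term.

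Substituting these bounds into the expression from Lemma \ref{lemma:nwsspeedgeneric} gives a total query complexity of $O(n \log^3 n \cdot \log n) = O(n \log^4 n)$, as claimed. There is no real obstacle here; the work has already been done in establishing the generic bound, and the corollary just records the numerical consequence of the particular choice $T = n^3$ used in Theorem \ref{thm:earlystopnwsworks}.
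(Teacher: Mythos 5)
Your proposal is correct and matches the paper's intended argument exactly: the corollary is an immediate instantiation of Lemma \ref{lemma:nwsspeedgeneric} with $T = n^3$, using $\log T = 3\log n = O(\log n)$ to bound the minimum. No gaps.
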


\subsubsection{Efficiency of Fast Weighted Subsampling}

We finish the analysis of FastWeightedSubsample with the following lemma:

\begin{lemma}
For FastWeightedSubsample, $\max_{e \in G} |\left\{p: e \in G'_p\right\}| = O(\log n)$.
\end{lemma}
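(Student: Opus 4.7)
The plan is to fix an arbitrary edge $e \in G$ and show that every batch $p$ containing $e$ uses a $\kappa$-value confined to an interval whose endpoints have ratio $\mathrm{poly}(n)$. Since consecutive batches strictly at-least-halve $\kappa$, this will force only $O(\log n)$ such batches.

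For the upper bound on $\kappa_p$, I would argue that if $e \in G'_p$ then both endpoints of $e$ must lie in some connected component $S_i$ of the partition used in the round containing batch $p$, so by Lemma \ref{lemma:deletionworkswell} we have $e \notin C$ in that round, i.e., $\tilde{d}_e \geq \tilde{D}_p/(2n^5)$, equivalently $\tilde{D}_p \leq 2n^5 \tilde{d}_e$. The starting value of $\kappa$ for that round is at most $\kappa' \leq n^4 \tilde{D}_p$ (the smallest power of $2$ that is at least $n^4 \tilde{D}_p/2$), and subsequent halvings inside the round only shrink $\kappa$ further, so $\kappa_p \leq 2n^9 \tilde{d}_e$.

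For the lower bound, the key observation is that as long as $e$ has not been contracted, the value $\tilde{d}_e$ remains in the multiset $L$, so $\tilde{D}_p = \max L \geq \tilde{d}_e$. Combined with the inner-loop guard $\kappa_p \geq \tilde{D}_p/(2n)$, this gives $\kappa_p \geq \tilde{d}_e/(2n)$. Hence every batch containing $e$ satisfies $\kappa_p \in [\tilde{d}_e/(2n),\, 2n^9 \tilde{d}_e]$.

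To finish, I would verify that across the entire execution $\kappa$ is always a power of $2$ that shrinks by a factor of at least $2$ between consecutive batches: within a round this is explicit from the $\kappa \gets \kappa/2$ step, and across rounds the additional $\kappa \gets \min(\kappa, \kappa')$ assignment can only make $\kappa$ smaller, never larger, relative to the post-halving value that ended the previous round. Therefore the batches containing $e$ correspond to distinct powers of $2$ inside $[\tilde{d}_e/(2n),\, 2n^9 \tilde{d}_e]$, of which there are at most $\log_2(4n^{10}) + 1 = O(\log n)$. The main conceptual obstacle is resisting the tempting but too-weak ``number of rounds times batches per round'' count: an edge can persist across many outer iterations, so the argument must tie both ends of the $\kappa$-range to the single quantity $\tilde{d}_e$, rather than to per-round quantities $\tilde{D}_t$.
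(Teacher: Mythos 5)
Your proof is correct, and the overall structure matches the paper's: you bound $\kappa_p$ to a $\mathrm{poly}(n)$-width interval anchored at an edge-specific quantity, then use the power-of-two structure of $\kappa$ to count the possible batches. The interesting difference is in the \emph{lower bound on $\kappa_p$} and in the choice of anchor. You anchor to $\tilde d_e$ and derive $\kappa_p \geq \tilde d_e/(2n)$ from the purely syntactic invariant that $\tilde d_e$ remains in $L$ while $e$ is uncontracted, together with the inner-loop guard $\kappa \geq \tilde D/(2n)$. The paper instead anchors to the true strength $k_e$ and derives $\kappa_p \geq k_e/2$ by appealing to Lemma~\ref{lemma:fws1round}, i.e.\ to the (conditioned-on) correctness of earlier EstimateAndContract batches having already contracted any edge of strength $\geq 2\kappa$. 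Your route for the lower bound is arguably cleaner since it does not lean on the stochastic-correctness event that the rest of the efficiency analysis conditions on; the two anchors are interchangeable up to a $\mathrm{poly}(n)$ factor since $\tilde d_e \leq k_e \leq (n^4/2)\tilde d_e$ by Corollary~\ref{cor:strengthestimate}. One small point to tighten: the lemma counts calls $p$, not batches, so after pinning down the $O(\log n)$ admissible batches you should note (as the paper does) that within any one batch the calls $\text{EstimateAndContract}(G[S_1],\ldots), \ldots, \text{EstimateAndContract}(G[S_r],\ldots)$ act on disjoint vertex sets, so $e$ can lie in $G'_p$ for at most one call per batch.
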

\begin{proof}
Fix a particular edge $e$ and index $p$ such that $e \in G'_p$. We have both of the following:
\begin{enumerate}
    \item Let $\tilde{D}$ be the value used in the round where call $p$ is made. Note that $\tilde{D} \geq \kappa_p/n^4$. So if $k_e < \kappa_p/(2n^9)$ then we will have $\tilde{d}_e \leq k_e < \kappa_p/(2n^9) \leq \tilde{D}/(2n^5)$. In this case we would have $e \in C$. Since in this case $G'_p$ would be $G[S_i]$ for some $i \in [r]$ (possibly with some contractions), Lemma \ref{lemma:deletionworkswell} tells us that $e \notin G'_p$, which is a contradiction. Thus we must have $k_e \geq \kappa_p/(2n^9) \Leftrightarrow \kappa_p \leq 2n^9k_e$.
    \item On the other hand, as in the proof of Lemma \ref{lemma:estimateandcontractgenericefficiency}, we know from Lemma \ref{lemma:fws1round} that we must have $k_e \leq 2\kappa_p \Leftrightarrow \kappa_p \geq k_e/2$ (else $e$ would already have been contracted).
\end{enumerate}
Thus $\kappa_p \in [k_e/2, 2n^9k_e]$, so there are at most $O(\log n)$ values that $\kappa_p$ can take since $\kappa_p$ will always be a power of 2. Moreover, each value of $\kappa$ is used in at most one batch in the algorithm and within each batch the calls are made on disjoint subsets of $G$. Hence for each possible value of $\kappa_p$, we will get a contribution of at most 1 to the set $\left\{p: e \in G'_p\right\}$. The lemma follows.
\end{proof}

\begin{corollary}
The total query complexity of EstimateAndContract across all calls in FastWeightedSubsample is $O(n\log^4 n)$.
\end{corollary}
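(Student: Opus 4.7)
The plan is a one-line consequence of two already-established results. The preceding lemma bounds $\max_{e \in G} |\left\{p : e \in G'_p\right\}|$ by $O(\log n)$ during the execution of FastWeightedSubsample, while Lemma \ref{lemma:estimateandcontractgenericefficiency} bounds the total query complexity across all calls to EstimateAndContract by $O(n \log^3 n \cdot \max_{e \in G} |\left\{p : e \in G'_p\right\}|)$. Substituting the first bound into the second immediately gives $O(n \log^3 n \cdot \log n) = O(n \log^4 n)$, which is exactly the claim.

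There is essentially no obstacle, since all of the work has been front-loaded into the two supporting results. The conceptual content lives in the preceding lemma, whose argument pins $\kappa_p$ into a logarithmic-width window $[k_e/2,\, 2n^9 k_e]$ for any edge $e$ that appears in $G'_p$: the upper side comes from Lemma \ref{lemma:fws1round} (edges of strength much larger than $2\kappa_p$ have already been contracted), and the lower side from the observation that an edge with $k_e$ too small relative to $\kappa_p$ would have been placed in the deletion set $C$ by the crude estimate $\tilde d_e$ and hence excluded from every $G[S_i]$ by Lemma \ref{lemma:deletionworkswell}. Combined with the fact that each value of $\kappa$ is used in at most one batch on disjoint $S_i$'s, this window contributes at most one $p$ per $\kappa$-value, giving the $O(\log n)$ bound.

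The only mild subtlety worth flagging is that Lemma \ref{lemma:estimateandcontractgenericefficiency} is stated under the assumption that every call to EstimateAndContract runs correctly. This hypothesis holds with probability $1 - O(n^{5-d})$ by Lemma \ref{lemma:contractunionbound}, which already matches the success probability announced in Theorem \ref{thm:fwsworks}; so no additional conditioning or union bound is required beyond what the surrounding theorem statement already absorbs.
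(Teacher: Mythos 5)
Your proof is correct and matches the paper's (implicit) argument exactly: the corollary is obtained by plugging the $O(\log n)$ bound on $\max_{e \in G} |\{p : e \in G'_p\}|$ from the preceding lemma into the generic $O(n\log^3 n \cdot \max_{e \in G} |\{p : e \in G'_p\}|)$ bound of Lemma \ref{lemma:estimateandcontractgenericefficiency}. Your remark about the conditioning on correct execution of EstimateAndContract being absorbed into the probability bound of Theorem \ref{thm:fwsworks} is accurate and consistent with how the paper structures Appendix \ref{sec:sparsifierspeed}.
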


\end{document}